\documentclass[11pt]{article}

\setlength{\parskip}{1ex} 
\setlength{\parindent}{15pt} 
\usepackage{color}

\usepackage{graphicx,tabularx,array,geometry,amsmath,amsthm,thmtools}
\usepackage{dsfont,graphicx,color,amsfonts,epsf,epsfig,verbatim,amssymb,amsmath,array,cite,amsthm,subfigure,multicol,multirow}

\usepackage[toc,page]{appendix}
\usepackage{caption}
\usepackage{amsfonts}
\usepackage{bbm}
\usepackage{makecell}
\usepackage{multirow}
 \usepackage{amssymb}
\usepackage{amsthm}
\usepackage{txfonts}
\usepackage[T1]{fontenc}
\usepackage{tikz}
\usepackage[scr=dutchcal]{mathalfa}
\let\mathscr\mathbscr
\setlength{\topmargin}{-0.5 in}
\setlength{\oddsidemargin}{0.0 in}
\setlength{\evensidemargin}{0.0 in}
\setlength{\textheight}{9 in}
\setlength{\textwidth}{6.5 in}



\newcolumntype{x}[1]{>{\centering\arraybackslash}p{#1}}

\newcommand\diag[4]{%
  \multicolumn{1}{p{#2}|}{\hskip-\tabcolsep
  $\vcenter{\begin{tikzpicture}[baseline=0,anchor=south west,inner sep=#1]
  \path[use as bounding box] (0,0) rectangle (#2+2\tabcolsep,\baselineskip);
  \node[minimum width={#2+2\tabcolsep-\pgflinewidth},
        minimum  height=\baselineskip+\extrarowheight-\pgflinewidth] (box) {};
  \draw[line cap=round] (box.north west) -- (box.south east);
  \node[anchor=south west] at (box.south west) {#3};
  \node[anchor=north east] at (box.north east) {#4};
 \end{tikzpicture}}$\hskip-\tabcolsep}}

\declaretheoremstyle[headfont=\bfseries, 
    bodyfont=\normalfont]{normalhead}
\declaretheorem[style=normalhead]{Example}

\newtheorem{Theorem}{Theorem}
\newtheorem{Lemma}{Lemma}
\newtheorem{Remark}{Remark}
\newtheorem{Definition}{Definition}


\allowdisplaybreaks 

\begin{document}

\title{An Achievable Rate-Distortion Region for Multiple
  Descriptions Source Coding Based on Coset Codes}
\author{Farhad Shirani  and S. Sandeep Pradhan\thanks{This work was supported by NSF grant
      CCF-1111061 and CCF-1422284. This work was presented in part at IEEE
      International Symposium on Information Theory (ISIT), July
      2014. } \\
Dept. of Electrical Engineering and Computer Science \\
 Univ. of  Michigan, Ann Arbor, MI.  \\}
\date{}

\maketitle \thispagestyle{empty} \pagestyle{plain}

\vspace{-0.5in}

\begin{abstract}
We consider the problem of multiple descriptions (MD) source coding
 and propose new coding strategies involving both unstructured and
structured coding layers. Previously, the most general
achievable rate-distortion (RD) region for the $l$-descriptions problem was the Combinatorial
Message Sharing with Binning (CMSB) region. The CMSB scheme utilizes
unstructured quantizers and unstructured binning. In the first part of
the paper, we show that this strategy can be improved upon using
more general unstructured quantizers and a more general unstructured binning method. In the
second part, structured coding strategies are considered. First,
structured coding strategies are developed by considering specific MD
examples involving three or more descriptions. We show that
application of structured quantizers results in strict RD improvements
when there are more than two descriptions. Furthermore, we show that
structured binning also yields improvements. These improvements are in
addition to the ones derived in the first part of the paper. This
suggests that structured coding is essential when coding over more
than two descriptions. Using the ideas developed through these
examples we provide a new unified coding strategy by considering several structured
coding layers. Finally, we characterize its performance
in the form of an inner bound to the optimal rate-distortion region 
using computable single-letter information quantities.
The new RD region strictly contains all of the previous known
achievable regions.      

\end{abstract}


\section{Introduction}
The Multiple-Descriptions (MD) source coding problem arises naturally
in a number of applications such as transmission of video, audio and
speech over packet networks and fading channels
\cite{goyal}\cite{videoaudio}. 
The multiple-descriptions (MD) source coding setup describes a
communications setting consisting of one encoder and several
decoders. The encoder receives a discrete memoryless source and
wishes to compress it into several \textit{descriptions}. Each decoder
receives a specific subset of these descriptions through noiseless
links, and produces a reconstruction of the source vector with
respect to its own distortion criterion. The parameters of interest
are the rates required for transmitting the description and the
resulting distortions at the decoders. The objective is to design 
communications schemes which result in the optimal asymptotic trade-off between
these two groups of parameters.  
The problem has been studied extensively
\cite{Ozarow}\cite{Ahlswede}\cite{EGC}\cite{ZB}\cite{VKG}\cite{CMS},
however, the optimal asymptotically achievable rate-distortion (RD) is not known even
for the most elementary case when only two descriptions are
considered. The two-descriptions setup is depicted in Figure
\ref{2Desc}. Evidently, for the individual decoders (which receive
only one description) to perform
optimally the encoder must transmit the two-descriptions according to
the optimal Point-to-Point (PtP) source coding schemes. This may
require the two-descriptions to be similar to each other. On the other hand, if the
descriptions are similar, one of them would be redundant at the
central decoder (which receive two descriptions). In fact, this
decoder requires the two-descriptions
to be different from one another in order to yield a better
reconstruction. The main challenge in the MD problem is to strike a
balance between these two situations. The best known achievable region
for the this communications setting is due to Zhang and Berger
\cite{ZB}. In the Zhang-Berger (ZB) strategy, the encoder
in the first step sends a common and  coarsely quantized version of the source 
on both descriptions, then
in the next step, the encoder sends individual refinements for each
decoder on the corresponding descriptions. The ZB coding strategy was
generalized in \cite{VKG} for the case where there are more than two
descriptions. In this strategy, first,
a common coarsely quantized version of the source is sent to all the decoders, then in the next
step, several refinement layers are transmitted. For the
symmetric $l-$descriptions problem,  a coding scheme based on random
binning was considered in \cite{symL1} which outperforms the VKG
scheme.   This involves generation of independent codebooks followed
by random binning.  Although the MD problem has a centralized encoder,
the strategy involving random binning was proved to be useful.  This
was further improved upon by a new coding scheme in \cite{symL2} based
on certain parity-check codes. However all the three schemes do not fully
exploit the common-information among every subset of individual
descriptions. For example in the three-descriptions problem, there can
be common-information between the first and second descriptions which
is not common with the third description.  A new coding scheme called
Combinatorial Message Sharing with Binning (CMSB) was considered in
\cite{CMS,CMS3} which provided a unified achievable RD region for the
general $l$-descriptions problem. This scheme provided a grand
unification of the schemes based on conditional codebooks and the
schemes based on random binning, which in turn results in the largest
achievable RD region for the problem and subsumes all previous coding
schemes.  The name is due to the combinatorial number of
common-component codebooks present.   It can be noted that CMSB scheme
is based on a construction of random codes where the codewords are
mutually independent, and where the codebooks do not have any algebraic
structure. 

In this paper, we provide a new coding strategy for the general
$l$-descriptions problem which strictly subsumes CMSB strategy which
is the best known in the literature till now. 
The coding strategy is based on the common-information
perspective. Taking a cue from the two-descriptions ZB strategy, we
propose that for the general $l$-descriptions problem the encoder
constructs a common constituent codebook for each subset of the 
$2^l-1$ decoders.  So,  for each subset of the decoders there 
is one common component  in the overall coding scheme.
This implies that the number of constituent codebooks grows double-exponentially in $l$. 
However, we prove that only an asymptotically exponential number of the codebooks are
necessary in terms of contributing to the rate-distortion region, and
the rest are redundant. This significantly simplifies the coding strategy. As an
example, for the $l=3$ case, there are $2^{2^3-1}=128$ possible
common code components,  but only $17$ of the
corresponding codebooks are non-redundant. It turns out that one can identify all of the
non-redundant codebooks by associating them with the Sperner families
of sets \cite{sperner}. As a result, we call the new scheme the Sperner Set Coding
(SSC) scheme. The CMSB scheme utilizes $14$ codebooks for the
$3$-descriptions problem. We prove analytically that the addition of the $3$ new
codebooks in the SSC scheme results in an improved achievable RD
region. In other words, we show analytically that the CMSB scheme is
not complete.  Additionally, we propose a generalized binning approach which
improves upon the CMSB scheme and further enhances the SSC
scheme. We characterize the asymptotic performance of this coding
scheme using computable single-letter information quantities. 
This forms the first part of the paper. Similar to the coding scheme
of CMSB, the SSC scheme uses random unstructured codes. 

It has been observed in several other multi-terminal communications settings
such as the Broadcast Channel (BC) \cite{BC}, Interference Channel
(IC) \cite{IC}, variations of the MAC channel \cite{Nazer}\cite{Zamir}
and the Distributed Source Coding (DSC) problem  \cite{DSC}, that the
application of algebraic structured codes results in improvements over random
unstructured codes in the asymptotic performance limits.
 Based on the inherent dualities between the
multi-terminal communication problems and the corresponding coding
schemes, these observations suggest that one may get such gains in
performance even in the MD problem. 

In the second part of the paper we show that SSC coding scheme which
is based on unstructured codes as mentioned above is not complete. 
We provide several specific examples of $3$- and $4$-description
problems and example-specific  coding schemes based on
random linear codes that perform strictly better than the above SSC coding
scheme. Subsequently, we supplement the above SSC scheme with new
coding layers which have algebraic structure. We restrict our
attention to the algebraic structure associated with finite fields. 
We present a unified coding scheme which works for arbitrary sources
and distortion measures. We characterize the asymptotic performance of
this coding scheme using computable single-letter information
quantities. We interpret the SSC coding as capturing the common
information components among $2^{l}-1$ decoders using univariate
functions, and the algebraic coding supplement as capturing common
information  among $2^{l}-1$ decoders using bivariate and multivariate
functions. 

The rest of the paper is organized as follows. Section
\ref{sec:notation} explains the notation used in the paper. Section
\ref{sec:prelim} provides an overview of the ideas developed in
previous works and provides the groundwork for the next sections. In
Section \ref{sec:Random}, we present a new unstructured coding
strategy which improves upon the CMSB scheme. We show that there are
two different types of gains compared to the previous scheme:  the
first is due to the addition of several common-component codebook
layers, the second is due to a more generalized binning method. In
Section \ref{sec:linexamples}, we identify examples where
improvements due to structured coding materialize in the MD setup. In
this section, we investigate three different examples. In two of the
examples the achievable RD region is improved via using linear
quantizers, and in the other example the gains are due to linear
binning. In Section \ref{sec:RDregion} we generalize the ideas in the
previous section and provide an achievable RD region for the general
$l$-descriptions problem. Since the characterization of RD region is
involved and complicated we provide the final RD region through
several steps, adding new coding layers in each step. Section
\ref{sec:conclusion} concludes the paper. 

 \begin{figure}[t]

\centering
\includegraphics[width=2.5in]{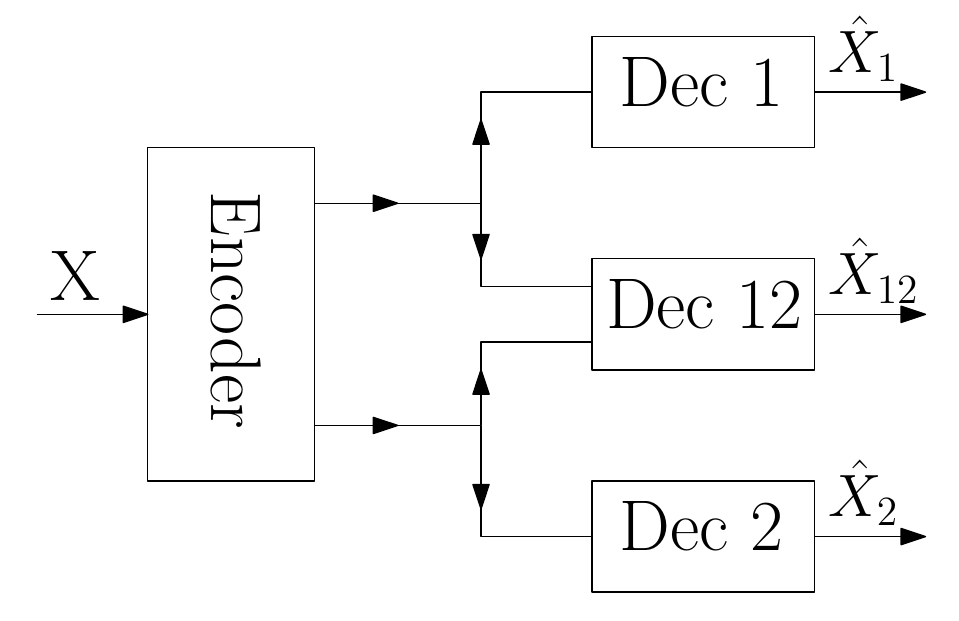}

\caption{The Two-Descriptions Setup}

\label{2Desc}
\end{figure}

\section{Definitions and Notation}
\label{sec:notation}
In this section we introduce the notation used in the paper. We restrict ourselves to finite alphabet random variables. 
We denote random variables by capital letters such as $X, U$ and their
corresponding alphabets (finite) by sans-serif typeface $\mathsf{X}$,
$\mathsf{U}$, respectively. Numbers are denoted by small letters such
as $l, k$. Sets of numbers are also denoted by the sans-serif typeface
such as $\mathsf{M}, \mathsf{N}$. Specifically,  
we denote the set of natural numbers by $\mathbb{N}$, and the field of size $q$ by $\mathbb{F}_q$. The set of numbers $\{1,2,\ldots,m\}$ is also denoted by $[1,m]$. $\alpha_\mathsf{M}$ is used to express the vector $(\alpha_1,\alpha_2,..., \alpha_m)$ where $\mathsf{M}=\{1,2,\ldots,m\}$.
 A collection whose elements are sets is called a family of sets and is denoted by the calligraphic typeface $\mathcal{M}$. For a given family of sets $\mathcal{M}$ we define a set $\widetilde{\mathcal{M}}=\bigcup_{\mathsf{M}\in \mathcal{M}}\mathsf{M}$ as the set of numbers which are the elements of the sets in $\mathcal{M}$.   The family of sets containing all subsets of $\mathsf{M}$ is denoted by $2^\mathsf{M}$. A collection whose elements are families of sets is denoted by the bold typeface $\mathbf{M}$. The collection of families of sets $\{\mathcal{A}_1,\mathcal{A}_2,\ldots \mathcal{A}_m\}$ is also represented by $\mathcal{A}_\mathsf{M}$.
 Random variables are indexed by families of sets as in $U_{\mathcal{M}}$. For the purposes of brevity we will write $U_{\mathsf{M}_1,\mathsf{M}_2,\ldots,\mathsf{M}_n}$ instead of $U_{\mathcal{M}}$ where $\mathcal{M}={\{\mathsf{M}_1,\mathsf{M}_2,\ldots,\mathsf{M}_n\}}$ wherever the notation doesn't cause ambiguity. $U_{\mathcal{M}}^n$ denotes a vector of length $n$ of random variables, each distributed according to the distribution $P_{U_{\mathcal{M}}}$. For $\epsilon>0$ and $n\in\mathbb{N}$, we denote the set of $n$-length vectors which are $\epsilon$-typical with respect to $P_{U_\mathcal{M}}$ by $A_{\epsilon}^n(U_{\mathcal{M}})$. We use the definition of frequency typicality as given in \cite{csiszarbook} in this paper.

We denote a set of random variables as follows
$U_{\mathbf{M}}=\{U_{\mathcal{M}}|\mathcal{M}\in\mathbf{M}\}$. For two
collections of families $\mathbf{M}_1$ and $\mathbf{M}_2$, we write
$[U,V]_{(\mathbf{M}_1,\mathbf{M}_2)}$ to denote the unordered
collection of random variables
$\{U_{\mathbf{M}_1},V_{\mathbf{M}_2}\}$. Let
$\mathbf{N}_i\subset\mathbf{M}_i, i=1,2$, and define
$\overline{\mathbf{N}}=(\mathbf{N}_1, \mathbf{N}_2)$. We express this
as $\overline{\mathbf{N}}\subset
({\mathbf{M}_1,\mathbf{M}_2})$. Unions, intersections and complements
are defined for $(\mathbf{M}_1, \mathbf{M}_2)$ in the same manner. 
A family of sets is called a Sperner family of sets if none of its
elements is a subset of another element. In other words a family of
sets $\mathcal{S}$ is a Sperner family if $\nexists \mathsf{N},
\mathsf{N}'\in \mathcal{S}, \mathsf{N}\subsetneq \mathsf{N}'$. For any
given set $\mathsf{M}$, the three families $\phi, \{\phi\}$ and $\{
\mathsf{M}\}$ are all Sperner families. For a set $\mathsf{M}$, we
define the collection of families of sets $\mathbf{S}_\mathsf{M}$ as
the set of all Sperner families whose elements are subsets of
$\mathsf{M}$ except for the three trivial Sperner families mentioned
above. So we have $\mathbf{S}_\mathsf{M}=\{\mathcal{S}|\nexists
\mathsf{N}, \mathsf{N}'\in \mathcal{S}, \mathsf{N}\subsetneq
\mathsf{N}'\}\backslash\{\phi, \{\phi\}, \{ \mathsf{M}\}\}$.

For the general $l$-descriptions problem, we define the set
$\mathsf{L}\triangleq  [1,l]$, and this set represents the set of all
descriptions. Each decoder receives a subset of these
descriptions. Let $l_i\in \mathsf{L}, i\in [1,n]$ for some $n$. We
denote the decoder which receives descriptions $l_1,l_2,\ldots,l_n$ by
the set $\mathsf{\mathsf{N}}=\{l_1,l_2,..,l_n\}$. Define the family of
sets $\mathcal{L}\triangleq 2^{\mathsf{L}}-\{\phi\}$. This family of
sets corresponds to the set of all possible decoders. 
We further explain the notation through an example. Consider the
three-descriptions problem. In this case we have $l=3$, the set of
descriptions are $\mathsf{L}=\{1,2,3\}$. There are seven possible
decoders. The set of all decoders is $\mathcal{L}=
\Big\{\{1\},\{2\},\{3\},\{1,2\},\{1,3\},\{2,3\},\{1,2,3\}\Big\}$. Consider
the two families of sets $\mathcal{M}_1=\Big\{\{1,2\},\{1,3\}\Big\}$
and $\mathcal{M}_2=\Big\{\{1\},\{3\},\{1,2\}\Big\}$. In this case,
$\widetilde{\mathcal{M}_i}=\{1,2,3\}, i\in \{1,2\}$. Define the set
$\mathbf{M}=\{\mathcal{M}_1,\mathcal{M}_2\}$. The set of random
variables $\{U_{\mathcal{M}_1},U_{\mathcal{M}_2}\}$ is denoted by
$U_{\mathbf{M}}=U_{\mathcal{M}_1,\mathcal{M}_2}$. Here $\mathcal{M}_1$
is a Sperner family, but $\mathcal{M}_2$ is not a Sperner family since
$\{1\},\{1,2\}\in \mathcal{M}_2$ and $\{1\}\subsetneq\{1,2\}$,
furthermore $\mathcal{M}_1\in \mathbf{S}_\mathsf{L}$ but
$\mathcal{M}_2\notin\mathbf{S}_\mathsf{L}$. 
The second part of the paper is involves application of linear codes and their cosets. The following gives a formal definition for such codes,
\begin{Definition} 
 Let $q$ be a prime number. A  $(k,n)$ linear code $\mathcal{C}$ is
 characterized by its generator matrix $G_{k\times n}$ defined on
 $\mathbb{F}_q$. $\mathcal{C}$ is defined as follows:
 $\mathcal{C}\triangleq\{\mathbf{u}G|\mathbf{u}\in \mathbb{F}_q^k\}$. 
 A  coset code $\mathcal{C}'$ is a shifted version of a linear code
 and is characterized by a generator matrix $G_{k\times n}$ and a
 dither $\mathbf{b}^n$ defined on $\mathbb{F}_q$. $\mathcal{C}'$ is
 defined as follows: 
$\mathcal{C}'\triangleq\{\mathbf{u}G+\mathbf{b}|\mathbf{u}\in \mathbb{F}_q^k\}$. 
\end{Definition}
We will make frequent use of nested linear codes. A pair of nested linear codes is defined as follows, 
\begin{Definition}
\label{def:PNLC}
For natural numbers $k_i<k_o<n$, let $G_{k_i\times n}$, and $\Delta{G}_{(k_o-k_i)\times n}$  be matrices on $\mathbb{F}_q$. Define $\mathcal{C}_i, \mathcal{C}_o$ as the linear codes generated by $G$, $[G|\Delta{G}]$, respectively.  ($\mathcal{C}_i, \mathcal{C}_o)$ is called a pair of nested linear codes with the inner code $\mathcal{C}_i$ and the outer code $\mathcal{C}_o$.  Nested coset codes are defined as shifted versions of nested linear codes. 
 \end{Definition}




\section{Preliminaries}
\label{sec:prelim}
\subsection{Problem Statement}
The general $l$-descriptions problem is described in this section. The setup is characterized by a discrete memoryless source with probability distribution $P_X(x), x\in \mathsf{X}$, where $\mathsf{X}$ is a finite set, and the distortion functions $d_{\mathsf{N}}: \mathsf{X}\times\hat{\mathsf{X}}_{\mathsf{N}}\to \mathbb{R}^+, \mathsf{N}\in \mathcal{L}$, where $\hat{\mathsf{X}}_{\mathsf{N}}$ is the reconstruction alphabet. We assume that the distortion functions are bounded, and that the distortion for the $n$-length sequence $(x^n,\hat{x}^n)$ is given by the average distortion of the components $(x_i,\hat{x}_i)$. The discrete, memoryless source $X$ is fed into an encoder.  The encoder upon receiving a block of length $n$ of source symbols produces $l$ different indices called \textit{descriptions} of the source. These descriptions are sent to the decoders. Each decoder receives a specific subset of the descriptions. Decoder $\mathsf{N}, \mathsf{N}\in \mathcal{L}$ receives 
description $i$ for all $i\in \mathsf{N}$. Based on the descriptions it has received, the decoder produces a reconstruction of the source vector.
\begin{Definition}
 An $(n,\Theta_{1},\Theta_2, \ldots, \Theta_l)$ multiple-descriptions code consist of an encoder and $|\mathcal{L}|$ decoders: 
\begin{align*}
 & e_i:\mathsf{X}^n\to [1,\Theta_i], i \in\mathsf{L},\\
 &f_\mathsf{N}:\prod_{i\in \mathsf{N}} [1,\Theta_i]\to \hat{\mathsf{X}}_\mathsf{N}^n, \mathsf{N}\in \mathcal{L}.
\end{align*}
 
\end{Definition}

  The achievable rate-distortion (RD) region is defined as follows,

\begin{Definition}
The RD vector $(R_i, D_\mathsf{N})_{i\in \mathsf{L}, \mathsf{N}\in \mathcal{L}}$ is said to be achievable if for all $\epsilon>0$ and sufficiently large $n$, there exists an $(n,\Theta_{1},\Theta_2, \ldots, \Theta_l)$ multiple-descriptions code such that the following constraints are satisfied:
\begin{enumerate}
\item{$\frac{\log \Theta_i}{n}\leq R_i+\epsilon, \forall i\in \mathsf{L}$,}
\item{$E_{X^n}\Big[d_{\mathsf{N}}\big(f_\mathsf{N}((e_i(X^n))_{i\in \mathsf{N}}), X^n\big)\Big]\leq D_\mathsf{N}+\epsilon, \forall \mathsf{N}\in \mathcal{L}$. }
\end{enumerate}

\noindent The achievable RD region for the $l-$descriptions problem is the set of all achievable RD vectors.

\end{Definition}

\begin{Remark}
 Although the reconstruction alphabet can be different from the source alphabet, throughout this paper we assume that the two alphabets are the same for the ease of notation. The results hold for the general case. 
\end{Remark}

\subsection{Prior Works}

In this section we present a brief description of some of the previous known schemes, and state the corresponding inner bounds developed for the achievable RD region. One of the early strategies for coding over two descriptions was the El Gamal - Cover (EGC) strategy \cite{EGC}. Similar to all the other strategies explained in this section, the EGC scheme relies on random, unstructured codebook generation. The following theorem describes the corresponding inner bound to the achievable RD region which results from the EGC scheme. Note that this is an alternative way to characterize the inner bound described in \cite{EGC}.

\begin{Definition}
    For a joint distribution $P$ on random variables $(U_{\{1\}},U_{\{2\}},U_{\{1,2\}}, X, Q)$ and a set of reconstruction functions $g_{\mathcal{L}}=\{g_\mathsf{N}:\mathsf{U}_{\mathsf{N}}\to\mathsf{X}, \mathsf{N}\in\mathcal{L}\}$, the set $\mathcal{RD}_{EGC}(P, g_\mathcal{L}) $ is defined as the set of RD vectors satisfying the following bounds:
\begin{align}
&R_1\geq I(U_{\{1\}};X|Q),\ \  R_2\geq I(U_{\{2\}};X|Q),\\&  R_1+R_2\geq I(U_{\{1\}},U_{\{2\}};X|Q)+I(U_{\{1\}};U_{\{2\}}|Q)+I(U_{\{1,2\}};X|U_{\{1\}},U_{\{2\}},Q), \label{RL} \\
&D_{\mathsf{N}}\geq E(d_\mathsf{N}(g_{\mathsf{N}}(U_{\mathsf{N}},Q),X)),\mathsf{N}\in \mathcal{L}.
\end{align}
\end{Definition}

\begin{Theorem}[EGC] 
The RD vector $(R_1,R_2,D_{\{1\}},D_{\{2\}},D_{\{1,2\}})$ is achievable for the two descriptions problem, if there exists a  distribution $P$ and reconstruction functions $g_\mathcal{L}$ such that $(R_1,R_2,D_{\{1\}},D_{\{2\}},D_{\{1,2\}})\in\mathcal{RD}_{EGC}(P, g_\mathcal{L})$. 
\end{Theorem}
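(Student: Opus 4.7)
The plan is to prove this by a random coding argument combined with the mutual covering lemma and rate-splitting. Fix the joint distribution $P$ on $(U_{\{1\}},U_{\{2\}},U_{\{1,2\}},X,Q)$ and the reconstruction functions $g_{\mathcal{L}}$, and pick small $\epsilon,\delta>0$.

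\textbf{Codebook generation.} Draw a time-sharing sequence $Q^n$ i.i.d.\ from $P_Q$ and reveal it to all parties. Conditioned on $Q^n$, independently generate $2^{nR'_1}$ codewords $U_{\{1\}}^n(m_1)$ i.i.d.\ from $P^n_{U_{\{1\}}|Q}$ and $2^{nR'_2}$ codewords $U_{\{2\}}^n(m_2)$ i.i.d.\ from $P^n_{U_{\{2\}}|Q}$. For each pair $(m_1,m_2)$, generate $2^{nR_{12}}$ codewords $U_{\{1,2\}}^n(m_1,m_2,k)$ i.i.d.\ from $P^n_{U_{\{1,2\}}|U_{\{1\}},U_{\{2\}},Q}$.

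\textbf{Encoding.} Given the source realization $X^n$, search jointly for $(m_1,m_2,k)$ such that the corresponding codewords together with $Q^n$ and $X^n$ are jointly $\epsilon$-typical under $P$. By the mutual covering lemma, a pair $(m_1,m_2)$ whose codewords are jointly typical with $(Q^n,X^n)$ exists with probability approaching one provided $R'_1\geq I(U_{\{1\}};X|Q)+\delta$, $R'_2\geq I(U_{\{2\}};X|Q)+\delta$, and $R'_1+R'_2\geq I(U_{\{1\}},U_{\{2\}};X|Q)+I(U_{\{1\}};U_{\{2\}}|Q)+\delta$. Conditional covering then supplies $k$ provided $R_{12}\geq I(U_{\{1,2\}};X|U_{\{1\}},U_{\{2\}},Q)+\delta$.

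\textbf{Rate-splitting, decoding, distortion.} Partition $k$ into two independent parts $(k_1,k_2)$ of rates $r_1+r_2=R_{12}$, and have description $i\in\{1,2\}$ transmit $(m_i,k_i)$ at rate $R_i=R'_i+r_i$. Individual decoder $i$ recovers $U_{\{i\}}^n$ from the index $m_i$; the central decoder, receiving both descriptions, recovers $(m_1,m_2,k)$ and hence $U_{\{1,2\}}^n$. Each decoder then applies the letter-wise reconstruction $g_\mathsf{N}$. Joint typicality of the selected tuple together with boundedness of $d_\mathsf{N}$ implies that the expected per-letter distortion at decoder $\mathsf{N}$ is within $O(\epsilon)$ of $E[d_\mathsf{N}(g_\mathsf{N}(U_\mathsf{N},Q),X)]$. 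Choosing $R'_1,R'_2$ to meet the mutual-covering bounds at equality and summing yields $R_1+R_2$ arbitrarily close to the bound \eqref{RL}, after which taking $\epsilon\downarrow 0$ and $n\to\infty$ gives every vector in $\mathcal{RD}_{EGC}(P,g_{\mathcal{L}})$.

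\textbf{Main obstacle.} The delicate step is the mutual covering lemma, which is exactly what forces the $I(U_{\{1\}};U_{\{2\}}|Q)$ penalty in the sum-rate. Since $U_{\{1\}}^n$ and $U_{\{2\}}^n$ are drawn independently conditional on $Q^n$, a typical pair that is also jointly typical with $X^n$ is atypical under the independent product, and a second-moment (variance) estimate on the number of such pairs is needed to guarantee existence with high probability. Once this lemma is established, the remainder follows standard point-to-point conditional-codebook arguments, so the covering calculation is the only non-routine portion of the proof.
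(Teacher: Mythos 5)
Your proof is correct and follows exactly the scheme the paper sketches (and attributes to El Gamal--Cover): independent codebooks drawn from the marginals $P_{U_{\{1\}}|Q}$ and $P_{U_{\{2\}}|Q}$, a conditional refinement codebook for $U_{\{1,2\}}$, the mutual covering lemma producing the $I(U_{\{1\}};U_{\{2\}}|Q)$ sum-rate penalty in \eqref{RL}, and splitting of the refinement index across the two descriptions. The paper itself gives no proof of this theorem—it is quoted as a known result with only an intuitive discussion—so your argument is essentially the standard one and is sound, including the (implicit but easily verified) feasibility of the rate split covering every point of $\mathcal{RD}_{EGC}(P,g_{\mathcal{L}})$.
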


In the EGC scheme, two codebooks $C_{\{1\}}$ and $C_{\{2\}}$  are generated independently based on the marginals  $P_{U_{\{1\}}|Q}$ and $P_{U_{\{2\}}|Q}$. The two codebooks  should be large enough so that the encoder can find a pair of jointly typical codevectors in the two codebooks. If the codebooks were generated jointly based on the joint distribution $P_{U_{\{1\}},U_{\{2\}}|Q}$,  $R_1+R_2\geq I(U_{\{1\}},U_{\{2\}};X|Q)+I(U_{\{1,2\}};X|U_{\{1\}},U_{\{2\}},Q)$ would ensure the existence of such jointly typical codevectors, however in the EGC scheme, since the codebooks are generated independently, a \textit{rate-penalty} is inflicted on the encoder. The term $I(U_{\{1\}};U_{\{2\}}|Q)$ in \eqref{RL} is a manifestation of this \textit{rate-penalty}. 
Towards reducing the \textit{rate-penalty} a new coding strategy was introduced. The resulting achievable RD region is called the Zhang-Berger [ZB] region. The region is given in the following theorem:

\begin{Definition}
   For a joint distribution $P$ on random variables $(U_{\{1\},\{2\}},U_{\{1\}},U_{\{2\}},U_{\{1,2\}}, X)$ and set of reconstruction functions $g_{\mathcal{L}}=\{g_\mathsf{N}:\mathsf{U}_{\mathsf{N}}\to\mathsf{X}, \mathsf{N}\in\mathcal{L}\}$, the set $\mathcal{RD}_{ZB}(P, g_\mathcal{L}) $ is defined as the set of RD vectors satisfying the following bounds:
\begin{align*}
&R_1\geq I(U_{\{1\},\{2\}},U_{\{1\}};X),\ \ \  R_2\geq I(U_{\{1\},\{2\}},U_{\{2\}};X),\\& R_1+R_2\geq I(U_{\{1\},\{2\}};X)+I(U_{\{1\},\{2\}},U_{\{1,2\}},U_{\{1\}},U_{\{2\}};X) +I(U_{\{1\}};U_{\{2\}}|U_{\{1\},\{2\}}),  \\
&D_\mathsf{N}\geq E(d_\mathsf{N}(g_\mathsf{N}(U_\mathsf{N}),X)),\mathsf{N}\in\mathcal{L}.
\end{align*}
 \end{Definition}

\begin{Theorem}[ZB] \label{thm:ZB} The RD vector $(R_1,R_2,D_{\{1\}},D_{\{2\}},D_{\{1,2\}})$ is achievable for the two descriptions problem, if there exists a distribution $P$ and reconstruction functions $g_\mathcal{L}$ such that $(R_1,R_2,D_{\{1\}},D_{\{2\}},D_{\{1,2\}})\in\mathcal{RD}_{ZB}(P, g_\mathcal{L})$. 
\end{Theorem}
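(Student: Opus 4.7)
The plan is to construct a four-layer random code that mirrors the four auxiliary variables $U_{\{1\},\{2\}},U_{\{1\}},U_{\{2\}},U_{\{1,2\}}$ and use joint-typicality encoding. Fix $P$ and reconstruction functions $g_\mathcal{L}$ achieving the prescribed distortions, and a small $\delta>0$. Generate a common codebook $\mathcal{C}_0$ of $2^{nR_0}$ sequences drawn i.i.d.\ according to $P_{U_{\{1\},\{2\}}}$; for each $u_0^n\in\mathcal{C}_0$, generate conditional codebooks $\mathcal{C}_1(u_0^n)$ and $\mathcal{C}_2(u_0^n)$ of sizes $2^{nR_1'}$ and $2^{nR_2'}$, with codewords drawn conditionally i.i.d.\ from $P_{U_{\{i\}}|U_{\{1\},\{2\}}}$; and for each triple $(u_0^n,u_1^n,u_2^n)$ a central refinement codebook of size $2^{nR_{12}}$ drawn conditionally i.i.d.\ from $P_{U_{\{1,2\}}|U_{\{1\},\{2\}},U_{\{1\}},U_{\{2\}}}$. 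The index of $u_0^n$ is transmitted on both descriptions, the index of $u_i^n$ only on description $i$, and the index of the selected central codeword is split into two parts of rates $R_{12,1}$ and $R_{12,2}$ with $R_{12,1}+R_{12,2}=R_{12}$, so that description $i$ has total rate $R_i=R_0+R_i'+R_{12,i}$.

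Upon observing $x^n$, the encoder first finds $u_0^n\in\mathcal{C}_0$ jointly typical with $x^n$ (covering lemma: $R_0>I(U_{\{1\},\{2\}};X)$), then locates a pair $(u_1^n,u_2^n)\in\mathcal{C}_1(u_0^n)\times\mathcal{C}_2(u_0^n)$ jointly typical with $(u_0^n,x^n)$ via a mutual-covering argument, and finally picks $u_{12}^n$ jointly typical with $(u_0^n,u_1^n,u_2^n,x^n)$, for which $R_{12}>I(U_{\{1,2\}};X|U_{\{1\},\{2\}},U_{\{1\}},U_{\{2\}})$ suffices. The mutual-covering step succeeds with probability approaching one provided $R_i'>I(U_{\{i\}};X|U_{\{1\},\{2\}})$ for $i=1,2$ together with the joint condition $R_1'+R_2'>I(U_{\{1\}},U_{\{2\}};X|U_{\{1\},\{2\}})+I(U_{\{1\}};U_{\{2\}}|U_{\{1\},\{2\}})$. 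Each decoder $\mathsf{N}\in\mathcal{L}$ recovers its codewords by index lookup and outputs $g_\mathsf{N}$ applied component-wise; since the chosen codewords are jointly typical with $x^n$ with high probability, the typical-average lemma yields expected distortion at most $D_\mathsf{N}+O(\epsilon)$. Eliminating $R_0,R_1',R_2',R_{12},R_{12,1},R_{12,2}$ by Fourier--Motzkin from these encoding constraints together with the identities $R_i=R_0+R_i'+R_{12,i}$ is expected to reproduce exactly the three inequalities in the theorem statement.

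The hard part is the mutual-covering step for the cross pair $(u_1^n,u_2^n)$. Because $\mathcal{C}_1(u_0^n)$ and $\mathcal{C}_2(u_0^n)$ are generated independently conditional on $u_0^n$, the individual marginal covering rates do not by themselves force some pair across the two codebooks to be \emph{jointly} typical with $(u_0^n,x^n)$; the extra term $I(U_{\{1\}};U_{\{2\}}|U_{\{1\},\{2\}})$ that appears in the sum-rate bound is precisely the price paid for this independence, directly analogous to the EGC rate penalty recalled above. The standard way to make this rigorous is a second-moment computation on the number of jointly typical pairs in $\mathcal{C}_1(u_0^n)\times\mathcal{C}_2(u_0^n)$, performed conditionally on a typical $(u_0^n,x^n)$. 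Once that covering step is in place, the remaining ingredients---rate-splitting for the central codebook, per-description rate accounting, and the distortion bound via the typical-average lemma---are routine.
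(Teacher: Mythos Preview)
The paper does not supply its own proof of this theorem; it is stated as a known result due to Zhang and Berger \cite{ZB} and used as background for the new contributions. Your proposal is correct and is the standard superposition-coding proof of the ZB region: conditional codebooks, mutual covering for the independently generated $(U_{\{1\}},U_{\{2\}})$ layer, a refinement layer with rate-splitting, and Fourier--Motzkin to recover the three displayed inequalities.

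One remark worth making for context: the paper's own achievability engine (Theorem~\ref{thm:SSC}) is organized differently. There, every codebook $C_{\mathcal{M}}$ is generated \emph{independently from the marginal} $P_{U_{\mathcal{M}}}$ (no superposition), and the coupling is enforced after the fact by joint-typicality encoding together with per-description random binning; the rate constraints then appear as a system in the auxiliary variables $(r_{\mathcal{M}},\rho_{\mathcal{M},i})$ rather than in layer rates $(R_0,R_i',R_{12,i})$. Specialized to $l=2$ that machinery also recovers the ZB region, but it trades the cleaner superposition bookkeeping you use for a formulation that scales uniformly to arbitrary $l$. Neither approach is needed to establish Theorem~\ref{thm:ZB} itself, which the paper simply quotes.
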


\noindent The closure of the union of all the achievable vectors is called the ZB rate-distortion region and is denoted by $\mathcal{RD}_{ZB}$:
\begin{align*}
 \mathcal{RD}_{ZB}=cl\left(\underset{{P,g_\mathcal{L}}}\bigcup \ \mathcal{RD}_{ZB}(P, g_{\mathcal{L}})\right).
\end{align*}

The scheme differs from the EGC strategy in the introduction of the random variable $U_{\{1\},\{2\}}$.  The random variable $U_{\{1\},\{2\}}$ is called the \textit{common-component} between  the two descriptions. In the EGC scheme, in order to send $U_{\{1\}}=(\widetilde{U}_{\{1\}},U_{\{1\},\{2\}})$ and $U_{\{2\}}=(\widetilde{U}_{\{2\}},U_{\{1\},\{2\}})$, one has to pay the following \textit{rate-penalty}:
\begin{equation*}
I(U_{\{1\}};U_{\{2\}})=H(U_{\{1\},\{2\}})+I(\widetilde{U}_{\{1\}};\widetilde{U}_{\{2\}}|U_{\{1\},\{2\}}).
\end{equation*}
But in the ZB  scheme the \textit{rate-penalty} is reduced to:
\begin{equation*}
I(U_{\{1\},\{2\}};X)+I(U_{\{1\}};U_{\{2\}}|U_{\{1\},\{2\}})=I(U_{\{1\},\{2\}};X)+I(\widetilde{U}_{\{1\}};\widetilde{U}_{\{2\}}|U_{\{1\},\{2\}}).
\end{equation*}


The following definition provides a characterization of the common-component between two random variables,

\begin{Definition}
 Let $X_{\{1\}}$ and $X_{\{2\}}$ be two random variables. $W$ is called a common-component between $X_{\{1\}}$ and $X_{\{2\}}$, if there exist functions $h_i:\mathsf{X}_{\{i\}}\to \mathsf{W}, i=1,2$ such that $W=h_1(X_{\{1\}})=h_2(X_{\{2\}})$ with probability one, and the entropy of $W$ is positive.
\end{Definition}

It was shown in \cite{ZB} that in a certain two-descriptions setup, the addition of $U_{\{1\},\{2\}}$ enlarges the RD region. We call such a random variable non-redundant. The following definition gives a formal description of a non-redundant random variable:
\begin{Definition}
 In a given achievable RD region for the $l-$descriptions setup, characterized by a collection of auxiliary random variables, an auxiliary random variable $U$ is called non-redundant if the RD region strictly reduces when $U$ is set as constant. 
\end{Definition}

\begin{Example}
\label{ex:ZB}
We provide an overview of the example in \cite{ZB} where the ZB rate-distortion region is strictly better than EGC rate-distortion region, since it is used extensively in the following sections. Consider the two-descriptions setting. Here $X$ is a binary symmetric source (BSS), and the side decoders intend to reconstruct $X$ with Hamming distortion. The central decoder needs a lossless reconstruction of the source. In \cite{ZB}, it is shown that the rate distortion vector $(R_1, R_2, D_{\{1\}}, D_{\{2\}},D_{\{1,2\}})= (0.629, 0.629, 0.11, 0.11,0)$ is achievable using the ZB scheme but not the EGC scheme. 

Typically, in a given RD region, a codebook is associated with each random variable. 
We call a codebook non-redundant  if it is associated with a non-redundant random variable.  
In ZB coding scheme, the codebook corresponding to $U_{\{1\},\{2\}}$ is non-redundant.

\end{Example}
The idea of constructing a codebook carrying the common-component between the two random variables is the foundation of most of the schemes proposed for the general $l-$descriptions problem. One can even interpret the main difference between these schemes to be the way the common-component between different random variables are exploited.

As explained in the introduction, the best known achievable RD region for the $l-$descriptions problem is the CMS with binning (CMSB) strategy. In this strategy a combinatorial number of \textit{common-component} random variables are considered. We explain the coding scheme for the three-descriptions case. The codebook structure is shown in Figure \ref{fig:CMS}. There are two layers of codebooks, a layer of Maximum-Distance Separable (MDS) codes and a layer of Source Channel Erasure Codes (SCEC's). The codebook $C_{\mathcal{M}}$ is decoded at decoder $\mathsf{N}$ if $\exists \mathsf{N}'\in \mathcal{M}, \mathsf{N}'\subset \mathsf{N}$. The codebooks are binned independently, and the bin numbers for the MDS code $C_{\mathcal{M}}$ are carried by description $i $ if $i\in \underset{N\in \mathcal{M}}\bigcup N$. Whereas the bin number for each SCEC is carried by only one description i where $i\in  \underset{N\in \mathcal{M}}\bigcap N$.  Let $\mathcal{RD}_{CMSB}$ denote the resulting  
RD region achievable using CMSB strategy (see \cite{CMS,CMS2}).

\begin{figure}[!h]
\centering
\includegraphics[width=3.5in]{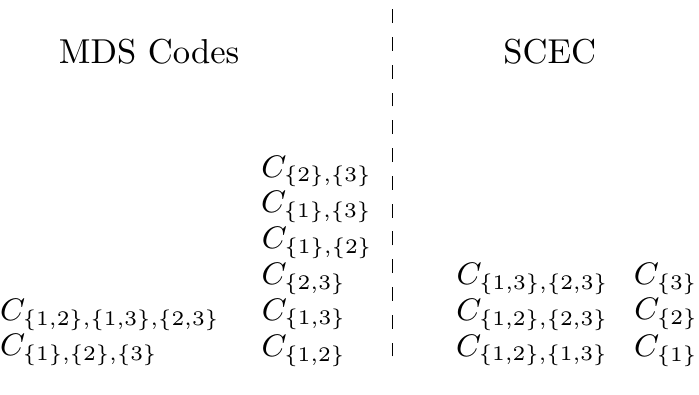}
\caption{The structure of CMSB codebooks in the three-descriptions problem}
\label{fig:CMS}
\end{figure}

\section{Improvements Using Unstructured Codes}
\label{sec:Random}
Our objective is to provide a new achievable RD region for the $l$-descriptions problem, which improves upon the RD region given by the CMSB strategy. This is based on a new coding scheme involving both unstructured and structured codes. The achievable RD region and the corresponding coding scheme is presented pedagogically in two steps. In the first step, presented in this section, we provide an RD region achievable using unstructured codes. This region is strictly better than the CMSB region. In other words this is an improvement upon the CMSB region using only unstructured codes. In the second step, presented in the next two sections, this is enhanced with a structured coding layer which improves the performance even further. In other words we show that the codebooks associated with the structured coding layer are non-redundant. 
\subsection{Main Results}


We describe the key ideas for the case $l=3$.  There are $7$ distinct decoders, one associated with every non-empty 
subset of  $\mathsf{L}=\{1,2,3\}$.  That is, we identify the set of decoders with 
$\mathcal{L}=2^{\mathsf{L}} \backslash \phi$. The new achievable RD region that we provide improves upon the 
CMSB rate-distortion region on two factors. The first comes by adding extra codebooks, and the second comes by a more general binning method. Using the common-component perspective, we associate with every non-empty subset $\mathcal{M}$ of these $7$ decoders an auxiliary random variable and a corresponding codebook. That is, we identify the collection of auxiliary variables (and their codebooks) with $2^{\mathcal{L}} \backslash \phi$.  Each codebook is binned multiple times. 
If a description is received by at least one decoder in $\mathcal{M}$, then a bin 
index of the codebook associated with $\mathcal{M}$ is sent on that description.

Although it appears that  the strategy involves the generation of a doubly-exponential number of codebooks (in $l$), we show that most of these codebooks are redundant, leaving only an asymptotically exponential number of non-redundant codebooks. While the remaining codebooks are generally non-redundant, only a small number of them are such in most of the examples we consider in this paper. 


\begin{figure}[!h]
\centering
\includegraphics[width=4in]{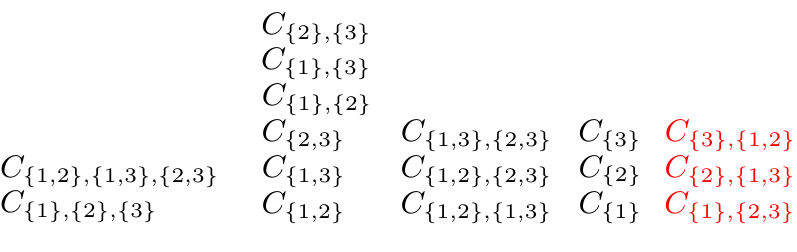}
\caption{The SSC codebooks present in the three-descriptions problem}
\label{codebooks}
\end{figure}
It turns out that a codebook is non-redundant if and only if it is associated with a a family of sets in $\mathbf{S}_{\mathsf{L}}$.
So, instead of 
$63$ codebooks, we have just $17$.  Since the  indices of the codebooks are associated with the Sperner families of sets, we call the scheme the Sperner Set Coding (SSC) scheme.
A schematic of the codebook collection is shown in Figure \ref{codebooks}.  We start from the left and from the top. The first two codebooks can be identified as $(3,2)$ and $(3,1)$ MDS codes. 
The next six codebooks can be identified as three $(2,1)$ MDS codes, and three
$(2,2)$ MDS codes associated with decoders which get two descriptions. 
 The next three can be identified as $(3,2)$ source-channel erasure codes (SCEC). 
The next three can be identified as $(3,1)$ SCEC's
(similar to the codebooks used in the EGC rate region). All these $14$ codebooks are considered in deriving the CMSB rate region. The final set of codebooks are new. They can be identified as three $(2,1)$ MDS codes associated with decoders that receive disjoint subsets of descriptions. The following theorem characterizes the achievable RD region for the SSC scheme:

\begin{Definition}
 For a joint distribution $P$ on random variables $U_{\mathcal{M}},\mathcal{M}\in \mathbf{S}_{\mathsf{L}}$ and $X$ and a set of reconstruction functions $g_{\mathcal{L}}=\{g_\mathsf{N}:\mathsf{U}_{\mathsf{N}}\to\mathsf{X}, \mathsf{N}\in\mathcal{L}\}$, the set $\mathcal{RD}_{SSC}(P, g_\mathcal{L}) $ is defined as the set of RD vectors satisfying the following bounds for some non-negative real numbers $(\rho_{\mathcal{M},i},r_{\mathcal{M}})_{i\in \widetilde{\mathcal{M}},\mathcal{M}\in \mathbf{S}_\mathsf{L} }$ :  
\begin{align}
&\label{cov11} H(U_{\mathbf{M}}|X)\geq  \!\!\!  \sum_{\mathcal{M}\in      \mathbf{M}}{  \!\!\!\!   (H(U_{{\mathcal{M}}})  \!  -   \!   r_{\mathcal{M}})},\forall \mathbf{M}\subset \mathbf{S}_\mathsf{L},
\\&\label{pack11} H(U_{\mathbf{M}_\mathsf{N}}|U_{\mathbf{L}\cup \widetilde{\mathbf{M}}_\mathsf{N}})\leq\!\!\!\!\!\! \sum_{\mathcal{M}\in \mathbf{M}_\mathsf{N}\backslash({\mathbf{L}\cup \widetilde{\mathbf{M}}_\mathsf{N}})} \!\!\!\!(H(U_\mathcal{M})+\!\!\sum_{i\in \widetilde{\mathcal{M}}}\!\rho_{\mathcal{M},i}-r_\mathcal{M}), \forall \mathbf{L}\subset \mathbf{M}_\mathsf{N}, \forall \mathsf{N}\in \mathcal{L},\\
&\nonumber r_\mathcal{M}\leq H(U_\mathcal{M}), \forall \mathcal{M}\in \mathbf{S}_\mathsf{L},\\
&R_i=\sum_{\mathcal{M}} \rho_{\mathcal{M},i},\quad D_{\mathsf{N}}=E\big\{d_{\mathsf{N}}(g_{\mathsf{N}}(U_{\mathsf{N}}),X)\big\}, \label{eq:RDSSC}
\end{align}
 where ${\mathbf{M}}_\mathsf{N}$ is the set of all codebooks decoded at decoder $\mathsf{N}$, that is ${\mathbf{M}}_\mathsf{N} \triangleq  \{\mathcal{M}\in \mathbf{S}_\mathsf{L}| \exists \mathsf{N}'\subset \mathsf{N},  \mathsf{N}'\in \mathcal{M}\}$, and $\widetilde{\mathbf{M}}_\mathsf{N}$ denotes the set of all codebooks decoded at decoders $\mathsf{N}_p\subsetneq \mathsf{N}$ which receive subsets of descriptions received by $\mathsf{N}$, that is  
$\widetilde{\mathbf{M}}_\mathsf{N} \triangleq  \bigcup_{\mathsf{N}_p \subsetneq \mathsf{N}}{\mathbf{M}}_{\mathsf{N}_p}$. 
\end{Definition}

\begin{Theorem} \label{thm:SSC}
The RD vector $(R_i,D_{\mathsf{N}})_{i\in \mathsf{L},\mathsf{N}\in \mathcal{L}}$ is achievable for the $l-$descriptions problem, if there exists a distribution $P$ and reconstruction functions $g_\mathcal{L}$ such that $(R_i,D_{\mathsf{N}})_{i\in \mathsf{L},N\in \mathcal{L}}\in\mathcal{RD}_{SSC}(P, g_\mathcal{L})$. 
\end{Theorem}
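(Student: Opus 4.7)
The plan is to prove Theorem \ref{thm:SSC} by a random coding argument with independent codebook generation and binning, extending the EGC/ZB construction to the full Sperner-indexed collection of codebooks. For each $\mathcal{M} \in \mathbf{S}_\mathsf{L}$, I would generate a codebook $\mathcal{C}_\mathcal{M}$ of $2^{nr_\mathcal{M}}$ codewords drawn i.i.d.\ from the marginal $P_{U_\mathcal{M}}$ (codebooks being mutually independent across $\mathcal{M}$), and independently assign each codeword a tuple of bin indices $(b_{\mathcal{M},i})_{i \in \widetilde{\mathcal{M}}}$ with $b_{\mathcal{M},i}$ uniform on $[1,2^{n\rho_{\mathcal{M},i}}]$. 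On input $X^n$, the encoder searches for an index tuple $(j_\mathcal{M})$ such that $(u^n_\mathcal{M}(j_\mathcal{M}))_{\mathcal{M} \in \mathbf{S}_\mathsf{L}}$ is jointly $\epsilon$-typical with $X^n$ under the joint distribution $P$, and transmits on description $i$ the collection $\{b_{\mathcal{M},i}(j_\mathcal{M}) : i \in \widetilde{\mathcal{M}}\}$. Decoder $\mathsf{N}$ collects from its received descriptions the bin indices of every codebook $\mathcal{M} \in \mathbf{M}_\mathsf{N}$ and performs joint-typicality decoding inside the intersected bins, using codewords already recovered by strict subset decoders (indexed by $\widetilde{\mathbf{M}}_\mathsf{N}$) as side information; the final reconstruction is $g_\mathsf{N}$ applied letter-wise to the decoded codewords.

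Two error events drive the analysis. The first, covering failure at the encoder, is handled by a multivariate covering lemma applied simultaneously across all of $\mathbf{S}_\mathsf{L}$: iteratively conditioning on previously-chosen codewords and exploiting independence across codebooks, the failure probability vanishes provided \eqref{cov11} holds for every nonempty subfamily $\mathbf{M} \subset \mathbf{S}_\mathsf{L}$. The second, decoding failure at $\mathsf{N}$, I would address by induction on $|\mathsf{N}|$: decoders for $\mathsf{N}_p \subsetneq \mathsf{N}$ correctly recover $U_{\widetilde{\mathbf{M}}_\mathsf{N}}$ with high probability, which $\mathsf{N}$ inherits. For each candidate partial-correctness subset $\mathbf{L} \subset \mathbf{M}_\mathsf{N}$ of additionally-correct codebooks, a union bound over alternative codeword tuples (counted by the residual bin sizes $\prod_\mathcal{M} 2^{n(r_\mathcal{M} - \sum_i \rho_{\mathcal{M},i})}$) multiplied by the joint-typicality probability (bounded by $2^{-n H(U_{\mathbf{M}_\mathsf{N}}|U_{\mathbf{L}\cup\widetilde{\mathbf{M}}_\mathsf{N}}) + o(n)}$) produces precisely the packing condition \eqref{pack11}.

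Given successful covering and packing, the per-letter expected distortion at $\mathsf{N}$ converges to $E[d_\mathsf{N}(g_\mathsf{N}(U_\mathsf{N}),X)]$ by the typical-average lemma, while the rate on description $i$ is $\sum_\mathcal{M} \rho_{\mathcal{M},i}$, matching \eqref{eq:RDSSC}. The main obstacle I anticipate is the combinatorial bookkeeping in the packing step: one must enumerate every partial-correctness pattern $(\mathbf{L},\mathsf{N})$ and align the per-description bin structures for codebooks whose index family $\mathcal{M}$ contains multiple pairwise-incomparable subsets (e.g.\ $\{\{1\},\{2\}\}$ in the $l=3$ case), since the same codebook must be decodable at distinct disjoint decoders from partially-disjoint bin indices. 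The Sperner restriction --- excluding families with comparable elements and the three trivial cases --- is what collapses the doubly-exponential count of candidate codebooks to an asymptotically exponential one by identifying redundant layers, and what makes the packing inequalities admit a compact indexing by $\mathbf{S}_\mathsf{L}$ rather than the full power set $2^\mathcal{L}$.
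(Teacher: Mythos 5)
Your proposal is correct and follows essentially the same route as the paper's own proof: independent random codebooks of size $2^{nr_\mathcal{M}}$ drawn from the marginals $P_{U_\mathcal{M}}$, independent uniform binning of $C_\mathcal{M}$ for each description $i\in\widetilde{\mathcal{M}}$, joint-typicality encoding and in-bin joint-typicality decoding at each decoder $\mathsf{N}$, with the covering analysis yielding \eqref{cov11} and the packing/union-bound analysis over partial-correctness patterns $\mathbf{L}\subset\mathbf{M}_\mathsf{N}$ yielding \eqref{pack11}. The only difference is presentational: the paper invokes the mutual covering and mutual packing lemmas of El Gamal and Kim where you spell out the iterative conditioning and the union bound with residual bin sizes, so no substantive gap remains.
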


\noindent The closure of the union of all such  achievable vectors is called the 
SSC achievable rate-distortion region and is denoted by $\mathcal{RD}_{SSC}$,
\begin{align*}
 \mathcal{RD}_{SSC}= cl \left( \underset{{P,g_\mathcal{L}}}\bigcup \mathcal{RD}_{SSC}(P, g_{\mathcal{L}}) \right).
\end{align*}
In order to clarify the notation we explain the random variables decoded at each decoder in the three-descriptions problem. When $l=3$, we know $\mathbf{S}_\mathsf{L}$ has 17 elements. In the formulas,  ${\mathbf{M}}_\mathsf{N}$ corresponds to the set of random variables decoded at decoder $\mathsf{N}$,  whereas $\widetilde{{\mathbf{M}}}_\mathsf{N}$ corresponds to the set of random variables which are decodable if we have access to strict subsets of the descriptions received by $\mathsf{N}$.  Here are the random variables decoded at decoders $\{1\}$ and $\{2,3\}$: 
\begin{align*}
 &\text{decoder $\{1\}$: } U_{\{1\},\{2\},\{3\}}, U_{\{1\},\{2\}}, U_{\{1\},\{3\}}, U_{\{1\},\{2,3\}}, U_{\{1\}}\\
&\text{decoder $\{2,3\}$: } U_{\{1\},\{2\},\{3\}}, U_{\{1,2\},\{1,3\},\{2,3\}}, U_{\{1\},\{2\}}, U_{\{1\},\{3\}}, U_{\{2\},\{3\}},\\&\qquad\quad\qquad U_{\{1\},\{2,3\}}, U_{\{2\},\{1,3\}} U_{\{3\},\{1,2\}}, U_{\{1,2\},\{2,3\}}, U_{\{1,3\},\{2,3\}}, U_{\{2\}}, U_{\{3\}}, U_{\{2,3\}}\\
\end{align*}
So as an example ${\mathbf{M}}_{\{1\}}=\Bigg\{{\Big\{\{1\},\{2\},\{3\}}\Big\}, \Big\{\{1\},\{2\}\Big\}, \Big\{\{1\},\{3\}\Big\}, \Big\{\{1\},\{2,3\}\Big\}, \Big\{\{1\}\Big\}\Bigg\}$ which are all the codebooks decoded at decoder $\{1\}$.
 Also $\widetilde{\mathbf{M}}_{\{2,3\}}=\Bigg\{\Big\{\{1\},\{2\},\{3\}\Big\}$, $ \Big\{\{1\},\{2\}\Big\}$, $\Big\{\{1\},\{3\}\Big\}$, $\Big\{\{2\},\{3\}\Big\}$, $\Big\{\{2\},\{1,3\}\Big\}$, $\Big\{\{3\},\{1,2\}\Big\}$, $\Big\{\{2\}\Big\}$, $\Big\{\{3\}\Big\}\Bigg\}$,  and these are all the codebooks which are decoded at decoders $\{2\}$ and $\{3\}$.

\begin{Lemma}\label{lem:SSCconvex}
 The SSC rate-distortion region is convex. 
\end{Lemma}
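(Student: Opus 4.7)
The plan is a time-sharing argument with the time-sharing index absorbed into the single codebook decoded at every decoder. I would take two RD vectors $v^{(j)} \in \mathcal{RD}_{SSC}(P^{(j)}, g^{(j)}_\mathcal{L})$ with witnessing parameters $(\rho^{(j)}, r^{(j)})$ for $j = 1, 2$, and construct, for any $\lambda \in [0, 1]$, a new tuple $(\tilde{P}, \tilde{g}_\mathcal{L}, \tilde{\rho}, \tilde{r})$ whose RD vector is within $O(h(\lambda)/k)$ (for arbitrarily large $k$) of $\lambda v^{(1)} + (1-\lambda) v^{(2)}$; the closure in the definition of $\mathcal{RD}_{SSC}$ then yields the convex combination exactly.

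The key structural observation is that $\mathcal{M}_0 \triangleq \{\{1\}, \{2\}, \ldots, \{l\}\}$ lies in $\mathbf{S}_\mathsf{L}$ for $l \geq 2$ (the case $l = 1$ is point-to-point rate-distortion and is trivially convex), and the corresponding codebook is decoded at every decoder since every nonempty $\mathsf{N} \in \mathcal{L}$ contains at least one singleton $\{i\} \in \mathcal{M}_0$. I would introduce a time-sharing variable $Q \in \{1, 2\}$ with $P(Q = 1) = \lambda$, independent of $X$, and conditional on $Q = j$ draw $(X, (U^{(j)}_\mathcal{M})_\mathcal{M}) \sim P^{(j)}$. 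To absorb $Q$ into the universally-decoded component, set $\tilde{U}_{\mathcal{M}_0} = (Q, U^{(Q)}_{\mathcal{M}_0})$ with disjoint alphabets over $Q$, and $\tilde{U}_\mathcal{M} = U^{(Q)}_\mathcal{M}$ for $\mathcal{M} \neq \mathcal{M}_0$. Each decoder then recovers $Q$ from $\tilde{U}_{\mathcal{M}_0}$ and applies $\tilde{g}_\mathsf{N}(\tilde{u}_\mathsf{N}) = g^{(q)}_\mathsf{N}(u^{(q)}_\mathsf{N})$, so that by the tower property $\tilde{D}_\mathsf{N} = \lambda D^{(1)}_\mathsf{N} + (1-\lambda) D^{(2)}_\mathsf{N}$.

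For the rate and bin parameters, I would set $\tilde{\rho}_{\mathcal{M}, i} = \lambda \rho^{(1)}_{\mathcal{M}, i} + (1-\lambda) \rho^{(2)}_{\mathcal{M}, i}$ and $\tilde{r}_\mathcal{M} = H(\tilde{U}_\mathcal{M}) - \lambda(H(U^{(1)}_\mathcal{M}) - r^{(1)}_\mathcal{M}) - (1-\lambda)(H(U^{(2)}_\mathcal{M}) - r^{(2)}_\mathcal{M})$. Using $Q \perp X$, the chain-rule decomposition $H(\tilde{U}_\mathbf{M} | X) \geq H(\tilde{U}_\mathbf{M} | X, Q) = \lambda H(U^{(1)}_\mathbf{M} | X) + (1-\lambda) H(U^{(2)}_\mathbf{M} | X)$ reduces the covering constraint \eqref{cov11} to the convex combination of the $P^{(j)}$ covering constraints. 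The packing constraint \eqref{pack11} splits into two subcases: when $\mathcal{M}_0 \in \mathbf{L} \cup \widetilde{\mathbf{M}}_\mathsf{N}$ (which always holds for $|\mathsf{N}| \geq 2$, since $\mathcal{M}_0$ is then decoded at every nonempty strict subset of $\mathsf{N}$), $Q$ is determined by the conditioning and the LHS becomes the exact convex combination, so the inherited $P^{(j)}$ packing constraints suffice; when $\mathcal{M}_0 \notin \mathbf{L} \cup \widetilde{\mathbf{M}}_\mathsf{N}$ (only possible for $|\mathsf{N}| = 1$), the LHS acquires an additional term of at most $H(Q) = h(\lambda)$, which is absorbed by augmenting $\sum_i \tilde{\rho}_{\mathcal{M}_0, i}$ by $h(\lambda)$.

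The principal obstacle is the resulting per-description rate overhead of $h(\lambda)$, which offsets the RD vector from the target convex combination by an additive $O(h(\lambda))$ in a single coordinate. Driving this overhead to zero requires amortizing the scheduling cost over multiple source symbols: applying scheme 1 to $\lambda k$ out of $k$ i.i.d.\ copies of the source and scheme 2 to the rest (with a deterministic schedule), one obtains a per-symbol overhead of $O(h(\lambda)/k)$, and letting $k \to \infty$ and invoking the closure in the definition of $\mathcal{RD}_{SSC}$ yields $\lambda v^{(1)} + (1-\lambda) v^{(2)} \in \mathcal{RD}_{SSC}$ exactly.
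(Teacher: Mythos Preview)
Your core construction---embedding a binary time-sharing variable $Q$ into $\tilde{U}_{\mathcal{M}_0}$ with $\mathcal{M}_0=\{\{1\},\ldots,\{l\}\}$, the one codebook decoded at every decoder---is exactly what the paper does (the paper writes it for $l=3$ as $\tilde{U}_{\{1\},\{2\},\{3\}}$ taking values in $\mathsf{U}_{\{1\},\{2\},\{3\}}\times\{0,1\}$ and then simply asserts ``straightforward to check''). So at the level of the auxiliary distribution $\tilde P$ the two proofs coincide.

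Where you diverge is in the verification, and there your final step has a genuine gap. You correctly notice that your explicit choice of $(\tilde\rho,\tilde r)$ leaves an $h(\lambda)$ deficit in the singleton-decoder packing bounds when $\mathcal{M}_0\notin\mathbf{L}$. But your proposed remedy---time-sharing the two schemes over a deterministic partition of $k$ i.i.d.\ source copies---is an \emph{operational} argument: it shows the convex combination is achievable, not that it lies in the single-letter region $\mathcal{RD}_{SSC}$. A deterministic $k$-block schedule does not correspond to any single-letter auxiliary distribution on $(U_\mathcal{M},X)$, so ``invoking the closure in the definition of $\mathcal{RD}_{SSC}$'' is not justified: the approximating points have not been placed in $\bigcup_{P,g}\mathcal{RD}_{SSC}(P,g)$.

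The resolution is not amortization but a different verification. You need not exhibit $(\tilde\rho,\tilde r)$ explicitly: $\mathcal{RD}_{SSC}(\tilde P,\tilde g)$ is the projection of the polyhedron \eqref{cov11}--\eqref{eq:RDSSC} onto $(R,D)$, and it suffices to check that the convex-combination rates satisfy every constraint \emph{after} Fourier--Motzkin elimination of $(\rho,r)$. Each such projected constraint is obtained by adding packing bounds and subtracting matching covering bounds, so the individual $H(\tilde U_\mathcal{M})$ and $\tilde r_\mathcal{M}$ terms cancel and one is left with inequalities of the form $\sum_i a_i R_i\ge f(\tilde P)$ where $f$ is built from mutual informations that either condition on $\tilde U_{\mathcal{M}_0}$ or contain it; since $Q\perp X$ and $Q$ is a function of $\tilde U_{\mathcal{M}_0}$, each $f(\tilde P)$ equals the convex combination $\lambda f(P^{(1)})+(1-\lambda)f(P^{(2)})$, and feasibility of some $(\tilde\rho,\tilde r)$ follows from FME with no overhead. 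This is presumably what the paper's ``straightforward to check'' is gesturing at.
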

\begin{proof}
See Section \ref{Ap:SSCconvex} in the appendix.
\end{proof}
\begin{Remark}
For every decoder $\mathsf{N}\in \mathcal{L}$, we have defined the reconstruction as a function of the random variable $U_{\mathsf{N}}$. However, decoder $\mathsf{N}$ decodes all random variables $U_{\mathcal{M}}$ where $\mathcal{M}\in \mathbf{M}_\mathsf{N}$. The following lemma shows that the RD region does not improve if the reconstruction function is defined as a function of $U_{\mathbf{M}_\mathsf{N}}$ instead. 
\end{Remark}
\begin{Lemma}
\label{lem:recfunc}
The RD region in Theorem \ref{thm:SSC} does not change if the reconstruction function at decoder $\mathsf{N}$ is defined as a function of $U_{\mathbf{M}_\mathsf{N}}$.
\end{Lemma}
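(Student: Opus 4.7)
The plan is to establish two inclusions between the RD region of Theorem~\ref{thm:SSC} and the region defined in exactly the same way except that each $g_\mathsf{N}$ is allowed to depend on the full decoded collection $U_{\mathbf{M}_\mathsf{N}}$; call these $\mathcal{RD}_{SSC}$ and $\mathcal{RD}_{SSC}^{\text{gen}}$. The inclusion $\mathcal{RD}_{SSC}\subseteq\mathcal{RD}_{SSC}^{\text{gen}}$ is immediate, since a function of $U_{\{\mathsf{N}\}}$ is a special case of a function of $U_{\mathbf{M}_\mathsf{N}}$. The substance is in the reverse inclusion.

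For the reverse inclusion I would start from an achievable vector in $\mathcal{RD}_{SSC}^{\text{gen}}$ realized by a joint distribution $P$ on $(U_\mathcal{M})_{\mathcal{M}\in\mathbf{S}_\mathsf{L}}$, general reconstruction functions $\{g_\mathsf{N}\}$, and non-negative auxiliary rates $\{\rho_{\mathcal{M},i},r_\mathcal{M}\}$, and construct an alternative joint distribution together with restricted reconstructions that achieves the same rates and distortions. The key idea is to absorb each $g_\mathsf{N}$ into the auxiliary variable attached to the singleton Sperner family $\{\mathsf{N}\}$: define $\widetilde{U}_{\{\mathsf{N}\}}\triangleq (U_{\{\mathsf{N}\}},g_\mathsf{N}(U_{\mathbf{M}_\mathsf{N}}))$ for every $\mathsf{N}\in\mathcal{L}$, keep $\widetilde{U}_\mathcal{M}=U_\mathcal{M}$ for every non-singleton $\mathcal{M}\in\mathbf{S}_\mathsf{L}$, and take the new restricted reconstruction $\widetilde{g}_\mathsf{N}$ to be projection onto the second coordinate, so that $\widetilde{g}_\mathsf{N}(\widetilde{U}_{\{\mathsf{N}\}})=g_\mathsf{N}(U_{\mathbf{M}_\mathsf{N}})$. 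Distortions are preserved by construction.

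It remains to verify that the bounds \eqref{cov11} and \eqref{pack11} continue to hold. I would leave $\rho_{\mathcal{M},i}$ unchanged, put $\widetilde{r}_\mathcal{M}=r_\mathcal{M}$ for non-singleton $\mathcal{M}$, and set $\widetilde{r}_{\{\mathsf{N}\}}\triangleq r_{\{\mathsf{N}\}}+H(\widetilde{U}_{\{\mathsf{N}\}})-H(U_{\{\mathsf{N}\}})$. This choice makes $H(\widetilde{U}_\mathcal{M})-\widetilde{r}_\mathcal{M}=H(U_\mathcal{M})-r_\mathcal{M}$ for every $\mathcal{M}$, so the right-hand sides of \eqref{cov11} and \eqref{pack11} match the original; and $r_{\{\mathsf{N}\}}\leq H(U_{\{\mathsf{N}\}})$ translates to $\widetilde{r}_{\{\mathsf{N}\}}\leq H(\widetilde{U}_{\{\mathsf{N}\}})$. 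Inequality \eqref{cov11} then follows because augmenting variables with deterministic functions of themselves only increases $H(\,\cdot\,|X)$.

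The main obstacle is handling \eqref{pack11}, where I need $H(\widetilde{U}_{\mathbf{M}_\mathsf{N}}|\widetilde{U}_{\mathbf{L}\cup\widetilde{\mathbf{M}}_\mathsf{N}})\leq H(U_{\mathbf{M}_\mathsf{N}}|U_{\mathbf{L}\cup\widetilde{\mathbf{M}}_\mathsf{N}})$, since the modification enlarges the left-hand entropy in principle. The crucial observation is the monotonicity $\mathbf{M}_{\mathsf{N}''}\subseteq\mathbf{M}_\mathsf{N}$ whenever $\mathsf{N}''\subseteq\mathsf{N}$, which follows directly from the definition $\mathbf{M}_\mathsf{N}=\{\mathcal{M}\in\mathbf{S}_\mathsf{L}:\exists\mathsf{N}'\subset\mathsf{N},\mathsf{N}'\in\mathcal{M}\}$. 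Consequently, for every singleton $\{\mathsf{N}''\}\in\mathbf{M}_\mathsf{N}$, the extra component $g_{\mathsf{N}''}(U_{\mathbf{M}_{\mathsf{N}''}})$ is a deterministic function of $U_{\mathbf{M}_\mathsf{N}}$; hence $\widetilde{U}_{\mathbf{M}_\mathsf{N}}$ is itself a function of $U_{\mathbf{M}_\mathsf{N}}$ and contributes no new uncertainty, while $\widetilde{U}_{\mathbf{L}\cup\widetilde{\mathbf{M}}_\mathsf{N}}$ determines $U_{\mathbf{L}\cup\widetilde{\mathbf{M}}_\mathsf{N}}$, so conditioning on the former only decreases entropy. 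Combining these two facts yields the desired inequality, and the modified tuple therefore lies in $\mathcal{RD}_{SSC}$ with the same rates and distortions, completing the argument.
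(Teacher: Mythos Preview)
Your proposal is correct and follows essentially the same approach as the paper: absorb each general reconstruction $g_\mathsf{N}(U_{\mathbf{M}_\mathsf{N}})$ into the singleton auxiliary $U_{\{\mathsf{N}\}}$, and use the monotonicity $\mathbf{M}_{\mathsf{N}''}\subseteq\mathbf{M}_\mathsf{N}$ for $\mathsf{N}''\subseteq\mathsf{N}$ to argue that no additional rate is needed. The paper's proof treats one decoder $\mathsf{M}'$ at a time (then iterates), whereas you modify all singletons simultaneously and give a more explicit verification of \eqref{cov11} and \eqref{pack11} via the rate shift $\widetilde r_{\{\mathsf{N}\}}=r_{\{\mathsf{N}\}}+H(\widetilde U_{\{\mathsf{N}\}})-H(U_{\{\mathsf{N}\}})$; the underlying idea is the same.
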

\begin{proof}
 See Section \ref{Ap:recfunc} in the appendix.
\end{proof}
\begin{Remark}
In the scheme proposed in Theorem \ref{thm:SSC} there are $|\mathbf{S}_\mathsf{L}|$ codebooks. We know that the size of $\mathbf{S}_\mathsf{L}$ is the number of Sperner families on $\mathsf{L}$ minus three. The number of Sperner families is called the Dedekind numbers \cite{Dedekind}. There has been a large body of work in determining the values of Dedekind numbers for different $l$. It is known that these numbers grow exponentially in $l$. As an example the number of codebooks necessary for $l=2,3$ and $4$ are $3$, $17$ and $165$. However in all of the examples in this paper it turns out that many of the codebooks become redundant and only a small subset are used in the scheme.
\end{Remark}
 
\begin{proof}
\nonumber
Before proceeding to a more detailed description of the coding strategy we provide a brief outline. For each family of sets $\mathcal{M}\in \mathbf{S}_\mathsf{L}$ the encoder generates a codebook $C_{\mathcal{M}}$ based on the marginal $P_{U_{\mathcal{M}}}$ independently of the other codebooks. Intuitively, this codebook is the common-component among all the decoders $\mathsf{N}$ such that 
$\mathsf{N}\in \mathcal{M}$, and it is decoded in all decoders $N'\supset N$.
%
Codebook $C_{\mathcal{M}}$ is binned independently and uniformly for each description $i$ if $i\in\widetilde{\mathcal{M}}$. The description will carry the corresponding bin number for the codewords in each of the corresponding codebooks. 
Each decoder reconstructs its corresponding codewords by finding a unique set of jointly typical codevectors in the bins it has received. The existence of the jointly typical set of codewords is ensured at the encoder by the way of satisfaction of \eqref{cov11}, whereas at the decoder unique reconstruction is warranted by \eqref{pack11}.

\noindent\textbf{Codebook Generation:} Fix blocklength $n$ and positive reals $(\rho_{\mathcal{M},i},r_{\mathcal{M}})_{i\in \widetilde{\mathcal{M}},\mathcal{M}\in \mathbf{S}_\mathsf{L}}$. For every $\mathcal{M}\in \mathbf{S}_\mathsf{L}$, generate a codebook $C_{\mathcal{M}}$ based on the marginal $P_{U_\mathcal{M}}$ with size $2^{nr_\mathcal{M}}$. For the $i$th description, if $i\in \widetilde{M}$, bin the codebook $C_{\mathcal{M}}$ randomly and uniformly into $2^{n\rho_{\mathcal{M},i}}$ bins (i.e. randomly and uniformly assign an index $[1,2^{n\rho_{\mathcal{M},i}}]$ to each codeword in $C_{\mathcal{M}}$, and the index is called the \textit{bin-index}.). 

\noindent\textbf{Encoding:} Upon receiving the source vector $X^n$, the encoder finds a jointly-typical set of codewords $u^n_{\mathcal{M}},\mathcal{M}\in \mathbf{S}_\mathsf{L}$. Each description carries the \textit{bin-indices} of all the codewords corresponding to its own binning function. 

\noindent\textbf{Decoding:} Having received the bin-indices from descriptions $i\in \mathsf{N}$, decoder $\mathsf{N}$ tries to reconstruct $C_\mathcal{M}$ if $ \mathcal{M}\in {\mathbf{M}}_\mathsf{N}$. In other words the decoder finds a unique vector $(u_\mathcal{M}^n)_{\mathsf{N}\in \mathcal{M}}$ of jointly typical sequences in the corresponding bins. If the vector does not exist or is not unique, the decoder declares error.

\noindent\textbf{Covering Bounds:} Since codebooks are generated randomly and independently, to find a set of 
vectors $U_\mathcal{M}^n$ that is jointly typical with the source
vector $X^n$, the mutual covering bounds \eqref{cov11} are necessary
based on the mutual covering lemma 
\cite{ElGamalLec}.


\noindent\textbf{Packing Bounds:} For decoder $\mathsf{N}$,
description $i$ is received if $i\in \mathsf{N}$. Since binning is
done independently and uniformly, to find a unique set of jointly
typical sequences $(u_\mathcal{M}^n)_{\mathsf{N}\in \mathcal{M}}$, the
mutual packing bounds \eqref{pack11} are required by the mutual
packing lemma \cite{ElGamalLec}. 

\end{proof}

\begin{Remark}
 There are two main differences between the new scheme and the previous CMSB scheme. First there are additional codebooks present. As an example in Figure \ref{codebooks}, the three codebooks in the right column are not present in the CMSB scheme. Second, description $i$ bins all of the codebooks $\mathcal{M}$ such that $i\in\widetilde{M}$. We will show in the next sections that these additional codebooks contribute to an enlargement of the achievable RD region. In other words we prove that all of the additional codebooks are non-redundant. Also we show that the new binning strategy improves the achievable RD region.
\end{Remark}

\subsection{Improvements Due to additional codebooks}
Consider the general $l$-descriptions problem. In this section we prove that a codebook $C_\mathcal{M}$ is non-redundant if $\mathcal{M}\in \mathbf{S}_\mathsf{L}$.

\begin{Remark}
 It is straightforward to see that addition of a codebook $C_{\mathcal{M}}$ where $\mathcal{M}\notin \mathbf{S}_\mathsf{L}$ is not going to result in a larger achievable RD region. To see this consider the three descriptions problem and assume we add the codebook $C_{\{1\}, \{1,2\}}$. By our definition this new codebook is decoded if we either receive description 1 or both descriptions 1 and 2. In this case the codebook is decoded in exactly those decoders where 
$C_{\{1\}}$ is decoded. This means that merging these two codebooks does not change the packing bounds whereas it may relax the covering bounds. So such a codebook would be redundant. This is the reason why we consider only those codebooks which are associated with Sperner families.
\end{Remark}

\begin{Remark}
  There are three Sperner families for which we do not construct codebooks: $\{\phi, \{\phi\},\{\mathsf{L}\}\}$. It is clear that $U_{\phi}$ and $U_{\{\phi\}}$ are not necessary since they are not decoded at any decoder. Furthermore one can use the proof provided in \cite{Refinement} to show that $U_{\mathsf{L}}$ is also redundant. 
\end{Remark}

The next lemma proves that the random variables considered in Theorem \ref{thm:SSC} are non-redundant.
%
%
\begin{Lemma} 
The random variable $U_{\mathcal{M}}$  is non-redundant for every ${\mathcal{M}\in \mathbf{S}_\mathsf{L}}$. 
\label{theoremrandnonred}
\end{Lemma}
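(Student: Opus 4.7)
The plan is to exhibit, for each Sperner family $\mathcal{M} = \{\mathsf{N}_1,\ldots,\mathsf{N}_k\} \in \mathbf{S}_\mathsf{L}$, a specific instance of the $l$-descriptions problem on which the SSC rate-distortion region shrinks strictly when $U_\mathcal{M}$ is set to a constant. The strategy is to embed the two-descriptions Zhang--Berger example (Example \ref{ex:ZB}) into the $l$-descriptions setup in such a way that the codebook $C_\mathcal{M}$ is the only one that can play the role of the ZB common component.

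Concretely, I would take $X$ to be a binary symmetric source with Hamming distortion, and set $D_\mathsf{N} = 1/2$ (a vacuous constraint) at every decoder $\mathsf{N}$ that does not contain some $\mathsf{N}_i$ as a subset. At the decoders $\mathsf{N}$ with $\mathsf{N}_i \subset \mathsf{N}$ for some $i$, assign a nontrivial Hamming distortion, and require lossless reconstruction at the total decoder $\mathsf{L}$. Descriptions indexed by $i\notin \widetilde{\mathcal{M}}$ are given abundant rate so that only the rates of descriptions in $\widetilde{\mathcal{M}}$ are binding. By Theorem \ref{thm:SSC}, a boundary RD point can be achieved by activating $U_\mathcal{M}$ to carry the common refinement shared by decoders $\mathsf{N}_1,\ldots,\mathsf{N}_k$, with the other random variables set trivial.

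To conclude non-redundancy, I would argue that setting $U_\mathcal{M}$ to a constant forces the scheme to regenerate this common information through the remaining random variables $U_{\mathcal{M}'}$, $\mathcal{M}'\neq\mathcal{M}$. Projecting the covering bounds \eqref{cov11} and the packing bounds \eqref{pack11} onto any two minimal elements $\mathsf{N}_i, \mathsf{N}_j \in \mathcal{M}$, together with the combined decoder $\mathsf{N}_i\cup\mathsf{N}_j$, reduces the inequalities to the two-descriptions setting in which an EGC-style rate penalty $I(\widetilde{U}_{\{1\}};\widetilde{U}_{\{2\}})$ replaces the ZB penalty $I(\widetilde{U}_{\{1\}};\widetilde{U}_{\{2\}}\mid U_{\{1\},\{2\}})$, as discussed after Theorem \ref{thm:ZB}. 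Invoking the strict separation of the ZB and EGC regions from Example \ref{ex:ZB} then contradicts the achievability of the target RD point, proving that $U_\mathcal{M}$ is non-redundant.

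The main difficulty is handling Sperner families with $|\mathcal{M}|>2$ or with heterogeneous set sizes (for instance $\{\{1,2\},\{1,3\},\{2,3\}\}$ or $\{\{1\},\{2,3\}\}$), because other random variables $U_{\mathcal{M}'}$ can partially replicate the common-component role of $U_\mathcal{M}$. I would circumvent this by letting $X$ be a concatenation of several mutually independent BSS components, one tailored to each Sperner family, together with distortion measures that decompose additively across these components. For the component associated with $\mathcal{M}$, only $C_\mathcal{M}$ is decoded at exactly the set $\{\mathsf{N}: \exists\,\mathsf{N}_i \subset \mathsf{N}\}$; any other codebook either over-serves or under-serves this collection of decoders, so substituting it for $C_\mathcal{M}$ either violates a packing bound at some decoder outside this collection or fails a covering bound once the rates of descriptions in $\widetilde{\mathcal{M}}$ are driven to the target value. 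This reserves the role of $U_\mathcal{M}$ uniquely to itself and completes the reduction to the ZB argument.
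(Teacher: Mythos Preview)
Your reduction has a genuine gap: making the distortion constraints vacuous at decoders outside $\{\mathsf{N}:\exists\,\mathsf{N}_i\subset\mathsf{N}\}$ does not prevent a codebook $C_{\mathcal{M}'}$ that \emph{over-serves} (i.e.\ is decoded at a strict superset of this collection) from substituting for $C_\mathcal{M}$. Take $\mathcal{M}=\{\{1,2\},\{3\}\}$. The codebook $C_{\{1\},\{3\}}$ is decoded at decoder $\{1\}$ in addition to every decoder in your target collection. Since you impose no constraint at decoder $\{1\}$, there is no rate or distortion penalty for decoding the common component $W$ there as well. Concretely, $C_{\{1\},\{3\}}$ can carry $W$; decoder $\{1\}$ recovers it from description~1, decoder $\{1,2\}$ recovers it from description~1, and decoder $\{3\}$ recovers it from description~3. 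This reproduces the ZB common layer without using $U_{\{1,2\},\{3\}}$ at all, so your claimed EGC-vs-ZB separation never materializes. Your ``over-serves $\Rightarrow$ violates a packing bound'' assertion is simply false when the extra decoders are unconstrained, and the independent-component trick does not help because the same substitution works componentwise.

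The paper's proof closes exactly this hole. Rather than leaving decoder $\{1\}$ unconstrained, it places decoder $\{1\}$ at point-to-point optimality for a log-loss distortion $d_{\{1\}}(x,\hat{x})=-\log P_{X|\hat{W}}(x|\hat{x})$ built from a strictly degraded version $\hat{W}$ of the optimal ZB common component $W$. This forces description~1 to spend its entire rate on $\hat{W}$, so any codebook decodable from description~1 alone (in particular $C_{\{1\},\{3\}}$, $C_{\{1\},\{2\},\{3\}}$) satisfies $U_{\mathcal{M}'}\leftrightarrow\hat{W}\leftrightarrow X$ and therefore cannot carry the full $W$. The common component between decoders $\{1,2\}$ and $\{3\}$ then genuinely requires \emph{both} descriptions~1 and~2 to reconstruct, which isolates $C_{\{1,2\},\{3\}}$ as the only viable carrier. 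For general $l$ the paper iterates this idea, stacking successive refinements of $W$ across partial description sets so that no proper subset of $\mathsf{N}_i$ can decode the full common component. You would need an analogous mechanism to make your argument work.
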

\begin{proof}
\label{redthe}

\begin{figure}[!h]
\centering
\includegraphics[width=3.5in]{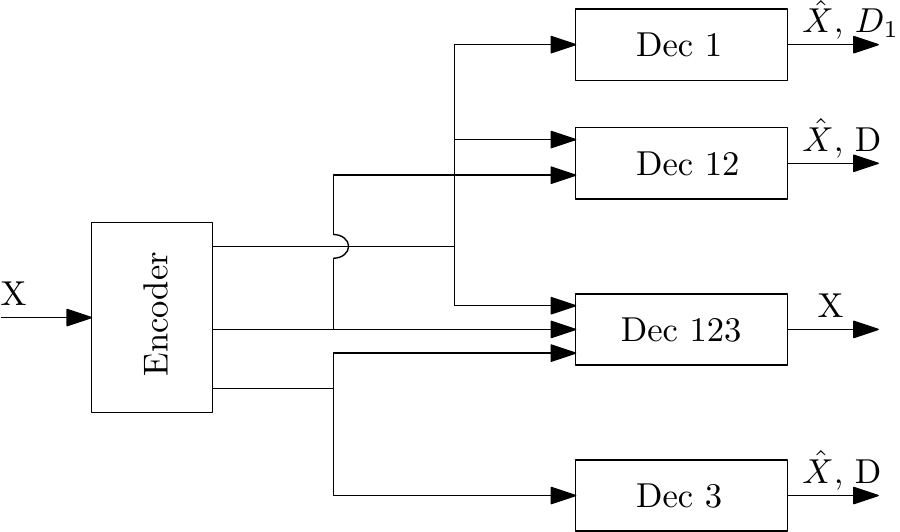}
\caption{Three Descriptions Setup Showing $C_{\{1,2\},\{3\}}$ is not redundant.}
\label{nonred}
\end{figure}

We provide the proof for the $l=3$ case and give an outline of how the proof is generalized for $l>3$. 
The codebooks $C_{\{1\}}$, $C_{\{2\}}$, $C_{\{3\}}$, $C_{\{1,2\}}$, $ C_{\{1,3\}}$, $C_{\{2,3\}}$, $C_{\{1\},\{2\}}$, $C_{\{1\},\{3\}}$, $C_{\{2\},\{3\}}$, $C_{\{1,2\},\{1,3\}}$,  $C_{\{1,2\},\{2,3\}}$, $C_{\{1,3\},\{2,3\}}$, $C_{\{1\},\{2\}, \{3\}}$, $C_{\{1,2\},\{1,3\}, \{2,3\}}$ are all present in the CMSB scheme and it was shown that they are non-redundant. The new codebooks are $C_{\{1,2\},\{3\}}$, $C_{\{1,3\},\{2\}}$ and $C_{\{23\},\{1\}}$. We prove that $C_{\{1,2\},\{3\} }$ is non-redundant using the following example, the two other codebooks are non-redundant by symmetry.

We build on Example \ref{ex:ZB} to construct a three-descriptions example as shown in Figure \ref{nonred}. 
As explained in the previous section, it is known that $U_{\{1\},\{2\}}$ is non-redundant. Let $R_i=0.629, i\in \{1,2\}$, and $D_{\{1,2\}}=0$. Let
\begin{equation}
 D^*=\min_{D} \{D|(0.629,0.629,D,D,0)\in \mathcal{RD}_{ZB}\}.
\end{equation}

Let $\mathsf{P}$ be the set of probability distributions $P_{U_{\{1\},\{2\}},U_{\{1\}},U_{\{2\}},X}$,  such that $(R_1, R_2, D_{\{1\}}, D_{\{2\}},D_{\{1,2\}})$ $= (0.629, $ $0.629, D^*, D^*,0)$  belongs to $\mathcal{RD}_{ZB}(P_{U_{\{1\},\{2\}},U_{\{1\}},U_{\{2\}},X},g_{\mathcal{L}})$ for some $g_{\mathcal{L}}$ as given in Theorem \ref{thm:ZB}.  Define the joint distribution $P^*_{U_{\{1\},\{2\}},U_{\{1\}},U_{\{2\}},X}$ as follows:
\begin{align*}
 P^*_{U_{\{1\},\{2\}},U_{\{1\}},U_{\{2\}},X}\triangleq \operatorname*{\,arginf}_{\substack{P_{U_{\{1\},\{2\}},U_{\{1\}},U_{\{2\}},X}\in \mathsf{P}}} I(U_{\{1\},\{2\}};X).
\end{align*}
Let $P^*_{U_{\{1\},\{2\}},X}$ be the marginal distribution of $U_{\{1\},\{2\}}$ and $X$. Define a random variable $W$ that is correlated with $X$ such that $P_{W,X}=P^*_{U_{\{1\},\{2\}},X}$. Let ${N}_\delta$ be a binary random variable independent of $X$ and $W$ with $P({N}_\delta=1)=\delta, \delta\in (0,0.5)$.  Define $\hat{W}=U_{\{1\},\{2\}}\wedge{} {N}_\delta$ where $\wedge{}$ denotes the logical AND function. Let $P_{\hat{W},X}, P_{X|\hat{W}}$ be the induced joint and conditional distributions, respectively. 

\begin{Example}\label{ex:binning}

We proceed by explaining the new example. The source $X$ is a BSS, decoders $\{1,2\}$ and $\{3\}$ want to reconstruct the source with respect to Hamming distortion and the central decoder wants to reconstruct the source losslessly. Decoder $\{1\}$ wants to reconstruct the source with respect to the distortion function given by: 
\begin{equation*}
d_{\{1\}}(x,\hat{x})=-\log(P_{X|\hat{W}}(x|\hat{x}))
\end{equation*}

\end{Example}


%
%

\begin{Lemma}\label{lemma1}
The following RD vector does not belong to $\mathcal{RD}_{CMSB}$, where $U_{\{1,2\},\{3\}}$ is constant.
The vector belongs to $\mathcal{RD}_{SSC}$  given in Theorem \ref{thm:SSC} which is  
achievable using  the SSC scheme:
\begin{align*}
&(R_1, R_2, R_3, D_{\{1\}}, D_{\{1,2\}}, D_{\{3\}}, D_{\{1,2,3\}})=(I(X;\hat{W}), 0.629- I(X;\hat{W}), 0.629, D\rq{}, D^*, D^*,0 ),
\end{align*}
where $D'=E(d_{\{1\}}(X,\hat{W}))$. 
\end{Lemma}
\begin{proof}
 We provide the intuition behind the proof first. In the coding scheme in Theorem \ref{thm:SSC}, the only codebooks capable of carrying the common-component between decoders $\{1,2\}$ and $\{3\}$ are $C_{\{1\},\{3\}}$, $C_{\{2\},\{3\}}$, $C_{\{1\},\{2\}, \{3\}}$ and $C_{\{1,2\},\{3\}}$. We have set the distortion constraint at decoder $\{1\}$ such that this common message can\rq{}t be carried exclusively on either of the descriptions $1$ and $2$, but rather both descriptions are necessary for the reconstruction of the common codebook. So the codebook $C_{\{1,2\},\{3\}}$ can't be empty. The proof is provided in Section \ref{Ap:lemma1} in the appendix.
\end{proof}
So far we have shown that the additional codebooks are non-redundant when $l=3$. The argument can be extended to the case when $l>3$, an outline of the general argument is provided in appendix \ref{App:addcodegen}.  
 \end{proof}

\subsection{Improvements Due to Binning}
The second factor contributing to the gains in the SSC rate-distortion region is the binning method. In the SSC scheme all descriptions $i\in \widetilde{\mathcal{M}}$ carry independent bin indices of codebook $C_{\mathcal{M}}$. This is different from the CMSB strategy where each codebook is binned by a specific subset of the descriptions based on whether the codebook is a SCEC or an MDS codebook. We prove through a three-descriptions example that the RD region enlarges due to binning in the SSC scheme, even with the three additional codebooks. We show in the following example that the bin indices of $C_{\{1,2\},\{1,3\}}$ should be carried by all descriptions. 
\begin{Example}\label{Ex:Binning}
 The example is generated by modifying Example \ref{ex:binning} and is illustrated in  Figure \ref{binningexample}. The source $X$ is BSS. $d_{\{1\}}(X,\hat{W})$ is defined as in Example \ref{ex:binning}. Decoders $\{1,2\}$ and $\{1,3\}$ want to reconstruct the source with Hamming distortion and decoder $\{1,2,3\}$ wants to reconstruct the source losslessly.
\end{Example}



\begin{Lemma}\label{theorem5}
In order to achieve $(R_1, R_2, R_3, D_{\{1\}}, D_{\{1,2\}}, D_{\{1,3\}},D_{\{1,2,3\}}) $ $=( I(\hat{W};X) , R-I(\hat{W};X),$ $R-I(\hat{W};X),$ $D', D, D,0)$ we must have $\rho_{\{1,2\},\{1,3\},2}+\rho_{\{1,2\},\{1,3\},3}>0$.
\end{Lemma}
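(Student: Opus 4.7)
The strategy is proof by contradiction. Suppose some valid SSC scheme achieves the stated RD vector with $\rho_{\{1,2\},\{1,3\},2}=\rho_{\{1,2\},\{1,3\},3}=0$; then every bin index of the codebook $C_{\{1,2\},\{1,3\}}$ is carried exclusively on description $1$. The first step is to pin down description $1$ at its point-to-point rate-distortion limit. By construction $d_{\{1\}}(x,\hat x)=-\log P_{X|\hat W}(x|\hat x)$ is the logarithmic-loss distortion measure referenced to $\hat W$, so $D'=E[d_{\{1\}}(X,\hat W)]=H(X|\hat W)$, and the standard logarithmic-loss rate-distortion function at this level equals $H(X)-D'=I(\hat W;X)$. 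Hence $R_1=I(\hat W;X)$ is tight and every bit transmitted on description $1$ must effectively be used to describe $\hat W$ to decoder $\{1\}$.

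Next, decompose $R_1=R_1^A+R_1^B$, where $R_1^A=\sum_{\mathcal M\in\mathbf{M}_{\{1\}}}\rho_{\mathcal M,1}$ collects the description-$1$ bin rates of codebooks actually decoded at decoder $\{1\}$ (those Sperner families $\mathcal M$ containing $\{1\}$) and $R_1^B$ collects the remaining description-$1$ bin rates, corresponding to $\mathcal M$ with $1\in\widetilde{\mathcal M}$ but $\{1\}\notin\mathcal M$. Because the random bin assignments for distinct codebooks are independent in the SSC scheme, decoder $\{1\}$ can extract useful information only from the $R_1^A$ portion, and the converse of the previous paragraph applied to decoder $\{1\}$'s induced single-user channel forces $R_1^A\ge I(\hat W;X)$. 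Combined with $R_1=I(\hat W;X)$ this gives $R_1^B=0$. Since $\{\{1,2\},\{1,3\}\}$ contains no singleton $\{1\}$ but does satisfy $1\in\widetilde{\{\{1,2\},\{1,3\}\}}$, the rate $\rho_{\{1,2\},\{1,3\},1}$ belongs to $R_1^B$ and must vanish. Together with the contradictory assumption, all three bin rates of $C_{\{1,2\},\{1,3\}}$ are then zero, so this codebook transmits no bin information at any decoder and can be removed from the scheme without affecting the achieved RD vector.

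It remains to show that Example \ref{Ex:Binning} does not admit the target RD vector once the codebook $C_{\{1,2\},\{1,3\}}$ is absent. The argument closely mirrors Lemma \ref{lemma1}: the Hamming-$D$ reconstructions required at decoders $\{1,2\}$ and $\{1,3\}$ must share a non-trivial common component beyond $\hat W$ (since $\hat W$ has already exhausted the budget on description $1$), and among the remaining Sperner-indexed codebooks only $C_{\{1,2\},\{1,3\}}$ is decoded at both $\{1,2\}$ and $\{1,3\}$ while remaining invisible to decoder $\{1\}$; removing it therefore inflates the rate required on descriptions $2$ and $3$ strictly above $R-I(\hat W;X)$. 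This contradicts the achievability assumption, completing the proof. The main obstacle is the rigorous derivation in the second paragraph: cleanly showing $R_1^A\ge I(\hat W;X)$ demands a Fano-type converse applied to decoder $\{1\}$'s reconstruction together with the independence of the random bin assignments across codebooks, ruling out any ``sharing'' between the Group-$A$ and Group-$B$ bin bits; the final non-redundancy step is also delicate and essentially repeats the contradiction in Lemma \ref{lemma1} adapted to the symmetric distortion profile of Example \ref{Ex:Binning}.
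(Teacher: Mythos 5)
Your first half tracks the paper's argument: assume $\rho_{\{1,2\},\{1,3\},2}=\rho_{\{1,2\},\{1,3\},3}=0$, use PtP optimality of decoder $\{1\}$ (the distortion $d_{\{1\}}$ is built so that decoder $\{1\}$ at rate $I(\hat W;X)$ and distortion $D'$ is forced to reproduce $\hat W$) to conclude $\rho_{\{1,2\},\{1,3\},1}=0$ as well, and hence drop the codebook $C_{\{1,2\},\{1,3\}}$. This is exactly the paper's Steps 2--3, and the rate-splitting argument you sketch ($R_1^A\geq I(\hat W;X)$ so $R_1^B=0$) is made rigorous in the paper by adding the covering bound and the packing bound at decoder $\{1\}$, as in Step 1 of the proof of Lemma \ref{theorem7}. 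However, your final step contains a genuine gap, in two respects. First, the claim that ``among the remaining Sperner-indexed codebooks only $C_{\{1,2\},\{1,3\}}$ is decoded at both $\{1,2\}$ and $\{1,3\}$ while remaining invisible to decoder $\{1\}$'' is false: in Example \ref{Ex:Binning} the codebooks $C_{\{2\},\{3\}}$, $C_{\{2\},\{1,3\}}$, $C_{\{3\},\{1,2\}}$ and $C_{\{1,2\},\{1,3\},\{2,3\}}$ all have exactly that decodability pattern, and $C_{\{2\},\{3\}}$, for instance, may be binned on descriptions $2$ and $3$. The paper closes this hole with its Step 1, which merges all functionally equivalent codebooks (those decoded at the same set of present decoders) into the four representatives $C_{\{1\}}$, $C_{\{1,2\}}$, $C_{\{1,3\}}$, $C_{\{1,2\},\{1,3\}}$; only after this reduction does eliminating the (merged) codebook $C_{\{1,2\},\{1,3\}}$ rule out every way of sharing a common component between decoders $\{1,2\}$ and $\{1,3\}$ that bypasses decoder $\{1\}$. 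Without that reduction your contradiction does not follow.

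Second, even after the elimination, the impossibility is not something you can simply import from Lemma \ref{lemma1} ``by analogy'': that lemma concerns a different decoder configuration ($\{1,2\}$ versus $\{3\}$) and a different surviving codebook set. The paper's Step 4 supplies the missing quantitative argument: Fourier--Motzkin elimination of the covering and packing bounds for the three remaining codebooks yields
$R_1+R_2+R_3\geq I(U_{\{1,2\}},U_{\{1,3\}};X|\hat W)+I(U_{\{1,2\}};U_{\{1,3\}}|\hat W)+I(\hat W;X)$,
i.e.\ a ZB-type rate penalty in which the common component can only be $\hat W$ rather than $W$; combined with the Markov chain $\hat W\leftrightarrow W\leftrightarrow X$ and the strict inequality $I(\hat W;X)<I(W;X)$ (and the optimality of the ZB choice of $W$ for the target distortions $D$ at the joint decoders), this sum-rate bound strictly exceeds the target $I(\hat W;X)+2\big(R-I(\hat W;X)\big)$, which is the contradiction. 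Your proposal asserts the conclusion of this step (``inflates the rate strictly above $R-I(\hat W;X)$'') but does not derive it, and this derivation is the substantive content of the lemma.
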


\begin{proof}
 See Section \ref{Ap:theorem5} in the appendix.
\end{proof}

\begin{figure}[!h]
\centering
\includegraphics[width=3.5in]{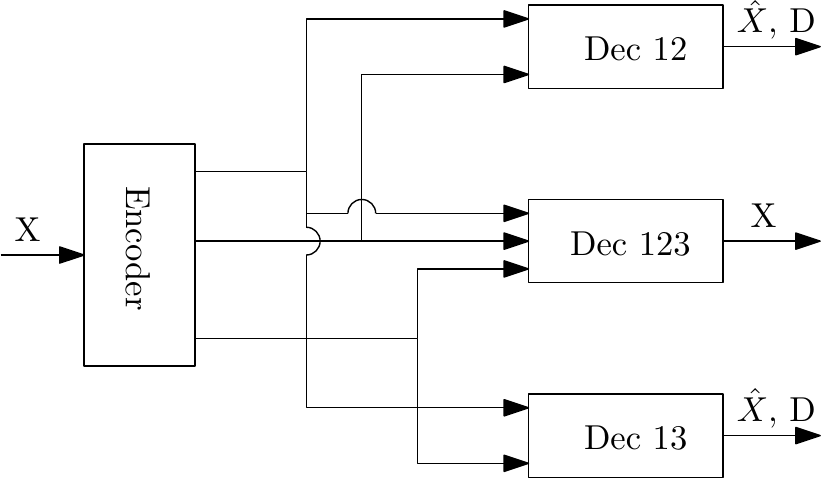}
\caption{Example Showing Improvements Due to Binning}
\label{binningexample}
\end{figure}

%

\section{Linear Coding Examples}
\label{sec:linexamples}
Before providing a unified RD region which uses both unstructured and structured codes (step 2), in this section, for pedagogical reasons, we look at three examples of $l-$descriptions problems and provide example-specific coding schemes based on linear codes that perform strictly better than the SSC scheme 
which is based on unstructured codes. This shows that the SSC region is not complete and a structured coding layer is necessary. These coding schemes are unified and presented in the next section.

\subsection{Gains Due to Linear Quantizers}
We create a three-descriptions setting where reconstructions of bivariate functions are necessary. 
\begin{Example}\label{Ex:Vecsource}
Consider the three-descriptions example in Figure \ref{3D}. Here $X$ and $Z$ are independent BSS. Decoder $\{1\}, \{2\}$ and $\{3\}$ wish to reconstruct $X$, $Z$ and $X+Z$, respectively, with Hamming distortion. Decoders $\{1,2\}, \{1,3\}$, and \{2,3\} wish to reconstruct the pair $(X,Z)$ with distortion function 
\begin{align*}
 d_{XZ}((\hat{X},\hat{Z}),(X,Z))=d_H(\hat{X},X)+d_H(\hat{Z},Z). 
\end{align*}
\end{Example}
We are interested in achieving the following RD vector:
\begin{align}
&R_i=1-h_b(\delta), i \in \{1,2,3\}, D_{\{1\}}=D_{\{2\}}=\delta,\label{RDv3}, D_{\{3\}}=\delta\ast\delta, 
D_{\{1,2\}}=D_{\{1,3\}}=D_{\{2,3\}}=2\delta.
\end{align}

First we argue that in this example, description 3 should carry a bivariate function of descriptions 1 and 2.
Decoders $\{1\}$ and $\{2\}$ operate at the optimal PtP rate-distortion function. So the corresponding descriptions have to allocate all of their rates to satisfy their individual decoder's distortion criteria. Since the distortion constraint at decoder $\{1\}$ only relates to $X$, this description only carries a quantization of $X$, and by the same argument description 2 carries a quantization of $Z$. Then description 3 has to carry the sum of these two quantizations so that the joint decoders' distortion constraints are all satisfied. Since structured codes are efficient for transmitting bivariate summations of random variables, we expect that using structured codes would give gains in this example as opposed to unstructured codes.
%
First, we prove that the RD vector is achievable using linear codes. 
\begin{figure}[!h]
\centering
\includegraphics[width=3.5in]{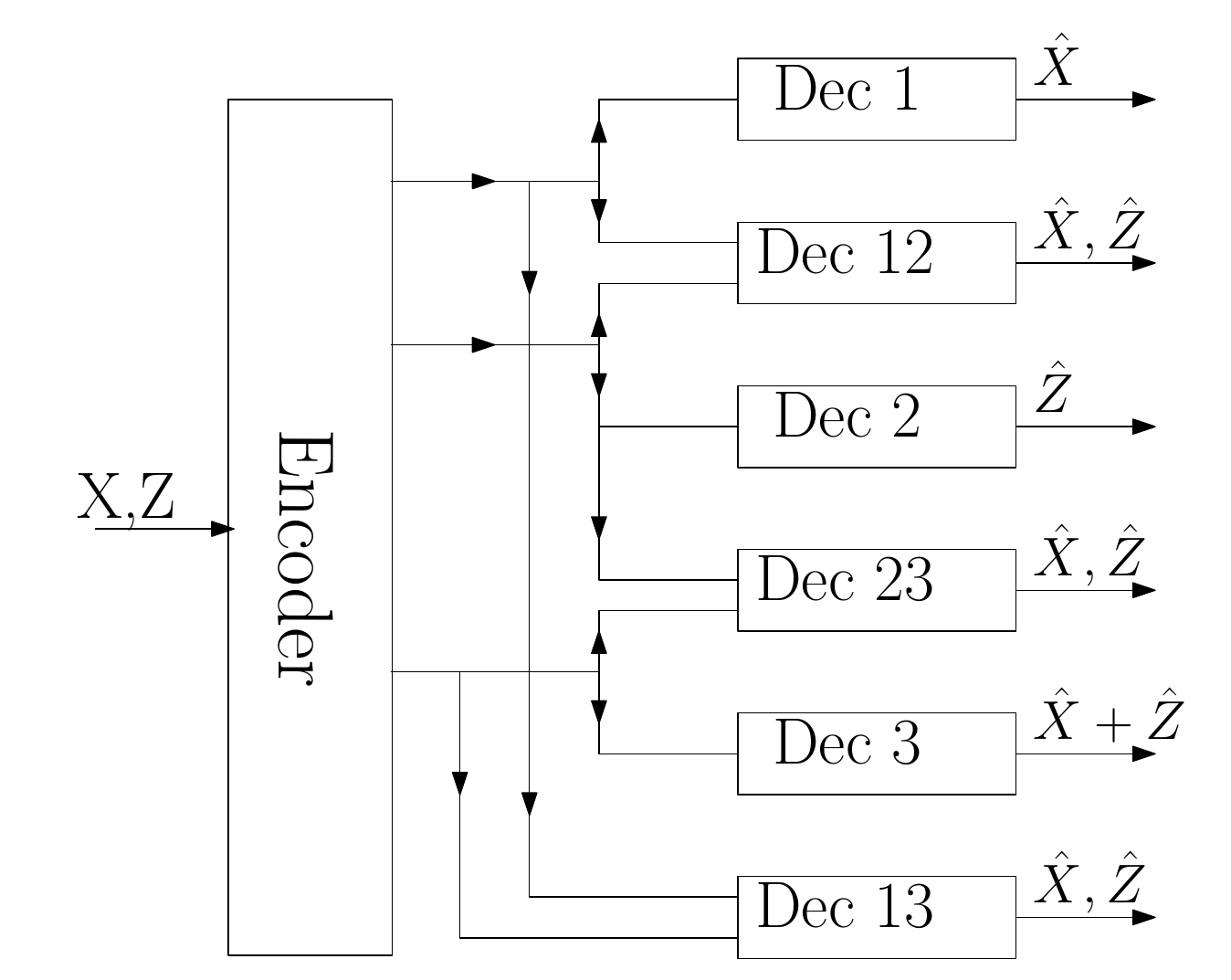}
\caption{Three-Descriptions Example with a Vector Binary Source}
\label{3D}
\end{figure}

\begin{Lemma}
 The RD vector in (\ref{RDv3}) is achievable.
\end{Lemma}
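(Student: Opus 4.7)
The plan is to exploit the linearity: quantize $X^n$ and $Z^n$ using the \emph{same} linear code $\mathcal{C}$ at rate just above $1-h_b(\delta)$, and let description 3 carry the sum of the two codeword indices, which is again a codeword by linearity.

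More concretely, I would fix a small $\epsilon>0$ and, using Definition 1, pick a linear code $\mathcal{C}\subseteq \mathbb{F}_2^n$ generated by some $G_{k\times n}$ with $k/n \in (1-h_b(\delta),\, 1-h_b(\delta)+\epsilon)$. A standard averaging argument over random linear codes (Goblick-type, as used elsewhere in the sequel of this paper when the structured layer is introduced) shows the existence of such a $\mathcal{C}$ whose minimum-distortion encoder $\hat{X}^n=\arg\min_{c\in\mathcal{C}} d_H(X^n,c)$ is a ``good'' source encoder for the BSS: with high probability the pair $(X^n,\hat{X}^n)$ is jointly typical with respect to the backward test channel $X=\hat{X}\oplus N$, $\hat{X}\sim \mathrm{Ber}(1/2)$, $N\sim \mathrm{Ber}(\delta)$ independent of $\hat{X}$, and $E[d_H(X^n,\hat{X}^n)]\le \delta+\epsilon$. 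Using the same $\mathcal{C}$ for $Z^n$ and independent source blocks gives the analogous statement for $(Z^n,\hat{Z}^n)$, and the two pairs are mutually independent.

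The encoder then sends $i_X$ (the index of $\hat{X}^n$ in $\mathcal{C}$) on description 1, $i_Z$ on description 2, and $i_{X+Z}$ on description 3, where $\hat{X}^n\oplus\hat{Z}^n\in\mathcal{C}$ by linearity; each description uses rate $k/n\le 1-h_b(\delta)+\epsilon$. Decoders $\{1\}$ and $\{2\}$ output $\hat{X}^n$ and $\hat{Z}^n$, giving Hamming distortion at most $\delta+\epsilon$. Decoder $\{3\}$ outputs $\hat{X}^n\oplus \hat{Z}^n$ as its reconstruction of $X^n\oplus Z^n$; the Hamming distortion equals $(1/n)w_H\big((X^n\oplus \hat{X}^n)\oplus (Z^n\oplus \hat{Z}^n)\big)$, and since $X^n\oplus\hat{X}^n$ and $Z^n\oplus\hat{Z}^n$ are (by the independence and the test-channel approximation) essentially i.i.d.\ $\mathrm{Ber}(\delta)$ sequences that are mutually independent, their sum is approximately i.i.d.\ $\mathrm{Ber}(\delta\ast\delta)$, so the expected distortion is at most $\delta\ast\delta+\epsilon'$. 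For the joint decoders: $\{1,2\}$ already has $(\hat{X}^n,\hat{Z}^n)$, achieving sum distortion $\le 2\delta+2\epsilon$; $\{1,3\}$ recovers $\hat{Z}^n=\hat{X}^n\oplus(\hat{X}^n\oplus\hat{Z}^n)$ by linearity and is in the same situation; $\{2,3\}$ is symmetric. Taking $\epsilon\to 0$ and invoking a standard limiting/time-sharing argument yields the stated RD vector.

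The main obstacle is the distortion at decoder $\{3\}$: the argument really requires the joint distribution of $(X^n,\hat{X}^n,Z^n,\hat{Z}^n)$ under the chosen deterministic encoder to be close (in the sense of joint typicality, or equivalently in an $\ell_1$ sense on empirical distributions) to the product of two copies of the optimal BSS test channel. Showing this requires a nontrivial property of the linear code $\mathcal{C}$, namely that the per-letter empirical joint distribution of source and reconstruction concentrates around the ideal $P_{X\hat{X}}$. This is exactly the content of the achievability of the BSS rate-distortion function by linear codes, which is well known; I would invoke it as a black box (or, alternatively, replace $\mathcal{C}$ by a random coset code and use the mutual covering/joint-typicality lemmas on $(X^n,Z^n)$ simultaneously) and the rest of the argument is bookkeeping.
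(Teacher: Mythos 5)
Your proposal is correct and follows essentially the same route as the paper: quantize $X^n$ and $Z^n$ with one common linear code, send the index of $\hat{X}^n\oplus\hat{Z}^n$ on description 3, and argue that the two quantization noises are independent and approximately $\mathrm{Ber}(\delta)$, so their sum yields distortion $\delta\ast\delta$ at decoder $\{3\}$ (the paper isolates this last step as a separate lemma and cites the same type of ``good linear source code'' concentration argument you invoke as a black box). No substantive differences to report.
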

\begin{proof}
 
\textbf{Encoding:}  Construct a sequence of random linear codes $\mathcal{C}^n$ of rate $1-h_b(\delta)+\epsilon_n$, where $\epsilon_n$ is going to 0. It is well known that such a sequence of linear codes can be used to quantize a BSS to Hamming distortion $\delta$. Define the following: 
\begin{align*}
 &\hat{X}^n=argmin_{c^n\in\mathcal{C}^n}{d_H(x^n,c^n)}\\
 &\hat{Z}^n=argmin_{c^n\in\mathcal{C}^n}{d_H(z^n,c^n)}
\end{align*}
 Since $\hat{X}^n$ and $\hat{Z}^n$ are codewords and the codebook is linear, $\hat{X}^n+\hat{Z}^n$ is also a codeword. Description 1 carries the index of $\hat{X}^n$, description 2 carries the index of $\hat{Z}^n$ and description 3 carries the index of $\hat{X}^n+\hat{Z}^n$.

\noindent\textbf{Decoding:} Decoders $\{1\}$ and $\{2\}$, receive $\hat{X}^n$ and $\hat{Z}^n$, respectively, so they satisfy their distortion constraints. Decoder $\{3\}$ reconstructs $\hat{X}^n+\hat{Z}^n$. Lemma \ref{distsum} shows that the distortion criteria at this decoder is satisfied.
\begin{Lemma}\label{lemma2}
In the above setting, we have
$\frac{1}{n}E(d_H(\hat{X}^n+\hat{Z}^n,X^n+Z^n))\to \delta\ast\delta$. 
\label{distsum}
\end{Lemma}

\begin{proof}
See Section \ref{Ap:lemma2} in the appendix.

\end{proof}
Decoder $\{1,2\}$ receives $\hat{X}^n$ and $\hat{Z}^n$, so it satisfies its distortion requirements. Also decoders $\{1,3\}$ and $\{2,3\}$ can recover $\hat{X}^n$ and $\hat{Z}^n$  by adding $\hat{X}^n+\hat{Z}^n$ to $\hat{X}^n$ and $\hat{Z}^n$, respectively. \noindent This shows that the RD vector in (\ref{RDv3}) is achievable using linear codes.

\end{proof}

Next we show that the SSC scheme cannot achieve this RD vector.


\begin{Lemma}\label{theorem7}
The RD vector in (\ref{RDv3}) does not belong to $\mathcal{RD}_{SSC}$, i.e., it is not achievable using the SSC scheme. 
\end{Lemma}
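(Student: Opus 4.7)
Suppose for contradiction that the RD vector in \eqref{RDv3} belongs to $\mathcal{RD}_{SSC}$, witnessed by a joint distribution $P$, rate parameters $(\rho_{\mathcal{M},i}, r_\mathcal{M})$, and reconstruction functions $g_\mathcal{L}$ satisfying Theorem~\ref{thm:SSC}. The plan is a K\"orner--Marton-style rate-counting argument: the target sits simultaneously on the point-to-point rate-distortion corner of each single-description decoder and on the joint-rate corner of each two-description decoder, forcing all relevant SSC inequalities to collapse into equalities and pinning down a rigid auxiliary structure whose aggregate rate cost exceeds the available $R_1+R_2+R_3 = 3(1-h_b(\delta))$.

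First I would extract point-to-point tightness at decoders $\{1\}$ and $\{2\}$. Because $R_i = 1-h_b(\delta) = R(X,\delta)$ for $i=1,2$, the classical source-coding converse combined with the SSC rate-to-mutual-information bound $I(U_{\mathbf{M}_{\{i\}}};X,Z) \le R_i$ forces the Markov chains $U_{\mathbf{M}_{\{1\}}}-X-Z$ and $U_{\mathbf{M}_{\{2\}}}-Z-X$, and by uniqueness of the optimal backward BSC$(\delta)$ test channel identifies functions $\hat{X}_1 = g_{\{1\}}(U_{\mathbf{M}_{\{1\}}})$ and $\hat{Z}_2 = g_{\{2\}}(U_{\mathbf{M}_{\{2\}}})$ with $X = \hat{X}_1 \oplus N_1$, $Z = \hat{Z}_2 \oplus N_2$, $N_i \sim \mathrm{Bern}(\delta)$ independent of the quantizers; in particular $\hat{X}_1 \perp Z$, $\hat{Z}_2 \perp X$, and $\hat{X}_1 \perp \hat{Z}_2$. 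Tightness further forces description~1's entire bin rate to encode $\hat{X}_1$, so $\rho_{\mathcal{M},1} = 0$ for any Sperner family $\mathcal{M}$ whose auxiliary $U_\mathcal{M}$ is independent of $\hat{X}_1$; symmetrically on description~2 with respect to $\hat{Z}_2$. Applying the same tightness at the two-description decoders $\{1,3\}$ and $\{2,3\}$---where $2\delta$ equals $2(1-h_b(\delta)) = R_1+R_3 = R_2+R_3$, the vector rate-distortion value for two independent BSSs under sum-Hamming loss---and using the Markov chains just established produces a $\delta$-BSC quantizer $\hat{Z}_{13}$ of $Z$ independent of $\hat{X}_1$, built from auxiliaries in $\mathbf{M}_{\{1,3\}}\setminus\mathbf{M}_{\{1\}}$, and symmetrically a $\delta$-BSC quantizer $\hat{X}_{23}$ of $X$ built from $\mathbf{M}_{\{2,3\}}\setminus\mathbf{M}_{\{2\}}$.

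The heart of the argument, and its principal obstacle, is completing the rate count. Let $\mathbf{M}^* \subseteq \mathbf{S}_\mathsf{L}$ denote the Sperner families whose codebooks carry $\{\hat{X}_1, \hat{Z}_2, \hat{Z}_{13}, \hat{X}_{23}\}$. In the generic case where these four quantizers live in four distinct codebooks, the backward noises are mutually independent given $(X,Z)$ by the Markov structure from the previous paragraph, so the SSC covering bound applied to $\mathbf{M}^*$ yields $\sum_{\mathcal{M}\in\mathbf{M}^*}r_\mathcal{M} \ge 4 - 4h_b(\delta)$. Combining the SSC packing bounds at decoders $\{1\},\{2\},\{1,3\},\{2,3\}$ with the bin-rate accounting from the previous step (in particular, $\hat{Z}_{13}$ and $\hat{X}_{23}$ contribute nothing to descriptions 1 and 2, respectively, because of their independence from $\hat{X}_1$ and $\hat{Z}_2$) yields $R_1+R_2+R_3 \ge \sum_{\mathcal{M}\in\mathbf{M}^*}r_\mathcal{M} \ge 4(1-h_b(\delta))$, contradicting $R_1+R_2+R_3 = 3(1-h_b(\delta))$ for $\delta\in(0,1/2)$. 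The only non-trivial collapse to exclude is a single codebook $\mathcal{M}^\dagger$ with $U_{\mathcal{M}^\dagger} = \hat{X}_1 \oplus \hat{Z}_2$ that simultaneously furnishes $\hat{Z}_{13}$ and $\hat{X}_{23}$ via description-3 bins; applying the three-codebook covering bound on $\{C_{\{\{1\}\}},C_{\{\{2\}\}},C_{\mathcal{M}^\dagger}\}$, whose joint conditional entropy given $(X,Z)$ is only $2h_b(\delta)$, gives $r_{\mathcal{M}^\dagger} \ge 3 - 2h_b(\delta) - 2(1-h_b(\delta)) = 1$, whereas the packing constraint on description~3 forces $r_{\mathcal{M}^\dagger} \le R_3 = 1-h_b(\delta) < 1$, again a contradiction. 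The subtle bookkeeping is to systematically verify that no further exotic codebook arrangements across the 17 Sperner families can sidestep these two scenarios; the Markov chains from the first step preclude any collapse involving $\hat{X}_1$ or $\hat{Z}_2$ themselves, since those quantizers must respect strict independence from the ``other'' source component.
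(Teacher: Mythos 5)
Your overall strategy -- exploit point-to-point tightness at every decoder to rigidify the auxiliary structure and then reach a rate contradiction reflecting the rate penalty of independently generated codebooks -- is the same strategy the paper uses, but as written the proof has a genuine gap: the case analysis is not exhaustive, and the configuration that actually survives all the tightness constraints falls outside both of your scenarios. Your dichotomy is (a) four distinct codebooks carrying $\hat{X}_1,\hat{Z}_2,\hat{Z}_{13},\hat{X}_{23}$ with conditionally independent backward noises, versus (b) one codebook whose codeword is exactly $\hat{X}_1\oplus\hat{Z}_2$. The paper's elimination argument shows that what the constraints in fact force is a third configuration: after all redundant codebooks are removed, only three mutually independent codebooks $C_{\{1\}},C_{\{2\}},C_{\{3\}}$ remain, the backward noises satisfy $N^{\{1,3\}}_{\delta}=N^{2}_{\delta}$ and $N^{\{2,3\}}_{\delta}=N^{1}_{\delta}$ (so $\hat{Z}_{13}=\hat{Z}_2$ and $\hat{X}_{23}=\hat{X}_1$, meaning your ``generic'' independent-noise count never applies), and $U_{\{3\}}$ need not be the exact XOR. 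The contradiction is then obtained not from a sum-rate count but from the fact that, by the derived independence $(Z,U_{\{3\}})\Perp(X,U_{\{1\}})$, one can fix a single value $u_{\{1\}}$ so that $U_{\{3\}}$ alone yields a $\delta$-reconstruction of $Z$, and symmetrically of $X$, forcing $I(U_{\{3\}};X,Z)\geq 2(1-h_b(\delta))>R_3$. Your scenario (b) arithmetic only works because you assume $U_{\mathcal{M}^\dagger}$ is a deterministic function of $(\hat{X}_1,\hat{Z}_2)$; a general common-component variable need not be, and then the covering bound no longer pins $r_{\mathcal{M}^\dagger}\geq 1$.

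Beyond the missing case, several intermediate assertions you lean on are themselves the hard content: that the incremental information at decoders $\{1,3\}$ and $\{2,3\}$ is a single clean $\delta$-BSC quantizer living in identified codebooks and independent of $\hat{X}_1$ (resp.\ $\hat{Z}_2$), that $\rho_{\mathcal{M},1}=0$ for the codebooks you want to exclude, and that $R_1+R_2+R_3\geq\sum_{\mathcal{M}\in\mathbf{M}^*}r_{\mathcal{M}}$. In the SSC region the reconstruction at a joint decoder is a function of \emph{all} codewords it decodes (potentially spread over many of the $17$ Sperner-indexed codebooks, and the common information may be conveyed only partially by any one of them), so these facts require the systematic elimination the paper carries out: the no-common-codebook lemma at decoder $\{1,2\}$ (Lemma \ref{indeplem}), step-by-step proofs that $\rho_{\{2,3\},3}=\rho_{\{1,3\},3}=\dots=0$ and $r_{\{1,3\}}=r_{\{2,3\}}=r_{\{1\},\{3\}}=r_{\{2\},\{3\}}=0$ via covering-plus-packing manipulations yielding Markov chains, the binary noise algebra forcing $N^{\{3\}}_{\delta\ast\delta}=N^{1}_{\delta}+N^{2}_{\delta}$, and an independence lemma (Lemma \ref{Markovindeplem}) to combine the Markov chains. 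You explicitly defer this ``subtle bookkeeping,'' but it is not bookkeeping around two clean endpoints -- it is the bulk of the proof, and without it the argument does not close.
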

\begin{proof}
 See Section \ref{Ap:theorem7} in the appendix.
  \end{proof}
\subsection{Gains Due to Linear Binning}
In the SSC scheme, there are two stages in the codebook generation phase. In the first stage unstructured codebooks are generated randomly and independently, and in the second stage these codebooks are binned randomly in an unstructured fashion for each description. In the previous example it was shown that in the first stage, it is beneficial to generate codebooks with a linear structure. However in that example there was no need for binning. In the next example, we show that the binning operation needs to be carried out in a structured manner as well. This is analogous to the gains observed in the distributed source coding problem \cite{KM} where the bin structure needs to be linear. Consider the four-descriptions example in Figure \ref{4D}. 

\begin{figure}[!h]
\centering
\includegraphics[width=3.5in]{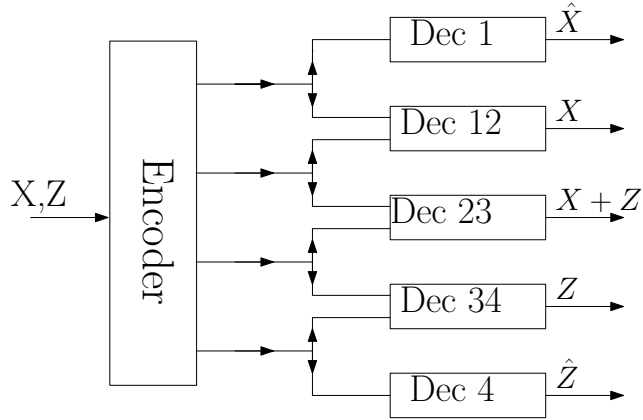}
\caption{An Example Showing the Gains Due to Linear Binning}
\label{4D}
\end{figure}
\begin{Example}\label{Ex:Vecsourcebin}
$X$ and $Z$ are BSS's. $X$ and $Z$ are not independent, and they are related to each other through a binary symmetric channel with bias $p\in (0,\frac{1}{2})$. In other words $X=Z+N_p$  where $N_p\sim Be(p)$ is independent of $X$ and $Z$. Decoders $\{1\}$ and $\{4\}$ wish to decode $X$ and $Z$, respectively, with Hamming distortion. Decoders $\{1,2\}$, $\{3,4\}$ and $\{2,3\}$ require a lossless reconstruction of $X$, $Z$ and $X+Z$, respectively. We are interested in achieving the following RD vector:
\begin{equation}
 R_1=R_4=1-h_b(p), R_2=R_3= h_b(p), D_{\{1\}}=D_{\{4\}}=p 
 \label{RD4}
\end{equation}
\end{Example}
We show that the RD vector in (\ref{RD4}) is achievable using structured codebooks and linear binning in the next lemma.
\begin{Lemma}
 The RD vector in (\ref{RD4}) is achievable. 
\end{Lemma}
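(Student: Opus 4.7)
My plan is to combine linear quantization with K\"orner--Marton style linear binning, exploiting the fact that $X+Z=N_p$ is itself Bernoulli-$p$. The two main ingredients will be: (i) a binary linear code $\mathcal{C}_Q$ of rate slightly above $1-h_b(p)$ that quantizes a BSS to Hamming distortion arbitrarily close to $p$, i.e., the linear source-coding tool already invoked in Example \ref{Ex:Vecsource}; and (ii) a matrix $H\in\mathbb{F}_2^{k\times n}$ with $k\approx n(h_b(p)+\epsilon)$, taken as the parity-check matrix of a linear code $\mathcal{C}_S$ of rate slightly below $1-h_b(p)$ whose syndrome decoder reliably recovers any vector of weight at most $n(p+o(1))$. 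Importantly, $\mathcal{C}_Q$ and $\mathcal{C}_S$ need \emph{not} be nested; they will be drawn as two independent random linear codes.

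The encoder, given $(X^n,Z^n)$, computes $\hat X^n$ and $\hat Z^n$ as nearest codewords in $\mathcal{C}_Q$ to $X^n$ and $Z^n$, and sets description 1 to the index of $\hat X^n$, description 4 to the index of $\hat Z^n$, description 2 to the syndrome $HX^n\in\mathbb{F}_2^k$, and description 3 to the syndrome $HZ^n$. The resulting rates are $R_1=R_4=1-h_b(p)+\epsilon_n$ and $R_2=R_3=h_b(p)+\epsilon$, matching \eqref{RD4} as $\epsilon,\epsilon_n\to 0$. On the decoder side, decoders $\{1\}$ and $\{4\}$ simply output $\hat X^n$ and $\hat Z^n$, meeting the Hamming distortion $p$ by construction of $\mathcal{C}_Q$. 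Decoder $\{1,2\}$ uses its knowledge of $\hat X^n$ to compute $H\hat X^n$ and subtracts it from the received $HX^n$, obtaining the syndrome $He_X^n$ of the quantization error $e_X^n=X^n+\hat X^n$ (whose weight is at most $n(p+o(1))$); syndrome decoding through $\mathcal{C}_S$ then recovers $e_X^n$ with vanishing error, so that $\hat X^n+e_X^n=X^n$ is the lossless reconstruction. Decoder $\{3,4\}$ proceeds symmetrically to recover $Z^n$. Crucially, decoder $\{2,3\}$ exploits linearity: $HX^n+HZ^n=H(X^n+Z^n)=HN_p^n$, and since $N_p^n$ is a typical Bernoulli-$p$ sequence, syndrome decoding recovers $X^n+Z^n$ losslessly---this is precisely the K\"orner--Marton step.

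The hard part is establishing existence of a pair of codes simultaneously satisfying (i) and (ii) at rates so close to the boundary value $1-h_b(p)$. Random binary linear codes at rate just above $1-h_b(p)$ quantize BSS with distortion $p+o(1)$ (the fact already used in Example \ref{Ex:Vecsource}), and random binary linear codes at rate just below $1-h_b(p)$ admit syndrome decoders that reliably decode typical Bernoulli-$p$ vectors by the standard random coding argument for BSC. A union bound over the two independent random constructions yields joint existence. Care is needed in choosing the slack parameters: the rate of $\mathcal{C}_S$ must be strictly below the BSC capacity at the effective noise level $p+o(1)$, so that its syndrome decoder handles vectors of weight up to that level rather than merely $np$. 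Letting the slacks $\epsilon,\epsilon_n\to 0$ in the correct order then delivers the RD vector in \eqref{RD4}.
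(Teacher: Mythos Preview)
Your proof is correct and reaches the same RD vector, but it takes a somewhat different route from the paper. The paper constructs a single \emph{nested} pair $(\mathcal{C}_i,\mathcal{C}_o)$ with $\mathcal{C}_i\subset\mathcal{C}_o$: the outer code (rate $1-h_b(p)+\epsilon$) serves as the quantizer, while the coset index with respect to the inner code (rate $1-h_b(p)-\epsilon$, a good BSC$(p)$ channel code) is what is sent on descriptions~2 and~3. Decoder $\{1,2\}$ then exploits the nesting $\mathcal{C}_i\subset\mathcal{C}_o$ to pass from the coset of $\mathcal{C}_i$ to the coset of $\mathcal{C}_o$, which together with $Q_o(x^n)$ yields $x^n$; decoder $\{2,3\}$ adds the two coset leaders and channel-decodes through $\mathcal{C}_i$. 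You instead decouple the two roles into two \emph{independent} random linear codes: a quantizer $\mathcal{C}_Q$ and a parity-check matrix $H$. Your decoder $\{1,2\}$ works by syndrome subtraction, $H x^n - H\hat x^n = H e_X^n$, followed by syndrome decoding of the low-weight error $e_X^n$; your decoder $\{2,3\}$ is the straight K\"orner--Marton step $Hx^n+Hz^n=H(x^n+z^n)$.

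What each approach buys: the paper's nested construction is the architecture that is later generalized in Section~\ref{sec:RDregion}, so it fits the narrative of structured quantization/binning layers. Your construction is more elementary and modular---it makes transparent that only the \emph{binning} layer needs linear structure, while the quantizer can be arbitrary (even nonlinear, since decoder $\{1,2\}$ only needs $\hat X^n$ to compute $H\hat X^n$). The paper in fact remarks after its proof that the outer code could be replaced by a union of random cosets of $\mathcal{C}_i$; your argument goes one step further and shows no relation whatsoever between the two codes is needed. The one place where your write-up should be tightened is the phrase ``reliably recovers any vector of weight at most $n(p+o(1))$'': a random linear code at rate $1-h_b(p)-\epsilon$ does not correct that many errors in the worst case, but you only need an average-case guarantee. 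The clean way to state it is that for each fixed low-weight vector $e$ (in particular $e_X^n$, $e_Z^n$, or $N_p^n$, all independent of $H$), syndrome decoding succeeds with probability $1-o(1)$ over the random choice of $H$; then swap expectations and extract a single good $H$.
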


\begin{proof}

 \textbf{Codebook Generation:} Take an arbitrary sequence of positive numbers $\epsilon_n$, where $\epsilon_n\to 0$ as $n\to\infty$. For any large $n\in \mathbb{N}$, fix $r_{i,n}= 1-h_b(q)-\epsilon_n$ and $r_{o,n}=1-h_b(q)+\epsilon_n$.   
 Construct a family of nested coset codes $({C}_i^n,{C}_o^n)$ where $  {C}^n_i\subset {C}^n_o$ such that the rate of the outer code is $r_{o,n}$ and the rate of the inner code is $r_{i,n}$. Choose $\mathcal{C}^n_i$ such that it is a good channel code for a BSC($p$), and choose ${C}_o^n$ such that it is a good source code for quantizing a BSS to Hamming distortion $p$. The existence of such nested coset codes is well-known from random coding arguments \cite{Galleger}. 
 Next we bin the space $\mathbb{F}_2^n$ into shifted versions (cosets) of $C_i^n$. Let $\mathcal{P}_i$ be the Voronoi region of the codeword $0^n$ in ${C}_i^n$.  Any vector $x^n\in \mathbb{F}_2^n$ can be written in the form $x^n=v^n+c_i^n, v^n\in \mathcal{P}_i, c_i^n \in {C}$. Define the $i$th bin as  $v^n+{C}_i^n$ . This operation bins the space into $|\mathcal{P}_i|= 2^{n(h_b(p)+\epsilon_n)}$ bins. The bin number associated with an arbitrary vector $x^n$ determines exactly the quantization noise resulting from quantizing the vector using $C_{i}^n$ with the minimum Hamming distortion criterion. We denote the bin number of $x^n$ as $B_i(x^n)$. A similar binning operation can be performed using ${C}_o^n$. Denote the bin number of $x^n$ obtained using shifted versions of ${C}_o^n$ by $B_o(x^n)$.
   
\noindent \textbf{Encoding:} The encoder quantizes $x^n$ and $z^n$ using $\mathcal{C}_o^n$ to $Q_o(x^n)$, and $Q_o(z^n)$, respectively. It also finds the bin number of the two source sequences $B_i(x^n)$ and $B_i(z^n)$. $Q_o(x^n)$ is transmitted on the first description, $B_i(x^n)$ is transmitted on the second description, $B_i(z^n)$ is transmitted on the third description, and $Q_o(z^n)$ is transmitted on the fourth description.
 
\noindent \textbf{Decoding:} Since the outer codes are good source codes, the distortion constraints at decoders $\{1\}$ and $\{4\}$ are satisfied. 
 
 We argue that the Voronoi region of $0^n$ in ${C}_o^n$ is a subset of the one for ${C}_i^n$. This is true since $ {C}^n_i\subset {C}^n_o$. Hence, having $B_i(x^n)$, decoders $\{1,2\}$ and $\{3,4\}$ can calculate $B_o(x^n)$. As mentioned above the bin number determines the quantization noise, so the decoders can reconstruct the source losslessly using the bin number and the quantization vector. Decoder $\{2,3\}$ receives $B_i(x^n)$ and $B_i(z^n)$. We have $x^n=Q_i(x^n)+B_i(x^n)$ and $z^n=Q_i(z^n)+B_i(z^n)$, so $B_i(x^n)+B_i(z^n)=x^n+z^n+Q_i(x^n)+Q_i(z^n)$. Since  ${C}_i^n$ is linear,  $Q_i(x^n)+Q_i(z^n)$ is a codeword, and $x^n+z^n$ can be thought of as the noise vector for a $BSC(p)$. We constructed  ${C}_i^n$  such that it is a good channel code for BSC(p), so the decoder can recover $Q(x^n)+Q(z^n)$ from $x^n+z^n+Q_i(x^n)+Q_i(z^n)$. Then by subtracting the two vectors it can get $x^n+z^n$. 
\end{proof}
Although we have used linear codes for quantization as well as binning, the linearity of the binning codebook ${C}_i^n$ is critical in this example. In fact, it can be similarly shown that one can achieve the RD vector in (\ref{RD4}) with $C_o^n$ chosen to be a union of random cosets of ${C}_i^n$. This is in contrast with the previous example where the quantizing codebook was required to be linear.  
\begin{Lemma}\label{theorem9}
 The RD vector in (\ref{RD4}) is not achievable using the SSC scheme. 
\end{Lemma}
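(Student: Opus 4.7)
The plan is to argue by contradiction. I would assume that the RD vector in \eqref{RD4} lies in $\mathcal{RD}_{SSC}$; by Theorem~\ref{thm:SSC} this yields auxiliary random variables $\{U_\mathcal{M}\}_{\mathcal{M}\in\mathbf{S}_{\{1,2,3,4\}}}$, reconstruction functions $g_\mathcal{L}$, and nonnegative parameters $\{\rho_{\mathcal{M},i},r_\mathcal{M}\}$ satisfying the covering and packing inequalities \eqref{cov11}--\eqref{pack11}. My first step is a point-to-point converse at the side decoders $\{1\}$ and $\{4\}$: since $R_1 = R_4 = 1-h_b(p)$ equals the BSS rate-distortion function at Hamming distortion $p$, the rates of descriptions $1$ and $4$ must be consumed entirely by PtP-optimal descriptions of $X$ and $Z$ respectively, forcing $I(U_{\mathbf{M}_{\{1\}}};X) = I(U_{\mathbf{M}_{\{4\}}};Z) = 1-h_b(p)$ with induced reconstructions $\hat X, \hat Z$ satisfying $P(\hat X \neq X) = P(\hat Z \neq Z) = p$.

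I would then combine this with the lossless constraints $H(X\mid U_{\mathbf{M}_{\{1,2\}}})=0$ and $H(Z\mid U_{\mathbf{M}_{\{3,4\}}})=0$ at the joint decoders $\{1,2\}$ and $\{3,4\}$: the auxiliary variables decoded at $\{1,2\}$ but not at $\{1\}$ must carry exactly $H(X\mid\hat X)=h_b(p)$ additional bits about $X$. Using the rate identity $R_2=\sum_{\mathcal{M}:2\in\widetilde{\mathcal{M}}}\rho_{\mathcal{M},2}=h_b(p)$ together with the packing bound \eqref{pack11} at $\mathsf{N}=\{1,2\}$ with $\mathbf{L}=\mathbf{M}_{\{1\}}$, I would conclude that the entire rate of description $2$ is consumed by an unstructured Slepian--Wolf-style refinement of $X$ relative to the side information $\hat X$. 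A symmetric argument pins description $3$ to an $h_b(p)$-rate refinement of $Z$ relative to $\hat Z$.

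The heart of the proof is to show that these pinned refinements cannot simultaneously deliver $H(X+Z\mid U_{\mathbf{M}_{\{2,3\}}})=0$ at decoder $\{2,3\}$, which only sees descriptions $2$ and $3$. Applying the packing inequality \eqref{pack11} at $\mathsf{N}=\{2,3\}$ with $\mathbf{L}=\emptyset$ and substituting the marginal structure pinned above, the problem reduces to distributed source coding in which descriptions $2$ and $3$ act as two mutually independent random-binning encoders of $X$ and $Z$ targeting the sum $X+Z$. Because the SSC codebooks are generated independently and binned uniformly without algebraic structure, the Slepian--Wolf converse applies to this reduced problem and forces the sum-rate lower bound $R_2+R_3 \geq H(X,Z) = 1+h_b(p) > 2h_b(p)$ for every $p\in(0,1/2)$, contradicting the target in \eqref{RD4}. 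The main obstacle is making the reduction to the Slepian--Wolf converse rigorous: I must rule out the possibility that some Sperner-indexed auxiliary variable (for instance $U_{\{\{1,2\},\{3,4\}\}}$) secretly carries $X+Z$ while remaining compatible with the rate-tight PtP and lossless constraints already pinned at $\{1\},\{4\},\{1,2\},\{3,4\}$. I would handle this by iterating the packing bound over all $\mathbf{L}\subset\mathbf{M}_{\{2,3\}}$ and invoking the rate-distortion optimality of $\hat X,\hat Z$ to show that any such leakage of $X+Z$ onto descriptions $1$ or $4$ would violate one of the tight constraints.
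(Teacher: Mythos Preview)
Your overall arc matches the paper's: both arguments terminate in the K\"orner--Marton obstruction, showing that any SSC-achieving distribution would force $R_2+R_3\ge H(X,Z)=1+h_b(p)>2h_b(p)$.

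The execution differs, however. You drive the reduction from PtP optimality at the \emph{side} decoders $\{1\}$ and $\{4\}$ and then try to pin the contents of descriptions~2 and~3 by rate accounting. The paper instead leverages PtP optimality at the \emph{joint} decoders $\{1,2\}$ and $\{3,4\}$ (note $R_1+R_2=H(X)$ and $R_3+R_4=H(Z)$), which yields directly the Markov chains $U_{\mathbf{M}_{\{1,2\}}}\leftrightarrow X\leftrightarrow Z$ and $X\leftrightarrow Z\leftrightarrow U_{\mathbf{M}_{\{3,4\}}}$; these are then combined (via a short/long Markov-chain lemma) into a single long chain so that the variables available at decoder $\{2,3\}$ split cleanly into an $X$-side and a $Z$-side, after which the K\"orner--Marton lower bound is a one-line single-letter computation. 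This route sidesteps the rate-pinning altogether.

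Your acknowledged obstacle---codebooks such as $U_{\{\{1,2\},\{3,4\}\}}$ and, more importantly, $U_{\{\{1,2\},\{2,3\},\{3,4\}\}}$ that are decoded at \emph{both} $\{1,2\}$ and $\{3,4\}$ and hence do not sit naturally on either side of the chain---is real, and your proposed fix (``iterate the packing bound'') is underspecified. The paper disposes of these by combining the two Markov chains $(\cdot)\leftrightarrow X\leftrightarrow Z$ and $(\cdot)\leftrightarrow Z\leftrightarrow X$ through a double-Markov independence lemma to conclude that such variables are independent of $(X,Z)$ and hence inert. That step is the technical heart of the argument; your sketch would need to supply an equivalent mechanism before the Slepian--Wolf/K\"orner--Marton converse can be invoked.
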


\begin{proof}
See Section \ref{Ap:theorem9} in the appendix.
\end{proof}

\subsection{Correlated Quantizations of a Source}
	It can be noted that in the case of SSC scheme, the unstructured quantizers are generated randomly and independently. As observed in these two examples, in order to efficiently reconstruct the bivariate summation, it is beneficial to use the {same} linear code for quantizing the source. However, in the two examples the source was a vector with two components which were separately quantized using identical linear codes, and the analysis of the coding scheme required only  standard PtP covering and packing bounds for linear codes. In the more general case, evaluation of the performance of identical, and more generally, correlated linear codes for MD quantization, requires new covering and packing bounds. This is illustrated through the following scalar source example
\begin{figure}[!h]
\centering
\includegraphics[width=3.5in]{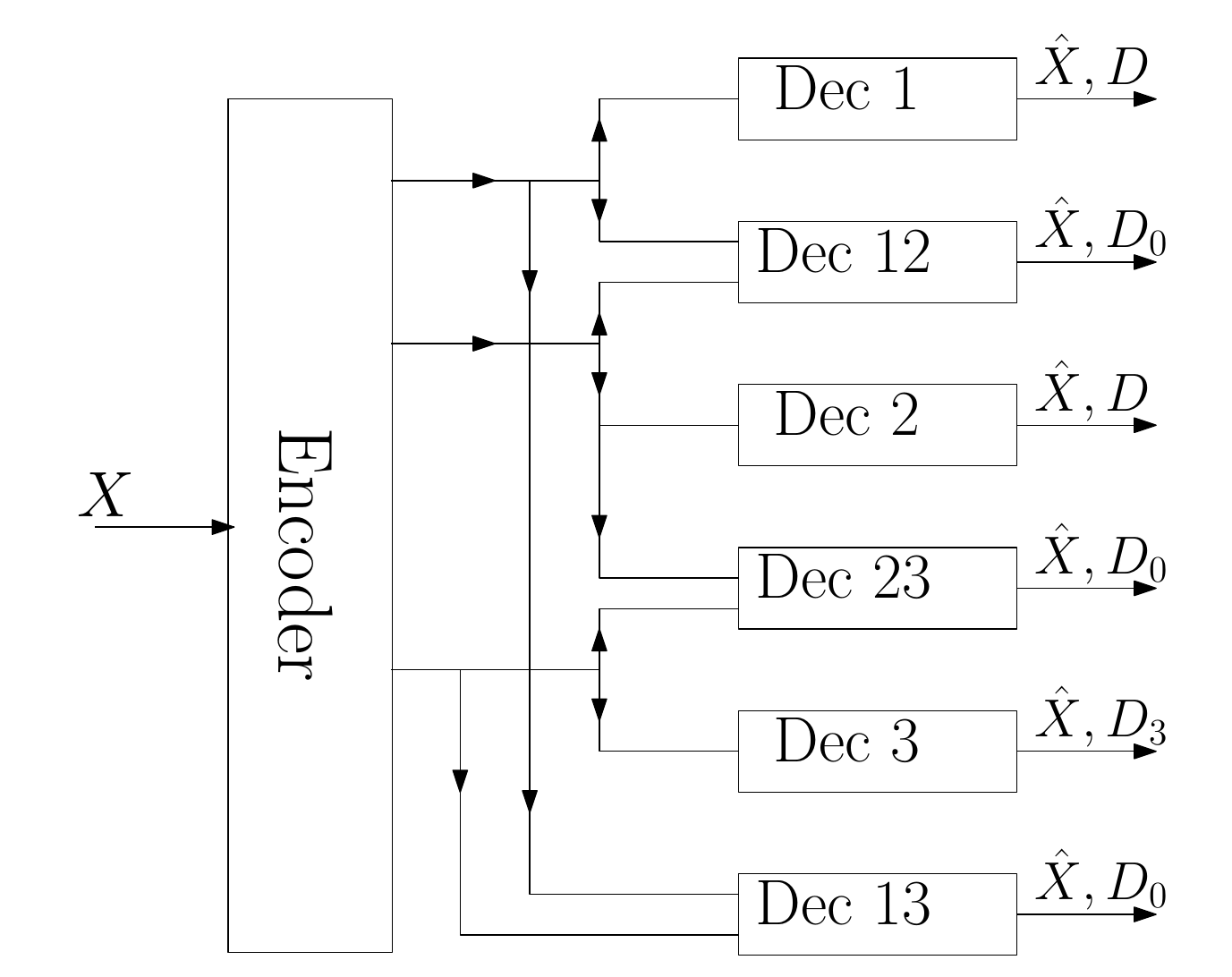}
\caption{Scalar Source  Example with Correlated Quantization}
\label{Scalarfig}
\end{figure}
which is depicted in Figure \ref{Scalarfig}. The setup is constructed based on the no-excess rate example described in \cite{ZB} for the two-descriptions problem. In the two-descriptions example, the source $X$ is BSS, 
and the distortion functions at all decoders is Hamming distortion. For the special case, called no-excess rate regime, when $R_1=R_2=\frac{1-h(D_{0})}{2}$, it was shown that the EGC region is tight. Here $D_0$ is the distortion $D_{\{1,2\}}$ at decoder $\{1,2\}$, and the minimum side distortion $D_{\{1\}}=D_{\{2\}}$ achievable was shown to be $ \frac{1}{2}(1-(1-2D_{0})(2-\sqrt{2}))$. The three-descriptions example is given as follows.

\begin{Example}\label{Ex:Scalar}
 The source $X$ is BSS, the distortion functions at decoders $\{1\}$, $\{2\}$, $\{1,2\}$, $\{1,3\}$ and $\{2,3\}$ are Hamming distortions, and the distortion function at decoder ${\{3\}}$ is the following general distortion function,  
\begin{align*}
 d_{\{3\}}(x,\hat{x}) =
  \begin{cases}
    0       & \quad \text{if } x=\hat{x}\\
    \alpha & \quad \text{if } x=0, \hat{x}=1\\
	\beta & \quad \text{if }  x=1, \hat{x}=0\\
  \end{cases}
\end{align*}
where $\alpha$ and $\beta$ are positive real numbers. 
We are interested in achieving the RD vectors with the following projections:
%
\begin{align}
\nonumber &R_1=R_2=\frac{1-h_b(D_0)}{2}, D_{\{1\}}=D_{\{2\}}= \frac{1}{2}(1-(1-2D_0)(2-\sqrt{2})),\\& D_{\{1,2\}}=D_{\{1,3\}}=D_{\{2,3\}}=D_0,\label{RDscalar}
\end{align}
\end{Example}

Our objective is to evaluate the optimal $(R_3,D_{\{3\}})$ trade-off. 
The following lemma provides the RD vectors achievable using linear codes. 

\begin{Lemma}
 The RD vector in (\ref{RDscalar}) is achievable using linear codes, as long as the following constraints are satisfied:
\begin{align}
 &R_3\geq \frac{1}{2}+h_b\left(\sqrt{2}-1\right)-h_b\left(\frac{\sqrt{2}}{2}\right)-\frac{h_b\left(D_0\right)}{2}\\
 &D_{\{3\}}\geq \alpha\left(\sqrt{2}-1\right)D_0+\beta\left(\left(\frac{3-2\sqrt{2}}{2}\right)\left(1-D_0\right)+\frac{D_0}{2}\right)\\
 & h_b\left(D_0\right)+2h_b\left(\frac{\sqrt{2}}{2}\right)+h_b\left(2\left(\sqrt{2}-1\right)D_0\right)+h_b\left(2\left(\sqrt{2}-1\right)\left(1-D_0\right)\right)\geq1. \label{eq:nonredconstraint}
\end{align}
\label{scalartheorem}
\end{Lemma}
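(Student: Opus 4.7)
The plan is to realize the Zhang--Berger no-excess-rate side quantizations using linear-code quantizers that share a common sub-code, so that the binary sum of the two quantizations can be compactly transmitted on description $3$. First, I would fix the optimal two-description ZB test channel: $\hat{X}_1, \hat{X}_2$ are independent $\operatorname{Bern}(1/2)$, each a $\operatorname{BSC}(D_s)$ observation of $X$ with crossover $D_s=\frac{1}{2}(1-(1-2D_0)(2-\sqrt{2}))$, and the MAP reconstruction of $X$ from $(\hat{X}_1,\hat{X}_2)$ achieves Hamming distortion $D_0$. The numbers $\sqrt{2}-1$, $\sqrt{2}/2$, and $(3-2\sqrt{2})/2$ appearing in the lemma are all entries of the induced joint pmf of $(\hat{X}_1,\hat{X}_2,X)$ and of the marginal of $\hat{X}_1+\hat{X}_2$ against $X$.

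Second, taking a cue from Example \ref{Ex:Vecsourcebin}, I would construct a pair of nested coset codes on $\mathbb{F}_2$. Choose an outer linear code $C_o$ of rate close to $1-h_b(D_0)$ that is a good source code for the BSS at distortion $D_0$, and an inner code $C_i\subset C_o$ of rate matched to the packing requirement on the sum. Generate two independent random dithers $b_1,b_2$ and let $\hat{X}_1^n,\hat{X}_2^n$ be the minimum-Hamming-distance quantizations of $X^n$ onto $C_o+b_1$ and $C_o+b_2$ respectively. Description~$1$ carries the index of $\hat{X}_1^n$ within $C_o+b_1$, description~$2$ carries the index of $\hat{X}_2^n$ within $C_o+b_2$, and description~$3$ carries the index of $\hat{X}_1^n+\hat{X}_2^n$ in the coset partition $C_o/C_i$; by linearity this sum lies in $C_o+(b_1+b_2)$, so a decoder that identifies its $C_i$-coset recovers the sum exactly.

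Third, I would check each decoder. Decoders $\{1\}$ and $\{2\}$ output $\hat{X}_1^n,\hat{X}_2^n$ and attain $D_s$ by construction. Decoders $\{1,3\}$ and $\{2,3\}$ add the received sum to $\hat{X}_1^n$ or $\hat{X}_2^n$ respectively to recover the other quantization and then apply the ZB MAP rule, attaining $D_0$; decoder $\{1,2\}$ does the same directly. Decoder $\{3\}$ decodes the sum and uses it as its reconstruction under $d_{\{3\}}$: averaging with respect to the induced joint pmf of $(X,\hat{X}_1+\hat{X}_2)$ produces exactly the stated lower bound on $D_{\{3\}}$. The rate bound on $R_3$ comes from a K\"orner--Marton-style packing argument for the sum on the coset $C_o/C_i$, counted against the marginal $H(\hat{X}_1+\hat{X}_2)$, which evaluates to $\frac{1}{2}+h_b(\sqrt{2}-1)-h_b(\sqrt{2}/2)-\frac{h_b(D_0)}{2}$ in closed form. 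The joint entropy inequality (\ref{eq:nonredconstraint}) is the mutual covering condition for a pair of correlated coset-code quantizers: its four $h_b(\cdot)$ terms are precisely the entropies of the quantization errors and of the sum under the ZB joint law.

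The main obstacle is the covering lemma for the pair of correlated linear quantizers. Classical PtP covering suffices for a single linear quantizer, and the standard mutual covering lemma suffices for two iid unstructured codebooks, but here $(\hat{X}_1^n,\hat{X}_2^n)$ must be drawn from a non-product joint distribution while still being constrained to lie in cosets of linear codes. I would handle this by a second-moment argument over the dithers $b_1,b_2$: compute the expected number of pairs $(c_1^n,c_2^n)\in (C_o+b_1)\times(C_o+b_2)$ that are jointly $\epsilon$-typical with $X^n$ under the ZB law, and bound the variance using the linear structure of $C_o$; the inequality (\ref{eq:nonredconstraint}) is exactly the threshold at which the expectation dominates, yielding a vanishing-probability covering guarantee by Chebyshev. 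The remaining steps (distortion averaging, and the packing bound for the sum) then follow from standard typicality arguments applied to the cosets of $C_i$.
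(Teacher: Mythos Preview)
Your high-level plan is right---the ZB test channel, linear codebooks with shared structure so that $\hat X_1+\hat X_2$ can be sent cheaply on description~$3$, and a second-moment covering lemma as the crux---but the concrete scheme you write down does not hit the target RD vector, for two reasons.

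First, minimum-Hamming-distance quantization onto two cosets of a good $D_0$-quantizer $C_o$ does \emph{not} induce the ZB joint law $P_{X,\hat X_1,\hat X_2}$. Each $\hat X_i$ would sit at Hamming distortion $\approx D_0$ from $X$ (not $D_s$), and the two quantizations would be strongly correlated through $X$, so $\hat X_1+\hat X_2$ would be nearly $0$ and useless at decoders $\{1,3\},\{2,3\}$. The paper instead uses a \emph{joint-typicality} encoder: given $X^n$, it searches the product of two (identical) coset codebooks for a pair $(c_1^n,c_2^n)$ jointly typical with $X^n$ under the specific $P_{X,V_1,V_2}$ of Table~\ref{tb:jointdist}. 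Your last paragraph quietly switches to this viewpoint, but it is inconsistent with your encoding step above and you never use it to replace min-distance quantization.

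Second, your rate accounting for descriptions $1$ and $2$ is off. Sending the full index of $\hat X_i^n$ in a rate-$(1-h_b(D_0))$ code costs $1-h_b(D_0)$ per description, whereas the target is $R_i=(1-h_b(D_0))/2$. The paper resolves this by \emph{binning}: the single coset code has rate $r=1-\tfrac12 H(V_1,V_2\mid X)$ (not $1-h_b(D_0)$), and description $i$ carries only a random bin index of rate $\rho_i=H(V_i)-\tfrac12 H(V_1,V_2\mid X)$; decoder $\{i\}$ recovers $c_i^n$ as the unique $P_{V_i}$-typical codeword in its bin. Description~$3$ likewise carries a random bin index of $c_1^n+c_2^n$, not a coset index in $C_o/C_i$. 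As for the covering lemma, the second-moment argument over the random linear ensemble is exactly what the paper does, but the extra constraint it produces is $r_o+r'_o-r_i\ge \log q-H(\alpha V_1+\beta V_2\mid X)$ for every nonzero $(\alpha,\beta)$, because any jointly typical pair $(c_1,c_2)$ forces $\alpha c_1+\beta c_2$ to be a typical codeword in a code of rate $r_o+r'_o-r_i$; this linear-combination obstruction, specialized to $\alpha=\beta=1$, is the source of~(\ref{eq:nonredconstraint}).
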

\begin{proof}
%
Consider the following definition.
\begin{Definition}
  Let $\mathbb{F}_q$ be a field. Consider 3 random variables $X$, $U$ and $V$, where $X$ is defined on an arbitrary finite set $\mathsf{X}$, and $U$ and $V$ are defined on $\mathbb{F}_q$. Fix a PMF $P_{X,U,V}$ on $\mathsf{X}\times\mathbb{F}_q\times \mathbb{F}_q$. A sequence of code pairs $(C_1,C_2)$, where $C_j \subset \mathbb{F}_q^n$ for $j=1,2$, 
 is called $P_{XUV}$-covering if $\forall \epsilon>0$,
\begin{align*} 
 P(\{x^n|\exists (u^n,v^n)\in A_{\epsilon}^n(U,V|x^n)\cap C_1\times C_2\})\to 1 \text{ as } n\to \infty.
\end{align*}
\label{P_cov}
\end{Definition}

First, we derive new covering and packing bounds for joint quantization of a general source $X$ (i.e. not necessarily binary), using two pairs of nested coset codes.  Let $(\mathcal{C}_i,\mathcal{C}_o)$ 
and $(\mathcal{C}_i,\mathcal{C}'_o)$ be two pairs of 
nested coset codes with generator matrices $G_1$ and $G_2$ shown in Figure \ref{codebookssum} which share the inner code $\mathcal{C}_i$. 
If $r_i=0$, the two codebooks are generated independently. 
On the other hand, if $r_o=r'_o=r_i$, the two codebooks are the same, 
so this construction generalizes the previous constructions.
 
\begin{figure}[!h]
\centering
\includegraphics[height=2.1in]{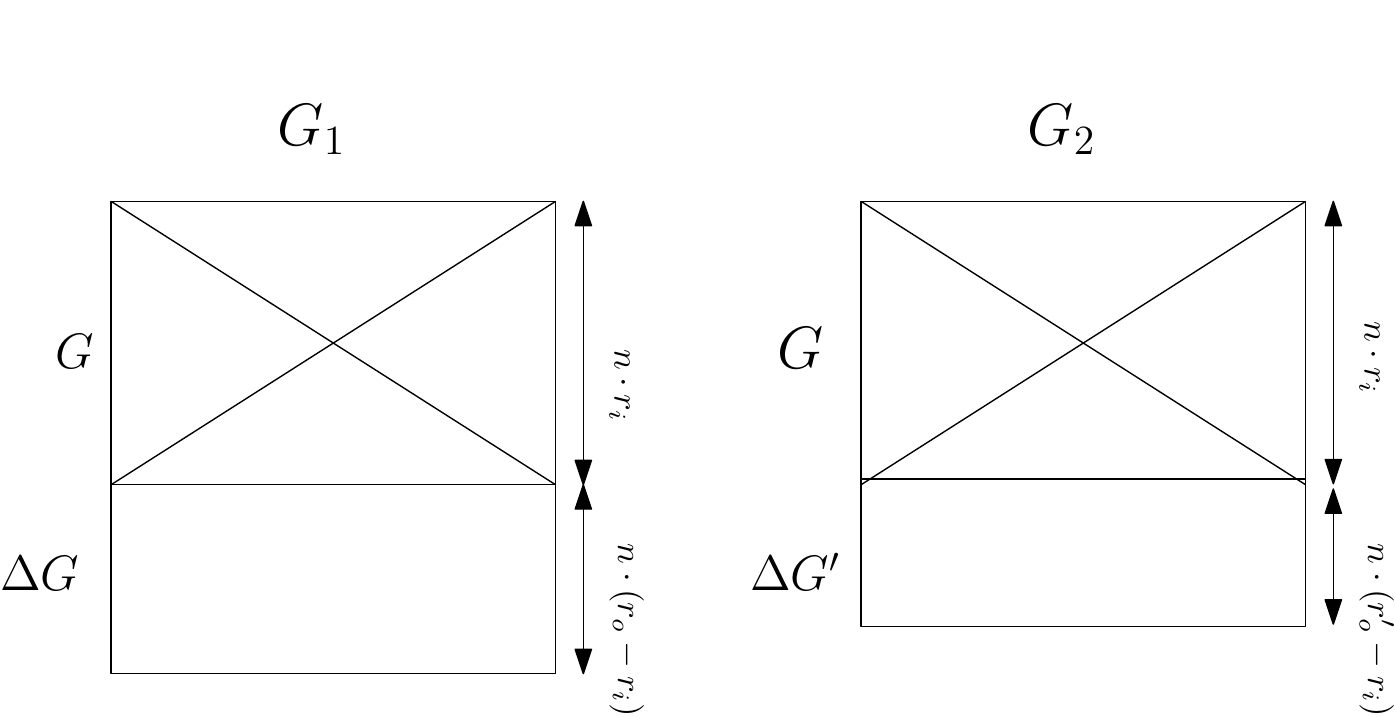}
\caption{Codebook Construction for Lemma \ref{thm: same_code}}
\label{codebookssum}
\end{figure}
\begin{Lemma}[Covering Lemma]\label{thm: same_code}
For any $P_{XUV}$ on $\mathsf{X}\times\mathbb{F}_q\times \mathbb{F}_q$ and rates $r_o,r'_o$ and $r_i$ 
satisfying (\ref{newe})-(\ref{neweq}), there exists a sequence of two pairs of nested coset codes 
$ (\mathcal{C}_o,\mathcal{C}_i)$ and $ (\mathcal{C}'_o,\mathcal{C}_i)$ 
which are $P_{XUV}$-covering.
\begin{align}
&r_o  \geq \log q - H(U|X)\label{newe}\\
&r'_o \geq \log q - H(V|X)\\
&r_o+r'_o \geq 2\log q -H(U,V|X)\\
&r_o+r'_o-r_i \geq \log{q}- H(\alpha U\oplus_q \beta V|X),\forall \alpha,\beta\in \mathbb{F}_q\backslash{\{0\}},
\label{neweq}
\end{align}

\begin{proof}
 See Section \ref{Ap:thm: same_code} in the appendix.
\end{proof}
\end{Lemma}
\begin{Remark}
The only difference between the new mutual covering bounds and the ones for independent codebook generation is the presence of the constraint (\ref{neweq}). If $r_i=0$, (\ref{neweq}) is redundant,
so we recover the mutual covering bounds for independent codebook generation as expected. If $r_i\neq 0$, (\ref{neweq}) is non-redundant. There is an intuitive explanation for this additional bound. Define $\mathcal{C}_3=\alpha\mathcal{C}_1\oplus_q \beta\mathcal{C}_2$. $\mathcal{C}_3$ is a coset code  with generator matrix $G_3=[G^t \ G'^t \ \Delta G^t]^t$, and the size of this codebook is $2^{n(r_o+r'_o-r_i)}$. Suppose there are codevetors $\mathbf{c}_u\in \mathcal{C}_1$ and $\mathbf{c}_v\in \mathcal{C}_2$ jointly typical with $\mathbf{x}$ with respect to $P_{UVX}$, then $\alpha\mathbf{c}_1\oplus_q \beta\mathbf{c}_2\in \mathcal{C}_3$ is jointly typical with $\mathbf{x}$ with respect to $P_{\alpha U\oplus_q \beta V,X}$. This implies that $\mathcal{C}_3$ should have size at least $2^{n (\log q -H(\alpha U\oplus_q \beta V|X))}$ by the converse source coding theorem.  
\end{Remark}
\begin{Definition}
   Let $\mathbb{F}_q, U, V$ and $X$ be as in Definition \ref{P_cov}. A sequence of code pairs $(C_1,C_2)$ 
and bin functions $B_i:\mathcal{C}_i\to [1,2^{n\rho_i}], i\in \{1,2\}$
   is called $P_{XUV}$-packing if for all $\epsilon>0$, 
\begin{align*} 
P \left( \left\{ x^n \left| 
\begin{array}{c}
\exists (c_u^n,c^n_v)\neq ({c'}_u^n,{c'}^n_v), \\
(c_u^n,c^n_v)\in A_{\epsilon}^n(U,V|X) \bigcap C_1 \times C_2, \\
(c'^n_u,c'^n_v)\in A_{\epsilon}^n(U,V) \bigcap C_1 \times C_2, \\
B_1(c_u^n)=B_1(c'^n_u),  \ \ B_2(c_v^n)=B_2(c'^n_v) 
\end{array} \right. \right\} \right)\to 0 \text{ as } n\to \infty.
%
%
\end{align*}
\end{Definition}

\begin{Lemma}[Packing Lemma]\label{thm: same_codepack}
For any $P_{XUV}$ on $\mathsf{X}\times\mathbb{F}_q\times \mathbb{F}_q$, there exists a sequence of two pairs of nested coset codes $(\mathcal{C}_o,\mathcal{C}_i)$ and $ (\mathcal{C}'_o,\mathcal{C}_i)$ and bin function $B_i ,i\in\{1,2\}$ which are $P_{XUV}$-packing, if $r_o, r'_o, \rho_1$ and $\rho_2$ satisfy
\begin{align}
&r_o-\rho_1  \leq \log q - H(U|V),\\
&r'_o-\rho_2 \leq \log q - H(V|U),\\
&(r_o-\rho_1)+(r'_o-\rho_2) \leq 2\log q -H(U,V).\label{sumpack}
\end{align}

\end{Lemma}
\begin{proof}
See Section \ref{Ap:thm: same_codepack} in the appendix.
\end{proof}

%

We proceed with explaining the achievability scheme. Define the joint distribution in Table \ref{tb:jointdist} on random variables $V_{\{1\}},V_{\{2\}}$ and $X$.

\setlength{\extrarowheight}{.3cm}
\begin{table}[h!]
\centering
\begin{tabular}{x{2cm}|x{2cm}|x{2cm}|x{2cm}|x{2cm}|}
\diag{.3em}{2cm}{$\qquad X$}{$V_{\{1\}},V_{\{2\}}$}& \multicolumn{1}{c}{00}&\multicolumn{1}{c}{01}&\multicolumn{1}{c}{10}&\multicolumn{1}{c}{11}\\\hline
0&$\frac{1}{2}(1-D_0)$&$\frac{\sqrt{2}-1}{2}D_0$&$\frac{\sqrt{2}-1}{2}D_0$&$\frac{3-2\sqrt{2}}{2}D_0$\\  \cline{2-5}
1&$\frac{1}{2}D_0$&$\frac{\sqrt{2}-1}{2}(1-D_0)$&$\frac{\sqrt{2}-1}{2}(1-D_0)$&$\frac{3-2\sqrt{2}}{2}(1-D_0)$\\  \cline{2-5}
\end{tabular}
\vspace{0.1in}
\caption{Joint distribution on $X$, $V_{\{1\}}$ and $V_{\{2\}}$.}
\label{tb:jointdist}
\end{table}

 \textbf{Codebook Generation:} Set $r=r_o=r'_o=r_i=1-\frac{H(V_{\{1\}},V_{\{2\}}|X)}{2}+\epsilon$, and $\rho_1=\rho_2=H(V_{\{1\}})-\frac{H(V_{\{1\}},V_{\{2\}}|X)}{2}+\epsilon$ and $\rho_3=H(V_{\{1\}}\oplus V_{\{2\}})-\frac{H(V_{\{1\}},V_{\{2\}}|X)}{2}+\epsilon$. 
 Construct a family of coset codes $\mathcal{C}$ with rate $r$. Also, construct three binning functions $B_i:\mathcal{C}^n\to [1,2^{n\rho_i}], i\in \{1,2,3\}$.

 \textbf{Encoding:} Upon receiving source sequence $x^n$, the encoder finds $\mathbf{c}^n_1$ and $\mathbf{c}^n_2$ in the codebook, such that they are jointly typical with $x^n$ with respect to $P_{V_{\{1\}},V_{\{2\}},X}$. Such a pair of codewords exists as long as the covering bounds in Lemma \ref{thm: same_code} are satisfied. In the case at hand it can be readily checked that $r_o,r'_o$ and $r_i$ satisfy the bounds.  
  Description 1 carries the bin index of $\mathbf{c}^n_1$ using $B_1$, description 2 carries the bin index of $\mathbf{c}^n_2$ using $B_2$ and description 3 carries the bin index of $\mathbf{c}^n_1+\mathbf{c}^n_2$ using $B_3$. 
 
 \textbf{Decoding:} Decoder ${\{1\}}$ receives the bin index carried by description 1, and reconstructs $\mathbf{c}_1^n$ as long as there is a unique codeword in the bin which is typical with respect to $P_{V_{\{1\}}}$. The following packing bound ensures correct decoding with arbitrarily small error:
  \begin{align*}
H(V_{\{1\}})\leq 1-\rho_1+r_i.
\end{align*}

By the same arguments decoder ${\{2\}}$ reconstructs $\mathbf{c}_2^n$ correctly. Decoder ${\{3\}}$ reconstructs $\mathbf{c}^n_1+\mathbf{c}^n_2$ with arbitrarily small error since the following packing bound is satisfied:
 \begin{align*}
&H(V_{\{1\}}+V_{\{2\}})\leq 1-\rho_3+r_i.
\end{align*}

 We conclude that all the decoders which receive two descriptions would have access to $\mathbf{c}^n_1$ and $\mathbf{c}^n_2$. Decoders $\{1\}$, $\{2\}$ and $\{3\}$ announce their decoded codewords as their reconstruction of the source. The reconstruction function at the decoders which receive two descriptions is given as follows:
\[
 \hat{x}_i= \left\{\begin{array}{cc} 0 & \quad c_{1i}=c_{2i}=0\\ 1 & \quad \text{Otherwise }  \end{array}\right.
\]

This implies that the RD vector stated in the lemma is achieved from strong typicality. 
%
\end{proof}


The following lemma shows that some of the RD vectors in Lemma \ref{scalartheorem} are not achievable using the SSC scheme. 

\begin{Lemma}\label{theorem11}
 The RD vector in (\ref{RDscalar}) is not achievable using the SSC scheme for the following values of $\alpha$ and $\beta$ and when the equality holds in (\ref{eq:nonredconstraint}):
 
\begin{align*}
 \alpha=\log_2\frac{1-2(\sqrt{2}-1)D_0}{2(2-\sqrt{2})D_0}, \ \ \beta=-\log_2\frac{1-2(\sqrt{2}-1)(1-D_0)}{2(2-\sqrt{2})(1-D_0)}
\end{align*}
 
\end{Lemma}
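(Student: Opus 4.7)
The proof is a converse argument in the spirit of Lemmas \ref{theorem7} and \ref{theorem9}: I assume for contradiction that the RD vector of (\ref{RDscalar}), together with the critical $(R_3,D_{\{3\}})$ pair achieved by Lemma \ref{scalartheorem} at equality in (\ref{eq:nonredconstraint}), belongs to $\mathcal{RD}_{SSC}$, and then exhibit a violation of the packing and covering inequalities of Theorem \ref{thm:SSC}.

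The first step is to pin down the joint law on descriptions $\{1,2\}$. The sum $R_1+R_2 = 1-h_b(D_0)$ equals the point-to-point rate-distortion function of the BSS at distortion $D_0$, so the projection $(R_1,R_2,D_{\{1\}},D_{\{2\}},D_{\{1,2\}})$ sits at the Zhang--Berger no-excess-rate operating point, where the converse from \cite{ZB} is tight. I apply that matching converse to the sub-collection of SSC codebooks whose support families satisfy $\widetilde{\mathcal{M}}\subset\{1,2\}$, and use Lemma \ref{lem:recfunc} to collapse these codewords into single effective representatives $V_{\{1\}}$ and $V_{\{2\}}$. This forces the joint single-letter law $P_{V_{\{1\}},V_{\{2\}},X}$ to coincide with the distribution in Table \ref{tb:jointdist}, up to the usual relabeling ambiguity.

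The second step translates the specified $\alpha,\beta$ into a structural requirement on the reconstruction at decoder $\{3\}$. I verify that these weights are exactly the log-posterior-ratios that make the minimum-distortion single-letter reconstruction a function only of the binary sum $V_{\{1\}}+V_{\{2\}}$, and that the corresponding minimum value of $E[d_{\{3\}}(X,\hat X)]$ equals the declared $D_{\{3\}} = \alpha(\sqrt{2}-1)D_0 + \beta[(3-2\sqrt{2})(1-D_0)/2 + D_0/2]$. Any reconstruction that is not equivalent (in joint law with $X$) to $V_{\{1\}}+V_{\{2\}}$ strictly overshoots this value, so the contradiction hypothesis forces decoder $\{3\}$ to decode a codeword whose joint law with $X$ matches that of the bivariate sum $V_{\{1\}}+V_{\{2\}}$.

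The third step derives the rate penalty. Since the SSC codebooks $U_{\mathcal{M}}$ with $\{3\}\in\widetilde{\mathcal{M}}$ are, by construction, generated independently of the codebooks supporting $V_{\{1\}},V_{\{2\}}$, the mutual covering inequality (\ref{cov11}) forces any single codebook representing $V_{\{1\}}+V_{\{2\}}$ jointly with $X$ to have rate at least $\log 2 - H(V_{\{1\}}+V_{\{2\}}\mid X)$, and the packing inequality (\ref{pack11}) at decoder $\{3\}$ then yields a lower bound on $R_3$. Combined with the Hamming constraints from decoders $\{1,3\}$ and $\{2,3\}$, which saturate their floors at $D_0$, a numerical evaluation based on Table \ref{tb:jointdist} and the equality form of (\ref{eq:nonredconstraint}) shows that this lower bound strictly exceeds $\tfrac{1}{2}+h_b(\sqrt{2}-1)-h_b(\tfrac{\sqrt{2}}{2})-\tfrac{1}{2}h_b(D_0)$, the rate attained by the linear scheme, which is the sought contradiction. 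The hard part will be ruling out the possibility that a clever combination of several unstructured codebooks with various supports through $\{3\}$ might jointly represent the binary sum more efficiently than any single unstructured codebook could; this requires invoking the mutual independence of codebook generation across distinct $\mathcal{M}$ together with the ZB structural constraint of the first step, with the equality in (\ref{eq:nonredconstraint}) serving as the pivot that pins the gap at exactly zero on the achievable boundary.
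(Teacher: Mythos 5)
Your overall framing (assume the vector is SSC-achievable, exploit PtP optimality at decoders $\{1\}$, $\{2\}$, $\{1,2\}$ and $\{3\}$, and drive a contradiction) matches the paper's, and your first step loosely parallels the paper's Steps 1--4; but there are two genuine gaps. First, your second step asserts that decoder $\{3\}$ must decode a variable whose joint law with $X$ equals that of $V_{\{1\}}+V_{\{2\}}$. That is not proved, and it is stronger than what PtP optimality at decoder $\{3\}$ gives you: decoder $\{3\}$ decodes a whole collection of codewords $(U_{\{3\}},U_{\{1\},\{3\}},U_{\{2\},\{3\}},\ldots)$ and only the Markov structure $U_{\{1\},\{3\}},U_{\{2\},\{3\}},U_{\{3\}}\leftrightarrow X_3\leftrightarrow X$ (plus the fixed marginal $P_{X,X_3}$) follows from optimality; the paper works only with this and never needs, nor establishes, uniqueness of the test channel in the sense you invoke. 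Similarly, pinning $P_{X,\hat X_1,\hat X_2}$ to Table \ref{tb:jointdist} is not an off-the-shelf consequence of the ZB no-excess converse; the paper proves it in its Step 4 via Lemma \ref{indeplem}, the Markov chain $\hat X_1,\hat X_2\leftrightarrow\hat X_{12}\leftrightarrow X$, and an explicit case-by-case (simplex) optimization that also shows $\hat X_{12}$ is a function of $(\hat X_1,\hat X_2)$, which is what eliminates all refinement codebooks at decoder $\{1,2\}$.

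Second, and more importantly, your final step is not the paper's contradiction and is precisely the part you leave open. You aim to lower-bound $R_3$ above the linear-coding value, and you yourself flag "the hard part" of ruling out that several independently generated codebooks with different supports might jointly carry the needed common information -- that is exactly where the entire technical content lies, and a single-codebook covering/packing count does not settle it. The paper instead contradicts the Hamming constraints at decoders $\{1,3\}$ and $\{2,3\}$: after eliminating codebooks it reduces to $C_{\{1\}},C_{\{2\}},C_{\{3\}},C_{\{1\},\{3\}},C_{\{2\},\{3\}}$, derives the chains (\ref{MC50})--(\ref{MC58}), replaces $U_{\{i\}}$ by the reconstructions $X_i$, parametrizes $P_{U_{\{1\},\{3\}}|X_1}$ and $P_{U_{\{2\},\{3\}}|X_2}$, bounds the auxiliary alphabets to size two via the support lemma, and then shows by a computer-assisted evaluation that $I(U_{\{1\},\{3\}},U_{\{2\},\{3\}},X_1,X_3;X)+I(U_{\{1\},\{3\}},U_{\{2\},\{3\}},X_2,X_3;X)<1.42<2(1-h_b(D_0))$, so the two side decoders cannot both attain $D_0$ (this is where the equality in (\ref{eq:nonredconstraint}), which fixes $D_0\approx 0.035$, enters). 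Without an argument of comparable force for your $R_3$ bound, the proposal does not close.
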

For example,   $D_0= 0.035, \alpha= 4.566$ and $\beta= 2.495$ satisfy the above constraints, 
where we have rounded the parameters up to the third decimal place.
\begin{proof}
See Section \ref{Ap:theorem11} in the appendix.
\end{proof}
\section{Achievable RD Region using Structured Codes}
\label{sec:RDregion}
In this section, we provide a new achievable RD region for the general $l-$descriptions problem by enhancing the SSC coding scheme with a structured coding layer. We present this region in four stages. In the first stage, we prove that the SSC region can also be achieved using structured codes. In particular, we use independent nested coset codes for each auxiliary random variable, and exploit the pairwise independence of the codewords to show the achievability of the SSC region. In the subsequent stages, we add coding layers that facilitates the reconstruction of multi-variate functions of the auxiliary random variables. The improvements due to these additional layers comes from exploiting the algebraic structure of the codebooks. In the second stage, we only allow the reconstruction of a bivariate summation of codewords. In the third stage we extend this to a multi-variate summation of the codewords. In the fourth stage, we consider the general case involving the reconstruction of an arbitrary number of multi-variate summations at the decoders.

\subsection{Stage 1: Achievability of the SSC Region Using Nested Coset Codes}

\begin{Definition}
 For a joint distribution $P$ on random variables  $U_{\mathcal{M}},\mathcal{M}\in \mathbf{S}_\mathsf{L}$ and $X$, and a set of reconstruction functions $g_{\mathcal{L}}=\{g_\mathsf{N}:\mathsf{U}_{\mathsf{N}}\to\mathsf{X}, \mathsf{N}\in\mathcal{L}\}$, the set $\mathcal{RD}_{1}(P, g_\mathcal{L}) $ is defined as the set of RD vectors satisfying the following bounds for some non-negative real numbers $(\rho_{\mathcal{M},i},r_{o,\mathcal{M}})_{i\in \widetilde{\mathcal{M}},\mathcal{M}\in \mathbf{S}_\mathsf{L} }$ :  
\begin{align}
&\label{LC1} H(U_{\mathbf{M}}|X)\geq   \sum_{\mathcal{M} \in \mathbf{M}}{(\log q  \!  -  r_{o,\mathcal{M}})}, \forall \mathbf{M}\subset \mathbf{S}_\mathsf{L}\\&\label{LC2}
H(U_{\mathbf{M}_\mathsf{N}}|U_{\mathbf{L}\cup \widetilde{\mathbf{M}}_\mathsf{N}})\leq \!\!\!\!\! \sum_{\mathcal{M}\in \mathbf{M}_\mathsf{N}\backslash{\mathbf{L}\cup \widetilde{\mathbf{M}}_\mathsf{N}}} \!\!\!\! (\log q +\!\! \sum_{j\in [1:L]}\!\! \rho_{\mathcal{M},j}\!\!-r_{o,\mathcal{M}}),\!\!\!\!\!\! \quad  \forall \mathbf{L}\subset \mathbf{M}_\mathsf{N}, \forall \mathsf{N}\in \mathcal{L}\\
&R_i=\sum_{\mathcal{M}} \rho_{\mathcal{M},i}, \qquad D_{\mathsf{N}}=E\big\{d_{\mathsf{N}}(h_{\mathsf{N}}(U_{\mathsf{N}},X))\big\}.\nonumber
\end{align}
where $r_{o,\mathcal{M}}\leq \log{q}, \forall \mathcal{M}\in \mathbf{S}_\mathsf{L}$. 
\label{linind} 
\end{Definition}

\begin{Theorem} \label{theorem12}
The RD vector $(R_i,D_{\mathsf{N}})_{i\in \mathsf{L},\mathsf{N}\in \mathcal{L}}$ is achievable for the $l-$descriptions problem using nested coset codes, if there exists a distribution $P$ and reconstruction functions $g_\mathcal{L}$ such that $(R_i,D_{\mathsf{N}})_{i\in \mathsf{L},N\in \mathcal{L}}\in\mathcal{RD}_{1}(P, g_\mathcal{L})$. 
\end{Theorem}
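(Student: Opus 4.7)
The strategy is to mimic the SSC random-coding proof of Theorem~\ref{thm:SSC}, but replacing each unstructured codebook $C_{\mathcal{M}}$ with an independently generated coset code of rate $r_{o,\mathcal{M}}$ over $\mathbb{F}_q^n$. Concretely, for every $\mathcal{M}\in\mathbf{S}_\mathsf{L}$, draw a generator matrix $G_\mathcal{M}$ of dimension $\lceil n r_{o,\mathcal{M}}/\log q\rceil \times n$ and a dither $\mathbf{b}_\mathcal{M}$ uniformly and independently over $\mathbb{F}_q$, and define $\mathcal{C}_\mathcal{M}=\{\mathbf{u}G_\mathcal{M}+\mathbf{b}_\mathcal{M}:\mathbf{u}\in\mathbb{F}_q^{\lceil nr_{o,\mathcal{M}}/\log q\rceil}\}$. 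The random variable $U_\mathcal{M}$ is viewed as taking values in $\mathbb{F}_q$ (choosing $q$ large enough to embed its alphabet, with the induced joint distribution). For every $i\in\widetilde{\mathcal{M}}$, partition $\mathcal{C}_\mathcal{M}$ into $2^{n\rho_{\mathcal{M},i}}$ bins using a uniformly random hash function, independently across $\mathcal{M}$ and $i$. Encoding and decoding proceed exactly as in the SSC scheme: the encoder locates a jointly typical tuple of codewords with the source and forwards the relevant bin indices on each description; decoder $\mathsf{N}$ jointly typicality decodes within the received bins to recover $U_{\mathbf{M}_\mathsf{N}}^n$.

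The first analytical step is to verify the encoding covering event, i.e., that \eqref{LC1} suffices for the existence of a jointly typical codeword tuple with $X^n$. Because the codebooks for distinct $\mathcal{M}$ are drawn independently, codewords belonging to different $\mathcal{C}_{\mathcal{M}}$ are mutually independent; within a single coset code, two distinct index vectors $\mathbf{u}\neq \mathbf{u}'$ yield pairwise independent and uniformly distributed codewords over $\mathbb{F}_q^n$. Pairwise independence is exactly what the standard second-moment (Chebyshev) argument requires, and applying it to the number of jointly typical tuples in $\prod_{\mathcal{M}\in\mathbf{M}}\mathcal{C}_\mathcal{M}$ gives the covering condition $\sum_{\mathcal{M}\in\mathbf{M}}(\log q-r_{o,\mathcal{M}})\leq H(U_\mathbf{M}|X)$ for every $\mathbf{M}\subset\mathbf{S}_\mathsf{L}$, which is exactly \eqref{LC1}.

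The second analytical step is to verify the packing/decoding error event at each decoder $\mathsf{N}$. Here we condition on the true codewords $U_{\widetilde{\mathbf{M}}_\mathsf{N}}^n$ corresponding to strictly smaller decoders (which are assumed to have been correctly reconstructed by an inductive argument over $|\mathsf{N}|$), and bound the probability that some alternate jointly typical tuple $(u_\mathcal{M}^n)_{\mathcal{M}\in\mathbf{M}_\mathsf{N}\setminus\widetilde{\mathbf{M}}_\mathsf{N}}$ lies in the received bins. The union bound over competing tuples together with pairwise independence of distinct codewords within each $\mathcal{C}_\mathcal{M}$ and mutual independence across $\mathcal{M}$ gives, for each $\mathbf{L}\subset\mathbf{M}_\mathsf{N}$, that the expected number of spurious tuples with the conditioning pattern indexed by $\mathbf{L}$ decays to zero whenever \eqref{LC2} holds. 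This is the coset-code analogue of the mutual packing lemma. An induction on the layers $\mathsf{N}\in\mathcal{L}$ ordered by inclusion then propagates correct reconstruction from smaller decoders to larger ones.

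The main obstacle is formalizing the covering and packing arguments with only pairwise (rather than mutual) independence within each coset code, and doing so for the full combinatorial collection of subfamilies $\mathbf{M}\subset\mathbf{S}_\mathsf{L}$ and $\mathbf{L}\subset\mathbf{M}_\mathsf{N}$ indexing the bounds \eqref{LC1}--\eqref{LC2}. The second-moment computation must be written with care, splitting the expectation of the square of the count of jointly typical tuples into contributions where the codewords coincide on an arbitrary subset of the $\mathcal{M}$'s; mutual independence across $\mathcal{M}$ and pairwise independence within each $\mathcal{M}$ make each cross-term factorize, and the dominant term matches the mean squared, giving the desired concentration. Once covering and packing are in place, the distortion analysis reduces to the standard typical-average lemma applied to the reconstruction $g_\mathsf{N}(U_\mathsf{N}^n)$, and the rate expressions $R_i=\sum_\mathcal{M}\rho_{\mathcal{M},i}$ follow from the definition of the binning. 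This completes the achievability proof while setting the stage for Stages~2--4, where the algebraic structure of the nested coset codes will enable reconstruction of multivariate summations that cannot be obtained from unstructured codebooks.
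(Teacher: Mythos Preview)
Your proposal is correct and follows essentially the same approach as the paper: replace each unstructured SSC codebook by an independently generated random coset code (uniform generator matrix and dither), keep the encoding, binning, and decoding of Theorem~\ref{thm:SSC} unchanged, and invoke the mutual covering and packing bounds for coset codes to obtain \eqref{LC1} and \eqref{LC2}. The paper's proof is a terse paragraph that simply asserts these bounds hold for independently generated coset codes, whereas you spell out the underlying second-moment/Chebyshev argument and the role of pairwise independence within each coset code and mutual independence across codebooks; this extra detail is accurate and is exactly what underlies the paper's citation of the mutual covering and packing lemmas.
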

\begin{proof}
The encoding and decoding steps are exactly the same as the ones in the proof of Theorem \ref{thm:SSC}. The only difference is in the codebook generation phase. In this phase,  for every $\mathcal{M}\in \mathbf{S}_\mathsf{L}$, we generate a coset code $C_{\mathcal{M}}$  with rate $r_\mathcal{M}$, generator matrix $G_\mathcal{M}$, and dither $b_\mathcal{M}$. $G_\mathcal{M}$ and $b_\mathcal{M}$  are generated randomly and uniformly for every $\mathcal{M}$. The bounds in (\ref{LC1}) are the mutual covering bounds for independently generated coset codes. These bounds ensure encoding can be carried out without error. The bounds in (\ref{LC2}) are the mutual packing bounds in each decoder. They ensure errorless decoding.



\end{proof}

\begin{Lemma}\label{lemma5}
 The RD region in Theorem \ref{theorem12} is equal to the SSC RD region.
\end{Lemma}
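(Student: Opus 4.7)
The plan is to show that the two parametrized regions $\mathcal{RD}_1(P, g_\mathcal{L})$ and $\mathcal{RD}_{SSC}(P, g_\mathcal{L})$ coincide via an affine change of variables in the codebook-rate parameters, after allowing the auxiliary alphabets to be embedded in a sufficiently large finite field. Concretely, for a fixed joint distribution $P$ on $(U_\mathcal{M})_{\mathcal{M}\in\mathbf{S}_\mathsf{L}}$ and $X$ together with reconstruction functions $g_\mathcal{L}$, the key identity to exploit is the substitution $r_{o,\mathcal{M}}=r_\mathcal{M}+\log q-H(U_\mathcal{M})$. Under this substitution $\log q - r_{o,\mathcal{M}}=H(U_\mathcal{M})-r_\mathcal{M}$, so the covering bound \eqref{LC1} turns into the SSC covering bound \eqref{cov11} term-by-term; likewise, $\log q+\sum_j\rho_{\mathcal{M},j}-r_{o,\mathcal{M}}=H(U_\mathcal{M})+\sum_j\rho_{\mathcal{M},j}-r_\mathcal{M}$, converting \eqref{LC2} into \eqref{pack11}; and the range constraint $r_{o,\mathcal{M}}\le\log q$ becomes $r_\mathcal{M}\le H(U_\mathcal{M})$. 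Since the per-description rates $\rho_{\mathcal{M},i}$, the total rates $R_i=\sum_\mathcal{M}\rho_{\mathcal{M},i}$, and the distortion expressions are unchanged, this substitution is an affine bijection between the feasible sets of the two parametrized regions.

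For the inclusion $\mathcal{RD}_1\subseteq\mathcal{RD}_{SSC}$, any distribution $P$ on $\mathbb{F}_q^{|\mathbf{S}_\mathsf{L}|}$ is already admissible in the SSC region, since $\mathbb{F}_q$ is a valid finite alphabet. Given a feasible point $(\rho,r_o)$ of $\mathcal{RD}_1(P,g_\mathcal{L})$, defining $r_\mathcal{M}:=r_{o,\mathcal{M}}-\log q+H(U_\mathcal{M})$ yields a feasible point for $\mathcal{RD}_{SSC}(P,g_\mathcal{L})$, with the required $r_\mathcal{M}\ge 0$ obtained by specializing \eqref{LC1} to the singleton $\mathbf{M}=\{\mathcal{M}\}$, which forces $r_{o,\mathcal{M}}\ge\log q-H(U_\mathcal{M}|X)\ge\log q-H(U_\mathcal{M})$, and with $r_\mathcal{M}\le H(U_\mathcal{M})$ coming directly from $r_{o,\mathcal{M}}\le\log q$. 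For the reverse inclusion $\mathcal{RD}_{SSC}\subseteq\mathcal{RD}_1$, given $(\rho,r)\in\mathcal{RD}_{SSC}(P,g_\mathcal{L})$ with auxiliaries on arbitrary finite alphabets $\{\mathsf{U}_\mathcal{M}\}$, choose a prime $q\ge\max_\mathcal{M}|\mathsf{U}_\mathcal{M}|$, embed each $\mathsf{U}_\mathcal{M}$ into $\mathbb{F}_q$ to obtain a pushforward distribution $P'$ with the same joint probabilities (hence identical entropies) and reconstruction functions $g'_\mathcal{L}$ obtained by composing with the inverse embedding; then set $r_{o,\mathcal{M}}:=r_\mathcal{M}+\log q-H(U_\mathcal{M})$. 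Non-negativity $r_{o,\mathcal{M}}\ge 0$ follows from $\log q\ge\log|\mathsf{U}_\mathcal{M}|\ge H(U_\mathcal{M})$ and $r_\mathcal{M}\ge 0$, while $r_{o,\mathcal{M}}\le\log q$ follows from the SSC constraint $r_\mathcal{M}\le H(U_\mathcal{M})$. Taking closures of the unions over all admissible $(P,g_\mathcal{L})$ (and, for $\mathcal{RD}_1$, over all primes $q$) then yields the desired equality.

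The main point requiring a little care is the alphabet-embedding step used in the SSC-to-$\mathcal{RD}_1$ direction: one must be clear that the $\mathcal{RD}_1$ region is defined by a union over admissible field sizes, so enlarging $q$ to accommodate the auxiliaries is a legitimate move rather than a gain of generality beyond Theorem \ref{theorem12}. Since the embedding preserves the joint distribution and every information-theoretic quantity entering the covering and packing bounds, this is bookkeeping rather than a substantive obstacle. I do not expect other difficulties: the content of the lemma is precisely that nested coset codes, when used only through their marginal codeword distributions and paired with independent random binning, reproduce the SSC region verbatim, leaving room for the structured gains of the later stages to come from multivariate combinations of the codewords rather than from the inner codebooks themselves.
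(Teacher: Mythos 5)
Your proposal is correct and follows essentially the same route as the paper: the paper's proof also rests on the identification $r_{\mathcal{M}}-H(U_{\mathcal{M}})\leftrightarrow r_{o,\mathcal{M}}-\log q$, phrased there as matching inequalities with identical left-hand sides through Fourier--Motzkin elimination. Your version simply makes the affine change of variables explicit before any elimination and additionally spells out the bookkeeping (non-negativity of the transformed rates and embedding the auxiliary alphabets into a sufficiently large prime field) that the paper's terse argument leaves implicit.
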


\begin{proof}
See Section \ref{Ap:lemma5} in the appendix.
\end{proof}

\subsection{Stage 2: Reconstruction of a summation of two codebooks}
In the first stage we constructed one codebook for each subset of the
decoders. However, only the codebooks corresponding
to the Sperner families of sets are shown to be non-redundant. 
We interpret this using the  notion of common-information as
defined by  Gacs, K\"{o}rner, Witsenhausen \cite{ComInf1} \cite{ComInf2}.
Let $K(A_1;A_2)$ denote the common information between any two random
variables $A_1$ and $A_2$. The common information among $m$ random
variables $A_1,A_2,\ldots,A_m$ is a vector of length $(2^m-m-1)$ of information that is
common among every subset of $m$ random variables of size at least
two. When $m=3$, the common information is given by 
\[
[K(A_1;A_2;A_3), K(A_1;A_2), K(A_1;A_3), K(A_2;A_3)].
\]
This was referred to as univariate common information in
\cite{IC},  as each of these components are characterized using
univariate functions. 
We interpret the scheme in the first stage (SSC scheme) as capturing the common-information
components among the random variables associated  with $2^{l}-1$ decoders using univariate functions.

 For $m=3$, this notion of common information was generalized using
bivariate functions to the following seven-dimensional vector in \cite{IC}:
\[
[K(A_1;A_2;A_3), K(A_1;A_2), K(A_1;A_3), K(A_2;A_3), K(A_1;A_2,A_3),
K(A_2;A_1,A_3), K(A_3;A_1,A_2)].
\]
There are seven degrees of freedom in having information common among
$3$ random variables.
The latter three are called bivariate common information components as
they are characterized using bivariate functions of random variables.
In this sense, the addition of the structured coding layers in the next
stages can be thought of as capturing the common-information among
$2^{l}-1$ decoders using bivariate and, more generally,  multivariate
functions. 

We extend the notion of bivariate common information to $m>3$ random
variables as follows. To characterize a bivariate common information
component, we consider three subsets of $\mathsf{N}_1,\mathsf{N}_2$
and $\mathsf{N}_3$ of $\{1,2,3,\ldots,m\}$. Define
$K(\tilde{A}_{\mathsf{N}_1}; \tilde{A}_{\mathsf{N}_2}, \tilde{A}_{\mathsf{N}_3})$ as a
bivariate common information component among $A_1,A_2,\ldots,A_m$,
where $\tilde{A}_{\mathsf{N}_i}$ is the information that is common
among $A_{\mathsf{N}_i}$. For example for $m=4$, let
$\mathsf{N}_1=\{4\}$, $\mathsf{N}_2=\{1,2\}$ and
$\mathsf{N}_3=\{3\}$. This characterizes  the information in $A_4$ that can
be computed by a conference via a bivariate function of (i) the information common between $A_1$
and $A_2$, and (ii) the information in $A_3$.  This concept can be
extended to define multivariate common information among $m$ random
variables. 

We return to our discussion on the achievable RD region for the MD
problem, where $m=2^{l}-1$. In the second stage, we aim to capture the
bivariate common information among random variables associated with
$2^l-1$ decoders. In particular,  we reconstruct a summation of two
codebooks. From the above arguments, instead of one codebook for each
subset of decoders as in the first stage, in this stage we need to
construct one codebook for every triple of subsets of the
decoders. For a given triple of sets of decoders, the third set of
decoders reconstruct a bivariate summation of a random variable
corresponding to the first subset and a random variable corresponding
to the second subset of decoders.  
This is explained in more detail next. We add two new codebooks to the
SSC scheme.  The underlying random variables for these two codebooks
are denoted by $V_{\mathcal{A}_1}$ and $V_{\mathcal{A}_2}$ where
$\mathcal{A}_i\in \mathbf{S}_\mathsf{L}, i\in\{1,2\}$,
$\mathcal{A}_1\neq \mathcal{A}_2$. We construct two pairs of nested
coset codes for these two random variables. The two nested coset codes
have the same inner code. The codebook corresponding to
$V_{\mathcal{A}_i}$ is decoded at decoder $\mathsf{N}$ if
$\mathcal{A}_i\in {{\mathbf{M}}}_\mathsf{N}$, furthermore, the sum of
the two codebooks is decoded at decoder $\mathsf{N}$ if
$\mathcal{A}_3\in {\mathbf{M}}_\mathsf{N}\backslash
\{\mathcal{A}_1,\mathcal{A}_2\}$, where $\mathcal{A}_3$ is an element
of $\mathbf{S}_\mathsf{L}$. For example, let us choose
$\mathcal{A}_i=\{\{i\}\}, i\in \{1,2,3\}$.  In this case the first
codebook is decoded whenever description 1 is received, the second
codebook is decoded if description 2 is received, and the sum is
decoded whenever description 3 is received. This corresponds to the
coding schemes we presented for example \ref{Ex:Vecsource}, where
$V_{\mathcal{A}_1}=X+N_\delta$ and $V_{\mathcal{A}_2}=Z+N'_\delta$. 
The following theorem describes the achievable RD region using this scheme.

\begin{Definition}
 For any three distinct families  $\mathcal{A}_i\in {{\mathbf{S}}}_\mathsf{L},i =1,2,3$, and for a joint distribution $P$ on random variables  $U_{\mathcal{M}},\mathcal{M}\in \mathbf{S}_\mathsf{L}$, $V_{ \mathcal{A}_j}, j\in\{1,2\}$, and $X$, where the underlying alphabet for all auxiliary random variables is the field $\mathbb{F}_q$, and a set of reconstruction functions $g_{\mathcal{L}}=\{g_\mathsf{N}:\mathsf{U}_{\mathsf{N}}\to\mathsf{X}, \mathsf{N}\in\mathcal{L}\}$, the set $\mathcal{RD}_{2}(P, g_\mathcal{L}) $ is defined as the set of RD vectors satisfying the following bounds for some non-negative real numbers  $(\rho_{\mathcal{M},i},r_{o,\mathcal{M}})_{i\in \widetilde{\mathcal{M}},\mathcal{M}\in {\mathbf{S}_\mathsf{L}} }$ and $\rho_{ o,\mathcal{A}_j,i} , r'_{o, \mathcal{A}_j}, i\in \widetilde{\mathcal{A}}_i, j\in\{1,2,3\}$ and $r_{i}$: 
\begin{align}
& H(U_{\mathbf{M}}V_{\mathbf{E}}|X)\geq   \sum_{\mathcal{M} \in \mathbf{M}}{(\log q  \!  -  r_{o,\mathcal{M}})}+\sum_{\mathcal{E} \in \mathbf{E}}{(\log q  \!  -  r'_{o,\mathcal{E}})}, \forall \mathbf{M}\subset \mathbf{S}_\mathsf{L}, \mathbf{E}\subset \mathbf{A}\label{sec1cov1}\\
& H(U_\mathbf{M} , W_{\mathcal{A}_3, \alpha, \beta}|X)\geq   \sum_{\mathcal{M} \in \mathbf{M}}{(\log q  \!  -  r_{o,\mathcal{M}})}+\log{q}-r'_{o,\mathcal{A}_3}, \forall \mathbf{M}\subset \mathbf{S}_\mathsf{L}, \forall \alpha,\beta\in \mathbb{F}_q \backslash \{0\}
\label{sec1cov2}\\
& H([U,V,W]_{\overline{{\mathbf{M}}}_\mathsf{N}}|[U,V,W]_{\widehat{\mathbf{M}}_\mathsf{N}\cup \overline{\mathbf{L}}})\leq \!\!\!\!\!\!\!\!\!\sum_{\mathcal{M}\in {{\mathbf{M}}}_\mathsf{N}\backslash\widetilde{\mathbf{M}}_\mathsf{N}\cup {\mathbf{L}}}\!\!\!\!\!\!\! (\log q +\!\! \sum_{j\in \widetilde{\mathcal{M}}} \rho_{\mathcal{M},j}-r_{o,\mathcal{M}})+\!\!\!\!\!\!\!\!\!\!\!\!\sum_{\substack{\mathcal{M}\in{\mathbf{M}}_\mathsf{N}\backslash\widetilde{\mathbf{M}}_\mathsf{N}\cup \overline{\mathbf{L}}\\\bigcap\{\mathcal{A}_i|i\in[1,3]\} }}\!\!\!\!\!\!\!\!\!\!(\log{q}+\sum_{j\in \widetilde{\mathcal{M}}}\rho_{o,\mathcal{M},j}-r'_{o,\mathcal{M}}), \quad \forall \overline{\mathbf{L}}\subset {\overline{\mathbf{M}}}_\mathsf{N}
\label{sec1pack1}
\\&R_i=\sum_{\mathcal{M}} \rho_{\mathcal{M},i}, \qquad D_{\mathsf{N}}=E\big\{d_{\mathsf{N}}(h_{\mathsf{N}}(U_{\mathsf{N}},X))\big\}.\label{sec1RD}
\end{align}
where (a) $\mathbf{A}\triangleq \{\mathcal{A}_1,\mathcal{A}_2\}$, (b) $\overline{{\mathbf{M}}}_\mathsf{N}\triangleq({\mathbf{M}}_\mathsf{N}, \{ \mathcal{A}_j, j\in \{1,2\}|\mathcal{A}_j\in {\mathbf{M}}_\mathsf{N} \},\{\{\mathcal{A}_3, 1,1\}| \mathcal{A}_3\in {\mathbf{M}}_\mathsf{N}\})$, (c) $\widehat{\mathbf{M}}_\mathsf{N}\triangleq \bigcup_{\mathsf{N}'\subsetneq \mathsf{N}}{\overline{\mathbf{M}}}_{\mathsf{N}'}$, (d) $r'_{o,\mathcal{A}_3}\triangleq r'_{o,\mathcal{A}_1}+r'_{o,\mathcal{A}_2}-r_{i}
$, (e) $r_{o,\mathcal{M}}\leq \log{q}$, and (f) $W_{\mathcal{A}_3, \alpha, \beta}\triangleq\alpha V_{ \mathcal{A}_1}+\beta V_{ \mathcal{A}_2}$\footnote{We have used the script $\mathcal{A}$ to denote subscripts of random variables throughout the paper. However, the collection $\{\mathcal{A}_3, \alpha, \beta\}$ is used as the subscript for $W$ since the random variable is defined using $\alpha$ and $\beta$.}.
\end{Definition}

\begin{Theorem} \label{linoneadd}
The RD vector $(R_i,D_{\mathsf{N}})_{i\in \mathsf{L},\mathsf{N}\in \mathcal{L}}$ is achievable for the $l-$descriptions problem, if there exists a distribution $P$ and reconstruction functions $g_\mathcal{L}$ such that $(R_i,D_{\mathsf{N}})_{i\in \mathsf{L},N\in \mathcal{L}}\in\mathcal{RD}_{2}(P, g_\mathcal{L})$. 
\end{Theorem}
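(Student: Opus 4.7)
The plan is to extend the Stage~1 scheme of Theorem~\ref{theorem12} by adding two nested coset codebooks that share a common inner code, so that a bivariate summation of the two new codewords becomes simultaneously available as a third codebook. The structured covering and packing results of Lemmas~\ref{thm: same_code} and~\ref{thm: same_codepack} are exactly the tools I would plug in on top of the independent-codebook analysis already carried out for Theorem~\ref{theorem12}.

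For the codebook construction, I would first generate, for every $\mathcal{M}\in\mathbf{S}_\mathsf{L}$, an independent coset code $C_\mathcal{M}$ of rate $r_{o,\mathcal{M}}$ exactly as in Stage~1. In addition, I would construct two pairs of nested coset codes $(C_i,C_{\mathcal{A}_1})$ and $(C_i,C_{\mathcal{A}_2})$ sharing an inner code $C_i$ of rate $r_i$, so that every linear combination $W_{\mathcal{A}_3,\alpha,\beta}=\alpha V_{\mathcal{A}_1}\oplus_q\beta V_{\mathcal{A}_2}$ lies in a coset code of rate $r'_{o,\mathcal{A}_1}+r'_{o,\mathcal{A}_2}-r_i=r'_{o,\mathcal{A}_3}$. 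Each codebook is then binned independently and uniformly, at rate $\rho_{\mathcal{M},i}$ for the SSC part and $\rho_{o,\mathcal{A}_j,i}$ for the three new codebooks (including the sum).

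At the encoder I would search for a jointly typical tuple $(u^n_{\mathbf{M}},v^n_{\mathcal{A}_1},v^n_{\mathcal{A}_2})$ with $X^n$. The bound~(\ref{sec1cov1}) is the standard mutual covering bound and is handled exactly as in Theorem~\ref{theorem12}, since the SSC codebooks and the pair $(V_{\mathcal{A}_1},V_{\mathcal{A}_2})$ are treated as an independent block. The additional bounds~(\ref{sec1cov2}), one for each nonzero pair $(\alpha,\beta)$, are precisely condition~(\ref{neweq}) of Lemma~\ref{thm: same_code} lifted into the presence of auxiliary SSC codewords; I would establish them by conditioning on the already-chosen SSC tuple and invoking the bivariate covering argument of that lemma for the conditional distribution. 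The packing bound~(\ref{sec1pack1}) at decoder $\mathsf{N}$ would then be obtained by splitting $\overline{\mathbf{M}}_\mathsf{N}$ into its SSC part and its structured part: within the SSC part one reuses the mutual packing lemma for independent nested coset codes from Theorem~\ref{theorem12}, while within the structured part one invokes Lemma~\ref{thm: same_codepack} to recover either $V_{\mathcal{A}_j}$ themselves or only their sum $W_{\mathcal{A}_3}$, according to whether the corresponding Sperner family belongs to $\mathbf{M}_\mathsf{N}$. The distortion guarantees then follow in the usual way from strong typicality and the reconstruction functions $g_\mathcal{L}$.

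The main obstacle will be keeping the joint covering/packing bookkeeping consistent. The two pairs of nested coset codes share the inner code, so their codewords are \emph{not} pairwise independent, and the second-moment arguments used for independent random codebooks have to be replaced by the arguments of Lemmas~\ref{thm: same_code}--\ref{thm: same_codepack} on every block that touches the $\mathcal{A}_i$ family. In particular, I must absorb the union bound over the $(q-1)^2$ nontrivial $(\alpha,\beta)$ pairs in~(\ref{sec1cov2}) without hurting the exponent, and the covering and packing inequalities for $W_{\mathcal{A}_3}$ must be simultaneously compatible with those for $V_{\mathcal{A}_1}$ and $V_{\mathcal{A}_2}$ at every decoder, including decoders that recover only the sum, only one of the summands, or all three. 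Once that bookkeeping is in place, the remainder of the argument is a routine blend of the Stage~1 proof with the two new structured lemmas.
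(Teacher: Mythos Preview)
Your proposal is correct and follows essentially the same approach as the paper: augment the Stage~1 coset-code construction with two nested coset codes sharing a common inner code, define the sum codebook $C_{o,\mathcal{A}_3}=C_{o,\mathcal{A}_1}+C_{o,\mathcal{A}_2}$, and invoke generalizations of Lemmas~\ref{thm: same_code} and~\ref{thm: same_codepack} for the covering and packing analysis. One minor caution: your suggestion to handle~(\ref{sec1cov2}) by ``conditioning on the already-chosen SSC tuple'' and then running the bivariate covering lemma is not quite how the joint analysis works, since~(\ref{sec1cov1}) already couples the $U_\mathbf{M}$ and $V_\mathbf{E}$ variables and the encoder must find all codewords simultaneously; the paper treats~(\ref{sec1cov1})--(\ref{sec1cov2}) as a single generalized mutual covering lemma (whose proof it omits), and a full second-moment argument would need to track all codebooks at once rather than sequentially.
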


 Before providing the proof we explain the bounds in the new RD region. (\ref{sec1cov1}) and (\ref{sec1pack1})  are the mutual covering and packing bounds which are also present in the Theorem \ref{thm:SSC}, respectively. (\ref{sec1cov2}) is a generalization of the additional covering 
bound derived in the Lemma in \ref{thm: same_code}. 
Note that the common component among decoders $\mathsf{N}\in \mathcal{A}_1$ is the pair $(U_{\mathcal{A}_1},V_{\mathcal{A}_1})$, and similarly for $\mathcal{A}_2$. The common component among decoders $\mathsf{N}\in \mathcal{A}_3$ is the pair $(U_{\mathcal{A}_1},W_{\mathcal{A}_3,1,1})$, and observe that $W_{\mathcal{A}_3,1,1}=V_{\mathcal{A}_1}+V_{\mathcal{A}_2}$. 
\begin{proof}
Given a joint distribution $P_{\mathbf{U},\mathbf{V},X}$, and codebook and binning rates satisfying the bounds in the theorem we prove achievability of the RD vector in (\ref{sec1RD}).

\noindent\textbf{Codebook Generation:} Fix blocklength $n$. For every $\mathcal{M}\in \mathbf{S}_\mathsf{L}$, independently generate a linear code $C_{\mathcal{M}}$ with size $2^{nr_{o,\mathcal{M}}}$. Also generate two nested coset codes $C_{ \mathcal{A}_j}= (C_{i}, C_{o, \mathcal{A}_j}), j\in \{1,2\}$ where the inner code has rate $r_{i}$ and the outer codes have rates $r'_{o,\mathcal{A}_j}$. Define the set of codewords $C_{o, \mathcal{A}_3}\triangleq   C_{o, \mathcal{A}_1}+C_{o,\mathcal{A}_2}$. The size of $C_{o, \mathcal{A}_3}$ is $2^{nr'_{o, \mathcal{A}_3}}$, where $r'_{o,\mathcal{A}_3}=r'_{o,\mathcal{A}_1}+r'_{o,\mathcal{A}_2}-r_{i}
$. For the $i$th description bin the codebook $C_{\mathcal{M}}$ randomly and uniformly with rate $2^{n\rho_{\mathcal{M},i}}$ . 

\noindent\textbf{Encoding:} Upon receiving the source vector $X^n$, the encoder finds a jointly-typical set of codewords $c_{\mathcal{M}}$. Each description carries the \textit{bin-indices} of all of the corresponding codewords. The encoder declares an error if there is no jointly typical set of codewords available. 

\noindent\textbf{Decoding:} Having received the bin-indices from descriptions $i\in \mathsf{N}$, decoder $\mathsf{N}$ tries to find a set of jointly typical codewords $c_\mathcal{M}, \mathcal{M}\in \overline{\mathbf{M}}_\mathsf{N}$.  If the set of codewords is not unique, the decoder declares error.

In order for the encoder to find a set of jointly typical codewords, the mutual covering bounds (\ref{sec1cov1}) and (\ref{sec1cov2}) should hold. This is a generalization of the result in lemma \ref{thm: same_code} and we omit the proof for brevity. The bounds in (\ref{sec1pack1}) are the mutual packing bounds at each decoder.

\end{proof}
\begin{Remark} 
 Here we have considered the general case where $\mathcal{A}_i$ are chosen arbitrarily from $\mathbf{S}_\mathsf{L}$. It turns out that only certain choices of $\mathcal{A}_i$ would give non-redundant codebooks and thus provide improvements over the SSC scheme. One can show that the codebooks are redundant if $\exists N\in \mathcal{A}_1\cup\mathcal{A}_2, N'\in \mathcal{A}_3\text{ such that } N\subset N'$. For example take $\mathcal{A}_1=\{\{1\},\{3\}\}$, $\mathcal{A}_2=\{\{2\}\}$ and $\mathcal{A}_3=\{\{2,3\}\}$. 
   \end{Remark}
 \subsection{Stage 3: Reconstruction of a summation of arbitrary number of codebooks} 
In this section we reconstruct a multi-variate summation of an arbitrary number $m$ of random variables at one decoder where $m\in \mathsf{L}$ and the summation is with respect to a finite field $\mathbb{F}_q$. Following the steps in the previous section, we add $m$ new codebooks to the original SSC scheme. 
Let $\mathsf{M}\triangleq[1,m]$. The underlying random variables for these codebooks are denoted by $V_{\mathcal{A}_k}, k\in \mathsf{M}$. The random variable $V_{ \mathcal{A}_k}$ is decoded at decoder $\mathsf{N}$ if $\mathcal{A}_k\in 
{\mathbf{M}}_\mathsf{N}$. We take the families $\mathcal{A}_k, k\in\mathsf{M}$ to be distinct. The random variable $ \sum_{i\in \mathsf{M}}  V_{ \mathcal{A}_k}$ is decoded at decoder $\mathsf{N}$ if $\mathcal{A}_{m+1}\in {\mathbf{M}}_\mathsf{N}$, where $\mathcal{A}_{m+1}$ is an element of $\mathbf{S}_\mathsf{L}$. The following theorem describes the achievable RD region:

\begin{Definition}
For any $m\in\mathsf{L}$, and $m+1$ distinct families  $\mathcal{A}_i\in {{\mathbf{S}}}_\mathsf{L},i \in [1,m+1]$, and for a joint distribution $P$ on random variables   $U_{\mathcal{M}},\mathcal{M}\in \mathbf{S}_\mathsf{L}$, $V_{\mathcal{A}_k}, k\in \mathsf{M}$ and $X$, where the underlying alphabet for the auxiliary random variables is the field $\mathbb{F}_q$, and a set of reconstruction functions $g_{\mathcal{L}}=\{g_\mathsf{N}:\mathsf{U}_{\mathsf{N}}\to\mathsf{X}, \mathsf{N}\in\mathcal{L}\}$, the set $\mathcal{RD}_{3}(P, g_\mathcal{L}) $ is defined as the set of RD vectors satisfying the following bounds for some non-negative real numbers   $(\rho_{\mathcal{M},i},r_{o,\mathcal{M}})_{i\in \widetilde{\mathcal{M}},\mathcal{M}\in {\mathbf{S}_\mathsf{L}} }$ and $\rho_{o,  \mathcal{A}_k,i}, \rho_{o,\mathcal{A}_{m+1},i} , r'_{o, \mathcal{A}_k}, i\in \widetilde{A}_k, k\in [1,m+1]$ and $r_{i,\alpha_J}, J\subset \mathsf{M}$: 

\begin{align}
& H(U_{\mathbf{M}}V_{\mathbf{E}}|X)\geq   \sum_{\mathcal{M} \in \mathbf{M}}{(\log q  \!  -  r_{o,\mathcal{M}})}+\sum_{\mathcal{E} \in \mathbf{E}}{(\log q  \!  -  r'_{o,\mathcal{E}})}+, \forall \mathbf{M}\subset \mathbf{S}_\mathsf{L}, \mathbf{E}\subset \mathbf{A}, \label{sec2cov1}
\\& H(U_{\mathbf{M}}W_{\mathbf{F}}|X)\geq   \sum_{\mathcal{M} \in \mathbf{M}}{(\log q  \!  -  r_{o,\mathcal{M}})}+\sum_{\mathcal{F} \in \mathbf{F}}{(\log q  \!  -  r'_{o,\mathcal{F}})}, \forall \mathbf{M}\subset \mathbf{S}_\mathsf{L}, \mathbf{F}\subset \mathbf{B},\label{sec2cov2}\\&
 H([U,V,W]_{\overline{{\mathbf{M}}}_\mathsf{N}}|[U,V,W]_{\widehat{\mathbf{M}}_\mathsf{N}\cup \overline{\mathbf{L}}})\leq \!\!\!\!\!\!\!\!\!\sum_{\mathcal{M}\in {{\mathbf{M}}}_\mathsf{N}\backslash\widetilde{\mathbf{M}}_\mathsf{N}\cup {\mathbf{L}}}\!\!\!\!\!\!\! (\log q +\!\! \sum_{j\in \widetilde{\mathcal{M}}} \rho_{\mathcal{M},j}-r_{o,\mathcal{M}})+\!\!\!\!\!\!\!\!\!\!\!\!\sum_{\substack{\mathcal{M}\in{\mathbf{M}}_\mathsf{N}\backslash\widetilde{\mathbf{M}}_\mathsf{N}\cup \overline{\mathbf{L}}\\\bigcap\{\mathcal{A}_i|i\in[1,m+1]\} }}\!\!\!\!\!\!\!\!\!\!(\log{q}+\sum_{j\in \widetilde{\mathcal{M}}}\rho_{o,\mathcal{M},j}-r'_{o,\mathcal{M}}), \quad \forall \overline{\mathbf{L}}\subset {\overline{\mathbf{M}}}_\mathsf{N}
\label{sec2pack1}
\\&R_i=\sum_{\mathcal{M}} \rho_{\mathcal{M},i}, \qquad D_{\mathsf{N}}=E\big\{d_{\mathsf{N}}(h_{\mathsf{N}}(U_{\mathsf{N}},X))\big\}\label{sec2RD}.
\end{align}
where (a) $\mathbf{A}= \{ \mathcal{A}_k,k\in \mathsf{M}\}$, (b) $\mathbf{B}=\{(\mathcal{A}_{m+1},\alpha_\mathsf{M})|{\alpha_\mathsf{M}\in \mathbb{F}^m_q}\}$, (c) $r'_{o,\mathcal{A}_{m+1},\alpha_{\mathsf{M}}}=\sum_{k\in \mathsf{J}} r'_{o, \mathcal{A}_k}- r_{i,\alpha_\mathsf{J}}, \mathsf{J}=\{k|\alpha_k\neq 0\}$, (d)  $\sum_{\mathsf{J}':\mathsf{J}\subset {\mathsf{J}'}}{r_{i,\alpha_{\mathsf{J}'}}}\leq r_{i,\alpha_\mathsf{J}}, \forall \mathsf{J}\subset \mathsf{M}$, (e) $\overline{\mathbf{M}}_\mathsf{N}=({\mathbf{M}}_\mathsf{N},  \{ \mathcal{A}_k|\mathcal{A}_k\in {\mathbf{M}}_\mathsf{N} \}, \{(\mathcal{A}_{m+1},\alpha_{\mathsf{M}})|\mathcal{A}_{m+1}\in {\mathbf{M}}_\mathsf{N}, \alpha_i=1,i\in \mathsf{M}\})$, (f)  $\widehat{\mathbf{M}}_\mathsf{N}= \bigcup_{\mathsf{N}'\subsetneq \mathsf{N}}{\overline{\mathbf{M}}}_{\mathsf{N}'}$, (g) $r_{o,\mathcal{M}}\leq \log{q}$, and (h) $W_{\mathcal{A}_{m+1},\alpha_{\mathsf{M}}}= \sum_{i\in\mathsf{M}} \alpha_i V_{ \mathcal{A}_k}$. 
\end{Definition}

\begin{Theorem} \label{linend}
The RD vector $(R_i,D_{\mathsf{N}})_{i\in \mathsf{L},\mathsf{N}\in \mathcal{L}}$ is achievable for the $l-$descriptions problem, if there exists a distribution $P$ and reconstruction functions $g_\mathcal{L}$ such that $(R_i,D_{\mathsf{N}})_{i\in \mathsf{L},N\in \mathcal{L}}\in\mathcal{RD}_{3}(P, g_\mathcal{L})$. 
\end{Theorem}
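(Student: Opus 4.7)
The plan is to extend the achievability argument of Theorem \ref{linoneadd} from a bivariate sum of structured codebooks to a multivariate sum of $m$ structured codebooks, while leaving the SSC layer intact. First I would reuse, verbatim, the codebook generation, random binning, encoding, and decoding template used for the unstructured layer in the proof of Theorem \ref{theorem12}; the novelty is confined to the $m$ coset-coded auxiliaries $V_{\mathcal{A}_k}$, $k\in\mathsf{M}$, and to the $\mathbb{F}_q$-linear combinations $W_{\mathcal{A}_{m+1},\alpha_{\mathsf{M}}}$ that are decoded whenever $\mathcal{A}_{m+1}\in\mathbf{M}_{\mathsf{N}}$.

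For the structured layer, I would construct a nested family of random coset codes $C_{o,\mathcal{A}_k}$ of rate $r'_{o,\mathcal{A}_k}$ so that, for every nonempty $\mathsf{J}\subset\mathsf{M}$ and every nonzero coefficient vector $\alpha_{\mathsf{J}}$, the set $\sum_{k\in\mathsf{J}}\alpha_k C_{o,\mathcal{A}_k}$ is itself a coset code of rate $\sum_{k\in\mathsf{J}}r'_{o,\mathcal{A}_k}-r_{i,\alpha_{\mathsf{J}}}$, matching the definition of $r'_{o,\mathcal{A}_{m+1},\alpha_{\mathsf{M}}}$ in (c). The consistency condition (d), namely $\sum_{\mathsf{J}':\mathsf{J}\subset\mathsf{J}'}r_{i,\alpha_{\mathsf{J}'}}\leq r_{i,\alpha_{\mathsf{J}}}$, is exactly what allows such a family to be realized by stacking generator matrices with appropriately shared row blocks and by drawing the dithers uniformly and independently. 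This generalizes the two-codebook construction of Figure \ref{codebookssum} that underlies Lemma \ref{thm: same_code}.

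Encoding and decoding then proceed as in Theorem \ref{linoneadd}. The encoder searches for a jointly typical tuple of codewords across all SSC codebooks, all $V_{\mathcal{A}_k}$'s, and all induced $W_{\mathcal{A}_{m+1},\alpha_{\mathsf{M}}}$'s, and transmits the bin indices on the appropriate descriptions. Existence of such a tuple is guaranteed by the covering bounds: (\ref{sec2cov1}) is the standard mutual covering for the SSC layer augmented by the independently generated outer codes, while (\ref{sec2cov2}) imposes on every induced combination codebook, of rate $r'_{o,\mathcal{A}_{m+1},\alpha_{\mathsf{M}}}$, the single-codebook covering condition against the source. Each decoder $\mathsf{N}$ looks for the unique jointly typical tuple in $\overline{\mathbf{M}}_{\mathsf{N}}$ lying in the received bins; this is ensured with vanishing error by the mutual packing bounds (\ref{sec2pack1}) applied over the extended codebook collection. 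Distortions are then bounded by the expected distortion under the induced joint typical distribution, giving (\ref{sec2RD}).

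The main obstacle is the generalization of Lemma \ref{thm: same_code} to $m$ codebooks. In the bivariate case only two nonzero coefficients $(\alpha,\beta)$ had to be controlled, whereas here one must simultaneously guarantee covering for all $2^{m}-1$ nonempty $\mathsf{J}$ and all $(q-1)^{|\mathsf{J}|}$ nonzero coefficient tuples. I would handle this by a second-moment calculation in which pairs of candidate codeword tuples are partitioned according to the minimal subset $\mathsf{J}$ on which their inner-code cosets differ; the cross-moment of the corresponding typicality indicators is then controlled by $r_{i,\alpha_{\mathsf{J}}}$, and the consistency condition (d) is precisely what causes the resulting inclusion-exclusion sum to telescope, yielding exactly the bounds (\ref{sec2cov2}). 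The remaining ingredients -- random binning, mutual packing at each decoder, and averaged distortion analysis -- are routine extensions of their Stage 2 counterparts and introduce no new difficulties.
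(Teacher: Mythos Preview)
Your proposal is correct and follows essentially the same approach as the paper: the paper likewise constructs an ensemble of nested coset codes realizable precisely under condition (d), reuses the SSC encoding/decoding template, and reduces achievability to a multivariate covering lemma (Lemma \ref{thm: same_codemany}) whose proof is stated to follow the second-moment method of Lemma \ref{thm: same_code}, with packing handled as in Stage~2. If anything, your outline of the second-moment calculation --- partitioning pairs of candidate tuples by the subset $\mathsf{J}$ on which their inner-code cosets differ, and observing that (d) makes the resulting terms telescope --- is more explicit than what the paper supplies, which merely gestures at the point-to-point rate constraint on each induced combination codebook.
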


Toward proving the theorem we need the following definition.

\begin{Definition}
 A set of $m$ coset codes $C^n_{\mathcal{A}_k}, k\in \mathsf{M}$ is called an ensemble of nested coset codes with parameter $(r_J)_{J\subset\mathsf{M}}$ if the size of the intersection $C_{\mathcal{A}_\mathsf{J}}\triangleq \bigcap_{k\in \mathsf{J}} C_{\mathcal{A}_k}$ is equal to $2^{nr_\mathsf{J}}$ for all $\mathsf{J}\subset \mathsf{M}$.
 \end{Definition}

It is straightforward to show that one can always generate an ensemble of nested coset codes $C_{ \mathcal{A}_k}, k\in \mathsf{M}$ with parameter $(r_{i,\alpha_J})_{J\subset \mathsf{M}}$ as long as $\sum_{J':J\subset {J'}}{r_{i,\alpha_{J'}}}\leq r_{i,\alpha_J}, \forall J\subset \mathsf{M}$. It is enough to choose the rows of the generator matrices of $C_{ \mathcal{A}_k}, k\in J$ such that they have ${nr_{i,\alpha_J}}$ common rows, similar to the case of Figure \ref{codebookssum}.

\begin{proof}
We provide an outline of the proof. The codebook generation for codebooks $C_{\mathcal{M}}, \mathcal{M}\in \mathbf{S}_\mathsf{L}$ is similar to the previous scheme. For random variables $V_{ \mathcal{A}_k}, k\in \mathsf{M}$ we construct an ensemble of nested coset codes $C_{ \mathcal{A}_k}, k\in \mathsf{M}$ with parameter $(r_{i,\alpha_J})_{J\subset \mathsf{M}}$. The encoder chooses a set of codewords from all the codebooks that is jointly typical with the source sequence. The following is a generalized covering lemma which shows that if (\ref{sec2cov1}) and (\ref{sec2cov2}) is satisfied such a set of codewords exists. 

\begin{Definition}
\label{P_covmut}
   Let $\mathbb{F}_q$ be a field and define $\mathcal{M}\triangleq \big\{\{1\},\{2\},\ldots,\{m\}\big\}$. Consider $m+1$ random variables $X$, $V_{\{i\}} ,i \in M$, where $X$ is defined on an arbitrary finite set $\mathsf{X}$ and $V_{\{i\}}$ are defined on $\mathbb{F}_q$. Fix a PMF $P_{X,V_{\mathcal{M}}}$ on $\mathsf{X}\times\mathbb{F}^m_q$. A sequence of m-tuples of codebooks $(C_{\{i\}}^n)_{\{i\}\in \mathcal{M}}$ is called $P_{XV_{\mathcal{M}}}$-covering if:
\begin{align*} 
 &\forall \epsilon>0, P(\{x^n|\exists v_{\mathcal{M}}^n\in A_{\epsilon}^n(V_{\mathcal{M}}|x^n)\cap \Pi_{\{i\}\in \mathcal{M}}C_{\{i\}}\})\to 1 \text{ as } n\to \infty.
\end{align*}
\end{Definition}

\begin{Lemma}[Covering Lemma]\label{thm: same_codemany}
For any $P_{X,V_{\mathcal{M}}}$ on $\mathsf{X}\times\mathbb{F}^\mathcal{M}_q$ and rates $r_{o,\{j\}}, \{j\}\in \mathcal{M}$ satisfying (\ref{sec2newe})-(\ref{sec2neweq}), there exists a sequence of ensemble of nested coset codes $\mathcal{C}^n_\mathcal{M}$  with parameter  $(r_{i,\mathsf{J}})_{\mathsf{J}\subset \mathsf{M}}$ which are $P_{X,V_\mathcal{M}}$-covering.

\begin{align}
& H(V_{\mathcal{J}}|X)\geq   \sum_{\{j\}\in \mathcal{J}}{(\log q  \!  -  r_{o,\{j\}})}, \forall\mathcal{J}\subset \mathcal{M}\label{sec2newe}\\
& H(W_{\mathcal{K}}|X)\geq   \sum_{\alpha_\mathsf{M} \in \mathcal{K}}{(\log q  \!  -  r_{o,\alpha_\mathsf{M}})}, \forall\mathcal{J}\subset \mathcal{M}, \mathcal{K}\subset \mathcal{N} \label{sec2newe2}\\
& \sum_{\mathsf{J}':\mathsf{J}\subset {\mathsf{J}'}}{r_{i,\mathsf{J}'}}\leq r_{i,\mathsf{J}}, \forall \mathsf{J}\subset \mathsf{M},\label{sec2neweq}
\end{align}
where, (a) $\mathcal{N}\triangleq \{\alpha_\mathsf{M}\in \mathbb{F}^m_q\}$, (b) $ W_{\alpha_\mathsf{M}}\triangleq \sum_{j\in \mathsf{M}} \alpha_j V_{\{j\}}$ and (c) $r_{o,\alpha_\mathsf{M}}\triangleq \sum_{j\in \mathsf{J}}r_{o,\{j\}}- r_{i,\mathsf{J}}, \mathsf{J}=\{k|\alpha_k\neq 0\}$.

\end{Lemma}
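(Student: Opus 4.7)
The plan is to generalize the second-moment (Chebyshev) argument used in Lemma \ref{thm: same_code} to an arbitrary number $m$ of nested coset codes. First I would fix a typical source sequence $x^n\in A_\epsilon^n(X)$ and, for the random ensemble $\mathcal{C}_\mathcal{M}^n$ constructed with parameters $(r_{i,\mathsf{J}})_{\mathsf{J}\subset\mathsf{M}}$, define the random count
\begin{equation*}
N(x^n)\triangleq \sum_{(c_{\{1\}}^n,\ldots,c_{\{m\}}^n)\in \prod_{j\in\mathsf{M}}C_{\{j\}}} \mathbbm{1}\{(c_{\{1\}}^n,\ldots,c_{\{m\}}^n)\in A_\epsilon^n(V_\mathcal{M}\mid x^n)\}.
\end{equation*}
The goal is to show $\Pr(N(x^n)=0)\to 0$; combined with averaging over $x^n$ this gives the $P_{XV_\mathcal{M}}$-covering property.

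Next I would compute $\mathbb{E}[N(x^n)]$. By the uniform-dither construction each individual codeword is uniformly distributed on $\mathbb{F}_q^n$, and the tuple $(c_{\{1\}}^n,\ldots,c_{\{m\}}^n)$ is jointly uniform on $\mathbb{F}_q^{mn}$ (a standard consequence of the linear-code construction when no linear dependency among the rows is forced). Therefore $\mathbb{E}[N(x^n)]\doteq 2^{n(\sum_j r_{o,\{j\}}-(m\log q- H(V_\mathcal{M}|X)))}$, which grows exponentially precisely when (\ref{sec2newe}) holds strictly for $\mathcal{J}=\mathcal{M}$. The main analytic work is in bounding the second moment. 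For two tuples $(c_{\{j\}}^n)$ and $({c'_{\{j\}}}^n)$, consider the difference vector $(d_j^n)_{j\in\mathsf{M}}$ with $d_j^n=c_{\{j\}}^n-{c'_{\{j\}}}^n$. Because the codes are nested coset codes sharing generator rows according to $(r_{i,\mathsf{J}})$, the joint law of $(c^n,{c'}^n)$ factorises into cases indexed by the linear subspace in which $(d_1^n,\ldots,d_m^n)$ lives. Equivalently, for each non-zero $\alpha_\mathsf{M}\in\mathbb{F}_q^m$ the linear combination $\sum_j\alpha_j d_j^n$ is forced to lie in a sub-code of rate determined by $r_{i,\mathsf{J}}$ with $\mathsf{J}=\{k:\alpha_k\neq 0\}$.

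Using this, the second moment decomposes as a sum over collections $\mathcal{K}\subset\mathcal{N}$ of linear combinations that coincide in $c$ and $c'$, each contributing a term proportional to $2^{n(2\sum_j r_{o,\{j\}}-\sum_{\alpha_\mathsf{M}\in\mathcal{K}}r_{o,\alpha_\mathsf{M}})}\cdot 2^{-n(2m\log q - H(V_\mathcal{M}|X)-H(W_\mathcal{K}|X))}$, where $W_\mathcal{K}$ is the vector of linear combinations indexed by $\mathcal{K}$. The covering conditions (\ref{sec2newe})--(\ref{sec2neweq}) are exactly what is required to ensure each of these cross terms is dominated by $(\mathbb{E}[N(x^n)])^2$. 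A Chebyshev argument then yields $\Pr(N(x^n)=0)\leq \mathrm{Var}(N(x^n))/(\mathbb{E}[N(x^n)])^2 \to 0$. Averaging over $x^n\in A_\epsilon^n(X)$ and using that $\Pr(X^n\in A_\epsilon^n(X))\to 1$ closes the argument.

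The hard step, as anticipated, is the variance bound: one must enumerate all collections $\mathcal{K}\subset\mathcal{N}$ of non-zero linear-combination indices, compute the correct exponent for pairs of codewords agreeing in exactly those combinations, and check that every such exponent is controlled by (\ref{sec2newe2}). This is the multi-variate analogue of the single extra constraint (\ref{neweq}) in Lemma \ref{thm: same_code}, and the book-keeping is the only genuinely new ingredient beyond the standard mutual covering lemma. Once this combinatorial/algebraic step is carried out, the remainder of the argument is routine.
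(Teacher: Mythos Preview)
Your proposal is correct and follows essentially the same approach as the paper. The paper itself only provides a brief intuition for this lemma, stating that the proof ``follows the same steps as in lemma \ref{thm: same_code}'' (i.e., the Chebyshev/second-moment argument on the count of jointly typical codeword tuples), and your outline is a faithful expansion of exactly that: define the random count $N(x^n)$, compute its mean, decompose the variance according to the linear dependencies among the index differences, and verify that conditions (\ref{sec2newe})--(\ref{sec2neweq}) force each cross term to be $o((\mathbb{E}[N(x^n)])^2)$.
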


\begin{proof}

 The proof of the lemma follows the same steps as in lemma \ref{thm: same_code}. We provide the intuition behind the proof. Given that there is a set of codewords in the codebooks $\mathcal{C}_{\{j\}}, j\in \mathsf{M}$ which are jointly typical with the source sequence, for any linear combination $\mathcal{C}\triangleq \sum_{j\in \mathsf{M}}\alpha_j \mathcal{C}_{\{j\}}$ there is a codeword which is jointly typical with the random variables $X,V_{\mathcal{M}}, W_\mathcal{N}, \forall \mathcal{M}, \mathcal{N}$. From a PtP perspective, the rate of codebook $\mathcal{C}$ must satisfy (\ref{sec2newe}) and (\ref{sec2newe2}). This rate can be calculated by counting the number of rows in the generator matrix of $\mathcal{C}$ which is $nr_{o,\alpha_\mathsf{M}}$.
\end{proof}

The packing bounds at each encoder can be written in the same way as in the previous section and are given in (\ref{sec2pack1}). $\mathbf{B}$ is defined such that $W_\mathbf{B}$ is the set of all possible linear combinations of $V_{ \mathcal{A}_k}$'s. 

\end{proof}

\subsection{Stage 4: Reconstruction of an Arbitrary Number of Summations of Arbitrary Lengths}

In this section for completeness, we provide a coding scheme where we reconstruct multi-variate summations of random variables at an arbitrary number of decoders, and these summations each have arbitrary lengths. Of course, due to the large number of random variables the coding scheme becomes extremely complicated.  Let the number of the summations be $s$, and for each summation, let the length of the summation be denoted by $m_i\in L, i\in [1,s]$. Define the sets $\mathsf{S}\triangleq[1,s]$ and $\mathsf{M}_i\triangleq[1,m_i], i\in \mathsf{S}$. Following the steps in the previous sections, we add $m_i$ new codebooks for each summation. The underlying random variables for these codebooks are denoted by $V_{ \mathcal{A}_{k,i}}, k\in \mathsf{M}_i, i\in \mathsf{S}$. The random variable $V_{ \mathcal{A}_{k,i}}$ is decoded at decoder $\mathsf{N}$ if $\mathcal{A}_{k,i}\in {\mathbf{M}}_\mathsf{N}$. Fix the prime number $q_i$. The random variable  $\sum_{j\in \mathsf{M}_i}  V_{ \mathcal{A}_{k,i}}$ is decoded at decoder $\mathsf{N}$ if $\mathcal{A}_{m_i+1,i}\in {\mathbf{M}}_\mathsf{N}$, where the summation is carried out in the finite field $\mathbb{F}_{q_i}$. The following theorem describes the achievable RD region.

\begin{Definition}
 For a joint distribution $P$ on random variables    $U_{\mathcal{M}},\mathcal{M}\in \mathbf{S}_\mathsf{L}$, $V_{ \mathcal{A}_{k,i}}, k\in \mathsf{M}_i, i\in \mathsf{S}$ and $X$, where the underlying alphabet the auxiliary random variables is the field $\mathbb{F}_q$, and a set of reconstruction functions $g_{\mathcal{L}}=\{g_\mathsf{N}:\mathsf{U}_{\mathsf{N}}\to\mathsf{X}, \mathsf{N}\in\mathcal{L}\}$, the set $\mathcal{RD}_{linear}(P, g_\mathcal{L}) $ is defined as the set of RD vectors satisfying the following bounds for some non-negative real numbers   $(\rho_{\mathcal{M},i},r_{o,\mathcal{M}})_{i\in \widetilde{\mathcal{M}},\mathcal{M}\in {\mathbf{S}_\mathsf{L}} }$ and $\rho_{o, \mathcal{A}_{k,i},j_k},\rho_{o, \mathcal{A}_{m_i+1,i},j_k} , r_{o,\mathcal{A}_{k,i}}, j_k\in \widetilde{A}_{k,i}, k\in [1,m_i+1], i\in \mathsf{S}$ and $r_{i,\alpha_{J,i}}, J\subset \mathsf{M}_i, i\in \mathsf{S}$: 
\begin{align}
& H(U_{\mathbf{M}}, V_{\mathbf{E}}|X)\geq   \sum_{\mathcal{M} \in \mathbf{M}}{(\log q  \!  -  r_{o,\mathcal{M}})}+\sum_{\mathcal{E} \in \mathbf{E}}{(\log q  \!  -  r_{o,\mathcal{E}})}, \forall \mathbf{M}\subset \mathbf{S}_\mathsf{L}, \mathbf{E}\subset \mathbf{A}, 
\\& H(U_{\mathbf{M}}, W_{\mathbf{F}}|X)\geq   \sum_{\mathcal{M} \in \mathbf{M}}{(\log q  \!  -  r_{o,\mathcal{M}})}+\sum_{\mathcal{F} \in \mathbf{F}}{(\log q  \!  -  r_{o,\mathcal{F}})}, \forall \mathbf{M}\subset \mathbf{S}_\mathsf{L}, \mathbf{F}\subset \mathbf{B} ,
\\& H([U,V,W]_{\overline{{\mathbf{M}}}_\mathsf{N}}|[U,V,W]_{\widehat{\mathbf{M}}_\mathsf{N}\cup \overline{\mathbf{L}}})\leq \!\!\!\!\!\!\!\!\!\sum_{\mathcal{M}\in {{\mathbf{M}}}_\mathsf{N}\backslash\widetilde{\mathbf{M}}_\mathsf{N}\cup {\mathbf{L}}}\!\!\!\!\!\!\! (\log q +\!\! \sum_{j\in \widetilde{\mathcal{M}}} \rho_{\mathcal{M},j}-r_{o,\mathcal{M}})+\!\!\!\!\!\!\!\!\!\!\!\!\!\!\!\!\sum_{\substack{\mathcal{M}\in{\mathbf{M}}_\mathsf{N}\backslash\widetilde{\mathbf{M}}_\mathsf{N}\cup \overline{\mathbf{L}}\\\bigcap\{\mathcal{A}_{\{i\},s}|i\in[1,m_s+1], s\in\mathsf{S}\} }}\!\!\!\!\!\!\!\!\!\!\!\!\!\!(\log{q}+\sum_{j\in \widetilde{\mathcal{M}}}\rho_{o,\mathcal{M},j}-r'_{o,\mathcal{M}}), \quad \forall \overline{\mathbf{L}}\subset {\overline{\mathbf{M}}}_\mathsf{N}
\label{sec3pack1}
\\&R_i=\sum_{\mathcal{M}} \rho_{\mathcal{M},i} \label{sec3RD}, \quad D_{\mathsf{N}}=E\big\{d_{\mathsf{N}}(h_{\mathsf{N}}(U_{\mathsf{N}},X))\big\}.
\end{align}
where (a) $\mathbf{A}=\bigcup_{i\in \mathsf{S}} \{ \mathcal{A}_{k,i}|{k\in \mathsf{M}_i}\}$,  (b) $\mathbf{B}=\bigcup_{i\in\mathsf{S}}\{(\mathcal{A}_{m_i+1,i},\alpha_{\mathsf{M}_i,i})|{\alpha_{j,i}\in \mathbb{F}_{q_i}}\}$, (c) $r_{o,\mathcal{A}_{m_i+1},\alpha_{\mathsf{M}_i,i}}=\sum_{k\in \mathsf{J}} r_{o, \mathcal{A}_{k,i}}- r_{i,\alpha_{\mathsf{J}_i,i}}, \mathsf{J}_i=\{k|\alpha_{k,i}\neq 0\}, i\in \mathsf{S}$, (d) $\sum_{\mathsf{J}':\mathsf{J}\subset \mathsf{J'}}{r_{i,\alpha_{\mathsf{J}',i}}}\leq r_{i,\alpha_{\mathsf{J},i}}, \forall \mathsf{J}\subset \mathsf{M}_i, i\in \mathsf{S}$, (e) $\overline{\mathbf{M}}_\mathsf{N}=({\mathbf{M}}_\mathsf{N}, \{\mathcal{A}_{k,i}| k\in\mathsf{M}_i,i\in\mathsf{S},\mathcal{A}_{k,i}\in {\mathbf{M}}_\mathsf{N} \}, \{(\mathcal{A}_{m_i+1,i},\alpha_{\mathsf{M}_i,i})|\mathcal{A}_{m_i+1}\in {\mathbf{M}}_\mathsf{N},\alpha_{k,i}=1\})$, (f) $\widehat{\mathbf{M}}_\mathsf{N}= \bigcup_{\mathsf{N}'\subsetneq \mathsf{N}}\overline{\mathbf{M}}_{\mathsf{N}'}$, (g) $r_{o,\mathcal{M}}\leq \log{q}$ and (h) $W_{\mathcal{A}_{m+1,i},\alpha_{\mathsf{M}_i,i}}= \sum_{j=1}^{m_i} \alpha_{j,i} V_{\mathcal{A}_{k,i}}$.
\end{Definition} 

\begin{Theorem} \label{linfinal}
The RD vector $(R_i,D_{\mathsf{N}})_{i\in \mathsf{L},\mathsf{N}\in \mathcal{L}}$ is achievable for the $l-$descriptions problem, if there exists a distribution $P$ and reconstruction functions $g_\mathcal{L}$ such that $(R_i,D_{\mathsf{N}})_{i\in \mathsf{L},N\in \mathcal{L}}\in\mathcal{RD}_{linear}(P, g_\mathcal{L})$. 
\end{Theorem}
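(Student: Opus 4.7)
The plan is to follow the template established in Theorems \ref{theorem12}, \ref{linoneadd} and \ref{linend}, extending the construction to accommodate $s$ independent summation layers. First I would perform the codebook generation in three groups. For every $\mathcal{M}\in \mathbf{S}_\mathsf{L}$, construct an independent linear (coset) code $C_{\mathcal{M}}$ of rate $r_{o,\mathcal{M}}$ exactly as in Stage 1. Then, for each summation index $i\in \mathsf{S}$, independently of all other ensembles, construct an ensemble of nested coset codes $\{C_{\mathcal{A}_{k,i}}\}_{k\in \mathsf{M}_i}$ over the field $\mathbb{F}_{q_i}$ with parameter $(r_{i,\alpha_{\mathsf{J},i}})_{\mathsf{J}\subset \mathsf{M}_i}$, exactly as in Stage 3. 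The validity of this step is ensured by the monotonicity constraint (d) on the nested code parameters. Finally, bin each codebook uniformly and independently with the rates prescribed by $\rho_{\mathcal{M},j}$ and $\rho_{o,\mathcal{A}_{k,i},j}$.

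At the encoder, given $X^n$, look for a jointly typical tuple consisting of one codeword from each SSC codebook $C_{\mathcal{M}}$ and one codeword from each inner-ensemble codebook $C_{\mathcal{A}_{k,i}}$. The existence of such a tuple with high probability is the content of a mutual covering lemma. The key observation is that the $s$ different ensembles are mutually independent and are also independent of the SSC codebooks, so the covering analysis decomposes: within each ensemble $i$, Lemma \ref{thm: same_codemany} already supplies the covering bounds for all linear combinations over $\mathbb{F}_{q_i}$ (yielding the constraint on $H(U_\mathbf{M}, W_\mathbf{F}|X)$ with $\mathbf{F}\subset \mathbf{B}$); and across ensembles, standard mutual covering for independent codebooks collapses the cross-ensemble overlaps. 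Combining these two sources of constraints exactly yields the inequalities of the form $H(U_{\mathbf{M}}, V_{\mathbf{E}}|X)\geq \sum(\log q - r_{o,\cdot})$ and $H(U_{\mathbf{M}}, W_{\mathbf{F}}|X)\geq \sum(\log q - r_{o,\cdot})$ stated in the theorem.

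At each decoder $\mathsf{N}$, having received the bin indices from descriptions $i\in \mathsf{N}$, the decoder seeks a unique jointly typical tuple of codewords $c_{\mathcal{M}}$ for $\mathcal{M}\in \overline{\mathbf{M}}_\mathsf{N}$, where $\overline{\mathbf{M}}_\mathsf{N}$ now explicitly includes the multivariate sum codewords $W_{\mathcal{A}_{m_i+1,i},\alpha_{\mathsf{M}_i,i}}$ with $\alpha_{k,i}=1$. Since the binning is independent and uniform, and since the already-decoded codebooks from strictly smaller decoders in $\widehat{\mathbf{M}}_\mathsf{N}$ may be conditioned on, the error analysis mirrors the mutual packing argument used in Stage 3 applied collectively across all ensembles. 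This produces precisely the packing constraints \eqref{sec3pack1}; conditioning on $[U,V,W]_{\widehat{\mathbf{M}}_\mathsf{N} \cup \overline{\mathbf{L}}}$ reduces the problem to a standard mutual packing computation whose exponents are the rate deficits $\log q + \sum \rho - r_o$, with the sum split into the unstructured $U$-codebooks and the structured summation codebooks indexed by $\bigcup_i\{\mathcal{A}_{k,i}\}$.

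The main obstacle I expect is bookkeeping rather than conceptual: namely, carefully verifying that the $s$ ensembles, each realized over a distinct finite field $\mathbb{F}_{q_i}$, can be generated independently and that their codewords remain pairwise independent (in the PMF of codeword pairs) with the SSC codebooks, so that the covering and packing lemmas extend without any cross-ensemble coupling term. Once this independence is established, the covering bounds decompose additively across $\mathbf{S}_\mathsf{L}$ and across the $s$ ensembles (yielding the two separate inequalities in the statement, one indexed by $\mathbf{E}\subset \mathbf{A}$ and one by $\mathbf{F}\subset \mathbf{B}$), and the packing bounds decompose in the same way at each decoder. Taking expected distortions $D_\mathsf{N}=E\{d_\mathsf{N}(g_\mathsf{N}(U_\mathsf{N}),X)\}$ and total rates $R_i=\sum_\mathcal{M}\rho_{\mathcal{M},i}$, and then letting the blocklength grow, produces the claimed achievability.
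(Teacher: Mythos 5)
Your proposal is correct and follows exactly the route the paper intends: the paper's own proof of Theorem \ref{linfinal} is simply omitted as ``a straightforward generalization of the previous step,'' namely generating the $s$ ensembles of nested coset codes independently of each other and of the SSC codebooks, and reusing the covering bounds of Lemma \ref{thm: same_codemany} per ensemble together with the Stage-3 packing argument at each decoder. Your write-up supplies more detail than the paper does, but it is the same argument.
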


\begin{proof}
 This is a straightforward generalization of the previous step, since the proof is similar, it is omitted.
\end{proof}
\begin{Remark} 
 Similar to Theorem \ref{theoremrandnonred} one can identify the non-redundant codebooks in the above scheme. One can show that  a large number of possible codebooks become redundant in this case as well. 
 \end{Remark}

%
\section{Conclusion}
We provided several improvements over previous coding strategies for the MD problem. First, we showed that the CMSB strategy can be enhanced using additional unstructured quantizers and a new unstructured binning approach. We demonstrated these gains using examples involving binary sources and three descriptions. We provided the resulting RD region for the $l$-descriptions problem with arbitrary sources and distortion functions. Additionally, we proved that all of the new codebooks in our scheme are non-redundant for the $l$-descriptions problem. In the second part of the paper, we introduced structure in both the quantizer construction as well as in the binning functions. We showed through several examples that the improvements derived in the first part can be extended if structured quantizers and binning functions are utilized. The RD region in the first part of the paper was further improved upon by introducing additional linear coding layers. Lastly, we combined the ideas in the two parts to provide a new and strictly improved achievable RD region for the $l$-descriptions problem. 
 
\label{sec:conclusion}
\appendices
\section{Proofs for Section \ref{sec:Random}}
\subsection{Proof of Lemma \ref{lem:SSCconvex}} \label{Ap:SSCconvex}
\begin{proof}
Let $(R_{i},D_N)_{i\in \mathsf{L},N\in\mathcal{L}}\in \mathcal{RD}_{SSC}(P_{\mathbf{U},X})$ and $(R'_{i},D'_N)_{i\in \mathsf{L},N\in\mathcal{L}}\in \mathcal{RD}_{SSC}(P'_{\mathbf{U},X})$. Without loss of generality, assume $\mathsf{U}_{\{1\},\{2\},\{3\}}=\mathsf{U}'_{\{1\},\{2\},\{3\}}$. Let $\tilde{U}_{\{1\},\{2\},\{3\}}$ be defined on $\mathsf{U}_{\{1\},\{2\},\{3\}}\times \{0,1\}$. Also let $\tilde{U}_\mathcal{M}=U_{\mathcal{M}}$ if $\mathcal{M}\neq \{{\{1\},\{2\},\{3\}\lambda}\}$. For $\lambda\in[0,1]$, define a new distribution $\tilde{P}_{\tilde{\mathbf{U}},X}$ as follows:
 
 \begin{align*}
\tilde{P}_{\tilde{\mathbf{U}},X}(\tilde{\mathbf{u}},x)= \left\{ 
  \begin{array}{l l}
    \lambda{P}_{{\mathbf{U}},X}({\mathbf{u}},x) & \quad   \tilde{u}_{\{1\},\{2\},\{3\}}=(u_{\{1\},\{2\},\{3\}},0) \\
  (1-\lambda){P}'_{{\mathbf{U}},X}({\mathbf{u}},x) & \quad   \tilde{u}_{\{1\},\{2\},\{3\}}=(u_{\{1\},\{2\},\{3\}},1) \\
  \end{array} \right.
\end{align*}

Then it is straightforward to check that $\lambda(R_{i},D_N)_{i\in \mathsf{L},N\in\mathcal{L}}+(1-\lambda)(R'_{i},D'_N)_{i\in \mathsf{L},N\in\mathcal{L}}\in  \mathcal{RD}_{SSC}(\tilde{P}_{\tilde{\mathbf{U}},X})$.
\end{proof}
\subsection{Proof of lemma \ref{lem:recfunc}} \label{Ap:recfunc}
\begin{proof}
We provide an outline of the proof. Fix $\mathsf{M'}\in \mathcal{L}$. Consider a new scheme where the reconstruction function at decoder $\mathsf{M'}$ is defined as $f_\mathsf{M'}:\prod_{\mathcal{M}\in \mathbf{M}_\mathsf{M'}} U_{\mathcal{M}}\to \mathsf{X}$ with the rest of the reconstruction functions defined as in Theorem \ref{thm:SSC}. Let the RD vector $(R_i,D_{\mathsf{N}})_{i\in \mathsf{L},\mathsf{N}\in \mathcal{L}}$ be achievable in the new scheme using the distribution $P_{U_{\mathbf{S}_\mathsf{L}},X}$ and reconstruction functions $f_{\mathsf{M'}}, g_{\mathsf{M}}, \mathsf{M}\in \mathcal{L}\backslash\{\mathsf{N}\}$. We provide a new probability distribution $P_{U'_{\mathbf{S}_{\mathsf{L}}},X}$and reconstruction functions $g'_{\mathsf{M}}:U_{\mathsf{M}}\to \mathsf{X}, \mathsf{M}\in \mathcal{L}$ to shows that the RD region given in Theorem \ref{thm:SSC} contains $(R_i,D_{\mathsf{N}})_{i\in \mathsf{L},\mathsf{N}\in \mathcal{L}}$. To construct the probability distribution define $U'_{\mathcal{M}}=U_{\mathcal{M}}, \mathcal{M}\in \mathbf{S}_\mathsf{N}\backslash\{\{\mathsf{M'}\}\}$, and $U'_{\mathsf{M'}}=(U_{\mathsf{M'}},f_{\mathsf{M'}}(U_{\mathbf{M}_\mathsf{M'}})) $. As for the reconstruction functions define $g'_{\mathsf{M}}(U_{\mathsf{M}})=g_{\mathsf{M}}(U_\mathsf{M}), \mathsf{M}\in \mathcal{L}\backslash\{\mathsf{N}\}$ and $g'_\mathsf{M'}(U'_{\mathsf{M'}})=f_{\mathsf{M'}}(U_{\mathbf{M}_\mathsf{M'}})$. It is straightforward to check that with these parameters, the RD region in theorem \ref{thm:SSC} contains $(R_i,D_{\mathsf{N}})_{i\in \mathsf{L},\mathsf{N}\in \mathcal{L}}$. Intuitively, since the reconstruction functions are the same, the same distortion is achieved by both schemes.  As for the rates, in the first scheme, wherever $U_{\mathsf{M'}}$ is decoded, all of the random variables $U_{\mathbf{M}_\mathsf{M'}}$ are also decoded. So, adding a function of these random variables to $U_{\mathsf{M'}}$ does not require additional rate. 
\end{proof}
\subsection{Proof of lemma \ref{lemma1}}\label{Ap:lemma1}
\begin{proof}

Let $U_{\{1\}}=\hat{W}$, $U_{\{1,2\},\{3\}}=W$, $U_{\{1,2\}}=\hat{X}_1$, $U_{\{3\}}=\hat{X}_2$, where $\hat{X}_i$ are the reconstructions at decoder $\{i\}$ in the two user problem in Example \ref{ex:ZB}. Then it is straightforward to check that the RD vector is achievable from Theorem \ref{thm:SSC}. Next, assuming the codebook $\mathcal{C}_{\{1,2\},\{3\}}$ is empty, we consider all of the remaining 16 codebooks in the SSC scheme and show that the RD vector is not achievable. 


\noindent\textbf{Step 1:} In this step, we argue that the only non-trivial codebooks are $C_{\{1\}}$, $C_{\{3\}}$, $C_{\{1\}, \{3\}}$,$ C_{\{1,2\}}$ and $C_{\{2\}, \{3\}}$. Due to the structure of the problem, a number of the codebooks are functionally equivalent, meaning they are decoded at exactly the same decoders. So we can merge these codebooks without any loss. For example, description $\{2\}$ is only received by decoders $\{1,2\}$ and $\{1,2,3\}$, hence we can merge $C_{\{2\}}$ into $C_{\{1,2\}}$ without any loss. $C_{\{1,3\},\{23\}}$, $C_{\{1,3\}}$, $C_{\{2,3\}}$ and  $C_{\{1,2,3\}}$ are only decoded at decoder $\{1,2,3\}$ so they are redundant from the results in \cite{Refinement}. $C_{\{1\}, \{2\}}$ can be merged into $C_{\{1\}}$ since decoder $\{2\}$ is not present. $C_{\{1\}, \{2\},\{3\}}$ is equivalent to $C_{\{1\}, \{3\}}$ and can be eliminated. $C_{\{1,2\}, \{1,3\}, \{2,3\}}$, $C_{\{1,2\}, \{1,3\}}$ and $C_{\{1,2\}, \{2,3\}}$ can be merged into $C_{\{1,2\}}$. Finally $C_{\{2,3\},\{1\}}$ can be merged with $C_{\{1\}}$.  
Also $C_{\{1,2\}}$ can be merged with $C_{\{1,2,3\}}$ and is eliminated.
So we are left with four codebooks $C_{\{1\}}$, $C_{\{3\}}$, $C_{\{1\}, \{3\}}$,$ C_{\{1,2\}}$ and $C_{\{2\}, \{3\}}$.

 \noindent\textbf{Step 2:} In this step, we show that if we set $U_{\{1\},\{3\}}=\hat{W}$ and $U_{\{1\}}=\phi$ , there would be no loss in terms of RD function. The codebooks $C_{\{1\}}$ and  $C_{\{1\},\{3\}}$ are decodable using description $1$. Since decoder $\{1\}$ is at PtP optimality, these codebooks only carry $\hat{W}$. To be more precise there is a Markov chain $\left(U_{\{1\}}, U_{\{1\},\{3\}}\right)\leftrightarrow \hat{W}\leftrightarrow X$, which we prove in the following lemma.
 \begin{Lemma}
 In a PtP setup assume the decoder is at optimal PtP RD. It receives variables $U_\mathbf{M}$, and the reconstruction function is $f(U_\mathbf{M})$. Then the following Markov chain holds $U_\mathbf{M}\leftrightarrow f(U_\mathbf{M})\leftrightarrow X$. 
   \end{Lemma}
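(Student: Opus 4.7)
The plan is to combine the data-processing inequality with the defining extremal property of the rate-distortion function. Write $\hat{X}\triangleq f(U_\mathbf{M})$. Since $\hat{X}$ is a deterministic function of $U_\mathbf{M}$, the chain $X \leftrightarrow U_\mathbf{M} \leftrightarrow \hat{X}$ holds trivially, and the data processing inequality yields $I(X;\hat{X})\le I(X;U_\mathbf{M})$.

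Next, I would interpret ``operating at optimal PtP RD'' operationally as the statement that the rate supported by $U_\mathbf{M}$ equals $R_X(D)$, where $D = E[d(X,\hat{X})]$ is the distortion actually achieved. In the standard single-letter PtP scheme this rate equals $I(X;U_\mathbf{M})$, so the hypothesis reads $I(X;U_\mathbf{M}) = R_X(D)$. On the other hand, $\hat{X}$ is itself a valid test channel output attaining the same distortion $D$, so by definition of the rate-distortion function one has $I(X;\hat{X}) \ge R_X(D)$. Combined with the data processing bound, these squeeze to the equality $I(X;\hat{X}) = I(X;U_\mathbf{M})$.

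Finally, using that $\hat{X}$ is a function of $U_\mathbf{M}$, the chain rule gives
\begin{align*}
I(X;U_\mathbf{M},\hat{X}) &= I(X;U_\mathbf{M}) + I(X;\hat{X}\mid U_\mathbf{M}) = I(X;U_\mathbf{M}), \\
I(X;U_\mathbf{M},\hat{X}) &= I(X;\hat{X}) + I(X;U_\mathbf{M}\mid \hat{X}).
\end{align*}
Subtracting and invoking $I(X;\hat{X}) = I(X;U_\mathbf{M})$ from the previous step forces $I(X;U_\mathbf{M}\mid \hat{X}) = 0$, which is exactly the Markov condition $U_\mathbf{M} \leftrightarrow f(U_\mathbf{M}) \leftrightarrow X$ asserted in the lemma.

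The only point requiring care is pinning down the hypothesis ``at optimal PtP RD'' precisely enough to identify $I(X;U_\mathbf{M})$ with $R_X(D)$; in the setting of Lemma \ref{lemma1} this follows because decoder $\{1\}$ is operating on the scalar rate-distortion curve of $X$ under $d_{\{1\}}$ at rate $R_1$. Once this identification is in place, the remainder is a two-line manipulation of mutual informations and poses no real obstacle.
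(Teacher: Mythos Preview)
Your proof is correct and follows essentially the same route as the paper's: both use the chain rule $I(X;U_\mathbf{M})=I(X;\hat X)+I(X;U_\mathbf{M}\mid\hat X)$ together with the two squeezing inequalities $I(X;U_\mathbf{M})\le R$ and $I(X;\hat X)\ge R_X(D)=R$ coming from achievability and the definition of the rate-distortion function, respectively, to force the conditional mutual information to zero. The paper's version is slightly terser (it does not isolate the data-processing step separately), but the logical content is identical.
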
   
 \begin{proof}
 \begin{align*}
& R\geq I(U_\mathbf{M};X)\stackrel{(a)}=I(f(U_\mathbf{M}),U_\mathbf{M};X)=I(f(U_\mathbf{M});X)+I(U_\mathbf{M};X|f(U_\mathbf{M}))\\
&\stackrel{(b)}\geq R+ I(U_\mathbf{M};X|f(U_\mathbf{M}))\Rightarrow I(U_\mathbf{M};X|f(U_\mathbf{M}))=0,
 \end{align*}
 where in (a) we used the fact that $f(U_\mathbf{M})$ is a function of $U_\mathbf{M}$ and in (b) we used the PtP optimality.
 \end{proof}
 Since $\hat{W}$ is decoded both at decoder $\{1\}$ and $\{3\}$, if we replace $U_{\{1\},\{3\}}$ with $(U_{\{1\},\{3\}},\hat{W})$, the decoders decode the same random variables as before, so no extra rate is required. Also, from the lemma $\left(U_{\{1\}}, U_{\{1\},\{3\}}\right)\leftrightarrow \hat{W}\leftrightarrow X$. Hence, we conclude that we can set $U_{\{1\},\{3\}}=\hat{W}$ and $U_{\{1\}}=\phi$ without any loss in terms of distortion. 

\noindent\textbf{Step 3:} Assume there are random variables $U_{\{1\},\{3\}}$ and $U_{\{2\}, \{3\}}=(W, U\rq{}_{\{2\}, \{3\}})$ such that the RD vector is achievable in the SSC scheme. From the Markov chain $\hat{W}\leftrightarrow W\leftrightarrow X$, description 1 is not used in the reconstruction in decoders $\{1,2\}$, $\{3\}$ and $\{1,2,3\}$. If we set $U_{\{1\}}=\phi$, the distortions constraint in decoders $\{1,2\}$, $\{3\}$ and $\{1,2,3\}$ are satisfied.  So we have constructed a scheme to send the descriptions at a lower rate (by setting $U_{\{1\}}=\phi$) without any loss in terms of distortion in these three decoders. This contradicts optimality of the random variables chosen for the two user scheme.

\subsection{Proof of lemma \ref{theoremrandnonred} for l>3}
\label{App:addcodegen}
We have proved that if $C_{\{12\},\{3\}}=\phi$, the RD vector is not achievable but if the constraint is lifted the scheme can achieve this RD vector, so the codebook is non-redundant. For the general $l$-descriptions problem, we provide an outline of the non-redundancy proof for $C_\mathcal{H}, \mathcal{H}\in\mathbf{S}_\mathsf{L}$. Let $\{a_{1,i},a_{2,i},\ldots,a_{n_i,i}\},i\in [1, k]$ be the elements of $\mathcal{H}$. Then to construct an example where $C_\mathcal{H}$ is non-redundant, first consider a set up where for any $i$, each set of three decoders $\{a_{1,i},a_{2,i},\ldots ,a_{n_i,i}\}$ and $\{a_{1,i+1}, a_{2,i+1}, \ldots, a_{n_{i+1},i+1}\}$ and $\{a_{1,i}, a_{2,i},\ldots,  a_{n_i,i}, a_{1,i+1}, a_{2,i}, \ldots, a_{n_{i+1},i+1}\}$ are as in the two user setup in Example \ref{ex:ZB}. Then there should be a common component between each two of the descriptions. It is straightforward to show that the common components must be the same for all of the decoders, otherwise since the codebooks are independent there would be a rate-loss as explained in the previous section. We ensure that the common component can be decoded only when all descriptions $a_{1,i}a_{2,i},\ldots a_{n_i,i}a_{1,i+1}a_{2,i}\ldots a_{n_{i+1},i+1}$ are received and not when a subset of the descriptions is received. This is done by adding decoders $\{a_{1,i}\}$,  $\{a_{1,i},a_{2,i}\}$ through $\{a_{1,i},a_{2,i},\ldots a_{n_i,i},a_{1,i+1},a_{2,i+1}, \ldots a_{n_{i+1},i+1}\}$ such that each of them would be at PtP optimality by receiving a refined version of $W$ (i.e $\{a_{1,i}\}$ would receive $\hat{W}$ and $\{a_{1,i},a_{2,i}\}$ would receive a refinement of $\hat{W}$ and so on). In  this way the only codebook that can carry $W$ without rate-loss is $C_{\mathcal{H}}$. 
 \end{proof}
 \subsection{Proof of Lemma \ref{theorem5}} \label{Ap:theorem5}
 
 \begin{proof}

Let $\rho_{\{1,2\},\{1,3\},2}+\rho_{\{1,2\},\{1,3\},3}>0$, description 1 carries $\hat{W}$ to decoder $\{1\}$ with rate $I(\hat{W};X)$. Descriptions $2$ and $3$ send $W$ to decoders $\{1,2\}$ and $\{1,3\}$ by sending a refinement on $C_{\{1,2\},\{1,3\}}$. In other words $U_{\{1\}}=\hat{W}$, $U_{\{1,2\},\{1,3\}}=W$, $U_{\{1,2\}}=\hat{X}_1$ and $U_{\{1,2\}}=\hat{X}_2$ similar to the proof of Lemma \ref{lemma1}. Then one can check that the RD vector is achievable using the SSC scheme.  Next, assume $\rho_{\{1,2\},\{1,3\},2}+\rho_{\{1,2\},\{1,3\},3}=0$, then $\rho_{\{1,2\},\{1,3\},i}=0,i\in \{2,3\}$. As in the previous section, we begin by eliminating the redundant codebooks for this communications setting.

\noindent\textbf{Step 1:} In this step we argue that only the codebooks $C_{\{1,3\}}$, $C_{\{1,2\}}$ $C_{\{1\}}$ and $C_{\{1,2\},\{1,3\}}$ are non-trivial. Due to the structure of this communications setting many of the codebooks are functionally the same and can be merged together. The codebooks $C_{\{1\}, \{2\}, \{3\}}$, $C_{\{1\}, \{3\}}$, $C_{\{1\}, \{2\}}$, $C_{\{2,3\}, \{1\}}$ are decoded at all four of the decoders and can be merged with $C_{\{1\}}$. $C_{\{1,3\}, \{2,3\}}$ can be merged with $C_{\{1,3\}}$ since decoder $\{2,3\}$ is not present, by the same argument $C_{\{1,2\}, \{1,3\}, \{2,3\}}$ is concatenated with $C_{\{1,2\}, \{1,3\}}$, also $C_{\{1,2\}, \{2,3\}}$ and $C_{\{1,2\}, \{3\}}$ are merged with $C_{\{1,2\}}$.  $C_{\{1,3\}, \{2\}}$ and $C_{\{2\}, \{3\}}$ are combined with $C_{\{1,2\}, \{2,3\}}$. Lastly since decoders $2$ and $3$ are not present, $C_{\{2\}}$ and $C_{\{3\}}$ can be merged into $C_{\{1,2\}}$ and $C_{\{1,3\}}$, respectively. So only the four codebooks $C_{\{1,3\}}$, $C_{\{1,2\}}$ $C_{\{1\}}$ and $C_{\{1,2\},\{1,3\}}$ remain.
 
 \noindent\textbf{Step 2:} By the same arguments as in step 2 of Lemma \ref{lemma1}, we can set $U_{\{1\}}=\hat{W}$. 

\noindent \textbf{Step 3:} By assumption, the codebook $C_{\{1,2\},\{1,3\}}$ is only carried by the first description. However,  the codebook is not decoded at decoder $\{1\}$. Since the decoder is at PtP optimality, $C_{\{1,2\},\{1,3\}}$ can't be sent through the first description either (i.e $\rho_{\{1,2\},\{1,3\},1}=0$ and  $C_{\{1,2\},\{1,3\}}$ can be eliminated.). 

\noindent\textbf{Step 4:} After Fourier-Motzkin elimination, the covering and packing bounds for the remaining three codebooks give the following inequality,
 
\begin{align}
&R_1+R_2+R_3\geq I(U_{\{1,2\}},U_{\{1,3\}};X|\hat{W})+I(U_{\{1,3\}};U_{\{1,2\}}|\hat{W})+I(\hat{W};X)
\label{b2bin}
\end{align}
By the definition of $\hat{W}$ we have $\hat{W}\leftrightarrow W\leftrightarrow X$ and $I(\hat{W};X)<I({W};X)$, so the bound above is strictly larger than the case when $\hat{W}$ is replaced by $W$ (i.e. when $U_{\{1,2\},\{1,3\}}=W$.). This concludes the proof.

\end{proof}
 
 


%
%
%
\section{Proofs for Section \ref{sec:linexamples}}
\subsection{Proof of Lemma \ref{lemma2}}\label{Ap:lemma2}
\begin{proof}

\begin{align*}
 \frac{1}{n}E(d_H(\hat{X}^n\oplus_2\hat{Z}^n,X^n\oplus_2Z^n))&= \frac{1}{n}E(w_H(\hat{X}^n\oplus_2\hat{Z}^n\oplus_2 X^n\oplus_2Z^n))
 \\&= \frac{1}{n}E(w_H(X^n\oplus_2\hat{X}^n\oplus_2 Z^n\oplus_2\hat{Z}^n))
 \\&= \frac{1}{n}E(d_H(X^n\oplus_2\hat{X}^n,Z^n\oplus_2\hat{Z}^n)).
\end{align*}
 Note that $X^n\oplus_2\hat{X}^n$ is the quantization noise of quantizing $X^n$ and $Z^n\oplus_2\hat{Z}^n$ is the quantization noise of quantizing $Z^n$. Since the source vectors are independent, the noise vectors are also independent and the summation converges to $\delta\ast\delta$ (The arguments are similar to the ones given in \cite{FinLen}.).

\end{proof}
\subsection{Proof of Lemma \ref{theorem7}}\label{Ap:theorem7}

\begin{proof}
 We assume that there exists a probability distribution $P$ on $X$ and $U_{\mathbf{S}_\mathsf{L}}$ for which the RD vector is achievable using the SSC scheme and arrive at a contradiction. Since all of the decoders are present in this setup, we need to consider the SSC with all the codebooks present, so the proof is more involved than the proofs in the previous section.

\noindent\textbf{Step 1:} In this step we show that description $i$, where $i=1,2$, does not carry any bin indices for codewords from codebook $C_{\mathcal{M}}$ if $ \mathcal{M}\notin \mathbf{M}_{\{i\}}$.  Descriptions 1 and 2 only carry indices which are used in the reconstruction at decoders $\{1\}$ and $\{2\}$, respectively. This is true since these two decoders are receiving information at optimal PtP rate-distortion. Note that this does not mean the corresponding codebooks are empty, we can only conclude that no bin indices for the codewords are sent through these descriptions. For example if $\mathcal{M}=\{\{2\},\{1,3\}\}$ and $i=1$, then $\rho_{\mathcal{M},i}=0$.

\begin{Lemma} 
For $i\in \{1,2\}$, and ${\mathcal{M}}$ such that $\{i\}\notin {\mathcal{M}}$, $\rho_{{\mathcal{M}},i}=0$.

\end{Lemma}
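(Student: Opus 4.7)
The plan is to exploit the fact that decoders $\{1\}$ and $\{2\}$ are operating exactly at their point-to-point (PtP) rate-distortion optima. Under the RD vector in (\ref{RDv3}), decoder $\{i\}$ ($i\in\{1,2\}$) receives description $i$ at rate $R_i=1-h_b(\delta)$ and must reconstruct the BSS ($X$ for $i=1$, $Z$ for $i=2$) to Hamming distortion $\delta$. Since the BSS PtP rate-distortion function at distortion $\delta$ is exactly $1-h_b(\delta)$, the scheme is tight at decoder $\{i\}$, leaving no slack for any rate that is not consumed by the reconstruction there. This tightness is the entire engine of the proof.

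First I would carefully identify which bits of description $i$ actually feed into the reconstruction at decoder $\{i\}$. In the SSC scheme, decoder $\{i\}$ forms $\hat{X}^n=g_{\{i\}}(U^n_{\mathbf{M}_{\{i\}}})$, recovering the codewords $U^n_\mathcal{M}$ for $\mathcal{M}\in\mathbf{M}_{\{i\}}$ by looking only at the corresponding bins carried on description $i$. The total rate of those bin indices is $\sum_{\mathcal{M}\in\mathbf{M}_{\{i\}}}\rho_{\mathcal{M},i}$, and the remaining bits in description $i$ (namely those for codebooks with $\{i\}\notin\mathcal{M}$) are not read by this decoder. Hence $\hat{X}^n$ is a deterministic function of at most $n\sum_{\mathcal{M}\in\mathbf{M}_{\{i\}}}\rho_{\mathcal{M},i}$ bits of side information about $X^n$.

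Next I would apply the standard PtP converse to this effective encoder-decoder pair between $X^n$ and $\hat{X}^n$. Because the output achieves $\mathbb{E}[d_H(\hat{X}^n,X^n)]\le\delta$ and $X$ is a BSS, we obtain
\begin{equation*}
\sum_{\mathcal{M}\in\mathbf{M}_{\{i\}}}\rho_{\mathcal{M},i}\;\geq\;1-h_b(\delta)\;=\;R_i\;=\;\sum_{\mathcal{M}}\rho_{\mathcal{M},i}.
\end{equation*}
Rearranging gives $\sum_{\mathcal{M}\notin\mathbf{M}_{\{i\}}}\rho_{\mathcal{M},i}\le 0$, and non-negativity of the binning rates forces each such term to vanish, i.e.\ $\rho_{\mathcal{M},i}=0$ whenever $\{i\}\notin\mathcal{M}$.

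The only subtle step is the first bullet: justifying that the PtP converse applies with the reduced rate $\sum_{\mathcal{M}\in\mathbf{M}_{\{i\}}}\rho_{\mathcal{M},i}$ rather than with the full rate $R_i$. This requires a short, careful argument that bits of description $i$ assigned to codebooks $\mathcal{M}\notin\mathbf{M}_{\{i\}}$ genuinely do not influence decoder $\{i\}$'s output -- this is immediate from the decoding rule in Theorem \ref{thm:SSC}, where decoder $\mathsf{N}$ reconstructs only codebooks $\mathcal{M}\in\mathbf{M}_\mathsf{N}$, and from Lemma \ref{lem:recfunc}, which shows that enlarging the reconstruction alphabet to $U_{\mathbf{M}_\mathsf{N}}$ captures all information accessible at that decoder. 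Everything else is a one-line combination of two rate equalities, so this conceptual bookkeeping is the main (and essentially the only) obstacle.
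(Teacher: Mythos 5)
Your argument is correct, and it reaches the key inequality by a genuinely different route than the paper. The paper never leaves the single-letter characterization: it adds the covering bound \eqref{cov11} specialized to $\mathbf{M}=\mathbf{M}_{\{i\}}$ to the packing bound \eqref{pack11} at decoder $\{i\}$ (with $\mathbf{L}=\phi$), which gives $\sum_{\mathcal{M}\in\mathbf{M}_{\{i\}}}\rho_{\mathcal{M},i}\geq I(U_{\mathbf{M}_{\{i\}}};X,Z)$, and then invokes PtP optimality of decoder $\{i\}$, so that $I(U_{\mathbf{M}_{\{i\}}};X,Z)\geq I(g_{\{i\}}(U_{\{i\}});X)\geq 1-h_b(\delta)=R_i=\sum_{\mathcal{M}}\rho_{\mathcal{M},i}$, forcing all binning rates with $\{i\}\notin\mathcal{M}$ to vanish. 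You instead argue operationally at blocklength $n$: decoder $\{i\}$ reads only the bin indices of codebooks in $\mathbf{M}_{\{i\}}$ carried on description $i$, so its reconstruction is a function of roughly $n\sum_{\mathcal{M}\in\mathbf{M}_{\{i\}}}\rho_{\mathcal{M},i}$ bits, and the standard PtP rate-distortion converse for a BSS under Hamming distortion yields the same inequality; the rest is the identical rearrangement. Both are sound, and both rest on the same engine (a PtP-tight decoder has no rate to spare for bins it never reads), but they buy different things: your version is conceptually transparent and needs no manipulation of the region's inequalities, at the cost of extra bookkeeping you elide --- the $\epsilon$-slacks in rate and distortion with a continuity/limiting step, and an implicit appeal to the achievability construction of Theorem \ref{thm:SSC} to convert a witness $(\rho_{\mathcal{M},i},r_{\mathcal{M}})$ of the single-letter constraints into an operational code with exactly those bin rates. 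The paper's derivation avoids any $n$-letter argument, applies verbatim to every witness of the single-letter region, and yields the equality $R_i=I(U_{\mathbf{M}_{\{i\}}};X,Z)$ as a by-product, which is then reused in the subsequent steps of the proof of Lemma \ref{theorem7}.
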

\begin{proof}
 
 From optimality at decoder $\{1\}$ we have the following equality:
\begin{equation}
R_i= I(U_{\mathbf{M}_{\{i\}}};X,Z)
\label{optimality}
\end{equation}
 
Consider the following covering bound on the random variables $U_{{\mathbf{M}}_{\{i\}}}$:
\begin{align}
H(U_{{\mathbf{M}}_{\{i\}}}|X,Z)\geq  & \sum_{{\mathcal{M}}\in {\mathbf{M}}_{\{i\}}}{   (H(U_{{\mathcal{M}}})  \!  -   \!   r_{{\mathcal{M}}})},
\label{coveringoptimality}
\end{align}
also we have the following packing bound at decoder $\{i\}$:
\begin{align}
H(U_{{\mathbf{M}}_{\{i\}}})\leq \sum_{{\mathcal{M}}\in{{\mathbf{M}}_{\{i\}}}}(H(U_{{\mathcal{M}}})+\rho_{{\mathcal{M}},i}-r_{{\mathcal{M}}}),
\label{packingoptimality}
\end{align}

adding \eqref{coveringoptimality} and \eqref{packingoptimality} we get: 
\begin{align}
\sum_{{\mathcal{M}}\in {\mathbf{M}}_{\{i\}}}\rho_{{\mathcal{M}},i} \geq I(U_{{\mathbf{M}}_{\{i\}}};X,Z),
\end{align}

 $R_i=\sum_{{\mathcal{M}}\in{{\mathbf{S}_\mathsf{L}}}} \rho_{{\mathcal{M}},i}$, comparing this equality with (\ref{optimality}) completes the proof.
\end{proof}

\textbf{Step 2:} In this step, we show that there are no common codebooks decoded at decoders  $\{1\}$ and $\{2\}$. Since decoder $\{1,2\}$ receives descriptions 1 and 2 at optimal RD from a PtP perspective, the random variables decoded at decoder $\{1\}$ must be independent of those decoded at decoder $\{2\}$.  From the next lemma we have that if $ {\mathcal{M}}\in \mathbf{M}_{\{1\}}\bigcap\mathbf{M}_{\{2\}}$ then $r_{{\mathcal{M}}}=0$.

\begin{Lemma}
  Consider the setup in Figure \ref{2Desc}, let
  $(R_1,R_2,D_{1},D_{2},D_{\{1,2\}})$ be such that
  $R_1+R_2=RD_{d_{\{1,2\}}}(D_{\{1,2\}})$, where $RD_d(D)$ is
  Shannon's optimal PtP RD function for distortion function $d$ at
  point $D$. For any distribution
  $P_{U_{\{1\}},U_{\{2\}},U_{\{1,2\}},U_{\{1\},\{2\}}}$ which achieves
  this RD vector, the following conditions must hold: 
\label{indeplem}
\noindent1)$U_{\{1\}}\Perp U_{\{2\}}$ and $C_{\{1\},\{2\}}=\phi$
\\2)If in addition $R_i=RD_{d_{\{i\}}}(D_i),i\in\{1,2\}$ then, $U_{\{1,2\}}\leftrightarrow (U_{\{1\}},U_{\{2\}}) \leftrightarrow X$.
 
\end{Lemma}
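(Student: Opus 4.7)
My plan is to combine the ZB sum-rate inequality with Shannon's point-to-point converse at the central decoder to force equality conditions that immediately yield the two claims.

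For Part 1, I would start from the ZB sum-rate bound
\begin{equation*}
R_1+R_2 \;\geq\; I(U_{\{1\},\{2\}};X) \;+\; I(U_{\{1\},\{2\}},U_{\{1\}},U_{\{2\}},U_{\{1,2\}};X) \;+\; I(U_{\{1\}};U_{\{2\}}\mid U_{\{1\},\{2\}}),
\end{equation*}
guaranteed by Theorem~\ref{thm:ZB}. Since $\hat{X}_{\{1,2\}}$ at the central decoder is a function of the four auxiliaries, the data-processing inequality together with Shannon's PtP converse gives $I(U_{\{1\},\{2\}},U_{\{1\}},U_{\{2\}},U_{\{1,2\}};X) \geq I(\hat{X}_{\{1,2\}};X) \geq RD_{d_{\{1,2\}}}(D_{\{1,2\}})$. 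Substituting into the previous inequality and using the hypothesis $R_1+R_2 = RD_{d_{\{1,2\}}}(D_{\{1,2\}})$ squeezes both nonnegative terms $I(U_{\{1\},\{2\}};X)$ and $I(U_{\{1\}};U_{\{2\}}\mid U_{\{1\},\{2\}})$ to zero. The first identity says $U_{\{1\},\{2\}}$ is independent of $X$, which (in view of its common-component role $U_{\{1\},\{2\}}=h_1(U_{\{1\}})=h_2(U_{\{2\}})$) lets me replace it by a constant without altering any rate bound or reconstruction, yielding $C_{\{1\},\{2\}}=\phi$; the second identity then collapses to the unconditional independence of $U_{\{1\}}$ and $U_{\{2\}}$.

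For Part 2, I would build on Part 1 by taking $U_{\{1\},\{2\}}$ constant and $U_{\{1\}}$ independent of $U_{\{2\}}$. The ZB description-rate constraint reduces to $R_i \geq I(U_{\{i\}};X)$, and the PtP converse $R_i \geq I(\hat{X}_{\{i\}};X) \geq RD_{d_{\{i\}}}(D_i)$ combined with the hypothesis $R_i = RD_{d_{\{i\}}}(D_i)$ pins $R_i = I(U_{\{i\}};X)$. The independence of $U_{\{1\}}$ and $U_{\{2\}}$ then yields
\begin{equation*}
I(U_{\{1\}},U_{\{2\}};X) \;\geq\; I(U_{\{1\}};X)+I(U_{\{2\}};X) \;=\; R_1+R_2,
\end{equation*}
while the central-decoder sum-rate bound, specialized via Part~1, reads $R_1+R_2 \geq I(U_{\{1\}},U_{\{2\}},U_{\{1,2\}};X)$. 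Chaining these inequalities forces $I(U_{\{1\}},U_{\{2\}};X) = I(U_{\{1\}},U_{\{2\}},U_{\{1,2\}};X)$, i.e.\ $I(U_{\{1,2\}};X \mid U_{\{1\}},U_{\{2\}}) = 0$, which is exactly the Markov chain $U_{\{1,2\}} \leftrightarrow (U_{\{1\}},U_{\{2\}}) \leftrightarrow X$.

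The main obstacle I anticipate is justifying the ``set $U_{\{1\},\{2\}}$ to a constant'' step cleanly: one must argue that any auxiliary that is independent of $X$ and arises only as a deterministic common function of $U_{\{1\}}$ and $U_{\{2\}}$ can be removed from the ZB scheme while preserving every rate constraint and every distortion (so in particular $R_i\geq I(U_{\{i\}};X)$ in Part~2 is tight exactly when the $U_{\{1\},\{2\}}$-dependent portion of $U_{\{i\}}$ is irrelevant). Once this reduction is in place, the rest of the argument is elementary mutual-information bookkeeping based on the chain rule and the PtP converses already invoked, and extracting the exact ZB sum-rate inequality above is standard Fourier--Motzkin manipulation of the covering and packing constraints already implicit in Theorem~\ref{thm:ZB}.
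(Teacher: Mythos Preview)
Your overall strategy---squeeze the ZB sum-rate bound against the PtP converse at decoder $\{1,2\}$---matches the paper's, and your Part~2 argument is essentially identical to theirs once Part~1 is in place. The gap is entirely in how you close Part~1.

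Working from the post-FME ZB sum-rate inequality you only obtain $I(U_{\{1\},\{2\}};X)=0$ and $I(U_{\{1\}};U_{\{2\}}\mid U_{\{1\},\{2\}})=0$. These two facts do \emph{not} force $U_{\{1\}}\Perp U_{\{2\}}$: take $U_{\{1\},\{2\}}$ to be a fair coin independent of $X$ and set $U_{\{i\}}=(V_i,U_{\{1\},\{2\}})$ with $V_1\Perp V_2$; both identities hold yet $I(U_{\{1\}};U_{\{2\}})=1$. Your proposed rescue---``$U_{\{1\},\{2\}}=h_1(U_{\{1\}})=h_2(U_{\{2\}})$''---is not a constraint of the ZB scheme; the common layer is an auxiliary codebook on which $U_{\{1\}},U_{\{2\}}$ are superposed, not a deterministic function of them. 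So the step you flagged as the main obstacle is in fact unjustifiable along the line you sketched.

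The paper avoids this by staying at the pre-FME level: it adds the three packing bounds at decoders $\{1\},\{2\},\{1,2\}$ and subtracts the joint covering bound, obtaining
\[
I(U_{\{1,2\}},U_{\{1\}},U_{\{2\}},U_{\{1\},\{2\}};X)+I(U_{\{1\}};U_{\{2\}}\mid U_{\{1\},\{2\}})\;\le\;R_1+R_2-r_{\{1\},\{2\}}.
\]
The extra $-r_{\{1\},\{2\}}$ is precisely what the FME discards. Combining with the PtP converse now yields $r_{\{1\},\{2\}}+I(U_{\{1\}};U_{\{2\}}\mid U_{\{1\},\{2\}})\le 0$, so the codebook rate itself is zero; $U_{\{1\},\{2\}}$ is literally constant and unconditional independence follows. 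If you redo Part~1 at the covering/packing level rather than quoting Theorem~\ref{thm:ZB}, your proof goes through; everything else you wrote is fine.
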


\begin{proof}

 Consider the following packing bounds:
\begin{align}
&\textbf{Dec}\hspace{0.03 in} \textbf{$\{1\}$}: H(U_{\{1\},\{2\}},U_{\{1\}})\leq H(U_{\{1\},\{2\}}) \! + \! H(U_{\{1\}}) + \! \rho_{\{1\},\{2\},1} \! + \! \rho_{\{1\},1} \! - \! r_{\{1\},\{2\}} \! - \! r_{\{1\}}
\label{packdec1}\\
&\textbf{Dec}\hspace{0.03 in} \textbf{$\{2\}$}:H(U_{\{1\},\{2\}},U_{\{2\}})\leq H(U_{\{1\},\{2\}}) \! + \! H(U_{\{2\}}) + \! \rho_{\{1\},\{2\},2} \! + \! \rho_{\{2\},2} \! - \! r_{\{1\},\{2\}} \! - \! r_{\{2\}}\\
&\textbf{Dec}\hspace{0.03 in} \textbf{$\{1,2\}$}:H(U_{\{1,2\}}|U_{\{1\},\{2\}},U_{\{1\}},U_{\{2\}})\leq H(U_{\{1,2\}})  + \! \rho_{\{1,2\},1 \! }+ \! \rho_{\{1,2\},2} \! - \! r_{\{1,2\}}
\label{packdec12}
\end{align}

Also the mutual covering bound:
\begin{align}
&H(U_{\{1,2\}},U_{\{1\}},U_{\{2\}},U_{\{1\},\{2\}}|X)\geq H(U_{\{1,2\}})+H(U_{\{1\}})+H(U_{\{2\}})+H(U_{\{1\},\{2\}})-r_{\{1,2\}}-r_{\{1\}}-r_{\{2\}}-r_{\{1\},\{2\}}
\label{covmut}
\end{align}
Now we add inequalities (\ref{packdec1}-\ref{packdec12}) and subtract \eqref{covmut}, we get:
\begin{align*}
&I(U_{\{1,2\}},U_{\{1\}},U_{\{2\}},U_{\{1\},\{2\}};X)+I(U_{\{1\}};U_{\{2\}}|U_{\{1\},\{2\}})
\leq  R_1+R_2-r_{\{1\},\{2\}}
\end{align*}

Using the condition $R_1+R_2=RD_{d_{12}}(D_{\{1,2\}})$ we conclude: 
\begin{align}
&I(U_{\{1\}};U_{\{2\}}|U_{\{1\},\{2\}})+r_{\{1\},\{2\}}\leq 0
\label{indep}
\end{align}
From \eqref{indep} one may deduce $C_{\{1\},\{2\}}=\phi$ and $U_{\{1\}}\Perp U_{\{2\}}$. Furthermore we get:

\begin{align*}
&R_1+R_2= I(U_{\{1,2\}},U_{\{1\}},U_{\{2\}};X)= I(U_{\{1\}};X)\!+\!I(U_{\{2\}};X)\!+\!I(U_{\{1\}};U_{\{2\}}|X)+\!I(U_{\{1,2\}};X|U_{\{1\}},U_{\{2\}}),\\
\end{align*}
where the right-hand side of the second equality is the sum-rate of the two-descriptions problem. Using the conditions $R_i=RD_{d_{i}}(D_i),i\in\{1,2\}$, we have:
\begin{align*}
&\!I(U_{\{1\}};U_{\{2\}}|X)\!+\!I(U_{\{1,2\}};X|U_{\{1\}},U_{\{2\}})=0.
\end{align*}
So $I(U_{\{1,2\}};X|U_{\{1\}},U_{\{2\}})=0$, which gives the desired Markov chain in $(2)$.
\end{proof}

 Assuming the original scheme achieves the RD vector in the theorem, we give a new scheme which also achieves the RD vector. We propose that the encoder operates as before, but decoder $\{1,2\}$ decodes $U_{{\mathcal{M}}}$ only if $ {\mathcal{M}}\in \mathbf{M}_{\{1\}}$ or $ {\mathcal{M}}\in \mathbf{M}_{\{2\}}$. It needs to be shown that the RD vector is the same. First we consider the resulting rates. The covering bounds are not changed. The packing bounds are the same at all decoders other than decoder $\{1,2\}$ since the same variables are being decoded at those decoders. $\mathbf{M_{\{1\}}}\cap\mathbf{M_{\{2\}}}=\phi$. Let ${\tilde{\mathbf{M}}}_{\{1\}}$ and ${\tilde{\mathbf{M}}}_{\{2\}}$ be subsets of ${{\mathbf{M}}}_{\{1\}}$ and ${{\mathbf{M}}}_{\{2\}}$. We need to show that the following packing bound is satisfied:
\begin{align}
&H(U_{{\mathbf{M}}_{\{1\}}},U_{{\mathbf{M}}_{\{2\}}}|U_{\tilde{\mathbf{M}}_{\{1\}}},U_{{\tilde{\mathbf{M}}}_{\{2\}}})\leq \sum_{{\mathcal{M}}\in {{\mathbf{M}}_{\{1\}}}\cup{\mathbf{M}}_{\{2\}}\backslash{\tilde{\mathbf{M}}}_{\{1\}}\cup{\tilde{\mathbf{M}}}_{\{1\}}}
(H(U_{{\mathcal{M}}})+\rho_{{\mathcal{M}},1}+\rho_{{\mathcal{M}},2}-r_{{\mathcal{M}}})
\label{goal}
\end{align}

We have the following two packing bounds from decoders $\{1\}$ and $\{2\}$:
\begin{align}
&H(U_{{\mathbf{M}}_{\{1\}}}|U_{{\tilde{\mathbf{M}}}_{\{1\}}})\leq\!\!\!\!\!\!\! \sum_{{\mathcal{M}}\in {{\mathbf{M}}_{\{1\}}}\backslash \tilde{\mathbf{M}}_1}\!\!\!\!\!
(H(U_{{\mathcal{M}}})+\rho_{{\mathcal{M}},1}-r_{{\mathcal{M}}})
\label{packing1}
\\&H(U_{{\mathbf{M}}_{\{2\}}}|U_{{\tilde{\mathbf{M}}}_{\{2\}}})\leq\!\!\!\!\!\!\! \sum_{{\mathcal{M}}\in {{\mathbf{M}}_{\{2\}}}\backslash \tilde{\mathbf{M}}_1}\!\!\!\!\!
(H(U_{{\mathcal{M}}})+\rho_{{\mathcal{M}},2}-r_{{\mathcal{M}}})
\label{packing2}
\end{align}

Note that from arguments in Lemma \ref{indeplem}, $U_{{\mathbf{M}}_{\{1\}}}$ is independent of $U_{{\mathbf{M}}_{\{2\}}}$. Hence adding \eqref{packing1} and \eqref{packing2}, we get  \eqref{goal}. This proves that the packing bounds are also the same.

 From lemma \ref{indeplem}, we have ${{U}}_{\mathbf{M}_{\{1,2\}}}\leftrightarrow {{U}}_{\mathbf{M}_{\{1\}}},{{U}}_{\mathbf{M}_{\{2\}}} \leftrightarrow X,Z$. Lemma \ref{distmark} shows that the new scheme achieves the same distortions as the previous one. 

\begin{Lemma}
Let the random variables $U,V,X$ be such that $U\leftrightarrow V\leftrightarrow X$. Then for an arbitrary distortion function $f:\mathsf{X}\times\hat{\mathsf{X}}\to \mathsf{R}^+$, there is an optimal reconstruction of $X$ using $U$ and $V$ which is a only function of $V$.

\label{distmark}
\end{Lemma}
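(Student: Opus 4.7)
The plan is to show that given any reconstruction function $g:\mathsf{U}\times \mathsf{V}\to \hat{\mathsf{X}}$, one can construct a reconstruction function $h:\mathsf{V}\to \hat{\mathsf{X}}$ whose expected distortion is no larger. The key idea is that under the Markov chain $U\leftrightarrow V\leftrightarrow X$, conditioning on $V=v$ makes $U$ and $X$ independent, so the choice of $u$ passed into $g$ does not change the conditional distribution of $X$; only the value of $g(u,v)$ matters for that conditional expected distortion.

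First, I would write, for each $v\in\mathsf{V}$,
\begin{align*}
E\big[f(X,g(U,V))\mid V=v\big]
&=\sum_{u\in\mathsf{U}} P(U=u\mid V=v)\, E\big[f(X,g(u,v))\mid U=u,V=v\big]\\
&=\sum_{u\in\mathsf{U}} P(U=u\mid V=v)\, E\big[f(X,g(u,v))\mid V=v\big],
\end{align*}
where the second equality uses the Markov chain $U\leftrightarrow V\leftrightarrow X$ to drop the conditioning on $U$ in the inner expectation. This exhibits the conditional expected distortion as a convex combination (over $u$) of the quantities $\phi_v(u)\triangleq E[f(X,g(u,v))\mid V=v]$.

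Next, for each $v\in\mathsf{V}$ pick $u^\ast(v)\in\arg\min_{u\in\mathsf{U}}\phi_v(u)$, which is well defined since $\mathsf{U}$ is finite (the paper restricts attention to finite alphabets). Define $h(v)\triangleq g(u^\ast(v),v)$, which depends on $v$ alone. Then $\phi_v(u^\ast(v))\le \sum_u P(U=u\mid V=v)\phi_v(u) = E[f(X,g(U,V))\mid V=v]$, and on the other hand $\phi_v(u^\ast(v))=E[f(X,h(V))\mid V=v]$ by construction. Taking expectation over $V$ gives $E[f(X,h(V))]\le E[f(X,g(U,V))]$, proving that the function $h$ of $V$ alone is at least as good as the original reconstruction $g(U,V)$.

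There is no real obstacle here; the only thing to be careful about is that $u^\ast(\cdot)$ must be well-defined as a function of $v$, which is immediate in the finite-alphabet setting (one can, for instance, break ties by taking the smallest index). The argument does not require any additional structural assumptions on the distortion function $f$.
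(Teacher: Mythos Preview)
Your proof is correct and rests on the same key observation as the paper's: the Markov chain $U\leftrightarrow V\leftrightarrow X$ makes the conditional law of $X$ given $(U,V)=(u,v)$ equal to that given $V=v$, so the value of $u$ is irrelevant once $v$ is fixed. The paper's argument is a one-line version of this: it simply notes that the pointwise-optimal reconstruction
\[
g(u,v)=\arg\min_{\hat{x}\in\hat{\mathsf{X}}}E[f(\hat{x},X)\mid U=u,V=v]=\arg\min_{\hat{x}\in\hat{\mathsf{X}}}E[f(\hat{x},X)\mid V=v]
\]
already depends only on $v$. Your route---starting from an arbitrary $g(U,V)$, expressing the conditional expected distortion as a convex combination over $u$, and then replacing $g$ by $h(v)=g(u^\ast(v),v)$---arrives at the same conclusion but is a bit more elaborate; it has the minor advantage of making explicit that \emph{every} reconstruction can be matched by a $V$-only one, not just that the optimum happens to be $V$-only.
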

\begin{proof}
 We know that the optimal reconstruction function for $X$ given $U$ and $V$ is given by:
 
\begin{align*}
 g(u,v)&= \arg
 \min_{\hat{x}\in\hat{\mathsf{X}}}\mathsf{E}(f(\hat{x},X)|u,v) =\arg \min_{\hat{x}\in\hat{\mathsf{X}}}\mathsf{E}(f(\hat{x},X)|v),
\end{align*}
 which is only a function of $V$. 
\end{proof}

 

 

  By these arguments, codebook $U_{{\mathcal{M}}}$ is eliminated if ${\mathcal{M}}\in \mathbf{M}_{\{1,2\}}\backslash\widetilde{\mathbf{M}}_{\{1,2\}}$.  Also in the new scheme, $U_{\{1,2\},\{1,3\},\{2,3\}}$ and $U_{\{1,3\},\{2,3\}}$ are functionally similar since by the same arguments as in this step $U_{\{1,2\},\{1,3\},\{2,3\}}$ is not used in the reconstruction in decoder $\{1,2\}$ , so we can eliminate $\mathcal{C}_{\{1,2\},\{1,3\},\{2,3\}}$. In summary, thus far we have eliminated 7 codebooks.
\\\textbf{Step 3:} We have the following lemma:
\begin{Lemma}
From optimality of rate and distortion at decoders $\{1,3\}, \{2,3\}$ we have:
\begin{align*}
\rho_{\{2,3\},3}=\rho_{\{1,3\},3}=\rho_{\{2,3\},\{1\},3}=\rho_{\{1,3\},\{2\},3}=0
\end{align*}
\end{Lemma}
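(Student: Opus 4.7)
The plan is to invoke the point-to-point (PtP) optimality of decoders $\{1,3\}$ and $\{2,3\}$---both of which receive the independent pair $(X,Z)$ at joint rate $R_1+R_3=R_2+R_3=2(1-h_b(\delta))$, matching $R(D_{\{1,3\}})=R(D_{\{2,3\}})=R(X,Z;2\delta)=2(1-h_b(\delta))$ by $X\perp Z$---in conjunction with Step~1's conclusion that $\rho_{\mathcal{M},i}=0$ whenever $i\in\{1,2\}$ and $\mathcal{M}\notin\mathbf{M}_{\{i\}}$.

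First I would dispatch $\rho_{\{2,3\},3}$ and $\rho_{\{1,3\},3}$ by the same covering--packing manipulation used in Step~1, applied now to decoder $\{1,3\}$. The chain
\begin{equation*}
R_1+R_3 \;=\; \sum_{\mathcal{M}}(\rho_{\mathcal{M},1}+\rho_{\mathcal{M},3}) \;\geq\; \sum_{\mathcal{M}\in\mathbf{M}_{\{1,3\}}}(\rho_{\mathcal{M},1}+\rho_{\mathcal{M},3}) \;\geq\; I(U_{\mathbf{M}_{\{1,3\}}};X,Z) \;\geq\; R(D_{\{1,3\}}) \;=\; R_1+R_3
\end{equation*}
collapses to equalities; combining this with Step~1 (noting $\mathbf{M}_{\{1\}}\subset\mathbf{M}_{\{1,3\}}$) and nonnegativity forces $\rho_{\mathcal{M},3}=0$ for every $\mathcal{M}\notin\mathbf{M}_{\{1,3\}}$. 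Since $\{\{2,3\}\}\notin\mathbf{M}_{\{1,3\}}$, this yields $\rho_{\{2,3\},3}=0$, and the mirrored argument at $\{2,3\}$ yields $\rho_{\{1,3\},3}=0$.

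Next, to obtain $\rho_{\{2,3\},\{1\},3}=\rho_{\{1,3\},\{2\},3}=0$ I would refine the argument via a conditional covering--packing bound at $\{1,3\}$ restricted to the subfamily $\mathbf{M}_{\{1,3\}}\setminus\mathbf{M}_{\{1\}}$, treating $U_{\mathbf{M}_{\{1\}}}$ as side information:
\begin{equation*}
\sum_{\mathcal{M}\in\mathbf{M}_{\{1,3\}}\setminus\mathbf{M}_{\{1\}}}(\rho_{\mathcal{M},1}+\rho_{\mathcal{M},3}) \;\geq\; I\!\left(U_{\mathbf{M}_{\{1,3\}}\setminus\mathbf{M}_{\{1\}}};\,X,Z \,\big|\, U_{\mathbf{M}_{\{1\}}}\right).
\end{equation*}
PtP optimality at $\{1\}$ together with $X\perp Z$ delivers the Markov chain $U_{\mathbf{M}_{\{1\}}}\leftrightarrow X\leftrightarrow Z$, so that $I(U_{\mathbf{M}_{\{1\}}};X,Z)=I(U_{\mathbf{M}_{\{1\}}};X)=R_1$; combined with PtP optimality at $\{1,3\}$ and the chain rule, the right-hand side simplifies to exactly $R_3$. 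Invoking Step~1 (which kills $\rho_{\mathcal{M},1}$ on the left-hand side whenever $\mathcal{M}\in\mathbf{M}_{\{1,3\}}\setminus\mathbf{M}_{\{1\}}$) together with $\sum_{\mathcal{M}}\rho_{\mathcal{M},3}=R_3$ then forces $\rho_{\mathcal{M},3}=0$ for every $\mathcal{M}\in\mathbf{M}_{\{1\}}$. Since $\{\{2,3\},\{1\}\}\in\mathbf{M}_{\{1\}}$, we conclude $\rho_{\{2,3\},\{1\},3}=0$; the symmetric argument at $\{2,3\}$ with $\mathbf{M}_{\{2\}}$ in place of $\mathbf{M}_{\{1\}}$ gives $\rho_{\{1,3\},\{2\},3}=0$.

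The main obstacle will be justifying the conditional packing bound above: Theorem~\ref{thm:SSC} supplies packing bounds with conditioning only of the form $\mathbf{L}\cup\widetilde{\mathbf{M}}_{\mathsf{N}}$, whereas here I need conditioning on the subfamily $\mathbf{M}_{\{1\}}$ alone (the default conditioning would drag in $\mathbf{M}_{\{3\}}\subset\widetilde{\mathbf{M}}_{\{1,3\}}$ as well). I expect this bound to follow from a direct application of the mutual packing lemma at $\{1,3\}$, exploiting the pairwise independence of the SSC codebook construction so that the codewords indexed by $\mathbf{M}_{\{1\}}$ may be treated as known side information during the joint-typicality decoding of $\mathbf{M}_{\{1,3\}}\setminus\mathbf{M}_{\{1\}}$.
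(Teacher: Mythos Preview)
Your approach is genuinely different from the paper's, and considerably more laborious. The paper dispatches all four equalities by a direct operational ``wasted-rate'' argument rather than by manipulating covering--packing inequalities. For $\rho_{\{2,3\},3}$: the codebook $C_{\{2,3\}}$ is not decoded at decoder $\{1,3\}$, yet its bin index is carried on description~3; setting that index to zero leaves the distortion at $\{1,3\}$ unchanged while strictly decreasing $R_3$, contradicting $R_1+R_3=R(D_{\{1,3\}})$. For $\rho_{\{2,3\},\{1\},3}$: the codebook $C_{\{2,3\},\{1\}}$ \emph{is} decoded at $\{1,3\}$, but it is already decodable from description~1 alone (since $\{\{2,3\},\{1\}\}\in\mathbf{M}_{\{1\}}$), so the bin index on description~3 is redundant; again zeroing it reduces $R_3$ without loss at $\{1,3\}$, a contradiction. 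Symmetry gives the remaining two.

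Your Part~1 is essentially correct, though the middle inequality $\sum_{\mathcal{M}\in\mathbf{M}_{\{1,3\}}}(\rho_{\mathcal{M},1}+\rho_{\mathcal{M},3})\ge I(U_{\mathbf{M}_{\{1,3\}}};X,Z)$ does not drop out of the SSC bounds as cleanly as at a singleton decoder: because $\widetilde{\mathbf{M}}_{\{1,3\}}=\mathbf{M}_{\{1\}}\cup\mathbf{M}_{\{3\}}$ is nonempty, you must combine packing bounds at $\{1\}$, $\{3\}$, and $\{1,3\}$ and track the overlaps $\mathbf{M}_{\{1\}}\cap\mathbf{M}_{\{3\}}$ carefully. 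This is doable but not immediate.

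Your Part~2 has the gap you yourself flag: the conditional packing inequality with conditioning on $\mathbf{M}_{\{1\}}$ alone (rather than on all of $\widetilde{\mathbf{M}}_{\{1,3\}}$) is not one of the SSC constraints, and ``direct application of the mutual packing lemma'' is really just a re-derivation of the scheme's guarantees. The insight you reach for at the end---that $U_{\mathbf{M}_{\{1\}}}$ is already determined by description~1 and hence may be treated as side information---\emph{is} the paper's argument, and once you have it you no longer need any refined packing bound: the redundancy of $\rho_{\{2,3\},\{1\},3}$ follows immediately by the operational contradiction above. In short, the paper's route is both shorter and sidesteps the obstacle you identified.
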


\begin{proof}
 
 First we argue that $\rho_{\{2,3\},3}=0$. If this is not true, it contradicts optimality at decoder $\{1,3\}$. $U_{\{2,3\}}$ is not decoded at decoder $\{1,3\}$, but its bin index is carried through description 3. So if the bin index is non-zero, one could reduce $R_3$ by setting the bin index equal to 0 without increasing distortion at decoder $\{1,3\}$, this contradicts optimality at that decoder.  By the same arguments $\rho_{\{1,3\},3}=0$. Now assume $\rho_{\{2,3\},\{1\},3}\neq 0$. We show that this contradicts optimality at decoder $\{1,3\}$. $U_{\{2,3\},\{1\}}$ is decodable using description $1$ (since it is decodable at decoder $\{1\}$).  Hence, if we set $\rho_{\{2,3\},\{1\}\},3}$ to $0$ (i.e. do not send the bin index on description 3), then decoder $\{1,3\}$ can still decode $U_{23,1}$ using description 1. So the distortion is the same at this decoder, but the rate $R_3$ is reduced which contradicts optimality.  By the same arguments, $\rho_{\{13,2\},3}=0$.
\end{proof}

\noindent\textbf{Step 4:} We proceed by showing that $r_{\{1,3\}}=r_{\{2,3\}}=0$. So far we have shown that none of the descriptions carry the bin indices for these codebooks.
 Consider the following packing bounds in decoders $\{1\}$, $\{2,3\}$ and $\{1,3\}$:
\begin{align*}
&H(U_{\{1\}}U_{\{1\},\{3\}}U_{\{2,3\},\{1\}})\leq  H(U_{\{1\}})+H(U_{\{1\},\{3\}})+H(U_{{\{2,3\},\{1\}}})+R_1-r_{\{1\}}-r_{\{1\},\{3\}}-r_{\{2,3\}\{1\}}
\\&H(U_{\{2\}}U_{\{3\}}U_{\{1\},\{3\}}U_{\{2\},\{3\}}U_{\{2,3\}}U_{\{1,3\},\{2\}}U_{\{2,3\},\{1\}}U_{\{1,3\}\{2,3\}})\nonumber\leq H(U_{\{2\}})+H(U_{\{3\}})+H(U_{\{1\},\{3\}})+\\&H(U_{\{2\}\{3\}})+H(U_{\{2,3\}})+H(U_{\{1,3\},\{2\}})+H(U_{{\{2,3\},\{1\}}})\nonumber+H(U_{{\{1,3\},\{2,3\}}})+R_2+R_3-r_{\{2\}}-r_{\{3\}}-r_{\{1\}\{3\}}-\\&r_{\{2\},\{3\}}-r_{\{2,3\}}-r_{\{1,3\}\{2\}}-r_{\{2,3\}\{1\}}-r_{\{1,3\}\{2,3\}}
\\&H(U_{\{1,3\}}|U_{\{1\}}U_{\{3\}}U_{\{1\},\{3\}}U_{\{2\},\{3\}}U_{\{1,3\}\{2\}}U_{\{2,3\},\{1\}}U_{\{1,3\}\{2,3\}})\leq H(U_{\{1,3\}})-r_{\{1,3\}}
\end{align*}
We add the above inequalities and subtract the mutual covering bound on all RV\rq{}s, we get:
\begin{align*}
 &H(U_{\{1\}}U_{\{1\},\{3\}}U_{\{2,3\},\{1\}})+H(U_{\{2\}}U_{\{3\}}U_{\{1\},\{3\}}U_{\{2\},\{3\}}U_{\{2,3\}}U_{\{1,3\},\{2\}}U_{\{2,3\},\{1\}}U_{\{1,3\}\{2,3\}})\\&+H(U_{\{1,3\}}|U_{\{1\}}U_{\{3\}}U_{\{1\},\{3\}}U_{\{2\},\{3\}}U_{\{1,3\}\{2\}}U_{\{2,3\},\{1\}}U_{\{1,3\}\{2,3\}})\\&-H(U_{\{1\}}, U_{\{2\}},U_{\{3\}}, U_{\{1\},\{3\}}, U_{\{2\},\{3\}},U_{\{1,3\},\{2\}}, U_{\{2,3\},\{1\}}, U_{\{2,3\}}, U_{\{1,3\},\{2,3\}}, U_{\{1,3\}}|X,Z)
 \\&\leq H(U_{\{1,3\}})+H(U_{\{2,3\},\{1\}})-r_{\{1,3\}}-r_{\{2,3\},\{1\}}+R_1+R_2+R_3
\\&\Rightarrow I(U_{\{1,3\}}|U_{\{1\}}U_{\{3\}}U_{\{1\},\{3\}}U_{\{2\},\{3\}}U_{\{1,3\}\{2\}}U_{\{2,3\},\{1\}}U_{\{1,3\}\{2,3\}};X,Z)+I(U_{\{1\}}, U_{\{1\},\{3\}}U_{\{2,3\},\{1\}};X,Z)+\\&I(U_{\{1,3\}};X,Z|U_{\{1\}}U_{\{3\}}U_{\{1\},\{3\}}U_{\{2\},\{3\}}U_{\{1,3\}\{2\}}U_{\{2,3\},\{1\}}U_{\{1,3\}\{2,3\}})\leq R_1+R_2+R_3
\\&\Rightarrow I(X,Z;U_{\{1,3\}}|U_{\{1\}}U_{\{3\}}U_{\{1\},\{3\}}U_{\{2\},\{3\}}U_{\{1,3\}\{2\}}U_{\{2,3\},\{1\}}U_{\{1,3\}\{2,3\}})=0.
\end{align*}
 This imposes the Markov chain $U_{\{1,3\}}\leftrightarrow U_{\{1\}}U_{\{3\}}U_{\{1\},\{3\}} $ $U_{\{2\},\{3\}} U_{\{1,3\}\{2\}} U_{\{2,3\},\{1\}} U_{\{1,3\}\{2,3\}} \leftrightarrow X,Z$. Hence by the same arguments as in step 2, we can eliminate $\mathcal{C}_{\{1,3\}}$. Also by the same arguments $C_{\{2,3\}}$ can be eliminated. 
 \\\textbf{Step 5:} In this step we eliminate $\mathcal{C}_{\{1\},\{3\}}$ and $\mathcal{C}_{\{2\},\{3\}}$. 
\begin{Lemma}
The following equality holds:
\begin{equation*}
\rho_{\{1\},\{3\},1}=\rho_{\{1\},\{3\},3}=\rho_{\{2\},\{3\},2}=\rho_{\{2\},\{3\},1}=0
\end{equation*}
\end{Lemma}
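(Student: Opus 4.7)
The plan is to follow the same template as Step 4. I would add the packing bounds at a judiciously chosen collection of decoders involving $U_{\{1\},\{3\}}$, subtract the mutual covering bound on all remaining auxiliaries, and use the tightness of $R_1=R_2=R_3=1-h_b(\delta)$ together with the eliminations from Steps 1--4 to force a Markov chain that decouples $U_{\{1\},\{3\}}$ from $(X,Z)$.

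A preliminary observation is that PtP-optimality at decoder $\{1\}$ gives the Markov chain $U_{\mathbf{M}_{\{1\}}} \leftrightarrow \hat{X}_1 \leftrightarrow X$ (by the auxiliary lemma used in Step 2), and combined with the independence $X \perp Z$ this yields $U_{\mathbf{M}_{\{1\}}} \perp Z$. In particular $U_{\{1\},\{3\}} \perp Z$, so as a carrier of common information between decoders $\{1\}$ and $\{3\}$ it can convey only $X$-information; but the rate on description $1$ is already saturated at the PtP value for $X$, so there is no slack left for $C_{\{1\},\{3\}}$ to additionally contribute to the reconstruction of $X+Z$ at decoder $\{3\}$. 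Symmetrically, $U_{\{2\},\{3\}} \perp X$ from the $\{2\}$-side.

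To turn this intuition into the equalities $\rho_{\{1\},\{3\},1} = \rho_{\{1\},\{3\},3} = 0$, I would add the packing bound at decoder $\{1\}$ on $U_{\mathbf{M}_{\{1\}}}$ and the packing bound at decoder $\{3\}$ on $U_{\{1\},\{3\}}$ conditioned on the other auxiliaries decoded there, and subtract the mutual covering bound on the full surviving auxiliary set. After invoking the rate equalities and the independence structure inherited from Step 2, the resulting chain yields a non-negative combination of mutual informations forced to zero; one of these terms is of the form $I(U_{\{1\},\{3\}};X,Z\mid U')$, producing the Markov chain $U_{\{1\},\{3\}} \leftrightarrow U' \leftrightarrow (X,Z)$ for some collection $U'$ of the remaining auxiliaries. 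Lemma \ref{distmark} then implies that the reconstructions at $\{1\}$ and $\{3\}$ can be performed without $U_{\{1\},\{3\}}$, so its bin indices on both descriptions can be set to zero without any distortion penalty. The symmetric argument, interchanging the roles of descriptions $1$ and $2$, handles the two $\rho$'s corresponding to $C_{\{2\},\{3\}}$.

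The main obstacle will be careful bookkeeping: choosing the minimal packing/covering combination so that the cancellations land cleanly on the desired mutual informations, while respecting which codebooks have already been eliminated in Steps 1--4 (so that no spurious entropy terms appear on either side). A secondary subtlety is that decoder $\{3\}$ is \emph{not} at its individual PtP optimum, since $1-h_b(\delta) > 1-h_b(\delta \ast \delta)$; hence the argument cannot invoke $\{3\}$-optimality directly and must instead exploit the tightness of the triple $(R_1,R_2,R_3)$ jointly, together with the structural independence $X \perp Z$, to close the inequality chain.
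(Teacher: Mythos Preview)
Your plan overcomplicates what the paper dispatches with a one-line redundancy argument in the style of Step~3 (not Step~4). The observation you are missing is that decoder $\{1,3\}$ is itself at PtP optimality: $R_1+R_3=2(1-h_b(\delta))$ equals the rate--distortion function of the independent pair source $(X,Z)$ under $d_{XZ}$ at distortion $2\delta$. With that in hand, $\rho_{\{1\},\{3\},1}=0$ is immediate: $U_{\{1\},\{3\}}$ is decodable at decoder $\{3\}$ from description~3 alone, hence at $\{1,3\}$ the description-1 bin index for $C_{\{1\},\{3\}}$ is superfluous; dropping it strictly lowers $R_1$ while leaving $D_{\{1,3\}}$ unchanged, contradicting PtP optimality there. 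Interchanging the roles of descriptions~1 and~3 gives $\rho_{\{1\},\{3\},3}=0$, and the $1\leftrightarrow 2$ symmetry (using optimality at $\{2,3\}$) handles the two $C_{\{2\},\{3\}}$ rates.

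You correctly note that decoder $\{3\}$ is not at its individual optimum, but then reach for a vague ``joint tightness of the triple $(R_1,R_2,R_3)$'' instead of noticing that the two-description decoders $\{1,3\}$ and $\{2,3\}$ supply exactly the tightness needed. Your proposed manipulation---packing at $\{1\}$ and $\{3\}$ minus the full covering bound---will not close: the auxiliaries $U_{\{1,3\},\{2\}}$, $U_{\{2,3\},\{1\}}$, $U_{\{1,3\},\{2,3\}}$ appear in the covering bound but not in either packing bound, leaving uncontrolled $H-r$ terms on the right-hand side. To absorb them you would need an additional packing bound at $\{1,3\}$, and to make the resulting chain an equality you would need $R_1+R_3$ to match the mutual information there---i.e., precisely the PtP optimality at $\{1,3\}$ that you overlooked. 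That combined manipulation is what the paper carries out \emph{after} this lemma to obtain $r_{\{1\},\{3\}}=0$; it is not the route to the $\rho$-equalities themselves.
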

\begin{proof}
 
 Assume $\rho_{\{1\},\{3\},1}>0$. We claim this contradicts optimality at decoder $\{1,3\}$, since $U_{\{1\},\{3\}}$ can readily be decoded from the bin number carried by description $3$, so setting $\rho_{\{1\},\{3\},1}$ to $0$ would decease rate without increasing distortion. The rest of the proof follows by the same argument.
\end{proof}

Now consider the following packing bounds at decoders $\{1\}$, $\{3\}$ and $\{1,3\}$ and the mutual covering bound:
\begin{align*}
&H(U_{\{1\}}U_{\{1\},\{3\}}U_{\{2,3\},\{1\}})\leq H(U_{\{1\}})+H(U_{\{1\},\{3\}})+H(U_{{\{2,3\},\{1\}}})+R_1-r_{\{1\}}-r_{\{1\},\{3\}}-r_{\{2,3\}\{1\}}\\
&H(U_{\{3\}}U_{\{1\},\{3\}}U_{\{2\},\{3\}})\leq H(U_{\{3\}})+H(U_{\{1\},\{3\}})+H(U_{\{2\}\{3\}})+R_3-\rho_{\{1,3\}\{2,3\},3}-r_{\{3\}}-r_{\{1\}\{3\}}-r_{\{2\},\{3\}}\\
&H(U_{\{1,3\},\{2,3\}}, U_{\{1,3\}\{2\}} |U_{\{1\}}U_{\{3\}}U_{\{1\},\{3\}}U_{\{2\},\{3\}}U_{\{2,3\},\{1\}})\leq H(U_{\{1,3\},\{2,3\}})+H(U_{\{1,3\},\{2\}})+\rho_{\{1,3\}\{2,3\},3}-r_{\{1,3\},\{2,3\}}\\
&H(U_{\{1\}}U_{\{3\}}U_{\{1\},\{3\}}U_{\{2\},\{3\}}U_{\{1,3\}\{2\}}U_{\{2,3\},\{1\}}U_{\{1,3\},\{2,3\}}|X,Z)\geq H(U_{\{1\}})+H(U_{\{3\}})+ H(U_{\{1\},\{3\}})+ H(U_{\{2\},\{3\}})\\&+ H(U_{\{1,3\}\{2\}}) + H(U_{\{2,3\},\{1\}})+H(U_{\{1,3\},\{2,3\}})-r_{\{1\}}+r_{\{3\}}-r_{\{1\},\{3\}}- r_{\{2\},\{3\}}-r_{\{1,3\},\{,2\}}- r_{\{2,3\},\{1\}}- r_{\{1,3\},\{2,3\}}\\
\end{align*}

Adding the above packing bounds and subtracting the mutual covering bound we get:
\begin{align*}
&H(U_{\{1\}}U_{\{1\},\{3\}}U_{\{2,3\},\{1\}})+H(U_{\{3\}}U_{\{1\},\{3\}}U_{\{2\},\{3\}})+H(U_{\{1,3\},\{2,3\}},U_{\{1,3\}\{2\}}|U_{\{1\}}U_{\{3\}}U_{\{1\},\{3\}}U_{\{2\},\{3\}}U_{\{2,3\},\{1\}})\\&-H(U_{\{1\}}U_{\{3\}}U_{\{1\},\{3\}}U_{\{2\},\{3\}}U_{\{1,3\}\{2\}}U_{\{2,3\},\{1\}}U_{\{1,3\},\{2,3\}}|X,Z)\leq R_1+R_3+H(U_{\{1\}\{3\}}-r_{\{1\},\{3\}}\\
&\Rightarrow I(U_{\{1\}}U_{\{1\},\{3\}}U_{\{2,3\},\{1\}};U_{\{3\}}U_{\{1\},\{3\}}U_{\{2\},\{3\}})+\\&I(U_{\{1\}}U_{\{3\}}U_{\{1\},\{3\}}U_{\{2\},\{3\}}U_{\{1,3\}\{2\}}U_{\{2,3\},\{1\}}U_{\{1,3\},\{2,3\}}|X,Z)-H(U_{\{1\},\{3\}})\leq R_1+R_3-r_{\{1\},\{3\}}\\
&\Rightarrow I(U_{\{1\}}U_{\{2,3\},\{1\}};U_{\{3\}}U_{\{2\},\{3\}}|U_{\{1\},\{3\}})+r_{\{1\},\{3\}}\leq 0
\end{align*}

Particularly $r_{\{1\},\{3\}}=0$, by the same arguments $r_{\{2\},\{3\}}=0$.
\\\textbf{Step 6:} So far we have shown that only $C_{\{1\}},C_{\{2\}},C_{\{3\}}, C_{\{1,3\},\{2\}}, C_{\{2,3\},\{1\}}$ and $C_{\{1,3\}\{2,3\}}$ can be non-trivial. From optimality at decoders $\{1\}$ and $\{1,3\}$ we have the following equalities:

\begin{align}
&R_1=I(U_{\{1\}},U_{\{2,3\},\{1\}};X,Z), R_1+R_3=I(U_{\{1\}},U_{\{2,3\},\{1\}},U_{\{3\}},U_{\{1,3\},\{2\}},U_{\{1,3\},\{2,3\}};X,Z)
\end{align}
Hence we have:
\begin{align}
&R_3=I(U_{\{3\}},U_{\{1,3\},\{2\}},U_{\{1,3\},\{2,3\}};X,Z|U_{\{1\}},U_{\{2,3\},\{1\}})
\label{R3}
\end{align}
Define the following:
\begin{align}
&N_{\delta}^{1}\triangleq X+h_{\{1\}}(U_{\{1\}},U_{\{2,3\},\{1\}})\\
&N_{\delta\ast \delta}^{3}\triangleq X+Z+h_{\{3\}}(U_{\{3\}})\\
&N_{\delta}^{1,3}\triangleq Z+h_{\{1,3\}}(U_{\{1\}},U_{\{2,3\},\{1\}},U_{\{3\}},U_{\{1,3\},\{2\}},U_{\{1,3\},\{2,3\}}),
\end{align}
where $h_{\{1\}}$ is the reconstruction of $X$ at decoder $\{1\}$, $h_{\{3\}}$ is the reconstruction of $X+Z$ at decoder $\{3\}$, and $h_{\{1,3\}}$ is the reconstruction of $Z$ at decoder $\{1,3\}$. Then from \eqref{R3}:
\begin{align*}
&R_3\geq I(h_{\{1,3\}}(
.),h_{\{3\}}(U_{\{3\}});X,Z|U_{\{1\}},U_{\{2,3\},\{1\}}h_{\{1\}}(
.))\\
&\Rightarrow R_3\geq I(Z+N_{\delta}^{1,3},X+Z+N_{\delta\ast \delta}^{3};X,Z|U_{\{1\}},U_{\{2,3\},\{1\}},X+N_{\delta}^{\{1\}}))\\
&\Rightarrow R_3\geq H(Z|U_{\{1\}},U_{\{2,3\},\{1\}},X+N_{\delta}^{\{1\}})-H(Z|Z+N_{\delta}^{1,3},X+Z+N_{\delta\ast \delta}^{3},U_{\{1\}},U_{\{2,3\},\{1\}},X+N_{\delta}^{\{1\}})\\
&\Rightarrow R_3\stackrel{(a)}{\geq} 1-H(Z|Z+N_{\delta}^{1,3},X+Z+N_{\delta\ast \delta}^{3},U_{\{1\}},U_{\{2,3\},\{1\}},X+N_{\delta}^{\{1\}})\\
&\Rightarrow R_3\geq 1-H(Z|Z+N_{\delta}^{1,3})\\
&\Rightarrow R_3\geq 1-H(N_{\delta}^{1,3}|Z+N_{\delta}^{1,3})\\
&\stackrel{(b)}{\Rightarrow} R_3\geq 1-h_b(\delta)
\end{align*}
All the above inequalities must be equality. In particular we have:
\begin{align*}
&(a) \Rightarrow Z\leftrightarrow Z+N_{\delta}^{1,3} \leftrightarrow X+Z+N_{\delta\ast \delta}^{3}, X+N_{\delta}^{\{1\}}\\
&\Rightarrow N_{\delta}^{1,3} \leftrightarrow Z+N_{\delta}^{1,3} \leftrightarrow Z+N_{\delta\ast \delta}^{3}+N_{\delta}^{\{1\}}\\
& \Rightarrow N_{\delta}^{1,3} \leftrightarrow Z+N_{\delta}^{1,3} \leftrightarrow N_{\delta}^{1,3}+N_{\delta\ast \delta}^{3}+N_{\delta}^{\{1\}}
\end{align*}
Note that from (b), we can conclude that $Z$ is independent of $N_{\delta}^{1,3}$, we have $N_{\delta}^{1,3}$ and $N_{\delta}^{1,3}+N_{\delta\ast \delta}^{3}+N_{\delta}^{\{1\}}$ are independent. Define $N\rq{}\triangleq N_{\delta\ast \delta}^{3}+N_{\delta}^{\{1\}}$. We have:
\begin{align*}
&P(N_{\delta}^{\{1,3\}}+N\rq{}=0)\stackrel{(a)}{=} P(N_{\delta}^{\{1,3\}}+N\rq{}=0|N_{\delta}^{\{1,3\}}=0)=P(N\rq{}=0|N_{\delta}^{\{1,3\}}=0)\\
&\stackrel{(b)}{\Rightarrow} P(N\rq{}=0,N_{\delta}^{\{1,3\}}=0)= (1-\delta)(P(N\rq{}=0,N_{\delta}^{\{1,3\}}=0)+P(N\rq{}=1,N_{\delta}^{\{1,3\}}=1))\\
&\Rightarrow P(N\rq{}=0,N_{\delta}^{\{1,3\}}=0)=\frac{1-\delta}{\delta}P(N\rq{}=1,N_{\delta}^{\{1,3\}}=1)
\end{align*}
(a) holds since $N_{\delta}^{1,3}$ and $N_{\delta}^{1,3}+N_{\delta\ast \delta}^{3}+N_{\delta}^{\{1\}}$ are independent. In (b) we have replaced $P(N_{\delta}^{\{1,3\}}+N\rq{}=0)$ by $P(N\rq{}=0,N_{\delta}^{\{1,3\}}=0)+P(N\rq{}=1,N_{\delta}^{\{1,3\}}=1)$. 

Define $a\triangleq P(N\rq{}=1,N_{\delta}^{\{1,3\}}=1)$, then by the same calculations $P(N\rq{}=1,N_{\delta}^{\{1,3\}}=0)=(1-\delta)(1-\frac{1}{\delta}a)$, so $P(N\rq{}=1)=1-\delta+\frac{2\delta-1}{\delta}a$. Note $a=P(N\rq{}=1,N_{\delta}^{\{1,3\}}=1)\leq P(N_{\delta}^{\{1,3\}}=1)=\delta$, hence using $P(N'=1)=1-\delta+\frac{2\delta-1}{\delta}a$, we get $P(N\rq{}=1)\leq \delta$ with equality if and only if $a=\delta$. Also note that $Z+N\rq{}$ is available at decoder $\{1,3\}$ so $P(N'=1)=\delta$ and $a=\delta$, otherwise there is a contradiction with optimality of $h_{\{1,3\}}$. If $a=\delta$, then $N_{\delta}^{\{1,3\}}$ is equal to $N\rq{}$. So by the same arguments we have:
\begin{align*}
N_{\delta\ast \delta}^{\{3\}}=N_{\delta}^{\{1,3\}}+N_{\delta}^{\{1\}}= N_{\delta}^{\{2,3\}}+N_{\delta}^{\{2\}},
\end{align*}
where
\begin{align}
&N_{\delta}^{2}\triangleq Z+h_{\{2\}}(U_{\{2\}},U_{\{1,3\},\{2\}})\\
&N_{\delta}^{2,3}\triangleq Z+h_{\{2,3\}}(U_{\{2\}},U_{\{1,3\},\{2\}},U_{\{3\}},U_{\{2,3\},\{1\}},U_{\{1,3\},\{2,3\}})
\end{align}
Since $N_{\delta}^{\{1\}}\Perp N_{\delta}^{\{2\}}$, $N^{\{1\}}\Perp N^{\{1,3\}}$ and $N^{\{2\}}\Perp N^{\{2,3\}}$, we have: 
\begin{align*}
&N_{\delta}^{\{1,3\}}= N_{\delta}^{2}, N_{\delta}^{\{2, 3\}}= N_{\delta}^{\{1\}}, N_{\delta\ast\delta}^{\{3\}}= N_{\delta}^{1}+N_{\delta}^{2}
\end{align*}
We argue that $C_{\{1,3\},\{2\}}, C_{\{2,3\},\{1\}}$ and $C_{\{1,3\}\{2,3\}}$ can be taken eliminated without any loss in RD. To prove this assume we have a scheme with $P_{U_{\{1,3\},\{2\}}, U_{\{2,3\},\{1\}}, U_{\{1,3\}\{2,3\}}, U_{\{1\}}, U_{\{2\}}, U_{\{3\}}}$. Construct new random variables $\tilde{U}_{\{1\}}=X+N_\delta^{\{1\}}, U_{\{1\}}, U_{\{1\},\{2,3\}}$, $\tilde{U}_{\{2\}}=Z+N_\delta^{\{2\}}, U_{\{2\}}, U_{\{2\},\{1,3\}}$ and $\tilde{U}_3=U_{\{3\}}$ and eliminate the rest of the codebooks. From the independence relations above, the packing bounds would stay the same. Since we have merged codebooks, the covering bounds would loosen, and it is straightforward to see that the reconstructions at each decoder are still the same. 
 We are left with four codebooks, $C_{\{1\}},C_{\{2\}}$ and $C_{\{3\}}$. Note that since decoder $\{1\}$ is only decoding $C_{\{1\}}$ we must have $\rho_{\{1\},1}=r_{\{1\}}=R_1$. This is deduced from the packing bound in decoder $\{1\}$:
\begin{equation*}
H(U_{\{1\}})\leq H(U_{\{1\}})+\rho_{\{1\},1}-r_{\{1\}}\Rightarrow r_{\{1\}}\leq \rho_{\{1\},1}
\end{equation*}
But $\rho_{\{1\},1}\leq r_{\{1\}}$ so they are equal. The same argument gives $\rho_{\{2\},2}=r_{\{2\}}=R_2$, and $\rho_{\{3\},3}=r_{\{3\}}=R_3$. Also, from optimality at the joint decoders and lemma \ref{indeplem},  we have $U_i\Perp U_j, \forall i\neq j$. 
\begin{align}
&H(U_{\{1\}}, U_{\{2\}},U_{\{3\}}|X,Z)\geq H(U_{\{1\}}+H(U_{\{2\}}+H(U_{\{3\}}-R_1-R_2-R-3\nonumber\\
&\Rightarrow I(U_{\{1\}}, U_{\{2\}},U_{\{3\}};X,Z)+I(U_{\{3\}};X,Z,U_{\{1\}},U_{\{2\}})\leq R_1+R_2+R_3\nonumber\\
&\Rightarrow I(U_{\{3\}};X,Z,U_{\{1\}},U_{\{2\}})\leq R_3\label{r3}
\end{align}
Note that $R_1+R_3=I(U_{\{1\}},U_{\{3\}};X,Z)$ and $R_1=I(U_{\{1\}};X)$ from optimality at decoders $\{1\}$ and $\{1,3\}$. So $R_3=I(U_{\{3\}};X,Z|U_{\{1\}})$. Replacing $R_3$ into \eqref{r3}, we get $I(U_{\{3\}};U_{\{2\}}|U_{\{1\}},X,Z)=0$. So we have the Markov chain $U_{\{3\}}\leftrightarrow U_{\{1\}},X,Z\leftrightarrow U_{\{2\}}$. By the same arguments we can derive the Markov chain  $U_{\{3\}}\leftrightarrow U_{\{2\}},X,Z\leftrightarrow U_{\{1\}}$. Using lemma \ref{Markovindeplem} and the previous two Markov chains we get $U_{\{3\}}\leftrightarrow X,Z \leftrightarrow U_{\{1\}},U_{\{2\}}$. Take the Markov chain $U_{\{3\}}\leftrightarrow X,Z \leftrightarrow U_{\{1\}}$, along with $Z\Perp X, U_{\{1\}}$ we get $U_{\{3\}},Z\leftrightarrow X \leftrightarrow U_{\{1\}}$. Also from the optimality of the reconstruction of X at decoders $\{1\}$ and $\{1,3\}$, we have:
\begin{align*}
 I(U_{\{1\}};X)=I(U_{\{1\}},U_{\{3\}};X) \Rightarrow I(U_{\{3\}};X|U_{\{1\}})=0.
\end{align*}
From the above and $Z\Perp X, U_{\{1\}}$, we conclude $U_{\{3\}},Z \leftrightarrow U_{\{1\}}\leftrightarrow X$. Applying Lemma \ref{Markovindeplem} we get $Z, U_{\{3\}}\Perp X,U_{\{1\}}$. 
\begin{Lemma} Let A,B,C and D be RV\rq{}s such that $A\leftrightarrow B,C\leftrightarrow D$ and $A\leftrightarrow B,D\leftrightarrow C$, and also assume there is no $b\in \mathcal{B}$ for which given $B=b$ there are non-constant functions $f_b(C)$ and $g_b(D)$ with $f_b(C)=g_b(D)$ with probability 1. Then $A\leftrightarrow B\leftrightarrow C,D$.\label{Markovindeplem}
\end{Lemma}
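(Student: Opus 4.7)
The plan is to convert the two given Markov chains into pointwise equalities of conditional distributions, and then exploit the absence of a non-trivial Gács--Körner common function of $C$ and $D$ given $B=b$ to collapse these conditionals into something that depends only on $B$.

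First, I will unpack the hypotheses in conditional-distribution form. From $A\leftrightarrow (B,C)\leftrightarrow D$ we get $P(A=a\mid B=b,C=c,D=d)=P(A=a\mid B=b,C=c)$ whenever $(b,c,d)$ has positive probability, and from $A\leftrightarrow (B,D)\leftrightarrow C$ we similarly get $P(A=a\mid B=b,C=c,D=d)=P(A=a\mid B=b,D=d)$. Combining these, for every $a$, every $b$, and every $(c,d)$ with $P(B=b,C=c,D=d)>0$, we have the key identity $\phi_{a,b}(c)=\psi_{a,b}(d)$, where $\phi_{a,b}(c)\triangleq P(A=a\mid B=b,C=c)$ and $\psi_{a,b}(d)\triangleq P(A=a\mid B=b,D=d)$.

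Next, I will use the no-common-function hypothesis to argue that $\phi_{a,b}$ is constant on the support of $C$ conditioned on $B=b$. Fix $b$ in the support of $B$, and consider the bipartite graph $G_b$ whose left (resp.\ right) vertices are the conditional support of $C$ (resp.\ $D$) given $B=b$, with an edge between $c$ and $d$ iff $P(C=c,D=d\mid B=b)>0$. The identity above shows that $\phi_{a,b}$ and $\psi_{a,b}$ are constant on each connected component of $G_b$. If $G_b$ had two or more connected components, one could define $f_b(C)$ to be the index of the component containing $C$ and $g_b(D)$ to be the index of the component containing $D$; these are well-defined on the respective conditional supports, non-constant, and equal with probability one given $B=b$, contradicting the hypothesis. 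Hence $G_b$ is connected, so $\phi_{a,b}$ is constant on its domain, say equal to $\pi(a,b)$. This gives $P(A=a\mid B=b,C=c)=\pi(a,b)=P(A=a\mid B=b)$ for every $c$ in the conditional support.

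Finally, combining with the first Markov chain, $P(A=a\mid B=b,C=c,D=d)=P(A=a\mid B=b,C=c)=P(A=a\mid B=b)$ for all $(a,b,c,d)$ in the support, which is exactly the Markov chain $A\leftrightarrow B\leftrightarrow (C,D)$. The only subtle step is the graph-connectivity argument in the second paragraph; everything else is routine manipulation of conditional distributions. I expect the connectivity/common-function step to be the main obstacle only in so far as the hypothesis must be stated carefully enough (namely, ruling out all non-constant $f_b,g_b$, not merely deterministic functions of the joint variables) for the argument to apply uniformly across $b$.
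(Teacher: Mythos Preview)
Your proof is correct and follows essentially the same approach as the paper's: both arguments hinge on the bipartite support graph $G_b$ being connected (the paper phrases this as the existence of a finite path $(c_i,d_i)$ with alternately fixed coordinates between any two support points), and both derive connectivity from the absence of a non-trivial common function $f_b(C)=g_b(D)$. Your presentation makes the graph-connectivity/Gács--K\"orner structure more explicit, while the paper walks along the path and alternately invokes the two Markov chains, but the underlying mechanism is identical.
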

\begin{proof} This lemma is a generalization of the one in \cite{wagner}. We need to show that $p(A=a|B=b,C=c,D=d)=p(A=a|B=b,C=c\rq{},D=d\rq{})$ for any $a,b,c,c\rq{},d,d\rq{}$. Note since functions $f_b$ and $g_b$ do not exist, it is straightforward to show that there is a finite sequence of pairs $(c_i,d_i)$ such that $(c_1,D_{\{1\}})=(c,d)$ and $(c_n,d_n)=(c\rq{},d\rq{})$ with the property that either $c_{i}=c_{i+1}$ or $d_i=d_{i+1}$ and that $p(B=b,C=c_i,D=d_i) \neq 0$. Then from the first Markov chain if $d_i=d_{i+1}$, we have $p(A=a|B=b,C=c_i,D=d_i)=p(A=a|B=b,C=c_{i+1},D=d_{i+1})$, also if $c_i=c_{i+1}$ the second Markov chain gives this result. So  $p(A=a|B=b,C=c_i,D=d_i)$ is constant on all of the sequence particularly $p(A=a|B=b,C=c,D=d)=p(A=a|B=b,C=c\rq{},D=d\rq{})$. 
\end{proof}
Let $g(U_{\{1\}},U_{\{3\}})$ be the reconstruction of $Z$ at decoder $\{1,3\}$. We have:
\begin{align*}
&\sum_{z,u_{\{1\}},u_{\{3\}}}p(z,u_{\{1\}},u_{\{3\}})d_H(g(u_{\{1\}},u_{\{3\}}),z)\leq \delta\Rightarrow \sum_{u_{\{1\}}} p(u_{\{1\}})\sum_{z,u_{\{3\}}}p(z,u_{\{3\}})d_H(g(u_{\{1\}},u_{\{3\}}),z)\leq \delta\\
\end{align*}
So there is at least one $u_{\{1\}}\in \mathsf{U}_{\{1\}}$ such that $\sum_{z,u_{\{3\}}}p(z,u_{\{3\}})d_H(g(u_{\{1\}},u_{\{3\}},z)\leq \delta$. Let $g_{u_{\{1\}}}(U_{\{3\}})=g(u_{\{1\}},U_{\{3\}})$ be the reconstruction of Z using $U_{\{3\}}$. By the same argument we can find a reconstruction of X using $U_{\{3\}}$, then $I(U_{\{3\}};X,Z)\geq 2(1-h_b(\delta))$ from a PtP perspective which is a contradiction. 
\end{proof}

\subsection{Proof of Lemma \ref{theorem9}}\label{Ap:theorem9}
\begin{proof}

We provide an outline of the proof here, the arguments are similar to the ones in the previous proofs. 
\\\textbf{Step 1:} I
Any codebook which is not decoded at decoders $\{1\}$, $\{1,2\}$, $\{2,3\}$, $\{3,4\}$ and $\{4\}$ is redundant. This implies that there are at most only  17 codebooks which are non-redundant. These codebooks are $\mathcal{C}_{\{1\}}$ , $\mathcal{C}_{\{1\}, \{2,3\}}$, $\mathcal{C}_{\{1\}, \{3,4\}}$, $\mathcal{C}_{\{1\}, \{4\}}$ ,$\mathcal{C}_{\{1\}, \{2,3\}, \{3,4\}}$, $\mathcal{C}_{\{1\}, \{4\}, \{2,3\}}, \mathcal{C}_{\{4\}}$, $\mathcal{C}_{\{4\}, \{2,3\}}, \mathcal{C}_{\{4\}, \{1,2\}}$, $\mathcal{C}_{\{4\}, \{2,3\}, \{1,2\}}$,$ \mathcal{C}_{\{1,2\}}$,$ \mathcal{C}_{\{2,3\}}$, $\mathcal{C}_{\{3,4\}}$,  $\mathcal{C}_{\{1,2\}, \{2,3\}}$ ,$\mathcal{C}_{\{1,2\}, \{3,4\}}$, $\mathcal{C}_{\{2,3\},\{3,4\}}$ and  $\mathcal{C}_{\{1,2\}, \{2,3\}, \{3,4\}} $.
 \\\textbf{Step 2:} In this step we prove that the only non-trivial codebook decoded at decoder $\{i\}$ is $\mathcal{C}_{\{i\}}$ for $i=1,4$. 
  All possible codebooks decoded at decoder $\{1\}$ are $\mathcal{C}_{\{1\}}$, $\mathcal{C}_{\{1\},\{2,3\}}$, $\mathcal{C}_{\{1\},\{3,4\}}$,$\mathcal{C}_{\{1\},\{4\}}$, $\mathcal{C}_{\{1\},\{2,3\},\{3,4\}}$ and $\mathcal{C}_{\{1\}, \{2,3\},\{4\}}$. From optimality at decoder $\{1,2\}$, $C_{\{1\},\{2,3\}}$ is redundant. The reason is $\rho_{\{1\},\{2,3\},2}=0$ otherwise we can set it to zero without any loss in distortion at decoder $\{1,2\}$ which contradicts optimality, also any random variable that description $\{3\}$ carries must be used in reconstructing $Z$ at decoder $\{3,4\}$ because that decoder is at optimality, which means $\rho_{\{1\},\{2,3\},3}=0$ so the codebook is decoded at decoder $\{2,3\}$ but not sent through either description $\{2\}$ or $\{3\}$, from similar arguments as before the codebook is redundant. Same arguments can be provided to deduce redundancy of $\mathcal{C}_{\{1\},\{3,4\}}$, $\mathcal{C}_{\{2,3\}}$, $\mathcal{C}_{\{1\},\{2,3\},\{3,4\}}$ and $\mathcal{C}_{\{1\}, \{2,3\},\{4\}}$. This implies that only $\mathcal{C}_{\{1\}}$ is decoded at decoder $\{1\}$ and $\mathcal{C}_{\{4\}}$ at decoder $\{4\}$. 
\\\textbf{Step 3:}  We proceed with eliminating $\mathcal{C}_{\{1,2\},\{3,4\}}$ and $\mathcal{C}_{\{1,2\}, \{2,3\}, \{3,4\}}$. Using the PtP optimality of decoder $\{1,2\}$ we have:
\begin{align*}
& I(U_{\{1\}}, U_{\{1,2\}}, U_{\{1,2\},\{2,3\}},U_{\{1,2\},\{3,4\}}, U_{\{1,2\},\{2,3\},\{3,4\}};X)=\\&R_1+R_2\stackrel{(a)}{\geq} I(U_{\{1\}}, U_{\{1,2\}}, U_{\{1,2\},\{2,3\}},U_{\{1,2\},\{3,4\}}, U_{\{1,2\},\{2,3\},\{3,4\}}; X,Z)
\end{align*}
where (a) follows from the usual PtP source coding results. Comparing
the LHS with the RHS we conclude the Markov chain $U_{\{1\}},
U_{\{1,2\}}, U_{\{1,2\},\{2,3\}},U_{\{1,2\},\{3,4\}},
U_{\{1,2\},\{2,3\},\{3,4\}} \leftrightarrow X\leftrightarrow Z$. In
particular we are interested in  $U_{\{1,2\},\{3,4\}},
U_{\{1,2\},\{2,3\},\{3,4\}} \leftrightarrow X\leftrightarrow Z$. By
the same arguments and using the optimality at decoder $\{3,4\}$, we
get  $U_{\{1,2\},\{3,4\}}, U_{\{1,2\},\{2,3\},\{3,4\}} \leftrightarrow
Z\leftrightarrow X$. These two Markov chains along with lemma
\ref{Markovindeplem} prove  $U_{\{1,2\},\{3,4\}},
U_{\{1,2\},\{2,3\},\{3,4\}} \Perp X,Z$. So these two variables are not
used in reconstructing the source and the corresponding codebooks are
eliminated. 
\\\textbf{Step 4:}  The only remaining codebooks are $\mathcal{C}_{\{1\}}$, $\mathcal{C}_{\{4\}}$, $\mathcal{C}_{\{1,2\}}$, $\mathcal{C}_{\{3, 4\}}$, $\mathcal{C}_{\{1,2\}, \{2,3\}}$ and $\mathcal{C}_{\{2,3\}, \{3,4\}}$.  From optimality at decoders $\{1,2\}$ and $\{3,4\}$ we must have $U_{\{1\}}, U_{\{1,2\}}, U_{\{1,2\}, \{2, 3\}}\leftrightarrow (X,Z) \leftrightarrow  U_{\{4\}}, U_{\{3, 4\}}, U_{\{2,3\}, \{3, 4\}}$, also $U_{\{1\}}, U_{\{1,2\}}, U_{\{1,2\}, \{2, 3\}}\leftrightarrow X \leftrightarrow  Z$ and $X\leftrightarrow Z \leftrightarrow  U_{\{4\}}, U_{\{3, 4\}}, U_{\{2,3\}, \{3, 4\}}$. From lemma \ref{MClemma}, we get $U_{\{1\}}, U_{\{1,2\}}, U_{\{1,2\}, \{2, 3\}}\leftrightarrow X\leftrightarrow Z \leftrightarrow  U_{\{4\}}, U_{\{3, 4\}}, U_{\{2,3\}, \{3, 4\}}$. 
\begin{Lemma} For random variables A,B,C,D, the three short Markov chains $A\leftrightarrow (B,C)\leftrightarrow D$, $A\leftrightarrow B \leftrightarrow C$ and $B\leftrightarrow C \leftrightarrow D$ are equivalent to the long Markov chain $A\leftrightarrow B \leftrightarrow C \leftrightarrow D$. 
\label{MClemma}
\end{Lemma}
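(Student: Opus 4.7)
The plan is to establish the equivalence by proving two directions separately, working directly with conditional factorizations of the joint PMF rather than using information-theoretic identities, since the claim is purely a statement about conditional independence. Throughout, I will write $p(a,b,c,d)$ for the joint distribution and freely condition on events of positive probability.

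First, for the forward direction, assume the long chain $A\leftrightarrow B\leftrightarrow C\leftrightarrow D$, i.e. $p(a,b,c,d)=p(a)p(b|a)p(c|b)p(d|c)$. From this factorization, $A\leftrightarrow B\leftrightarrow C$ is immediate since $p(c|a,b)=p(c|b)$. Similarly $B\leftrightarrow C\leftrightarrow D$ follows because $p(d|b,c)=p(d|c)$. For the chain $A\leftrightarrow (B,C)\leftrightarrow D$, one computes $p(d|a,b,c)=p(d|c)$ directly from the factorization, which does not depend on $a$, hence $A\perp D\mid (B,C)$. So all three short chains are implied.

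For the converse, assume the three short chains hold. The key observation is that $A\leftrightarrow (B,C)\leftrightarrow D$ gives $p(d|a,b,c)=p(d|b,c)$, and $B\leftrightarrow C\leftrightarrow D$ gives $p(d|b,c)=p(d|c)$. Chaining these two yields $p(d|a,b,c)=p(d|c)$. On the other hand, $A\leftrightarrow B\leftrightarrow C$ gives $p(c|a,b)=p(c|b)$. Combining these, write
\begin{align*}
p(a,b,c,d)&=p(a,b)\,p(c|a,b)\,p(d|a,b,c)\\
&=p(a,b)\,p(c|b)\,p(d|c)\\
&=p(a)\,p(b|a)\,p(c|b)\,p(d|c),
\end{align*}
which is precisely the factorization corresponding to the long Markov chain $A\leftrightarrow B\leftrightarrow C\leftrightarrow D$.

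There is no serious obstacle here; the only point requiring mild care is to ensure that each conditional distribution used above is well-defined (i.e., the conditioning event has positive probability). This can be handled by restricting attention to the support of $(A,B,C,D)$, after which the algebraic manipulation above is entirely elementary. Hence the three short Markov chains are equivalent to the single long Markov chain, completing the proof.
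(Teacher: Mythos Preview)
Your proof is correct and follows essentially the same route as the paper: both directions rely on chaining $p(d\mid a,b,c)=p(d\mid b,c)=p(d\mid c)$ via the first two short chains and $p(c\mid a,b)=p(c\mid b)$ via the third. The paper only writes out the nontrivial (converse) direction by computing $P(D\mid A,B)=P(D\mid B)$ through a marginalization over $C$, whereas you go straight to the full factorization $p(a,b,c,d)=p(a)p(b\mid a)p(c\mid b)p(d\mid c)$; the difference is purely organizational.
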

\begin{proof}
We only need to show that $A\leftrightarrow B \leftrightarrow D$, the rest of the implications of the long Markov chain are either direct results of the three short Markov chains or follow by symmetry. For arbitrary $a,b,d$ we have:
\begin{align*}
P(D=d|B=b,A=a)&\!=\!\sum_{c\in{\mathcal{C}}}\!\!P(C=c|B=b,A=a)P(D=d|A=a,B=b,C=c)\\&\!=\sum_{c\in{\mathcal{C}}}P(C=c|B=b)P(D=d|B=b,C=c)=P(D=d|B=b)
\end{align*}
\end{proof}
We get an inner bound for $R_2+R_3$ at decoder $\{2,3\}$:
\begin{equation*}
R_2+R_3\geq \min  I(U,V;X,Z)= H(X,Z)=1+h_b(p),
\end{equation*}
where the minimum is taken over all $P_{U,V|X,Z}$ for which the long Markov chain $U\leftrightarrow X\leftrightarrow Z \leftrightarrow V$ is satisfied and $(U,V)$ produce a lossless reconstruction of $X+Z$. This resembles the distributed source coding problem in \cite{KM}. So the RD vector can\rq{}t be achieved using random codes.
\end{proof}
\subsection{Proof of Lemma \ref{thm: same_code}}\label{Ap:thm: same_code}

\begin{proof}
In this proof we use bold letters to denote vectors and matrices.
Fix integers $n, l, l'$ and $k$. Choose the elements of the matrices $\mathbf{\Delta G}_{l\times n}$, $\mathbf{\Delta G}'_{k'\times n}$  and $\mathbf{G}_{k\times n}$ and vectors $\mathbf{B}^n$ and  $\mathbf{B}'^n$ randomly and uniformly from $\mathbb{F}_q$. The codebooks $\mathcal{C}^n_o$ and $\mathcal{C}'^n_o$ are defined as follows:
\begin{align*}
&\mathcal{C}_o=\{\mathbf{a}\mathbf{G}+\mathbf{m}\mathbf{\Delta G}+\mathbf{B}|\mathbf{a}\in \mathbb{F}_q^k, \mathbf{m}\in \mathbb{F}^l_q\}\\&
\mathcal{C}'_o=\{\mathbf{ b}\mathbf{G}+\mathbf{m}'\mathbf{\Delta G}'+\mathbf{B}'|\mathbf{b}\in \mathbb{F}_q^{k}, \mathbf{m}'\in \mathbb{F}^{l'}_q\}
\end{align*}
For a typical sequence $\mathbf{x}$ with respect to $P_X$, we define $\theta(\mathbf{x})$ as the function which counts the number of codewords in $\mathcal{C}_o$ and $\mathcal{C}'_o$ jointly typical with respect to $P_{XUV}$:
\begin{align*}
 \theta (\mathbf{x})&= \sum_{\mathbf{u} \in \mathcal{C}'_o,\mathbf{v}\in \mathcal{C}_o } \mathbb{I}\{(\mathbf{u},\mathbf{v})\in A_\epsilon^n(U,V|\mathbf{x})\}\\
 & = \sum_{\mathbf{m},\mathbf{m'}}\sum_{\mathbf{a},\mathbf{b}  \in \mathbb{F}_q^k}\sum_{(\mathbf{u},\mathbf{v})\in A_\epsilon^n(U,V|\mathbf{x})} \mathbb{I}\{\mathbf{a}\mathbf{G}+\mathbf{m}\mathbf{\Delta G}+\mathbf{B}=\mathbf{u},\mathbf{ b}\mathbf{G}+\mathbf{m}'\mathbf{\Delta G}'+\mathbf{B}'=\mathbf{v}\}
\end{align*}

Our goal is to find bounds on $n, l, l'$ and $k$ such that $P(\theta(\mathbf{x})=0)\to 0$ as $n\to\infty$.

 For $\mathbf{a}\in \mathbb{F}_q^k$ and $\mathbf{m} \in \mathbb{F}_q^l$, we denote the corresponding codeword as $g(\mathbf{a},\mathbf{m}):= \mathbf{a}\mathbf{G}+\mathbf{m} \mathbf{\Delta G} +\mathbf{B}$. Similarly define $g'(\mathbf{b},\mathbf{m}'):= \mathbf{b}\mathbf{G}+\mathbf{m}' \mathbf{\Delta G'} +\mathbf{B}'$ for any $\mathbf{b}\in \mathbb{F}_q^{k}$ and $\mathbf{m}' \in \mathbb{F}_q^{l'}$. The following lemma proves several results on the pairwise independence of the codewords.

\begin{Lemma}\label{lem: indep} The following hold:
\begin{enumerate}
\item $g(\mathbf{a},\mathbf{m})$ and $g'(\mathbf{b},\mathbf{m}')$  are distributed uniformly uniform over $\mathbb{F}_q^n$.
\item If $\mathbf{a}\neq \mathbf{\tilde{a}}$, then $g(\mathbf{a},\mathbf{m})$ is independent of $g(\mathbf{\tilde{a}},\mathbf{m})$.
\item If $\mathbf{b}\neq \mathbf{\tilde{b}}$, then $g'(\mathbf{b},\mathbf{m}')$ is independent of $g'(\mathbf{\tilde{b}},\mathbf{m}')$.
\item If $\mathbf{B}$ and $\mathbf{B}\rq{}$ are chosen independently and uniformly over $\mathbb{F}_q^n$, then $g(\mathbf{b},\mathbf{m}')$ and $g'(\mathbf{a},\mathbf{m})$ are independent.
\end{enumerate}
\end{Lemma}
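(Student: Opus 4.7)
The plan is to prove each of the four assertions by directly computing the joint probability of the codewords in question, relying on two facts: (i) the dithers $\mathbf{B}$ and $\mathbf{B}'$ are independently and uniformly distributed on $\mathbb{F}_q^n$ and independent of every generator matrix, and (ii) for any nonzero row vector $\mathbf{c}\in \mathbb{F}_q^k$, the product $\mathbf{c}\mathbf{G}$ is uniformly distributed on $\mathbb{F}_q^n$ because the entries of $\mathbf{G}$ are i.i.d.\ uniform on $\mathbb{F}_q$.

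For part 1, I would condition on $(\mathbf{G},\mathbf{\Delta G})$. The resulting conditional codeword $g(\mathbf{a},\mathbf{m})=\mathbf{a}\mathbf{G}+\mathbf{m}\mathbf{\Delta G}+\mathbf{B}$ is a deterministic vector shifted by the uniform $\mathbf{B}$, hence conditionally uniform. Averaging over $(\mathbf{G},\mathbf{\Delta G})$ preserves uniformity, and the same argument applied with $(\mathbf{G},\mathbf{\Delta G}',\mathbf{B}')$ handles $g'(\mathbf{b},\mathbf{m}')$.

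For parts 2 and 3, I will evaluate $P\bigl(g(\mathbf{a},\mathbf{m})=\mathbf{u},\;g(\tilde{\mathbf{a}},\mathbf{m})=\mathbf{v}\bigr)$ by subtracting the two linear equations to obtain the equivalent system
\[
\mathbf{a}\mathbf{G}+\mathbf{m}\mathbf{\Delta G}+\mathbf{B}=\mathbf{u},\qquad (\tilde{\mathbf{a}}-\mathbf{a})\mathbf{G}=\mathbf{v}-\mathbf{u}.
\]
Since $\mathbf{B}$ is independent of $(\mathbf{G},\mathbf{\Delta G})$ and uniform, the first equation contributes a factor $q^{-n}$ regardless of the value of $(\mathbf{G},\mathbf{\Delta G})$. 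The remaining factor is $P\bigl((\tilde{\mathbf{a}}-\mathbf{a})\mathbf{G}=\mathbf{v}-\mathbf{u}\bigr)$, and since $\tilde{\mathbf{a}}\neq\mathbf{a}$ the vector $(\tilde{\mathbf{a}}-\mathbf{a})\mathbf{G}$ is uniform on $\mathbb{F}_q^n$, contributing another $q^{-n}$. The product $q^{-2n}$ matches the product of the marginals from part 1, establishing independence. Part 3 is identical after replacing $\mathbf{\Delta G}$ by $\mathbf{\Delta G}'$ and $\mathbf{B}$ by $\mathbf{B}'$.

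For part 4, I would condition on the triple $(\mathbf{G},\mathbf{\Delta G},\mathbf{\Delta G}')$. Under this conditioning, $g(\mathbf{a},\mathbf{m})$ is a fixed affine function of $\mathbf{B}$ only, and $g'(\mathbf{b},\mathbf{m}')$ is a fixed affine function of $\mathbf{B}'$ only. Since $\mathbf{B}$ and $\mathbf{B}'$ are independent and each uniform on $\mathbb{F}_q^n$, the two codewords are conditionally independent and each conditionally uniform, which upgrades to unconditional independence after averaging over the generator matrices. The only subtlety anywhere in the argument is being careful that the shared generator matrix $\mathbf{G}$ does not couple the two codewords in parts 2 and 4; this is handled precisely by the conditioning/absorption step in which the uniformity of $\mathbf{B}$ (and separately $\mathbf{B}'$) marginalizes away the influence of the common $\mathbf{G}$.
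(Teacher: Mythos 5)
Your proposal is correct. The paper itself does not spell out an argument: its proof of this lemma is a one-line appeal to the coset-code pairwise-independence results of the interference-channel reference it cites, together with the remark that $\mathbf{B},\mathbf{B}'$ are independent and uniform. What you have written is precisely the standard computation that citation stands for, so the content is the same but your route is self-contained and more explicit. The two steps that carry the whole argument are exactly the ones you isolate: (i) conditioned on the generator matrices, each codeword is a deterministic vector plus its own uniform dither, which gives the uniform marginals in part 1 and, since $\mathbf{B}$ and $\mathbf{B}'$ are independent of each other and of the matrices, gives the constant conditional joint pmf $q^{-2n}$ in part 4 (conditional independence with conditionally constant probabilities does upgrade to unconditional independence, as you claim); and (ii) for parts 2 and 3, after absorbing one equation into the dither at cost $q^{-n}$ uniformly in $(\mathbf{G},\mathbf{\Delta G})$, the residual event $(\tilde{\mathbf{a}}-\mathbf{a})\mathbf{G}=\mathbf{v}-\mathbf{u}$ involves a nonzero combination of the i.i.d.\ uniform rows of $\mathbf{G}$ and hence has probability $q^{-n}$, so the joint probability $q^{-2n}$ matches the product of marginals. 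Your closing remark about the shared matrix $\mathbf{G}$ is also the right thing to emphasize: it is exactly why part 4 needs the independent dithers (the construction in the paper chooses $\mathbf{B}$ and $\mathbf{B}'$ independently), whereas parts 2 and 3 instead rely on the difference of the two message vectors being nonzero. The only benefit the paper's version buys is brevity; yours documents the argument that a reader would otherwise have to reconstruct from the cited work.
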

\begin{proof}
Follows from \cite{IC},  and the fact that $\mathbf{B}, \mathbf{B}'$ are independent and uniform.
\end{proof}
We intend to use Chebyshev's inequality to obtain:
\begin{align*}
P\{ \theta(\mathbf{X})=0 \} &\leq \frac{4var\{ \theta(\mathbf{X})\}}{\mathsf{E}\{\theta(\mathbf{X}) \}^2}\to 0\\
\end{align*}

%
%
%

\begin{Lemma}
\label{lem: var_exp}
For $\mathbf{X}\in A_\epsilon^{(n)}(X)$ we have the following bound on $\frac{var\{ \theta(\mathbf{X})\}}{\mathsf{E}\{\theta(\mathbf{X}) \}^2}$:
\begin{align*}
\frac{var\{ \theta(X)\}}{\mathsf{E}\{\theta(X) \}^2}\leq& \frac{q^{2n}}{q^{l+l'}q^{2k}} 2^{-n (H(U,V|X)} + \frac{q^{n}}{q^{l+l'}q^{k}} 2^{-n(H(U|X))} + \frac{q^{n}}{q^{l+l'}q^{k}} 2^{-n(H(V|X))}+ \frac{q^{n}}{q^{l+l'}q^{k}} 2^{-n(H(U,V|X)-\max_{i\neq 0}{H(U,V|X, V+iU)})}\\
& + \frac{q^{n}}{q^{l}q^{k}} 2^{-n(H(U|X))}
 + \frac{q^{n}}{q^{l'}q^{k}} 2^{-n(H(V|X))} + \frac{1}{q^{l}}+\frac{1}{q^{l'}}+\frac{1}{q^{l+l'}}
 + \frac{1}{q^{l+k}}+\frac{1}{q^{l'+k}}
\end{align*}
\end{Lemma}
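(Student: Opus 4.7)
The strategy is to compute $\mathsf{E}\{\theta(\mathbf{X})\}$ exactly (to leading exponential order), bound $\mathsf{E}\{\theta(\mathbf{X})^2\}$ by partitioning the double sum over tuple pairs according to which algebraic coincidences occur among the four codewords involved, and then form the ratio. For the first moment, fix $\mathbf{x}\in A_\epsilon^n(X)$; parts~(1) and~(4) of Lemma~\ref{lem: indep} yield that each pair $(g(\mathbf{a},\mathbf{m}),g'(\mathbf{b},\mathbf{m}'))$ is uniform on $\mathbb{F}_q^{2n}$, so
\begin{equation*}
\mathsf{E}\{\theta(\mathbf{x})\}=q^{2k+l+l'}\cdot q^{-2n}\cdot |A_\epsilon^n(U,V|\mathbf{x})|\doteq q^{2k+l+l'-2n}\cdot 2^{nH(U,V|X)},
\end{equation*}
whose square provides the normalization against which every term of the claimed bound will be compared.

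For the second moment I would expand
\begin{equation*}
\mathsf{E}\{\theta(\mathbf{x})^2\}=\sum_{(\mathbf{a}_j,\mathbf{m}_j,\mathbf{b}_j,\mathbf{m}'_j)_{j=1,2}}\ \sum_{(\mathbf{u}_j,\mathbf{v}_j)\in A_\epsilon^n(U,V|\mathbf{x})}\Pr\Bigl[\bigcap_{j=1,2}\{g(\mathbf{a}_j,\mathbf{m}_j)=\mathbf{u}_j,\,g'(\mathbf{b}_j,\mathbf{m}'_j)=\mathbf{v}_j\}\Bigr],
\end{equation*}
set $\alpha=\mathbf{a}_1-\mathbf{a}_2$, $\beta=\mathbf{b}_1-\mathbf{b}_2$, $\mu=\mathbf{m}_1-\mathbf{m}_2$, $\mu'=\mathbf{m}'_1-\mathbf{m}'_2$, and observe that after conditioning on $(\mathbf{u}_1,\mathbf{v}_1)$ the dithers $\mathbf{B},\mathbf{B}'$ are eliminated, and the joint law of $(\mathbf{u}_2,\mathbf{v}_2)$ is governed entirely by the distribution of the two random vectors $\alpha\mathbf{G}+\mu\mathbf{\Delta G}$ and $\beta\mathbf{G}+\mu'\mathbf{\Delta G}'$. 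I would stratify the outer sum into: (S0) both tuples identical, contributing $\mathsf{E}\{\theta\}$ and yielding the leading $\frac{q^{2n}}{q^{l+l'+2k}}2^{-nH(U,V|X)}$ term; (S1)--(S2) $\alpha=\mu=0$ (forcing $\mathbf{u}_1=\mathbf{u}_2$) and the symmetric case $\beta=\mu'=0$ (forcing $\mathbf{v}_1=\mathbf{v}_2$), which yield the $\frac{q^n}{q^{l+k}}2^{-nH(U|X)}$ and $\frac{q^n}{q^{l'+k}}2^{-nH(V|X)}$ terms after counting typical quadruples with one coordinate repeated; (S3) the inner-code collision stratum where $\alpha\neq 0$, $\beta=i\alpha$ for some $i\in\mathbb{F}_q\setminus\{0\}$, and $\mu=\mu'=0$, yielding the $\frac{q^n}{q^{l+l'+k}}2^{-n(H(U,V|X)-\max_i H(U,V|X,V+iU))}$ term (with the two companion $H(U|X)$ and $H(V|X)$ variants obtained by coarser bounds on the conditional typical set); and (S4) all remaining coincidence patterns, where the four codewords are jointly uniform on $\mathbb{F}_q^{4n}$ by pairwise independence and reconstruct $\mathsf{E}\{\theta\}^2$ up to the lower-order residuals $q^{-l},q^{-l'},q^{-l-l'},q^{-l-k},q^{-l'-k}$ appearing at the tail of the bound. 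In each stratum I count tuple pairs, count typical quadruples consistent with the imposed equalities, multiply by the per-configuration probability ($q^{-4n},q^{-3n}$, or $q^{-2n}$ depending on the rank of the forcing), and divide by $\mathsf{E}\{\theta\}^2$.

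The main obstacle is stratum~(S3). When $\beta=i\alpha$ the random vectors $\alpha\mathbf{G}$ and $\beta\mathbf{G}$ are proportional rather than jointly uniform, so conditioning on $(U_1,V_1,U_2)$ pins $V_2$ to the affine set $V_1+i(U_2-U_1)+(i\mu\mathbf{\Delta G}-\mu'\mathbf{\Delta G}')$; when further $\mu=\mu'=0$ the additive shift vanishes and the joint event forces $V_2+iU_2=V_1+iU_1$ exactly, so the per-configuration probability rises from $q^{-4n}$ to $q^{-3n}$ and the number of typical quadruples compatible with this linear constraint is controlled by $2^{n(H(U,V|X)+H(U,V|X,V+iU))}$ via conditional joint typicality. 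A union bound over $i\in\mathbb{F}_q\setminus\{0\}$ produces the $\max_{i\neq 0}H(U,V|X,V+iU)$ inside the exponent, which is precisely what ultimately drives the additional covering constraint~\eqref{neweq}. Verifying these counting estimates and cleanly separating the subcases $\mu=\mu'=0$ from $\mu\neq 0$ or $\mu'\neq 0$ (where an extra uniform shift randomizes the constraint and returns the contribution to $\mathsf{E}\{\theta\}^2$) is the most delicate part; the remaining strata reduce to routine pairwise-independence computations using Lemma~\ref{lem: indep}.
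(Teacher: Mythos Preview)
Your proposal is correct and follows essentially the same route as the paper: compute the first moment via uniformity, expand the second moment, and stratify the double sum according to the differences $(\mathbf{a}_1-\mathbf{a}_2,\mathbf{m}_1-\mathbf{m}_2,\mathbf{b}_1-\mathbf{b}_2,\mathbf{m}'_1-\mathbf{m}'_2)$, with the decisive stratum being exactly your (S3), where $\mathbf{m}_1=\mathbf{m}_2$, $\mathbf{m}'_1=\mathbf{m}'_2$ and $\mathbf{b}_1-\mathbf{b}_2=i(\mathbf{a}_1-\mathbf{a}_2)$ force the linear constraint $V_2+iU_2=V_1+iU_1$ and produce the $\max_{i\neq 0}H(U,V|X,V+iU)$ term. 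One small caution: in (S4) the joint uniformity of the four codewords is not a consequence of pairwise independence alone but of the full rank of the linear map $(\alpha,\mu,\beta,\mu')\mapsto(\alpha\mathbf{G}+\mu\mathbf{\Delta G},\,\beta\mathbf{G}+\mu'\mathbf{\Delta G}')$ in those cases---this is exactly what the paper uses, so just phrase it accordingly.
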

\begin{proof}
We calculate the expected value of $\theta(\mathbf{X})$ for any $\mathbf{X}\in A_\epsilon^{(n)}(X)$:
\begin{align}\nonumber
\mathsf{E}\{ \theta(\mathbf{X})\}&= \sum _{\mathbf{x} \in A_\epsilon^n(X)} \sum_{\substack{\mathbf{m}\in\mathbb{F}_q^l \\ \mathbf{m}' \in \mathbb{F}_q^{l'}}} \sum_{\mathbf{a} \neq \mathbf{b}} \sum_{(\mathbf{u},\mathbf{v})\in A_\epsilon^n(U,V|\mathbf{x})} P(\mathbf{x})\quad P\{g(\mathbf{a},\mathbf{m})= \mathbf{u}, g'(\mathbf{b},\mathbf{m}')=\mathbf{v}\}\\\nonumber
\label{equ:exp}
&= \sum_{\mathbf{m},\mathbf{m}'} \sum _{\mathbf{x} \in A(X)}
\sum_{\mathbf{a} \neq \mathbf{b}} | A_\epsilon^n(U,V|\mathbf{x})|
P(\mathbf{x}) \frac{1}{q^{2n}}
= \frac{q^{l+l'} q^{2k}}{q^{2n}} 2^{n (H(U,V|X)+O(\epsilon))}
\end{align}
Also:
\begin{align}\nonumber
&\mathsf{E}\{ \theta(\mathbf{X})^2\}= \sum_{\substack{\mathbf{m}, \mathbf{\tilde{m}}\\ \mathbf{m}',\mathbf{\tilde{m}}' }} \sum_{\mathbf{a},\mathbf{ \tilde{a}}} \sum_{\mathbf{b} , \mathbf{\tilde{b}}} \sum_{(\mathbf{u} ,\mathbf{v})}  \sum_{(\mathbf{\tilde{u}},\mathbf{\tilde{v})}\in A_\epsilon^n(U,V|\mathbf{x})}
    P\{g(\mathbf{a},\mathbf{m})=\mathbf{u}, g(\mathbf{\tilde{a}},\mathbf{\tilde{m}})=\mathbf{\tilde{u}}, g'(\mathbf{b},\mathbf{m}')=\mathbf{v}, g'(\mathbf{\tilde{b}},\mathbf{\tilde{m}}')=\mathbf{\tilde{v}}\}\\ 
\end{align}
Using Lemma \ref{lem: indep}:  
\begin{align*}
&P_S \triangleq  P\{g(\mathbf{a},\mathbf{m})=\mathbf{u}, g(\mathbf{\tilde{a}},\mathbf{\tilde{m}})=\tilde{u}, g(\mathbf{b},\mathbf{m}')=\mathbf{v}, g(\mathbf{\tilde{b}},\mathbf{\tilde{m'}})=\mathbf{\tilde{v}}\}\\
&= \frac{1}{q^{2n}}\times P\{g_0(\mathbf{a}-\mathbf{\tilde{a}},\mathbf{m}-\mathbf{\tilde{m}})=\mathbf{u}-\mathbf{\tilde{u}},g'_0(\mathbf{b}-\mathbf{\tilde{b}},\mathbf{m}'-\mathbf{\tilde{m}}')=\mathbf{v}-\mathbf{\tilde{v}}\}
\end{align*}
 At this point we have to consider several different cases for the values of $\mathbf{a},\mathbf{\tilde{a}},\mathbf{ b}, \mathbf{\tilde{b}},\mathbf{ m}, \mathbf{\tilde{m}},\mathbf{m}',\mathbf{ \tilde{m}}'$.

1)\qquad $\mathbf{m} =\mathbf{\tilde{m}},\mathbf{m'} = \mathbf{\tilde{m}'}$

1.1: $\mathbf{a} =\mathbf{ \tilde{a}},\mathbf{ b} =\mathbf{\tilde{b}} \Rightarrow P_s=\frac{1}{q^{2n}}\delta(\mathbf{u}-\mathbf{\tilde{u}})\delta(\mathbf{v}-\mathbf{\tilde{v}})$ 

1.2: $\mathbf{a} =\mathbf{ \tilde{a}},\mathbf{ b} \neq \mathbf{\tilde{b}}$ $\Rightarrow P_s=\frac{1}{q^{3n}}\delta(\mathbf{u}-\mathbf{\tilde{u}})$

1.3: $\mathbf{a} \neq \mathbf{\tilde{a}}, \mathbf{b} =\mathbf{\tilde{b}}$ $\Rightarrow P_s=\frac{1}{q^{3n}}\delta(\mathbf{v}-\mathbf{\tilde{v}})$

1.4: $\mathbf{a} \neq \mathbf{\tilde{a}},\mathbf{ b }\neq \mathbf{\tilde{b}}\Rightarrow P_s=\sum_{\alpha\in \mathbb{F}_q}\frac{1}{q^{3n}}\delta(\mathbf{u}-\mathbf{\tilde{u}}-\alpha(\mathbf{v} -\mathbf{\tilde{v}}))+\frac{1}{q^{4n}}(1-\sum_{\alpha\in \mathbb{F}_q}\delta(\mathbf{u}-\mathbf{\tilde{u}}-\alpha(\mathbf{v} -\mathbf{\tilde{v}})))$

\vspace{0.1in}

 2)\qquad$\mathbf{m} \neq \mathbf{\tilde{m}},\mathbf{m'} = \mathbf{\tilde{m}}'$

2.1: $\mathbf{a} = \mathbf{\tilde{a}},\mathbf{ b} =\mathbf{\tilde{b}}$ $\Rightarrow P_s=\frac{1}{q^{3n}}\delta( \mathbf{v}-\mathbf{\tilde{v}})$

2.2: $\mathbf{a} = \mathbf{\tilde{a}}, \mathbf{b} \neq \mathbf{\tilde{b}}$ $\Rightarrow P_s=\frac{1}{q^{4n}}$

2.3: $\mathbf{a} \neq \mathbf{\tilde{a}},\mathbf{ b}=\mathbf{\tilde{b}}$ $\Rightarrow P_s=\frac{1}{q^{3n}}\delta(\mathbf{v}-\mathbf{\tilde{v}})$

2.4: $\mathbf{a} \neq \mathbf{\tilde{a}},\mathbf{ b }\neq \mathbf{\tilde{b}}$ $\Rightarrow P_s=\frac{1}
{q^{4n}}$

Cases  when $\mathbf{m} = \mathbf{\tilde{m}},\mathbf{ m}' \neq \mathbf{\tilde{m}}'$ and $\mathbf{m} \neq \mathbf{\tilde{m}}, \mathbf{ m'} \neq \mathbf{\tilde{m}}'$ are similarly considered but the derivations are omitted for brevity.  Considering cases $1.1\tiny{-}4$:
\begin{align}
\mathsf{E}\{ \theta(\mathbf{x})^2|\mathbf{m} = \mathbf{\tilde{m}},\mathbf{ m}' = \mathbf{\tilde{m}}'\} 
\label{Part 1}
=&\sum_{\mathbf{m},\mathbf{m}'} \Bigg[\sum_{\mathbf{a} = \mathbf{\tilde{a}}} \sum_{\mathbf{b} =\mathbf{ \tilde{b}}} \sum_{(\mathbf{u},\mathbf{v})\in A_\epsilon^n(U,V|\mathbf{x})}  \frac{1}{q^{2n}} 
+\sum_{\mathbf{a} = \mathbf{\tilde{a}}} \sum_{\substack{ \mathbf{b} \neq \mathbf{\tilde{b}}}} \sum_{(\mathbf{u},\mathbf{v}), (\mathbf{u},\mathbf{\tilde{v}}) }  \frac{1}{q^{3n}}
\\&+ \sum_{\mathbf{a} \neq \mathbf{\tilde{a}}} \sum_{\substack{ \mathbf{b} = \mathbf{\tilde{b}}}} \sum_{(\mathbf{u},\mathbf{v}), (\mathbf{\tilde{u}},\mathbf{v})}  \frac{1}{q^{3n}}
+\sum_{\alpha\in \mathbb{F}_q\backslash\{ 0\}}\sum_{\mathbf{a} \neq \mathbf{\tilde{a}}} \sum_{\small{\substack{ \mathbf{b} \neq \mathbf{\tilde{b}}\\\mathbf{ b}-\mathbf{\tilde{b}}=\alpha(\mathbf{a}-\mathbf{\tilde{a}})\\ \alpha\in \mathbb{F}_q\backslash{0}}}} \sum_{\small{\substack{(\mathbf{u},\mathbf{v}), (\mathbf{\tilde{u}},\mathbf{\tilde{v}})\\\mathbf{v}-\mathbf{\tilde{v}}= \alpha(\mathbf{u}-\mathbf{\tilde{u}})}}}\frac{1}{q^{3n}}\\ 
&+\sum_{\alpha\in \mathbb{F}_q-\{ 0\}}\sum_{\mathbf{a} \neq \mathbf{\tilde{a}}} \sum_{\small{\substack{\mathbf{b} \neq \mathbf{\tilde{b}}\\ \mathbf{b}-\mathbf{\tilde{b}}\neq \alpha(\mathbf{a}-\mathbf{\tilde{a}})\\ \alpha\in \mathbb{F}_q\backslash{0}}}} \sum_{\small{\substack{(\mathbf{u},\mathbf{v}), (\mathbf{\tilde{u}},\mathbf{\tilde{v}})\\\mathbf{v}-\mathbf{\tilde{v}}\neq \alpha(\mathbf{u}-\mathbf{\tilde{u}})}}}  \frac{1}{q^{4n}}\Bigg]
\end{align}
Consequently:
\begin{align*}
\mathsf{E}\{ \theta(\mathbf{X})^2|&\mathbf{m'} = \mathbf{\tilde{m}},\mathbf{m}' = \mathbf{\tilde{m}}'\}\\
&\leq \frac{q^{l+l'} q^{2k}}{q^{2n}} 2^{n (H(U,V|X))} + \frac{q^{l+l'} q^{3k}}{q^{3n}} 2^{n(H(U,V|X)+H(V|X,U))} + \frac{q^{l+l'} q^{3k}}{q^{3n}} 2^{n(H(U,V|X)+H(U|X,V))}+ \\&\frac{q^{l+l'}q^{3k}}{q^{3n}} 2^{{n(H(U,V|X)+\max_{\alpha\neq 0}{H(U,V|X, V+\alpha U)})}} + \frac{q^{l+l'}q^{4k}}{q^{4n}} 2^{2n(H(U,V|X))},
\end{align*}
where we have used Lemma 8 in \cite{allertonsandeep} to get the fourth term.  After considering all the cases, the only non-redundant bounds are the ones mentioned in the lemma. 
\end{proof}
So, the following bounds need to be satisfied:
\begin{align*}
r_o+r'_o &\geq 2\log q - H(U,V|X)\\
r_o+r'_o-r_i &\geq \log q - \min\{H(U|X),H(V|Z)\}\\
r_o+r'_o-r_i &\geq \log q- H(U,V|X) + \max_{\alpha\neq 0}{H(U,V|X, V+\alpha U)}\\
r_o  &\geq \log q - H(U|X))\\
r'_o &\geq \log q - H(V|X))\\
\min\{r_o,r'_o\}  &\geq r_i
\end{align*}
Observe that
\begin{align*}
H(U,V|X, V\!\!+\alpha U)&=H(U,V, V\!\!+\!\alpha U|X)- H(V\!\!+\!\alpha U|X)= H(U,V|X)- H(V+\alpha U|X)
\end{align*}

\end{proof}
 
\subsection{Proof of Lemma \ref{thm: same_codepack}}\label{Ap:thm: same_codepack}

\begin{proof}
 The proof follows the same arguments as that of Lemma \ref{thm: same_code}. We provide an outline of the proof. Define the probability of error $P_e$ as follows:
\begin{align*}
 P_e=P(\{(\mathbf{x},\mathbf{u},\mathbf{v})\in \mathsf{X}^n\times\mathcal{C}_1\times\mathcal{C}_2|\exists (\mathbf{u}',\mathbf{v}')\in A_{\epsilon}^n(U,V)\cap B_2(\mathbf{u})\times B_2(\mathbf{v})\})
\end{align*}
 We define a new conditional probability of error for any triple $\mathbf{x},\mathbf{u},\mathbf{v}\in A_{\epsilon}(X,U,V)$:
\begin{align*}
 P_{e|\mathbf{x},\mathbf{u},\mathbf{v}}=P(\exists (\mathbf{u}',\mathbf{v}')\in A_{\epsilon}^n(U,V)\cap B_2(\mathbf{u})\times B_2(\mathbf{v})|\mathbf{X}=\mathbf{x}, (\mathbf{u},\mathbf{v})\in \mathcal{C}_1\times\mathcal{C}_2)
\end{align*}
 Clearly if $P_{e|\mathbf{x},\mathbf{u},\mathbf{v}}$ goes to 0 for all $\mathbf{x},\mathbf{u},\mathbf{v}\in A_{\epsilon}(X,U,V)$ as $n\to \infty$, then $P_e$ goes to 0. Also define:
$P_{\mathbf{x},\mathbf{u},\mathbf{v}}=
P({(\mathbf{x},\mathbf{u},\mathbf{v})\in \mathsf{X}^n\times
  \mathcal{C}_1\times \mathcal{C}_2)}$, and 
$P_{e,\mathbf{x},\mathbf{u},\mathbf{v}}=P_{e|\mathbf{x},\mathbf{u},\mathbf{v}}P_{\mathbf{x},\mathbf{u},\mathbf{v}}$.
We have:
\begin{align*}
 P_{\mathbf{x},\mathbf{u},\mathbf{v}}&= \sum _{\mathbf{x} \in A_\epsilon^n(X)} \sum_{\substack{\mathbf{m}\in\mathbb{F}_q^l \\ \mathbf{m}' \in \mathbb{F}_q^{l'}}} \sum_{\mathbf{a} \neq \mathbf{b}} \sum_{(\mathbf{u},\mathbf{v})\in A_\epsilon^n(U,V|\mathbf{x})} P(\mathbf{x}) P\{g(\mathbf{a},\mathbf{m})= \mathbf{u}, g'(\mathbf{b},\mathbf{m}')=\mathbf{v}\}\\
&= \sum_{\mathbf{m},\mathbf{m}'} \sum _{\mathbf{x} \in A(X)}
\sum_{\mathbf{a} \neq \mathbf{b}} | A_\epsilon^n(U,V|\mathbf{x})|
P(\mathbf{x}) \frac{1}{q^{2n}}
= \frac{q^{l+l'} q^{2k}}{q^{2n}} 2^{n (H(U,V|X)+O(\epsilon))}
\end{align*}
\begin{align*}
 &P_{e,\mathbf{x},\mathbf{u},\mathbf{v}}= \sum_{\substack{\mathbf{m}, \mathbf{\tilde{m}}\\ \mathbf{m}',\mathbf{\tilde{m}}' }}\sum _{\mathbf{x}} \sum_{\mathbf{a},\mathbf{ \tilde{a}}} \sum_{\mathbf{b} , \mathbf{\tilde{b}}} \sum_{\substack{(\mathbf{u} ,\mathbf{v})\in\\ A_\epsilon^n(U,V|\mathbf{x})}}  \sum_{\substack{(\mathbf{\tilde{u}},\mathbf{\tilde{v}})\in\\ A_{\epsilon}^n(U,V)}}\sum_{b_1\in [1,2^{n\rho_1}]}\sum_{b_2\in [1,2^{n\rho_2}]}P(\mathbf{x}) \\\nonumber
    &P\{g(\mathbf{a},\mathbf{m})=\mathbf{u}, g(\mathbf{\tilde{a}},\mathbf{\tilde{m}})=\mathbf{\tilde{u}}, g'(\mathbf{b},\mathbf{m}')=\mathbf{v}, g'(\mathbf{\tilde{b}},\mathbf{\tilde{m}}')=\mathbf{\tilde{v}}\}P\{B_1(\mathbf{u})=B_1({\mathbf{\tilde{u}}})=b_1, B_2(\mathbf{u})=B_2({\mathbf{\tilde{u}}})=b_2\}\\ 
\end{align*}
 Note that the binning is done independently and uniformly, so $P\{B_1(\mathbf{u})=B_1({\mathbf{\tilde{u}}})=b_1, B_2(\mathbf{u})=B_2({\mathbf{\tilde{u}}})=b_2\}=2^{-2(\rho_1+\rho_2)}$. The rest of the summations are the ones which were present in the proof of Lemma \ref{thm: same_code}. Again we have to do a case by case investigation of the summation. The only new bond comes from the case when $\mathbf{m}=\mathbf{\tilde{m}}$ and $\mathbf{{m}'}=\mathbf{\tilde{m}'}$, $a\neq \tilde{a}, b\neq \tilde{b}$ and $a-\tilde{a}=i(b-\tilde{b})$. We have: 
 \begin{align*}
&A= \sum_{\substack{\mathbf{m}, \mathbf{{\tilde{m}}} }} \sum_{\substack{\mathbf{a},\mathbf{ \tilde{a}}\\\mathbf{a}\neq\mathbf{\tilde{a}}}} \sum_{\substack{\mathbf{b} , \mathbf{\tilde{b}}\\\\\mathbf{a}-\mathbf{\tilde{a}}=i(\mathbf{b}-\mathbf{\tilde{b}})}} \sum_{\substack{(\mathbf{u} ,\mathbf{v})\in\\ A_\epsilon^n(U,V|\mathbf{x})}}  \sum_{\substack{(\mathbf{\tilde{u}},\mathbf{\tilde{v}})\in\\ A_{\epsilon}^n(U,V)\\\mathbf{u}-\mathbf{\tilde{u}}=i(\mathbf{v}-\mathbf{\tilde{v}})}}q^{-3n}2^{-n(\rho_1+\rho_2)}\\
&= \frac{q^{l+l'}}{q^{3n}}q^{3k}2^{nH(U,V|X)}2^{nH(U,V|U+iV)}2^{-n(\rho_1+\rho_2)}
\end{align*}
Dividing this last term by $P_{\mathbf{x},\mathbf{u},\mathbf{v}}$:
\begin{align*}
 \frac{A}{P_{\mathbf{x},\mathbf{u},\mathbf{v}}}=\frac{q^{k}}{q^n}2^{nH(U,V|U+iV)}2^{-n(\rho_1+\rho_2)}
\end{align*}
which goes to $0$ if the following is satisfied: 
\begin{align}
 &r_i-\rho_1-\rho_2 \leq \log{q}- H(U,V| U+i V)
 \label{neweqpackred}
\end{align}
However as shown in the next lemma the new bound in (\ref{neweqpackred}) is redundant. 
\begin{Lemma}\label{lem:neweqred}
 The inequality (\ref{neweqpackred}) in Lemma \ref{thm: same_codepack} is redundant.
\end{Lemma}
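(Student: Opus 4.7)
The plan is to show (\ref{neweqpackred}) follows by combining the sum packing bound (\ref{sumpack}) of Lemma \ref{thm: same_codepack} with the covering bound (\ref{neweq}) of Lemma \ref{thm: same_code}. First, since $U \oplus_q iV$ is a deterministic function of $(U,V)$ over $\mathbb{F}_q$, the chain rule yields $H(U,V \mid U \oplus_q iV) = H(U,V) - H(U \oplus_q iV)$, so the target inequality (\ref{neweqpackred}) rewrites as
\begin{equation*}
r_i - \rho_1 - \rho_2 \leq \log q - H(U,V) + H(U \oplus_q iV).
\end{equation*}

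Next, I would specialize the covering bound (\ref{neweq}) to $(\alpha,\beta)=(1,i)$, obtaining $r_o + r'_o - r_i \geq \log q - H(U \oplus_q iV \mid X)$. Since conditioning reduces entropy, $H(U \oplus_q iV \mid X) \leq H(U \oplus_q iV)$, and hence $r_i - r_o - r'_o \leq H(U \oplus_q iV) - \log q$. In parallel, the sum packing bound (\ref{sumpack}) rearranges to $r_o + r'_o - \rho_1 - \rho_2 \leq 2\log q - H(U,V)$. Adding these two inequalities termwise gives
\begin{equation*}
r_i - \rho_1 - \rho_2 \leq \bigl(H(U \oplus_q iV) - \log q\bigr) + \bigl(2\log q - H(U,V)\bigr) = \log q - H(U,V \mid U \oplus_q iV),
\end{equation*}
which is precisely the rewritten form of (\ref{neweqpackred}).

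The main conceptual point, rather than a technical obstacle, is to recognize that (\ref{neweqpackred}) is not implied by the packing bounds of Lemma \ref{thm: same_codepack} on their own -- a direct check with, e.g., $r_o = r'_o = r_i$ close to $\log q$ and $H(U \oplus_q iV)$ small shows one can satisfy (\ref{sumpack}) while violating (\ref{neweqpackred}). However, in the achievability argument one must simultaneously impose the covering bounds (\ref{newe})--(\ref{neweq}) and the packing bounds, and it is exactly the covering inequality at $(\alpha,\beta) = (1,i)$ that controls the extra $r_i - r_o - r'_o$ slack which the sum packing bound does not. Hence in the joint covering/packing region the additional case producing (\ref{neweqpackred}) imposes no new operational constraint on the rates, which is the sense in which it is redundant.
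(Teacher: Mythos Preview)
Your proof is correct and follows essentially the same route as the paper: both arguments combine the sum packing bound (\ref{sumpack}) with the covering bound (\ref{neweq}) via the identity $H(U,V\mid U\oplus_q iV)=H(U,V)-H(U\oplus_q iV)$ and the inequality $H(U\oplus_q iV\mid X)\le H(U\oplus_q iV)$; the paper simply phrases it contrapositively (assume (\ref{neweqpackred}) fails and (\ref{sumpack}) holds, then (\ref{neweq}) fails), whereas you add the two bounds directly. Your closing remark that (\ref{neweqpackred}) is not implied by the packing bounds alone, and that ``redundant'' here means redundant in the joint covering/packing region, is a useful clarification the paper leaves implicit.
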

\begin{proof}
 Assume there is a distribution $P_{X,U,V}$ for which  (\ref{neweqpackred}) is violated, we show that either (\ref{neweq}) or  (\ref{sumpack}) is also violated. Conversely, as long as  (\ref{neweq}) and  (\ref{sumpack}) are satisfied,  (\ref{neweqpackred}) is also satisfied. Assume we have:
\begin{align*} 
 &(r_o-\rho_1)+(r'_o-\rho_2) \leq 2\log q -H(U,V)\\
&r_i-\rho_1-\rho_2> \log{q}- H(U,V|U+i V),\forall i\in \mathbb{F}_q.
\end{align*}
Adding the two bounds we get:
\begin{align*} 
 r_o+r'_o-r_i &< \log q -H(U,V)+H(U,V| U+i V)\\
 &=\log{q} -H( U +i V) \leq\log{q}- H( U+ i V|X)
\end{align*}
which contradicts (\ref{neweq}). 
 
\end{proof}
\end{proof}

\subsection{Proof of Lemma  \ref{theorem11}}\label{Ap:theorem11}
\begin{proof}
The proof follows the same arguments as in the previous two examples. First we assume there exists a joint distribution $P_{\mathbf{U}X}$ such that the SSC scheme achieves the RD vector, then we arrive at a contradiction by eliminating all codebooks. First note that from our definition of $P_{X,V_{\{1\}},V_{\{2\}}}$, direct calculation shows that $R_1+R_2=I(V_{\{1\}},V_{\{2\}};X)=1-h_b(D_0)$. This means that decoder $\{1,2\}$ is at PtP optimality. Also by the definition of the distortion function $D_{\{3\}}$, decoder $\{3\}$ is at optimal RD.  

\textbf{Step 1:} From the optimality of decoder $\{1,2\}$ and Lemma \ref{indeplem}, there can't be any codebook common between decoders $\{1\}$ and $\{2\}$. So $\mathcal{C}_{\{1\},\{2\}}$ and $\mathcal{C}_{\{1\},\{2\},\{3\}}$ are eliminated. 

 \textbf{Step 2:} From optimality of decoder $\{3\}$, description 3 can't carry the bin number of any codebook which is not decoded at that decoder. Also description 1 and 2 can't carry the bin numbers of codebooks which are not decoded at $\{1,2\}$ because of optimality at this decoder. So codebooks $
 \mathcal{C}_{\{1,3\},\{2,3\}}$, $\mathcal{C}_{\{1,3\}}$ and $\mathcal{C}_{\{2,3\}}$ are not sent on any description and are redundant. 
 
 \textbf{Step 3:} The codebook $\mathcal{C}_{\{1\},\{2,3\}}$ is not binned by description 2 or 3. Description 3 can't bin the codebook since it is not decoded at decoder $\{3\}$, and that decoder is at PtP optimality. Note $\mathcal{C}_{\{1\},\{2,3\}}$ can be decoded using description 1, so any bin information for this codebook that is carried by description 2 is not used at decoder $\{1,2\}$, since decoder $\{1,2\}$ is at PtP optimality we must have $\rho_{\{1\},\{2,3\},2}=0$. The codebook is not sent on description 2 or 3, so by the same arguments as in the previous proofs it can't help in the reconstruction at decoder $\{2,3\}$ and is redundant. By the same arguments $\mathcal{C}_{\{2\},\{1,3\}}$ is redundant.  
  
  \textbf{Step 4:} In this step we show that there is no refinement codebook decoded at decoder $\{1,2\}$. This would eliminate $\mathcal{C}_{\{1,2\}}, \mathcal{C}_{\{1,2\},\{3\}}, \mathcal{C}_{\{1,2\},\{1,3\}}, \mathcal{C}_{\{1,2\},\{2,3\}}$ and $\mathcal{C}_{\{1,2\}, \{1,3\},\{2,3\}}$. More precisely we show that the reconstruction at decoder $\{1,2\}$ is a function of the reconstructions at decoders $\{1\}$ and $\{2\}$. This means that sending a refinement codebook to decoder $\{1,2\}$ will not help in the reconstruction, so the codebook is redundant. 
  
 To prove this claim we consider the two user example depicted in Figure \cite{ZB}. Here all distortions are Hamming distortions. We are interested in achieving the rate distortion vector $(R_1,R_2,D_{\{1\}},D_{\{2\}},D_{\{1,2\}})$ given in (\ref{RDscalar}). Let $P_{X,U_{\{1,2\}},U_{\{1\}}, U_{\{2\}},U_{\{1\},\{2\}}}$ be a distribution on the random variables in the two user SSC achieving this RD vector. Define $\hat{X}_{1}, \hat{X}_2$ and $\hat{X}_{12}$ as the reconstructions at the corresponding codebooks.
  
\begin{Lemma}
There are only two choices for the joint distribution $P_{X,\hat{X}_1,\hat{X}_2,\hat{X}_{12}}$, furthermore in both choices, $\hat{X}_{12}$ is a function of $\hat{X}_1$ and $\hat{X}_2$.
\end{Lemma}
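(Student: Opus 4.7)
The plan is to combine Lemma \ref{indeplem} with PtP optimality at the joint decoder to fix both the marginal test channel $P_{X,\hat{X}_{12}}$ and the independence $\hat{X}_1\Perp\hat{X}_2$, and then to invoke the Zhang--Berger characterization of the minimum side distortion to narrow the joint distribution down to exactly two alternatives.

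First I would note that $R_1+R_2=1-h_b(D_0)$ coincides with the Shannon PtP rate--distortion function of a BSS at Hamming distortion $D_0$, so the converse theorem applied at the joint decoder forces $I(X;\hat{X}_{12})=1-h_b(D_0)$ and pins the test channel $P_{X|\hat{X}_{12}}$ to BSC$(D_0)$ with $\hat{X}_{12}\sim \text{Ber}(1/2)$, unique up to the $0$--$1$ relabeling of $\hat{X}_{12}$. Applying Lemma \ref{indeplem}(1) yields $U_{\{1\}}\Perp U_{\{2\}}$ and $C_{\{1\},\{2\}}=\phi$, so that $\hat{X}_1\Perp\hat{X}_2$. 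Moreover, the equality $I(X;U_{\{1\}},U_{\{2\}},U_{\{1,2\}})=R_1+R_2=I(X;\hat{X}_{12})$ together with the fact that $\hat{X}_{12}$ is a function of the decoded auxiliaries makes the data-processing inequality tight, giving the Markov chain $X\leftrightarrow \hat{X}_{12}\leftrightarrow (U_{\{1\}},U_{\{2\}},U_{\{1,2\}})$, and in particular $X\leftrightarrow \hat{X}_{12}\leftrightarrow (\hat{X}_1,\hat{X}_2)$.

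At this stage the joint distribution $P_{X,\hat{X}_1,\hat{X}_2,\hat{X}_{12}}$ is parameterized by $p_i\triangleq P(\hat{X}_i=1)$ and $q_{ij}\triangleq P(\hat{X}_{12}=1\mid \hat{X}_1=i,\hat{X}_2=j)$, subject to the coupling $\sum_{i,j} P(\hat{X}_1=i)P(\hat{X}_2=j)\,q_{ij}=1/2$ enforcing $\hat{X}_{12}\sim \text{Ber}(1/2)$. Propagating the BSC$(D_0)$ channel from $\hat{X}_{12}$ to $X$ then expresses each side distortion as an explicit affine function of these parameters, and by construction (cf.\ the no-excess-rate analysis in \cite{ZB}) the target value $D_{\{1\}}=D_{\{2\}}=\tfrac{1}{2}(1-(1-2D_0)(2-\sqrt{2}))$ coincides with the minimum of $(D_{\{1\}}+D_{\{2\}})/2$ over this feasible set.

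The remaining and main technical step is to show that this minimum is attained at exactly two distributions. Following the Zhang--Berger computation, a strict concavity argument on the side-distortion objective forces $q_{ij}\in\{0,1\}$ at any minimizer, i.e., $\hat{X}_{12}=f(\hat{X}_1,\hat{X}_2)$ for some deterministic $f:\{0,1\}^2\to\{0,1\}$. Enumerating the sixteen possible $f$'s, the combined constraints $\hat{X}_{12}\sim \text{Ber}(1/2)$, $\hat{X}_1\Perp\hat{X}_2$, $D_{\{1\}}=D_{\{2\}}$ and the minimality of the side distortion rule out all but the two configurations $(p_1,p_2,f)=(1/\sqrt{2},1/\sqrt{2},\wedge)$ and $(1-1/\sqrt{2},1-1/\sqrt{2},\vee)$, which are bit-flip images of one another. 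In both cases $\hat{X}_{12}$ is a deterministic function of $(\hat{X}_1,\hat{X}_2)$, proving the lemma. The delicate point is the concavity argument ruling out fractional $q_{ij}$; beyond that, everything reduces to the explicit no-excess-rate calculation in \cite{ZB}.
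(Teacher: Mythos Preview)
Your setup matches the paper's: Lemma \ref{indeplem} for $\hat{X}_1\Perp\hat{X}_2$, PtP optimality at decoder $\{1,2\}$ to fix $P_{X|\hat{X}_{12}}$ as BSC$(D_0)$ with $\hat{X}_{12}\sim\text{Ber}(1/2)$, and the data-processing equality to get $X\leftrightarrow\hat{X}_{12}\leftrightarrow(\hat{X}_1,\hat{X}_2)$. Your final two candidates $(1/\sqrt{2},1/\sqrt{2},\wedge)$ and $(1-1/\sqrt{2},1-1/\sqrt{2},\vee)$ are exactly what the paper obtains.

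The gap is in the step you yourself flag as delicate. The side-distortion objective is \emph{affine} in $q_{ij}$ for fixed $(p_1,p_2)$, not strictly concave: with the Markov chain you have $P(X=1\mid\hat{X}_1=i,\hat{X}_2=j)=D_0+(1-2D_0)q_{ij}$, so $\mathsf{E}[d_H(\hat{X}_1,X)]+\mathsf{E}[d_H(\hat{X}_2,X)]$ is linear in $(q_{00},q_{01},q_{10},q_{11})$. Linearity on the box $[0,1]^4$ intersected with the hyperplane $\sum_{i,j}p_{1,i}p_{2,j}q_{ij}=1/2$ does push the minimum to a vertex, but a vertex of that intersection generically has one fractional coordinate, not all $q_{ij}\in\{0,1\}$. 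So ``strict concavity forces determinism'' does not go through, and \cite{ZB} does not supply such an argument either.

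The paper closes this gap differently: it parametrizes $P_{\hat{X}_{12},\hat{X}_1,\hat{X}_2}$ by $(a_1,a_2,P_{000},P_{001},P_{010})$, observes that for fixed $(a_1,a_2)$ the problem is a linear program, and solves it explicitly via simplex in four regimes (according to whether $(1-a_1)(1-a_2)\geq 1/2$, etc.). In the interior regimes the LP is degenerate (a free parameter $\alpha$), but optimizing afterwards over $(a_1,a_2)$ drives the solution to the boundary regimes, where the resulting $P_{\hat{X}_{12}\mid\hat{X}_1,\hat{X}_2}$ is $\{0,1\}$-valued and yields precisely the OR/AND pair. To repair your argument you would either have to carry out an equivalent case analysis, or replace the concavity claim by a correct extremality argument that simultaneously handles the hyperplane constraint and the subsequent optimization over $(p_1,p_2)$.
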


\begin{proof}
As in step 1, from optimality of decoder $\{1,2\}$, $\mathcal{C}_{\{1,2\}}$ is redundant. Also $U_{\{1\}}$ and $U_{\{2\}}$ are independent from Lemma \ref{indeplem}. Note that $\hat{X}_1$ is a function of $U_{\{1\}}$ and $\hat{X}_2$ is a function of $U_{\{2\}}$, so $\hat{X}_1\Perp \hat{X}_2$. We proceed by characterizing $P_{X,\hat{X}_{12}}$. Note that decoder $\{1,2\}$ is at PtP optimality. It is well-known result that when quantizing a BSS to Hamming distortion $D_0$ with rate $1-h_b(D_0)$, the reconstruction is uniquely given by $\hat{X}_{12}=X+\mathsf{N}_0, \mathsf{N}_0\sim Be(D_0)$ where $\mathsf{N}_0\Perp X$. $\hat{X}_1, \hat{X}_2$ and $\hat{X}_{12}$ are available at decoder $\{1,2\}$, from optimality at this decoder we must have:
\begin{align*}
 1-h_b(D_0)=I(\hat{X}_1,\hat{X}_2,\hat{X}_{1,2};X)\geq I(\hat{X}_{12},X)=1-h_b(D_0).
\end{align*}
So the inequality must be equality, which means $I(\hat{X}_1,\hat{X}_2;X|X_{12})=0$. In other words the Markov chain $\hat{X}_1,\hat{X}_2\leftrightarrow \hat{X}_{12}\leftrightarrow X$ must hold.  Using the three facts 1) $\hat{X}_{12}=X\oplus_2\mathsf{N}_0$, 2) $\hat{X}_1\Perp \hat{X}_2$ and 3) $\hat{X}_1,\hat{X}_2\leftrightarrow \hat{X}_{12}\leftrightarrow X$, we can characterize all possible distributions on $P_{X, \hat{X}_{12},\hat{X}_1,\hat{X}_2}$. Let $\hat{X}_1\sim Be(a_1)$ and $\hat{X}_2\sim Be(a_2)$. Then from  $\hat{X}_1\Perp \hat{X}_2$, $P_{\hat{X}_1,\hat{X}_2}$ is fixed. Assume the distribution $P_{\hat{X}_{12},\hat{X}_1,\hat{X}_2}$ is as given below:
\setlength{\extrarowheight}{.3cm}
\begin{table}[h]
\centering
\begin{tabular}{x{2cm}|x{2cm}|x{2cm}|x{2cm}|x{2cm}|x{2cm}}
\diag{.3em}{2cm}{$\qquad \hat{X}_{12}$}{$\hat{X}_1,\hat{X}_2$}& \multicolumn{1}{c}{00}&\multicolumn{1}{c}{01}&\multicolumn{1}{c}{10}&\multicolumn{1}{c}{11}&\multicolumn{1}{c}{Sum}\\\cline{1-6}
0&$P_{000}$&$P_{001}$&$P_{010}$&$P_{011}$&$\frac{1}{2}$\\  \cline{2-5}
1&$P_{100}$&$P_{101}$&$P_{110}$&$P_{111}$&$\frac{1}{2}$\\  \cline{2-5}
Sum& \multicolumn{1}{c}{$(1-a_1)(1-a_2)$}&\multicolumn{1}{c}{$(1-a_1)a_2$}&\multicolumn{1}{c}{$a_1(1-a_2)$}&\multicolumn{1}{c}{$a_1a_2$}&\\
\end{tabular}
\vspace{0.1in}
\caption{}
\end{table}
As shown on the table there are 5 independent linear constraints on $P_{ijk}$'s. We have:
\begin{align*}
 &P_{011}=\frac{1}{2}-P_{000}-P_{001}-P_{010}, \qquad
 P_{100}=(1-a_1)(1-a_2)-P_{000}, \qquad
P_{101}=(1-a_1)a_2-P_{001}, \\
&P_{110}=(1-a_1)a_2-P_{010}, \qquad P_{111}=a_1a_2-\frac{1}{2}+P_{000}+P_{001}+P_{010}\\
 &a_1\in [0,1], a_2\in[0,1], P_{000}\in [0,(1-a_1)(1-a_2)],  P_{001}\in [0,(1-a_1)a_2], P_{010}\in [0,a_1(1-a_2)]\\
 &P_{000}+P_{001}+P_{010}\in [\frac{1}{2}-a_1a_2,\frac{1}{2}]
\end{align*}
Using the Markov chain  $\hat{X}_1,\hat{X}_2\leftrightarrow \hat{X}_{12}\leftrightarrow X$, we have $P_{X,\hat{X}_1,\hat{X}_2}=\sum_{\hat{x}_{12}}P_{X|\hat{X}_{12}}P_{\hat{X}_1,\hat{X}_2,\hat{X}_{12}}$. So $P_{X,\hat{X}_1,\hat{X}_2}$ is as follows:
\setlength{\extrarowheight}{.3cm}
\begin{table}[h]
\centering
\begin{tabular}{x{1.2cm}|x{3cm}|x{3cm}|x{3cm}|x{4cm}|}
\diag{.05em}{1.2cm}{$\ \ \  X$}{$\hat{X}_1,\hat{X}_2$}& \multicolumn{1}{c}{00}&\multicolumn{1}{c}{01}&\multicolumn{1}{c}{10}&\multicolumn{1}{c}{11}\\\cline{1-5}
0&$(1-D_0)P_{000}+D_0((1-a_1)(1-a_2)-P_{000})$&$(1-D_0)P_{001}+D_0((1-a_1)a_2-P_{001})$&$(1-D_0)P_{010}+D_0(a_1(1-a_2)-P_{010})$&$(1-D_0)(\frac{1}{2}-P_{000}-P_{001}-P_{010})+D_0(a_1a_2-\frac{1}{2}+P_{000}+P_{001}+P_{010})$\\  \cline{2-5}
1&$D_0P_{000}+(1-D_0)((1-a_1)(1-a_2)-P_{000})$&$D_0P_{001}+(1-D_0)((1-a_1)a_2-P_{001})$&$D_0P_{010}+(1-D_0)(a_1(1-a_2)-P_{010})$&$D_0(\frac{1}{2}-P_{000}-P_{001}-P_{010})+(1-D_0)(a_1a_2-\frac{1}{2}+P_{000}+P_{001}+P_{010})$\\  \cline{2-5}
\end{tabular}
\vspace{0.1in}
\caption{}
\end{table}
We can minimize the resulting distortion at decoders 1 and 2 by choosing $P_{000}, P_{001}$ and $P_{010}$ optimally. Let $P^*_{X,\hat{X}_1,\hat{X}_2}$ be the optimal joint distribution, we will show that there are two choices for $P^*_{X,\hat{X}_1,\hat{X}_2}$. We have:
\begin{align*}
 &\mathsf{E}(d_H(\hat{X}_1,X))+\mathsf{E}(d_H(\hat{X}_2,X))=P(\hat{X}_1\neq X) +P(\hat{X}_2\neq X)\\
& =(P_{X,\hat{X}_1,\hat{X}_2}(0,0,1)+P_{X,\hat{X}_1,\hat{X}_2}(1,0,1))+(P_{X,\hat{X}_1,\hat{X}_2}(0,1,0)+P_{X,\hat{X}_1,\hat{X}_2}(1,1,0))+2(P_{X,\hat{X}_1,\hat{X}_2}(0,1,1)+P_{X,\hat{X}_1,\hat{X}_2}(1,0,0))\\
&=P_{001}+(1-a_1)a_2-P_{001}+P_{010}+(1-a_2)a_1-P_{010}+2D_0(P_{000}+a_1a_2-\frac{1}{2}+P_{000}+P_{001}+P_{010})\\
&+2(1-D_0)(\frac{1}{2}-P_{000}-P_{001}-P_{010}+(1-a_1)(1-a_2)-P_{000})\\
&=(2D_0-1)a_1+(2D_0-1)a_2+4(2D_0-1)P_{000}+2(2D_0-1)P_{001}+2(2D_0-1)P_{010}-4D_0+3.
\end{align*}
This is an optimization problem on $a_1, a_2, P_{000}, P_{001}, P_{010}$ with respect to the constraints:
\begin{align*}
 &a_1\in [0,1], a_2\in[0,1], P_{000}\in [0,(1-a_1)(1-a_2)],  P_{001}\in [0,(1-a_1)a_2], P_{010}\in [0,a_1(1-a_2)]\\
 &P_{000}+P_{001}+P_{010}\in [\frac{1}{2}-a_1a_2,\frac{1}{2}].
\end{align*}
Also note that for fixed $a_1$ and $a_2$ the problem becomes a linear optimization problem (otherwise the constraints are not linear). So we fix $a_1$ and $a_2$ and optimize $P_{000}, P_{001}$ and $P_{010}$ for each value of $a_1$ and $a_2$. In this case the simplex algorithm provides a straightforward solution. We investigate the solution in several different cases:

\textbf{Case 1:} $(1-a_1)(1-a_2)\geq\frac{1}{2}$:
Note that in the simplex algorithm, the variable with smallest (most negative) coefficient takes its maximum possible value first.Since $D_0<\frac{1}{2}$, $(2D_0-1)<0$, so the algorithm would first maximize the value of $P_{000}$. Since  $(1-a_1)(1-a_2)\geq\frac{1}{2}$, we have $P^*_{000}=\frac{1}{2}$. This along with constraint $P_{000}+P_{001}+P_{010}\in [\frac{1}{2}-a_1a_2,\frac{1}{2}]$ sets $P^*_{001}=0$ and $P^*_{010}=0$. So in this case:
\begin{align*}
 &\mathsf{E}(d_H(\hat{X}_1,X))+\mathsf{E}(d_H(\hat{X}_2,X))
=(2D_0-1)a_1+(2D_0-1)a_2+2(2D_0-1)-4D_0+3\\
&=1+(2D_0-1)(a_1+a_2).
\end{align*}
Now we optimize on $a_1, a_2$ such that $(1-a_1)(1-a_2)\geq\frac{1}{2}$. Increasing $a_1$ or $a_2$ decreases the distortion so the optimal value is achieved when $(1-a_1)(1-a_2)=\frac{1}{2}$, so $a_2=1-\frac{1}{2(1-a_1)}$. We have:
\begin{align*}
 &\mathsf{E}(d_H(\hat{X}_1,X))+\mathsf{E}(d_H(\hat{X}_2,X))
 =1+(2D_0-1)(a_1+1-\frac{1}{2(1-a_1)})
\end{align*}
Optimizing the value of $a_1$, we get $a^*_1=a^*_2=1-\frac{\sqrt{2}}{2}$. These values give $P_{X,\hat{X}_1, \hat{X}_2}=P_{X,V_{\{1\}},V_{\{2\}}}$. Also replacing the values in $P_{\hat{X}_{12},\hat{X}_1,\hat{X}_2}$, we get:
\begin{table}[h]
\centering
\begin{tabular}{x{2cm}|x{2cm}|x{2cm}|x{2cm}|x{2cm}|}
\diag{.2em}{2cm}{$\qquad \hat{X}_{12}$}{$\hat{X}_1,\hat{X}_2$}& \multicolumn{1}{c}{00}&\multicolumn{1}{c}{01}&\multicolumn{1}{c}{10}&\multicolumn{1}{c}{11}\\\cline{1-5}
0&$\frac{1}{2}$&$0$&$0$&$0$\\  \cline{2-5}
1&$0$&$\frac{\sqrt{2}-1}{2}$&$\frac{\sqrt{2}-1}{2}$&$\frac{3-2\sqrt{2}}{2}$\\  \cline{2-5}
\end{tabular}
\vspace{0.1in}
\caption{}
\end{table}
which shows that $\hat{X}_{12}$ is a function of $\hat{X}_1$ and $\hat{X}_2$. 
\textbf{Case 2:} $(1-a_1)(1-a_2)<\frac{1}{2}$, $a_1\leq\frac{1}{2}$:
  In this case the simplex method yields the following set of optimal distributions:
 \begin{align*}
 &P^*_{000}=(1-a_1)(1-a_2), P^*_{001}=\alpha, P^*_{010}=\frac{1}{2}-(1-a_1)(1-a_2)-\alpha, P^*_{011}=0\\
 &P^*_{100}=0, P^*_{101}=(1-a_1)a_2-\alpha, P^*_{010}=(1-a_2)a_1-\frac{1}{2}+(1-a_1)(1-a_2)+\alpha, P^*_{111}=a_1a_2.
\end{align*}
 Where $\alpha\in [a_2-\frac{1}{2},  \frac{1}{2}-(1-a_1)(1-a_2)]$ is an auxiliary variable that does not play a role in the distortion since the coefficients of $P^*_{001}$ and $P^*_{010}$ are equal in the distortion formula. We get:
 
 \begin{align*}
 &\mathsf{E}(d_H(\hat{X}_1,X))+\mathsf{E}(d_H(\hat{X}_2,X))
 =1+(2D_0-1)((1-a_1)(1-a_2)+a_1a_2).
\end{align*}
Note that since $a_1<\frac{1}{2}$, the term $(1-a_1)(1-a_2)+a_1a_2$ is decreasing with $a_2$, so the distortion is increasing with $a_2$ and the optimal values are $a^*_2=max(0, 1-\frac{1}{2(1-a_1)})$, since $a_1\leq\frac{1}{2}$, $a^*_2= 1-\frac{1}{2(1-a_1)}$, replacing $a^*_2$ we have:
 \begin{align*}
 &\mathsf{E}(d_H(\hat{X}_1,X))+\mathsf{E}(d_H(\hat{X}_2,X))
 =1+(2D_{\{1\}}-1)(\frac{1}{2}+a_1(1-\frac{1}{2(1-a_1)})).
\end{align*}
Solving for $a_1$ we get $a_1=1-\frac{1}{\sqrt{2}}$ and in tun $a_2=1-\frac{1}{\sqrt{2}}$ as in the previous case. 

\textbf{Case 3}: $(1-a_1)(1-a_2)<\frac{1}{2}, a_1>\frac{1}{2}, a_1a_2<\frac{1}{2}$:
 The probabilities are as in the last case with $\alpha\in [0,  \frac{1}{2}-(1-a_1)(1-a_2)]$. The distortion is similar to the last case.
Since $a_1>\frac{1}{2}$, the distortion is decreasing in $a_2$. So $a^*_2= \frac{1}{2a_1}$. Which yields:
 \begin{align*}
 &\mathsf{E}(d_H(\hat{X}_1,X))+\mathsf{E}(d_H(\hat{X}_2,X))
 =1+(2D_{\{1\}}-1)((1-a_1)(1-\frac{1}{2a_1})+\frac{1}{2}).
\end{align*}
This would have no solution for optimizing $a_1$ at the given range. 

\textbf{Case 4:} $a_1a_2>\frac{1}{2}$:
By the same arguments the optimal solution is 
\begin{align*}
 &P^*_{000}=(1-a_1)(1-a_2), P^*_{001}=(1-a_1)a_2, P^*_{010}=(1-a_2)a_1, P^*_{011}=0\\
 &P^*_{100}=0, P^*_{101}=0, P^*_{010}=0, P^*_{111}=\frac{1}{2}.
\end{align*}
Then $P^*_{\hat{X}_{12}, \hat{X}_1,\hat{X}_2}$ is:
\begin{table}[h]
\centering
\begin{tabular}{x{2cm}|x{2cm}|x{2cm}|x{2cm}|x{2cm}|}
\diag{.3em}{2cm}{$\qquad \hat{X}_{12}$}{$\hat{X}_1,\hat{X}_2$}& \multicolumn{1}{c}{00}&\multicolumn{1}{c}{01}&\multicolumn{1}{c}{10}&\multicolumn{1}{c}{11}\\\cline{1-5}
0&$\frac{3-2\sqrt{2}}{2}$&$\frac{\sqrt{2}-1}{2}$&$\frac{\sqrt{2}-1}{2}$&$0$\\  \cline{2-5}
1&$0$&$0$&$0$&$\frac{1}{2}$\\  \cline{2-5}
\end{tabular}
\vspace{0.1in}
\caption{}
\end{table}
which is the second choice for the optimal joint distribution. Note that again $\hat{X}_{12}$ is a function of $\hat{X}_{1}$ and $\hat{X}_2$.
\end{proof}
 
 \textbf{Step 5: }  We are left with $\mathcal{C}_{\{1\},\{3\}}$, $\mathcal{C}_{\{2\},\{3\}}$, $\mathcal{C}_{\{1\}}$ ,$\mathcal{C}_{\{2\}}$ and $\mathcal{C}_{\{3\}}$. Let $X_i$ be the reconstruction at decoder $\{i\}$ for $i\in \{1,2,3\}$.
\begin{Lemma}
 The following Markov chains hold:
\begin{align}
&U_{\{1\},\{3\}}, U_{\{1\}}, X_1 \Perp U_{\{2\},\{3\}} ,U_{\{2\}}, X_2 \label{MC50}\\
&U_{\{1\}},U_{\{2\}},U_{\{1\},\{3\}},U_{\{2\},\{3\}}\leftrightarrow X_1,X_2 \leftrightarrow X\label{MC51}\\
 &U_{\{1\},\{3\}},U_{\{1\}} \leftrightarrow X_1 \leftrightarrow X, U_{\{2\},\{3\}}, U_{\{2\}}\label{MC52}\\
 &U_{\{2\},\{3\}},U_{\{2\}} \leftrightarrow X_2 \leftrightarrow X,U_{\{1\},\{3\}}, U_{\{1\}}\label{MC53}\\
 &U_{\{1\},\{3\}},U_{\{2\},\{3\}},U_{\{3\}} \leftrightarrow X_3 \leftrightarrow X\label{MCmiss}\\
 &X_1, X_2, U_{\{1\}}, U_{\{2\}}\leftrightarrow U_{\{1\},\{3\}}U_{\{2\},\{3\}},X\leftrightarrow U_{\{3\}}, X_3\label{MC54}\\
 &U_{\{1\}}\leftrightarrow U_{\{1\},\{3\}}U_{\{2\},\{3\}},X_1,U_{\{3\}}\leftrightarrow X\label{MC55}\\
 &U_{\{2\}}\leftrightarrow U_{\{1\},\{3\}}U_{\{2\},\{3\}},X_2,U_{\{3\}}\leftrightarrow X\label{MC56}\\
 &U_{\{3\}}\leftrightarrow U_{\{1\},\{3\}},U_{\{2\},\{3\}},X_3,X_1 \leftrightarrow X\label{MC57}\\
 &U_{\{3\}}\leftrightarrow U_{\{1\},\{3\}},U_{\{2\},\{3\}},X_3,X_2 \leftrightarrow X\label{MC58}
\end{align}
  \end{Lemma}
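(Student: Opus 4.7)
The plan is to derive each Markov chain from two PtP-optimality conditions and the structural description of $P_{X, X_{12}, X_1, X_2}$ established in Step 4. Decoder $\{1,2\}$ operates at PtP optimality because the sum $R_1+R_2 = 1-h_b(D_0)$ matches the BSS--Hamming rate-distortion function at $D_0$. Decoder $\{3\}$ is brought to PtP optimality by the designer's specific choice of $\alpha,\beta$ combined with the tight equality in (\ref{eq:nonredconstraint}); verifying this requires computing the PtP test-channel for $d_{\{3\}}$ and checking that the associated optimal rate and distortion coincide with $R_3$ and $D_{\{3\}}$. These two optimality conditions provide the backbone for all ten relations.

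First I would prove (\ref{MC50}) by invoking Lemma \ref{indeplem} at PtP-optimal decoder $\{1,2\}$, which yields $(U_{\{1\}},U_{\{1\},\{3\}})\Perp(U_{\{2\}},U_{\{2\},\{3\}})$; since $X_i$ is a deterministic function of the variables decoded at side $i$, the independence extends to the six-tuple in (\ref{MC50}). For (\ref{MC51}) I combine the sufficient-statistic lemma from Step 2 of Lemma \ref{lemma1} (at PtP optimality, $U_{\mathbf{M}}\to X_{12}\to X$) with the Step 4 conclusion that $X_{12}$ is a deterministic function of $(X_1,X_2)$; the chain then factors through $(X_1,X_2)$. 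Chains (\ref{MC52}) and (\ref{MC53}) follow by a marginalization: expressing $P(X\mid X_1,U_{\{1\}},U_{\{1\},\{3\}})$ as an average of $P(X\mid X_1,X_2)$ (from (\ref{MC51})) against $P(X_2\mid X_1)$ (the conditioning on $U_{\{1\}},U_{\{1\},\{3\}}$ drops out by (\ref{MC50})) collapses it to $P(X\mid X_1)$; the joint form in (\ref{MC52}) is then recovered using (\ref{MC50}) again to detach $(U_{\{2\}},U_{\{2\},\{3\}})$. Chain (\ref{MCmiss}) is an immediate application of the sufficient-statistic lemma at PtP-optimal decoder $\{3\}$.

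The remaining chains (\ref{MC54})-(\ref{MC58}) will be obtained by gluing the two optimality conditions. I would use Lemma \ref{MClemma} to assemble short Markov chains into the long ones required: for instance, (\ref{MC54}) should follow by stitching (\ref{MC51}) with (\ref{MCmiss}) along $X$ and $X_3$, while (\ref{MC55})-(\ref{MC58}) will be derived by conditioning (\ref{MC51}) on additional common information $(U_{\{2\},\{3\}},U_{\{3\}})$ and applying the three-short-to-one-long chain lemma, together with Lemma \ref{Markovindeplem} to pass from two Markov chains through a common intermediate into full conditional independence.

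The principal obstacle is establishing (\ref{MCmiss}): showing rigorously that the prescribed $\alpha,\beta$ together with equality in (\ref{eq:nonredconstraint}) place $(R_3,D_{\{3\}})$ exactly on the PtP rate-distortion curve of $d_{\{3\}}$. This requires identifying the optimal PtP test channel for the non-symmetric Hamming-like distortion $d_{\{3\}}$ (whose backward channel will turn out to match the $V_{\{1\}}\oplus V_{\{2\}}$ reconstruction from the linear achievability scheme) and checking the corresponding $I(U;X)$ and $E[d_{\{3\}}]$ values—this is essentially a Lagrangian verification that the specific $\alpha$ and $\beta$ are the dual variables making the inequality tight. Once this verification is in place, the remaining chains reduce to bookkeeping with conditional independence using the probability lemmas already available.
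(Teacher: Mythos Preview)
Your plan for (\ref{MC50})--(\ref{MC53}) and (\ref{MCmiss}) matches the paper's proof: Lemma~\ref{indeplem} gives (\ref{MC50}), PtP optimality at $\{1,2\}$ together with the Step~4 fact that $X_{12}=f(X_1,X_2)$ gives (\ref{MC51}), the marginalization you describe is exactly how the paper gets (\ref{MC52})--(\ref{MC53}), and the sufficient-statistic argument at decoder $\{3\}$ gives (\ref{MCmiss}).

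The gap is (\ref{MC54}). You propose to obtain it by ``stitching (\ref{MC51}) with (\ref{MCmiss}) along $X$ and $X_3$'' via Lemma~\ref{MClemma}, but this cannot work: neither (\ref{MC51}) nor (\ref{MCmiss}) says anything about the joint law of $U_{\{3\}}$ with $(U_{\{1\}},U_{\{2\}},X_1,X_2)$. Chain (\ref{MC51}) does not mention $U_{\{3\}}$ at all, and (\ref{MCmiss}) does not mention $U_{\{1\}},U_{\{2\}},X_1,X_2$; the SSC auxiliary distribution $P$ is free, so $U_{\{3\}}$ could in principle be correlated with $U_{\{1\}}$ even conditionally on $(U_{\{1\},\{3\}},U_{\{2\},\{3\}},X)$. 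No amount of Markov-chain bookkeeping from (\ref{MC50})--(\ref{MCmiss}) alone will force $I(U_{\{1\}},U_{\{2\}};U_{\{3\}}\mid U_{\{1\},\{3\}},U_{\{2\},\{3\}},X)=0$. The paper instead goes back to the SSC rate constraints: it adds the packing bounds at decoders $\{1,2\}$ and $\{3\}$ and subtracts two covering bounds to obtain
\[
R_1+R_2+R_3 \;\geq\; I(U_{\{1\}},U_{\{2\}},U_{\{1\},\{3\}},U_{\{2\},\{3\}};X)+I(U_{\{1\},\{3\}},U_{\{2\},\{3\}},U_{\{3\}};X)+I(U_{\{1\}},U_{\{2\}};U_{\{3\}}\mid U_{\{1\},\{3\}},U_{\{2\},\{3\}},X),
\]
and then uses PtP optimality at $\{1,2\}$ and at $\{3\}$ to match the first two terms with $R_1+R_2$ and $R_3$ respectively, forcing the third term to vanish. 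This is the genuine content of (\ref{MC54}): it is a \emph{rate} argument, not a probabilistic one.

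Once (\ref{MC54}) is in hand, (\ref{MC55})--(\ref{MC58}) do reduce to factorization bookkeeping; the paper combines (\ref{MC52}) or (\ref{MCmiss}) with (\ref{MC54}) via the elementary Lemma~\ref{MCmove} (if $A,B\leftrightarrow C\leftrightarrow D$ then $A\leftrightarrow B,C\leftrightarrow D$), which is cleaner than the route through Lemmas~\ref{MClemma} and~\ref{Markovindeplem} that you sketch. Your identification of the PtP-optimality verification at decoder $\{3\}$ as a needed preliminary is correct, but it is not the main obstacle; the step you are missing is the covering/packing derivation of (\ref{MC54}).
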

\begin{proof}
(\ref{MC50}) holds from Lemma \ref{indeplem}. From the optimality at decoder $\{1,2\}$ and step 4 we have:
\begin{equation*}
I(U_{\{1\},\{3\}},U_{\{2\},\{3\}},U_{\{1\}},U_{\{2\}},X_1,X_2;X)=I(X_1,X_2;X)=1-h_b(D_0),
\end{equation*}
which proves (\ref{MC51}). Next we prove (\ref{MC52}):
\begin{align*}
 P({U_{\{1\},\{3\}},U_{\{1\}},U_{\{2\},\{3\}},U_{\{2\}},X_1,X})&\stackrel{(a)}{=}\sum_{X_2}P({U_{\{1\},\{3\}},U_{\{1\}},X_1})P({X_2,U_{\{2\},\{3\}},U_{\{2\}}})P({X|X_1,X_2})\\
 &=P({U_{\{1\},\{3\}},U_{\{1\}},X_1})P(U_{\{2\},\{3\}},U_{\{2\}})\sum_{X_2}P(X_2|U_{\{2\}},U_{\{2\},\{3\}})P({X|X_1,X_2})\\
 &=P({U_{\{1\},\{3\}},U_{\{1\}},X_1})P(U_{\{2\},\{3\}},U_{\{2\}})P({X|X_1,U_{\{2\}},U_{\{2\},\{3\}}})\\
 &\stackrel{b}{=}P({U_{\{1\},\{3\}},U_{\{1\}},X_1})P(U_{\{2\}},U_{\{2\},\{3\}},X|X_1)\\
\end{align*}
In $(a)$ we have used (\ref{MC50}) and the Markov chain (\ref{MC51}), in $(b)$, we have used (\ref{MC50}). (\ref{MC53}) follows by symmetry. (\ref{MCmiss}) can be proved using optimality at decoder $\{3\}$ and the argument given in the proof of (\ref{MC50}). We proceed with the proof of (\ref{MC54}). Consider the following packing bounds at decoder $\{1,2\}$ and $\{3\}$:
\begin{align*}
 &H(U_{\{1\}},U_{\{2\}},U_{\{1\},\{3\}},U_{\{2\},\{3\}})\leq H(U_{\{1\}})+H(U_{\{2\}})+H(U_{\{1\},\{3\}})+H(U_{\{2\},\{3\}})-r_1-r_2-r_{1,3}-r_{2,3}+R_1+R_2\\
 &H(U_{\{1\},\{3\}},U_{\{2\},\{3\}},U_{\{3\}})\leq H(U_{\{1\},\{3\}})+H(U_{\{2\},\{3\}})+H(U_{\{3\}})-r_{1,3}-r_{2,3}-r_3+R_3\\
\end{align*}
And the following covering bounds:
\begin{align*}
 &H(U_{\{1\}},U_{\{2\}},U_{\{3\}},U_{\{1\},\{3\}},U_{\{2\},\{3\}}|X)\geq H(U_{\{1\}})+H(U_{\{2\}})+H(U_{\{3\}})\\&+H(U_{\{1\},\{3\}})+H(U_{\{2\},\{3\}})-r_1-r_2-r_3-r_{1,3}-r_{2,3}\\
 &H(U_{\{1\},\{3\}},U_{\{2\},\{3\}}|X)\geq H(U_{\{1\},\{3\}})+H(U_{\{2\},\{3\}})-r_{1,3}-r_{2,3}
\end{align*}
Adding all the bounds and simplifying we get:
\begin{align*}
 R_1+R_2+R_3\geq I(U_{\{1\}},U_{\{2\}},U_{\{1\},\{3\}},U_{\{2\},\{3\}};X)+I(U_{\{1\},\{3\}},U_{\{2\},\{3\}},U_{\{3\}};X)+I(U_{\{1\}},U_{\{2\}};U_{\{3\}}|U_{\{1\},\{3\}},U_{\{2\},\{3\}},X)
 \end{align*}
 This resembles the two user sum-rate bound when the first user is sending descriptions 1 and 2 while the second user transmits description 3. From optimality at decoder $\{1\}$2, $R_1+R_2= I(U_{\{1\}},U_{\{2\}},U_{\{1\},\{3\}},U_{\{2\},\{3\}};X)$ and optimality at decoder $\{3\}$ yields $R_3=I(U_{\{1\},\{3\}},U_{\{2\},\{3\}},U_{\{3\}};X)$. So $I(U_{\{1\}},U_{\{2\}};U_{\{3\}}|U_{\{1\},\{3\}},U_{\{2\},\{3\}},X)=0$. This proves (\ref{MC54}). We have:
 
\begin{align*}
&P({U_{\{1\},\{3\}},U_{\{1\}},X_1,U_{\{2\},\{3\}},U_{\{3\}},X})\\&{=} P(U_{\{1\},\{3\}},U_{\{2\},\{3\}},X_1)P(U_{1}|U_{\{1\},\{3\}},U_{\{2\},\{3\}},X_1)P(X|U_{\{1\},\{3\}},U_{\{2\},\{3\}},X_1,U_{\{1\}})P(U_{\{3\}}|U_{\{1\},\{3\}},U_{\{2\},\{3\}},X_1,U_{\{1\}},X)\\
&\stackrel{(a)}{=}P(U_{\{1\},\{3\}},U_{\{2\},\{3\}},X_1)P(U_{1}|U_{\{1\},\{3\}},U_{\{2\},\{3\}},X_1)P(X|U_{\{1\},\{3\}},U_{\{2\},\{3\}},X_1)P(U_{\{3\}}|U_{\{1\},\{3\}},U_{\{2\},\{3\}},X_1,U_{\{1\}},X)\\
&\stackrel{(b)}{=}P(U_{\{1\},\{3\}},U_{\{2\},\{3\}},X_1)P(U_{1}|U_{\{1\},\{3\}},U_{\{2\},\{3\}},X_1)P(X|U_{\{1\},\{3\}},U_{\{2\},\{3\}},X_1)P(U_{\{3\}}|U_{\{1\},\{3\}},U_{\{2\},\{3\}},X)\\
\end{align*}
where $(a)$ follows from (\ref{MC52}) and Lemma \ref{MCmove} given below. $(b)$ follows from (\ref{MC54}). So we have shown that $U_{\{1\}}\leftrightarrow U_{\{1\},\{3\}},U_{\{2\},\{3\}},X_1\leftrightarrow X,U_{\{3\}}$, using Lemma \ref{MCmove} we conclude (\ref{MC55}). (\ref{MC56}) follows by symmetry.  Lastly we prove (\ref{MC57}):
 \begin{align*}
 &P(X,X_1,U_{\{3\}}|U_{\{1\},\{3\}}, U_{\{2\},\{3\}},X_3)\\&{=} P(X|U_{\{1\},\{3\}},U_{\{2\},\{3\}},X_3)P(U_{\{3\}}|U_{\{1\},\{3\}},U_{\{2\},\{3\}},X_3,X)P(X_1|U_{\{1\},\{3\}},U_{\{2\},\{3\}},X,U_{\{3\}},X_3)\\
&\stackrel{a}{=}P(X|U_{\{1\},\{3\}},U_{\{2\},\{3\}},X_3)P(U_{\{3\}}|U_{\{1\},\{3\}},U_{\{2\},\{3\}},X_3,X)P(X_1|U_{\{1\},\{3\}},U_{\{2\},\{3\}},X,X_3)\\
&{=}P(U_{\{3\}}|U_{\{1\},\{3\}},U_{\{2\},\{3\}},X_3,X)P(X,X_1|U_{\{1\},\{3\}},U_{\{2\},\{3\}},X_3)\\
&\stackrel{b}{=}P(U_{\{3\}}|U_{\{1\},\{3\}},U_{\{2\},\{3\}},X_3)P(X,X_1|U_{\{1\},\{3\}},U_{\{2\},\{3\}},X_3)
\end{align*}
where $(a)$ follows form \ref{MC54}. $(b)$ holds because of (\ref{MCmiss}). (\ref{MC57}) follows from lemma \ref{MCmove}.
\begin{Lemma}
 For random variables $A, B, C$ and $D$ if we have $A,B\leftrightarrow C \leftrightarrow D$ then $A\leftrightarrow B,C \leftrightarrow D$. \label{MCmove}
\end{Lemma}
\begin{proof}
We have:
\begin{align*}
 P(A,D|B,C)=\frac{P(A,B,C,D)}{P(B,C)}=\frac{P(C)P(A,B|C)P(D|C)}{P(C)P(B|C)}=P(A|BC)P(D|C)=P(A|BC)P(D|BC)
\end{align*}
\end{proof}
\end{proof}
Next we argue that if we set $U_{\{1\}}$ to be equal to $X_1$ there would be no change in distortion and the rate does not increase. First consider decoder $\{1,3\}$. The optimal reconstruction function is given by \\$argmax_x(P_{X|U_{\{1\},\{3\}} U_{\{2\},\{3\}}U_{\{1\}}U_{\{3\}}} (x|u_{1,3},u_{23},u_{\{1\}},u_{\{3\}}))$. We have:
\begin{align*}
 &argmax_x(P_{X|U_{\{1\},\{3\}}U_{\{2\},\{3\}}U_{\{1\}}U_{\{3\}}}(x|u_{1,3},u_{23},u_{\{1\}},u_{\{3\}}))\\
 &\stackrel{(a)}{=}argmax_x(P_{X|U_{\{1\},\{3\}}U_{\{2\},\{3\}}U_{\{1\}}U_{\{3\}}X_1}(x|u_{1,3},u_{23},u_{\{1\}},u_{\{3\}},x_1))\\
 &\stackrel{(b)}{=}argmax_x(P_{X|U_{\{1\},\{3\}}U_{\{2\},\{3\}}U_{\{3\}}X_1}(x|u_{1,3},u_{23},u_{\{3\}},x_1))
\end{align*}
where in $(a)$ we used the fact that $x_1$ is a function of
$U_{\{1\}},U_{\{1\},\{3\}}$ and in $(b)$ we use (\ref{MC55}). So the
distortion won't change at decoder $\{1,3\}$.  Also the reconstruction
at decoder $\{1\}$ is $X_1$ so setting $U_{\{1\}}=X_1$ won't change
the reconstruction at this decoder. At decoder $\{1,2\}$ we showed in
step 4 that $X_{12}$ is a function of $X_1,X_2$ where $X_2$ is a
function of $U_{\{2\},\{3\}},U_{\{2\}}$, so setting $U_{\{1\}}=X_1$
does not change the distortion at this decoder either. The rest of the
decoders do not receive $U_{\{1\}}$. As for the rate, note that $X_1$
was reconstructed at all decoders reconstructing $U_{\{1\}}$. So
replacing $U_{\{1\}}$ with $X_1$ does not require sending any extra
information. So we set $U_{\{1\}}=X_1$ without any loss in distortion
and with a potential gain in rate. The same argument combined with the
Markov chains (\ref{MC56}) sets $U_{\{2\}}=X_2$, also using Markov
chains (\ref{MC57}) and (\ref{MC58}) we set $U_{\{3\}}=X_3$.  
\begin{Lemma}
 The following constraints hold:\label{IMlemma}
\begin{align}
&P_{XX_1X_2} \text{ is fixed and equal to $P_{X,V_{\{1\}},V_{\{2\}}}$ in the previous step.}\label{PXX1X2}\\
&P_{XX_3} \text{ is fixed and equal to $P_{X,V_3}$ which is the optimizing distribution for decoder $\{3\}$.} \label{PXX3}\\
&U_{\{1\},\{3\}}\leftrightarrow X_1 \leftrightarrow U_{\{2\},\{3\}}, X, X_2\label{X1Mark}\\
&U_{\{2\},\{3\}}\leftrightarrow X_2 \leftrightarrow U_{\{1\},\{3\}}, X, X_1\label{X2Mark}\\
&U_{\{1\},\{3\}}, U_{\{2\},\{3\}}\leftrightarrow X_3\leftrightarrow X\label{X3Mark}\\
&X_1,X_2 \leftrightarrow X,U_{\{1\},\{3\}},U_{\{2\},\{3\}}\leftrightarrow X_3\label{123Mark}
\end{align}
\end{Lemma}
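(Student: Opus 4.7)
The plan is to derive each of the six conclusions from the structural results already assembled in Steps 1--5, so that the lemma reduces to bookkeeping on top of substitutions that have already been justified. For \eqref{PXX1X2}, I would invoke the case analysis in Step 4, which enumerated every joint distribution $P_{X,\hat{X}_1,\hat{X}_2}$ compatible with the PtP optimality of decoders $\{1\}$, $\{2\}$ and $\{1,2\}$ under the no-excess-sum-rate constraint. Exactly two solutions survived (Cases 1 and 4), and by inspection both coincide with the distribution $P_{X,V_{\{1\}},V_{\{2\}}}$ of Table \ref{tb:jointdist}, up to the symmetry $\hat{x}_i\leftrightarrow 1-\hat{x}_i$. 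The claim \eqref{PXX3} is simpler: decoder $\{3\}$ operates at PtP optimality with respect to the asymmetric distortion $d_{\{3\}}$, and the particular $(\alpha,\beta)$ pinned in the statement of Lemma \ref{theorem11} forces the optimizing test channel to be unique, thereby determining $P_{X,X_3}$.

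For the four Markov chains, I would feed the identifications $U_{\{i\}}=X_i$ justified at the end of Step 5 into the chains already obtained in the preceding lemma. Specifically, \eqref{X1Mark} follows from \eqref{MC52} by replacing $U_{\{1\}}$ with $X_1$ and $U_{\{2\}}$ with $X_2$, after which $X_1$ is redundant on the left of the conditioning. Chain \eqref{X2Mark} follows by the symmetric substitution into \eqref{MC53}. Chain \eqref{X3Mark} is an immediate restatement of \eqref{MCmiss} after $U_{\{3\}}$ is replaced by $X_3$, and \eqref{123Mark} follows from \eqref{MC54} after performing all three substitutions and collapsing the redundant variables $X_1,X_2,X_3$ on the outer sides of the chain.

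The main obstacle is verifying that the substitutions $U_{\{i\}}=X_i$, which were shown in Step 5 to preserve rate and distortion, also preserve the validity of the earlier Markov chains. This holds because those chains were derived from covering and packing inequalities that depend only on the joint distribution of the decoded variables at each decoder; substituting a function that is already decoded wherever the original variable was decoded leaves that joint distribution unchanged. Once this observation is recorded, each item of the lemma reduces to a single application of one of the prior identities, so no fresh calculation is required.
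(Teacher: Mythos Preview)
Your proposal is correct and follows essentially the same approach as the paper: each of the six items is read off from the corresponding earlier result---\eqref{PXX1X2} from Step~4, \eqref{PXX3} from PtP optimality at decoder~$\{3\}$, and \eqref{X1Mark}--\eqref{123Mark} from \eqref{MC52}, \eqref{MC53}, \eqref{MCmiss}, and \eqref{MC54} respectively, after the substitutions $U_{\{i\}}=X_i$ justified at the end of Step~5. Your additional remark that the substitutions preserve the earlier Markov chains (because $X_i$ is a function of variables already decoded wherever $U_{\{i\}}$ was) is a useful piece of hygiene that the paper leaves implicit.
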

 
\begin{proof}
 (\ref{PXX1X2}) was proved in the step 4. (\ref{PXX3}) follows from PtP optimality at decoder $\{3\}$. (\ref{X1Mark}) follows from (\ref{MC52}), (\ref{X2Mark}) follows from (\ref{MC53}). (\ref{X3Mark}) follows from (\ref{MCmiss}). (\ref{123Mark}) follows from (\ref{MC54}). 
\end{proof}
 We proceed by bounding the cardinality of $\mathsf{U}_{\{1\},\{3\}}$ and $\mathsf{U}_{\{2\},\{3\}}$. Using Lemma \ref{IMlemma}, the joint distribution between the random variables is given as follows:
\begin{align}
& P(U_{\{1\},\{3\}},U_{\{2\},\{3\}},X_1,X_2,X_3,X)=P(U_{\{1\},\{3\}},X_1)P(U_{\{2\},\{3\}},X_2)P(X|X_1,X_2)P(X_3|U_{\{1\},\{3\}},U_{\{2\},\{3\}},X)\nonumber\\
&=P(U_{\{1\},\{3\}},X_1)P(U_{\{2\},\{3\}},X_2)P(X|X_1,X_2)\frac{P(U_{\{1\},\{3\}}U_{\{2\},\{3\}}X_3X)}{P(U_{\{1\},\{3\}}U_{\{2\},\{3\}}X)}\nonumber\\
&=P(U_{\{1\},\{3\}},X_1)P(U_{\{2\},\{3\}},X_2)P(X|X_1,X_2)\frac{P(U_{\{1\},\{3\}}U_{\{2\},\{3\}}|X_3)P(X_3X)}{\sum_{X_1,X_2}P(U_{\{1\},\{3\}},X_1)P(U_{\{2\},\{3\}},X_2)P(X|X_1,X_2)}\label{joint}
\end{align}
Also note that we have the following equality:
\begin{equation*}
 P(U_{\{1\},\{3\}},U_{\{2\},\{3\}},X)=\sum_{X_3}P(X,X_3)P(U_{\{1\},\{3\}}U_{\{2\},\{3\}}|X_3)=\sum_{X_1,X_2}P(U_{\{1\},\{3\}}|X_1)P(U_{\{2\},\{3\}}|X_2)P(XX_1,X_2)
\end{equation*}
 Denote $P(X,X_1,X_2)=P_{xx_1x_2}$ and $P(U_{\{1\},\{3\}}=\theta|X_1=i)=\alpha_i(\theta), \theta\in \mathsf{U}_{1,3}, i\in \{0,1\}$ and $P(U_{\{2\},\{3\}}=\gamma|X_2=i)=\beta_i(\gamma), \gamma\in \mathsf{U}_{2,3},i\in\{0,1\}$. We have:
 \begin{align*}
 P_{U_{\{1\},\{3\}}U_{\{2\},\{3\}}|X_3}(\theta, \gamma|0)P_{X_3,X}(0,0)+& P_{U_{\{1\},\{3\}}U_{\{2\},\{3\}}|X_3}(\theta, \gamma|1)P_{X_3,X}(1,0)\\&=\alpha_{0}(\theta)\beta_{0}(\gamma)P_{000}+\alpha_{0}(\theta)\beta_{1}(\gamma)P_{001}+
\alpha_{1}(\theta)\beta_{0}(\gamma)P_{010}+\alpha_{1}(\theta)\beta_{1}(\gamma)P_{011}
\end{align*}
\begin{align*}
 P_{U_{\{1\},\{3\}}U_{\{2\},\{3\}}|X_3}(\theta, \gamma|0)P_{X_3,X}(0,1)+& P_{U_{\{1\},\{3\}}U_{\{2\},\{3\}}|X_3}(\theta, \gamma|1)P_{X_3,X}(1,1)\\&=\alpha_{0}(\theta)\beta_{0}(\gamma)P_{100}+\alpha_{0}(\theta)\beta_{1}(\gamma)P_{101}+
\alpha_{1}(\theta)\beta_{0}(\gamma)P_{110}+\alpha_{1}(\theta)\beta_{1}(\gamma)P_{111}
\end{align*}
Using the values given in Table (\ref{tb:jointdist}), we solve the system of equations:
\begin{align*}
 &P_{U_{\{1\},\{3\}}U_{\{2\},\{3\}}|X_3}(\theta,\gamma|0)=\alpha_0(\theta)\beta_0(\gamma)\\
 &P_{U_{\{1\},\{3\}}U_{\{2\},\{3\}}|X_3}(\theta,\gamma|1)=\frac{\sqrt{2}-1}{2}(\alpha_1(\theta)\beta_1(\gamma)-\alpha_0(\theta)\beta_0(\gamma))+\frac{1}{2}(\alpha_0(\theta)\beta_1(\gamma)+\alpha_1(\theta)\beta_0(\gamma))\\
\end{align*}
Hence the distribution in \ref{joint} is completely determined by $\alpha_i$ and $\beta_i, i\in \{0,1\}$. 
\begin{Lemma}
 Assume there exists $\alpha_i$ and $\beta_i$, such that $D_{\{1,3\}}\leq D_0$, then $I(U_{\{1\},\{3\}}U_{\{2\},\{3\}}X_1X_3;X)\geq 1-h_b(D_0)$. 
\end{Lemma}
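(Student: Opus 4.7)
The plan is to apply the converse to the point-to-point rate-distortion theorem for a binary symmetric source (BSS) under Hamming distortion, since the decoder $\{1,3\}$ operates on exactly the random variables appearing in the mutual information on the left-hand side.

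First, I would recall that at decoder $\{1,3\}$, the codebooks that remain after the eliminations in Steps 1--5 are precisely those indexed by $\mathcal{M}\in\mathbf{M}_{\{1,3\}}$ restricted to the surviving codebooks, namely $U_{\{1\},\{3\}}$, $U_{\{2\},\{3\}}$, and (via the identifications established at the end of Step 5) $U_{\{1\}}=X_1$ and $U_{\{3\}}=X_3$. Hence the reconstruction at this decoder, call it $\hat{X}_{\{1,3\}}$, is a deterministic function of the quadruple $(U_{\{1\},\{3\}},U_{\{2\},\{3\}},X_1,X_3)$. By hypothesis this reconstruction satisfies $E[d_H(\hat{X}_{\{1,3\}},X)]\leq D_0$.

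Next, I would apply the data processing inequality to obtain
\[
I(U_{\{1\},\{3\}},U_{\{2\},\{3\}},X_1,X_3;X)\ \geq\ I(\hat{X}_{\{1,3\}};X).
\]
Since $X$ is a BSS, its point-to-point rate-distortion function under Hamming distortion is $R_X(D)=1-h_b(D)$, so by the standard converse,
\[
I(\hat{X}_{\{1,3\}};X)\ \geq\ R_X(D_0)\ =\ 1-h_b(D_0).
\]
Chaining these two inequalities yields the desired bound.

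There is no real obstacle in this lemma per se: it is a one-shot application of the converse once one has correctly identified the list of variables decoded at $\{1,3\}$. The subtlety, already handled in Steps 1--5, was to reduce the potentially large set of codebooks to just $\{U_{\{1\},\{3\}},U_{\{2\},\{3\}},X_1,X_3\}$ so that this clean statement is even meaningful. In the larger argument of Lemma \ref{theorem11}, this inequality will presumably be combined with the fact that $P_{XX_1X_2}$ and $P_{XX_3}$ are pinned down (from Lemma \ref{IMlemma}) to derive a contradiction with the rate constraint $R_1+R_3$ or with the required joint optimality across all decoders; the cardinality bounds on $\mathsf{U}_{\{1\},\{3\}}$ and $\mathsf{U}_{\{2\},\{3\}}$ developed just before the lemma would then let the contradiction be obtained over a finite-dimensional parameter set rather than over all possible auxiliary distributions.
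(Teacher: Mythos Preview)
Your proposal is correct and takes essentially the same approach as the paper, which simply says ``The proof follows from Shannon's rate distortion function for PtP source coding.'' You have spelled out the two-step argument (data processing to pass from the full tuple to the reconstruction, then the PtP converse $I(\hat{X}_{\{1,3\}};X)\geq 1-h_b(D_0)$) that the paper leaves implicit.
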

\begin{proof}
 The proof follows from Shannon's rate distortion function for PtP source coding. 
\end{proof}
Based on the previous lemma it is enough to show that for every $\alpha_i$ and $\beta_i$, $I(U_{\{1\},\{3\}}U_{\{2\},\{3\}}X_1X_3;X)< 1-h_b(D_0)$, in that case we have a contradiction. We need to maximize $I(U_{\{1\},\{3\}}U_{\{2\},\{3\}}X_1X_3;X)$ as a function of $\alpha_i$ and $\beta_i$. We use the following lemma:
\begin{Lemma} \cite{ElGamalLec}
 Let $\mathsf{X}$ be a finite set and $\mathsf{U}$ be an arbitrary set. Let $\mathcal{P}(\mathsf{X})$ be a set of pmfs on $\mathsf{X}$
 and $p(x|u)$ be a collection of pmfs on $\mathsf{X}$ for every $u\in \mathsf{U}$. Let $g_j, j\in [1,d]$
 be real-valued continuous functions on $\mathcal{P}(\mathsf{X})$. Then for every $U\sim F(u)$ defined
 on $\mathsf{U}$, there exists random variable $U'\sim p(u')$ with cardinality $|\mathsf{U}'|\leq d$
 and a collection of conditional pmfs $p(u'|x)$ on $\mathsf{X}$ for every $u'\in \mathsf{U}'$ such that
 for every $j\in [1,d]$:
\begin{equation*}
 \int_{\mathsf{U}}g_j(p_{X|U}(x|u))dF(u)=\sum_{u'}g_j(p_{X|U'}(x|u'))p(u')
\end{equation*}
\end{Lemma}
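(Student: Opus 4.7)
The plan is to invoke the Fenchel-Eggleston strengthening of Carath\'{e}odory's theorem, which is the standard tool for deriving cardinality bounds on auxiliary random variables in information theory. The claim is essentially a restatement of the classical ``support lemma'' of Ahlswede-K\"{o}rner.

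First I would consider the continuous map $\Phi:\mathcal{P}(\mathsf{X})\to\mathbb{R}^{d}$ defined by $\Phi(p)\triangleq(g_{1}(p),g_{2}(p),\ldots,g_{d}(p))$. Because the probability simplex $\mathcal{P}(\mathsf{X})$ is compact and connected and each $g_{j}$ is continuous, the image $\Phi(\mathcal{P}(\mathsf{X}))$ is a compact, connected subset of $\mathbb{R}^{d}$, and hence so is the set $\mathcal{S}\triangleq\{\Phi(p_{X|U}(\cdot|u)):u\in\mathsf{U}\}$ after taking its closure. The target vector
\begin{equation*}
v\;\triangleq\;\Bigl(\,\textstyle\int_{\mathsf{U}}g_{j}(p_{X|U}(\cdot|u))\,dF(u)\Bigr)_{j=1}^{d}
\end{equation*}
lies in the convex hull of $\mathcal{S}$, since it is the $F$-mixture of points of $\mathcal{S}$; this step requires a routine measure-theoretic justification, using approximation of $F$ by finitely supported probability measures and closedness/compactness of $\Phi(\mathcal{P}(\mathsf{X}))$ in $\mathbb{R}^{d}$.

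Next I would apply the Fenchel-Eggleston theorem: any point in the convex hull of a connected subset of $\mathbb{R}^{d}$ can be written as a convex combination of at most $d$ points of that subset (rather than $d+1$, as in the classical Carath\'{e}odory theorem). Applied to $v$, this produces points $u'_{1},\ldots,u'_{d}\in\mathsf{U}$ and non-negative weights $p(u'_{1}),\ldots,p(u'_{d})$ summing to one such that
\begin{equation*}
\int_{\mathsf{U}}g_{j}(p_{X|U}(\cdot|u))\,dF(u)\;=\;\sum_{i=1}^{d}p(u'_{i})\,g_{j}(p_{X|U}(\cdot|u'_{i})),\qquad j\in[1,d].
\end{equation*}

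Finally I would define $U'$ on alphabet $\mathsf{U}'=\{u'_{1},\ldots,u'_{d}\}$ with pmf $p(u'_{i})$ and conditional pmfs $p_{X|U'}(\cdot|u'_{i})\triangleq p_{X|U}(\cdot|u'_{i})$ (the conditional pmfs $p(u'|x)$ on $\mathsf{X}$ required in the statement are then obtained by Bayes' rule from $p(u')$ and $p_{X|U'}$). By construction $|\mathsf{U}'|\leq d$ and the displayed integral identity holds. The main obstacle is purely the measure-theoretic step establishing that $v$ genuinely lies in the convex hull of $\mathcal{S}$ when $\mathsf{U}$ is an arbitrary (possibly uncountable) set; once this is in place, the result follows immediately from Fenchel-Eggleston and requires no additional combinatorial work.
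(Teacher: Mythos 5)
The paper does not prove this lemma at all: it is quoted verbatim from El Gamal and Kim \cite{ElGamalLec} (the standard support lemma), so the only benchmark is the textbook argument, which is indeed the Fenchel--Eggleston--Carath\'{e}odory route you outline. Your overall strategy is therefore the right one, but there is a genuine flaw in how you set it up. You claim that since $\Phi(\mathcal{P}(\mathsf{X}))$ is compact and connected, ``hence so is'' the subset $\mathcal{S}=\{\Phi(p_{X|U}(\cdot|u)):u\in\mathsf{U}\}$ after taking its closure. That implication is false: a subset of a connected set need not be connected, and closure does not repair this (e.g.\ if the family $\{p_{X|U}(\cdot|u)\}_{u\in\mathsf{U}}$ consists of two distinct pmfs, $\mathcal{S}$ is a two-point set). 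Connectedness is exactly what Fenchel--Eggleston needs to improve Carath\'{e}odory's $d+1$ points to $d$ points, so as written your argument only yields a bound of $d+1$ when applied to $\mathcal{S}$, which is weaker than the lemma. A secondary consequence is that your conclusion that the representing points $u'_1,\dots,u'_d$ can be taken in $\mathsf{U}$ itself is both unsupported (limit points introduced by the closure need not correspond to any $u\in\mathsf{U}$) and stronger than what the lemma asserts.

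The fix is the one the cited source uses: apply Fenchel--Eggleston to the set $\Phi(\mathcal{P}(\mathsf{X}))$ itself, which \emph{is} compact and connected as the continuous image of the probability simplex. The barycenter $v=\int_{\mathsf{U}}\Phi(p_{X|U}(\cdot|u))\,dF(u)$ lies in $\mathrm{conv}\bigl(\Phi(\mathcal{P}(\mathsf{X}))\bigr)$ (your measure-theoretic step, which is routine since the convex hull is compact and convex in $\mathbb{R}^d$), so $v=\sum_{i=1}^{d}\lambda_i\,\Phi(q_i)$ for some pmfs $q_1,\dots,q_d\in\mathcal{P}(\mathsf{X})$ and weights $\lambda_i$. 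One then simply \emph{defines} $p_{X|U'}(\cdot|u'_i)\triangleq q_i$ and $p(u'_i)\triangleq\lambda_i$; the lemma only requires the new conditionals to be some pmfs on $\mathsf{X}$, not members of the original family, so nothing is lost. With this adjustment your proof is complete and coincides with the standard one.
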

\begin{figure}[!h]
\captionsetup{justification=centering}

\includegraphics[width=5in]{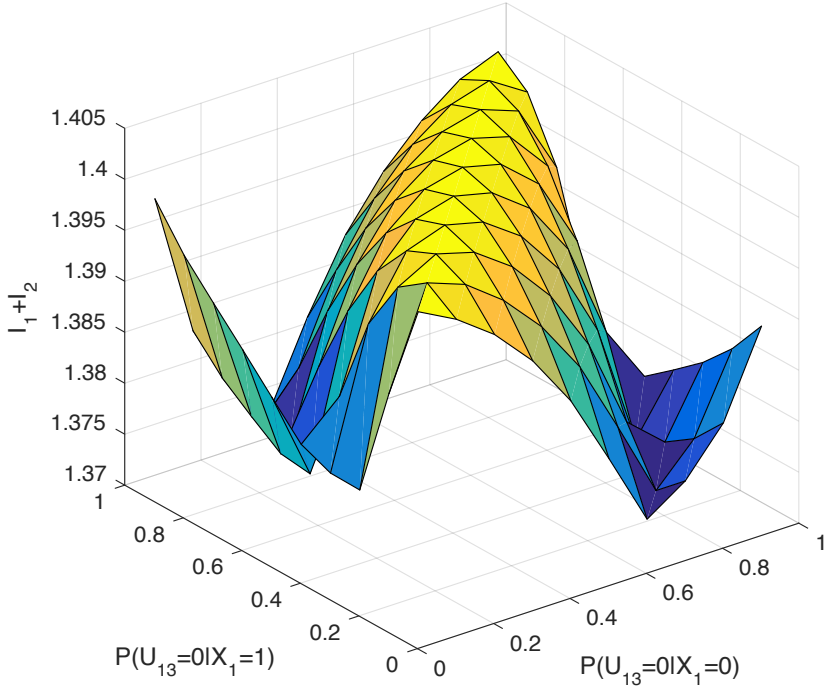}
\caption{Plot of maximum value of $I(U_{\{1\},\{3\}},U_{\{2\},\{3\}},X_1,X_3;X)+I(U_{\{1\},\{3\}},U_{\{2\},\{3\}},X_2,X_3;X)$}
\label{figD}
\end{figure}
We want to use the lemma to bound cardinality of
$\mathsf{U}_{1,3}$. Take
$g_1(p_{U_{\{2\},\{3\}},X_1,X_2,X_3,X}|U_{\{1\},\{3\}})=p_{X_1|U_{\{1\},\{3\}}}(1|u_{13})$
and
$g_{2}(p_{U_{\{2\},\{3\}},X_1,X_2,X_3,X}|U_{\{1\},\{3\}})=H(X|U_{\{2\},\{3\}},X_1,X_3,X,U_{\{1\},\{3\}}=u_{1,3})$. Note
that fixing the expectation on $g_1$ fixes the joint distribution in
(\ref{joint}) and fixing the expectation of $g_2$ fixes the term we
want to minimize. So for any $U_{\{1\},\{3\}}$ minimizing
$I(U_{\{1\},\{3\}}U_{\{2\},\{3\}}X_1X_3;X)$ , there exists $U'_{1,3}$
with cardinality at most 2, such that the joint distribution and
$I(U_{\{1\},\{3\}}U_{\{2\},\{3\}}X_1X_3;X)$ are the same. So it is
enough to search over $U_{\{1\},\{3\}}$ with cardinality 2. The same
arguments hold for bounding the cardinality of $\mathsf{U}_{2,3}$. For
this size of random variables, computer-assisted calculation shows
that
$I(U_{\{1\},\{3\}},U_{\{2\},\{3\}},X_1,X_3;X)+I(U_{\{1\},\{3\}},U_{\{2\},\{3\}},X_2,X_3;X)<
1.42<2(1-h_b(D_0))=1.58$ as shown in Figure \ref{figD}. So we have a
contradiction and the SSC does not achieve the RD vector.  
\end{proof}
\section{Proofs for Section \ref{sec:RDregion}}
\subsection{Proof of lemma \ref{lemma5}}\label{Ap:lemma5}
\begin{proof}
Index the inequalities in the SSC from 1 to $K$. For every inequality in the linear coding region (LCR), there exists a unique inequality in the SSC with the same left hand side, index this inequality with the same index used in the RCR. Let $I_1>R$ be a bound resulting from applying FME on the SSC. Assume the bound results from adding inequalities indexed $i_1,i_2,\ldots,i_k$, it is straightforward to show that adding inequalities with the same indices in the LCR gives the same bound. The reason is that by our construction, the left-hand sides would be the same. In the right-hand side, due to the FME, the terms involving $r_A$ would be eliminated. Define $r'_{A}=r_A-H(U_A)$ and $r'_{o,A}=r_{o,A}-\log(q)$, eliminating $r_A$ is equivalent to eliminating $r'_A$ or $r'_{o,A}$.
\end{proof}

{\Large {\bf Acknowledgment:} }
The authors would like to thank Prof. Kenneth Rose of UC Santa Barbara and 
Mohsen Heidari Khoozani of Univ. of Michigan for helpful discussions.



%

%
 \end{document}